\newcommand{\cgr}[2][scale=0.45]{\raisebox{0.1em}{\begingroup
\setbox0=\hbox{\includegraphics[#1]{graffles/#2}}%
\parbox{\wd0}{\box0}\endgroup}}
\newcommand{\dup}{\Delta}
\newcommand{\dis}{!}
\newcommand{\codup}{\nabla}
\newcommand{\codis}{?}
\newcommand{\Frob}[1]{\mathcal{F}_{#1} }
\newcommand{\Law}[1]{\mathcal{L}_{#1} }
\newcommand{\freePROP}[1]{\mathcal{P}_{#1} }
\newcommand{\freeordPROP}[1]{\freePROP{#1} }
\newcommand{\freeLPPROP}[1]{\mathcal{LP}_{#1} }
\newcommand{\Theory}[1]{\mathbb{#1}}
\newcommand\T{\Theory{T}} 
\newcommand{\FTheory}[1]{\mathbb{#1}^+} 
\newcommand{\CMtheory}{\Theory{CM}} 
\newcommand{\CCtheory}{\Theory{CC}} 
\newcommand{\Bialgtheory}{\Theory{B}} 
\newcommand{\Laxbialgtheory}{\Theory{LB}} 
\newcommand{\Oplaxbialgtheory}{\Theory{OLB}} 
\newcommand{\Frobtheory}{\Theory{F}} 
\newcommand{\Hopftheory}{\Theory{H}} 
\newcommand{\AGtheory}{\Theory{AG}}
\newcommand{\CMtheoryF}{\FTheory{CM}} 
\newcommand{\AGtheoryF}{\FTheory{AG}}
\def \Mon {\freePROP{\CMtheory}}  
\def \Com {\freePROP{\CCtheory}} 
 \newcommand{\B}{\freePROP{\Bialgtheory}}  
\newcommand{\ModtheoryF}[1]{\Theory{MOD}_#1^+}
\newcommand{\ModRtheoryF}{\ModtheoryF{\mathsf{R}}}
\newcommand{\cat}[1]{\mathbf{#1}}
\def \catC {\cat{C}}
\newcommand\funF{F}
\newcommand\funG{G}
\newcommand{\Map}[1]{\mathsf{Map(#1)}} 
\newcommand{\model}[1]{\mathsf{Mod_{SMT}}(#1)} 
\newcommand{\carmodel}[1]{\mathsf{Mod_{CAR}}(#1)} 
\newcommand{\frobmodel}[1]{\mathsf{Mod_{FROB}}(#1)} 
\DeclareMathOperator{\id}{id}
\newcommand{\Defeq}
 {\stackrel{\mathrm{def}}{=}}
\newcommand{\stran}{\raise1pt\hbox{$\centerdot$}}
\newcommand{\rring}[1]{\ensuremath{\mathsf{#1}}}
\newcommand{\N}{\rring{N}}
\newcommand{\Ra}{\Rightarrow}
\newcommand{\Set}{\cat{Set}}
\newcommand{\Rel}{\cat{Rel}}
\renewcommand{\emptyset}{\varnothing}
\newcommand{\ladj}[2]{\ar@/^/[#1]^-{#2} \ar@{}[#1]|-%

{\ifthenelse{\equal{#1}{r}}{\bot}{%

{\ifthenelse{\equal{#1}{rr}}{\bot}{%

{\ifthenelse{\equal{#1}{l}}{\top}{%

{\ifthenelse{\equal{#1}{u}}{\dashv}{%

{\vdash}}}}}}}}}}
\newcommand{\radj}[2]{\ar@/^/[#1]^-{#2}}
\newcommand{\radjff}[2]{\ar@{_{(}->}[#1]^{#2}}
\newcommand{\pullbacktop}[4]{%

{#1} \ar@/_/[ddr]_{#4} \ar@/^/[drr]^{#2}%

\ar@{.>}[dr]|-{#3} \\}
\newtheorem{clm}{Claim}[section]
\newtheorem{lem}[clm]{Lemma}
\newtheorem{thm}[clm]{Theorem}
\newcounter{ncomm}
\newcommand{\ltsred}[1]
{ \setbox0=\hbox{$\ {}^{#1}\ $}
  \setbox1=\hbox{$\longrightarrow$}
  \loop\setbox1=\hbox{$-$\kern-0.3em\unhbox1}\ifdim\wd1<\wd0\repeat
  \hbox{$\ \ \mathop{\box1}\limits^{#1}\ \ $}
}
\newcommand{\arx}[2]{\!\xymatrix@=15pt{\ar[r]^{{#1}}_{{#2}}&}\!}
\newlength{\mylength}
\newtheorem{theorem}{Theorem}[section]
\newtheorem{proposition}[theorem]{Proposition}
\newtheorem{corollary}[theorem]{Corollary}
\newtheorem{lemma}[theorem]{Lemma}
\newtheorem{definition}[theorem]{Definition}
\newtheorem{example}[theorem]{Example}
\newtheorem{remark}[theorem]{Remark}
\def \tns {\oplus}
\def \: {\colon}
\def \poi {\,\ensuremath{;}\,}
\def \N {\mathbb{N}}
\def \Set {\mathbf{Set}}
\newcommand\OpDiag{\lower4pt\hbox{$\includegraphics[width=24pt]{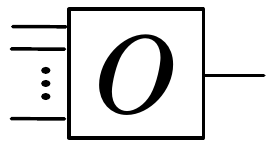}$}}
\newcommand\diagD{\lower4pt\hbox{$\includegraphics[width=24pt]{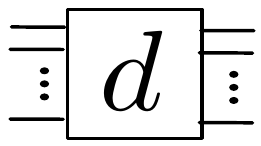}$}}
\newcommand\Idnet{\lower3pt\hbox{$\includegraphics[width=20pt]{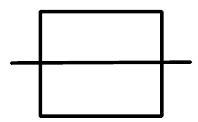}$}}
\newcommand\symNet{\lower3pt\hbox{$\includegraphics[width=20pt]{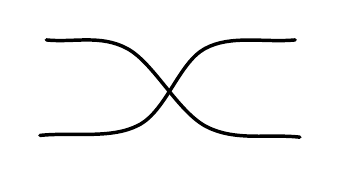}$}}
\newcommand{\ZeronetT}{\lower4pt\hbox{$\includegraphics[width=14pt]{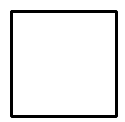}$}}
\newcommand\Bmult{\lower5pt\hbox{$\includegraphics[width=27pt]{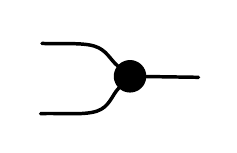}$}}
\newcommand\Bcomult{\lower5pt\hbox{$\includegraphics[width=27pt]{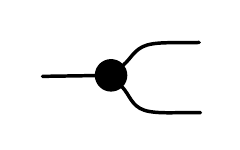}$}}
\newcommand\Bunit{\lower3pt\hbox{$\includegraphics[width=18pt]{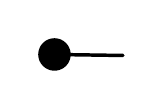}$}}
\newcommand\Bcounit{\lower5pt\hbox{$\includegraphics[width=23pt]{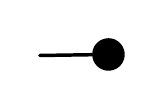}$}}
\newcommand\Bmultn{\lower6pt\hbox{$\includegraphics[width=23pt]{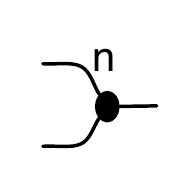}$}}
\newcommand\Bcomultn{\lower8pt\hbox{$\includegraphics[width=26pt]{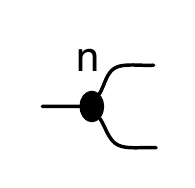}$}}
\newcommand\Bunitn{\lower5pt\hbox{$\includegraphics[width=22pt]{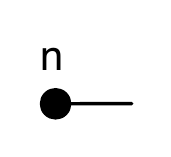}$}}
\newcommand\Bcounitn{\lower3pt\hbox{$\includegraphics[width=18pt]{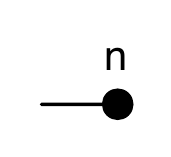}$}}
\newcommand\Wmult{\lower5pt\hbox{$\includegraphics[width=27pt]{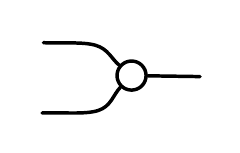}$}}
\newcommand\Wcomult{\lower5pt\hbox{$\includegraphics[width=27pt]{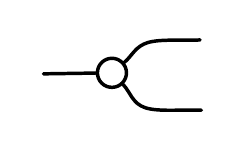}$}}
\newcommand\Wunit{\lower5pt\hbox{$\includegraphics[width=23pt]{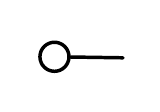}$}}
\newcommand\Wcounit{\lower5pt\hbox{$\includegraphics[width=16pt]{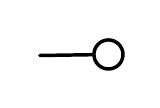}$}}
\newcommand\twoBcounit{\lower5pt\hbox{$\includegraphics[width=15pt]{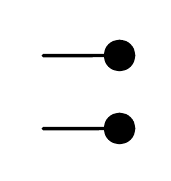}$}}
\title{Functorial Semantics for Relational Theories}
\author{Filippo Bonchi, Dusko Pavlovic and Pawe{\l} Soboci{\'n}ski}
\begin{document}
\maketitle
\abstract{We introduce the concept of \emph{Frobenius theory} as a generalisation of Lawvere's functorial semantics approach to categorical universal algebra. Whereas the universe for models of Lawvere theories is the category of sets and functions, or more generally cartesian categories,  Frobenius theories take their models in the category of sets and relations, or more generally in cartesian bicategories.}

\section{Introduction and roadmap}

There has been a recent explosion of interest in algebraic structures borne by objects of symmetric monoidal categories, with applications in quantum foundations, concurrency theory, control theory, linguistics and database theory, amongst others. In several cases these ``resource-sensitive'' algebraic theories are presented using generators and equations. Moreover, many contain Frobenius algebra as a sub-theory, which yields a self-dual compact closed structure and gives the theories a relational flavour (e.g. a dagger operation, which one can often think semantically as giving the \emph{opposite relation}). In this paper we propose a \emph{categorical universal algebra} for such monoidal theories, generalising functorial semantics, the classical approach due to Lawvere. A canonical notion of model clarifies the conceptual landscape, and is a useful tool for the study of the algebraic theories themselves. For example: how can one show that a particular equation does \emph{not} hold in a theory? One way is to find a model where the equation does not hold.

\subsection{Functorial semantics}

Lawvere categories (a.k.a. finite product theories) are a standard setting for classical universal algebra. Syntactically speaking, terms are trees where some leaves are labelled with variables, and these can be copied and discarded arbitrarily. Categorically, this means a finite product structure, and (classical) models are \emph{product preserving functors}.
In particular, the classical notion of model as a set, together with appropriate $n$-ary functions, satisfying the requisite equations, is captured by a product preserving functor from the corresponding Lawvere category to the category of sets and functions $\Law{}\to\Set$. This methodology is well-known as \emph{functorial semantics}. 
\begin{center}
\begin{tabular}{ | c | c | }
\hline
\textbf{Specification} & algebraic theory \\
\textbf{Syntax} & trees \\
\textbf{Category} & Lawvere category (finite product category) \\
\textbf{Models} & product preserving functors \\
\textbf{Homomorphisms} & natural transformations \\ \hline
\end{tabular}
\end{center}
Thus commutative monoids are exactly the product preserving functors from the Lawvere category of commutative monoids, abelian groups the product preserving functors from the Lawvere category of abelian groups, etc.
Moreover, the usual notion of homomorphism between models is given by natural transformations between models-as-functors.

In applications, classical algebraic theories are often not the right fit. Sometimes this is because an underlying data type is not classical, e.g. qubits, that cannot be copied. Other times it's because one needs to be explicit about the actual copying and discarding being carried out \emph{as (co)algebraic operations}, instead of relying on an implicit cartesian structure. That is, we require a \emph{resource sensitive syntax}. In practice, this means replacing algebraic theories with symmetric monoidal theories (SMTs), trees with string diagrams, cartesian product with symmetric monoidal product (Lawvere categories with props), and product preserving functors with monoidal functors. This suggests an updated table:
\begin{center}
\begin{tabular}{ | c | c |}
\hline
\textbf{Specification} & symmetric monoidal theory (SMT) \\
\textbf{Syntax} & string diagrams \\
\textbf{Category} & prop \\
\textbf{Models} & symmetric monoidal functors \\
\textbf{Homomorphisms} & monoidal natural transformations \\ \hline
\end{tabular}
\end{center}

Props are symmetric strict monoidal categories with objects the natural numbers, such that $m \oplus n=m+n$. Of course, any Lawvere category is an example of a prop, since the cartesian structure induces a canonical symmetry. Arrows of (freely generated) props seem, therefore, to offer an attractive solution to the quest for resource sensitive syntax. Given that the underlying monoidal product is not assumed to be cartesian, props give the possibility of considering bona fide operations with co-arities other than one, e.g. the structure (comultiplication and counit) of a comonoid. In fact, comonoids are the bridge between the classical and the resource-sensitive.

Indeed, given an algebraic theory, we can consider it as a symmetric monoidal theory by encoding the cartesian structure. This amounts to introducing a commutative comonoid (copying) and equations making all other operations comonoid homomorphisms. 
\begin{equation}\label{eq:comonoidhom}
\lower25pt\hbox{$\includegraphics[height=2cm]{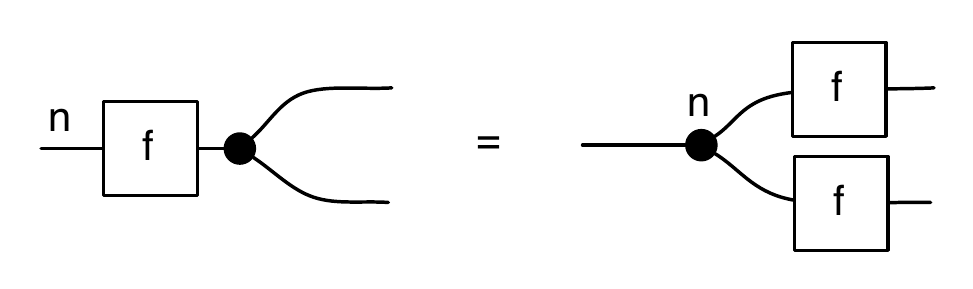}$}
\qquad
\lower15pt\hbox{$\includegraphics[height=1.2cm]{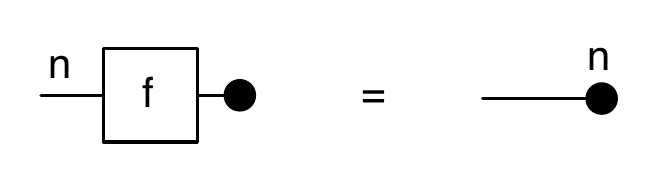}$}
\end{equation}

This means that, as props, the following are actually isomorphic:
\begin{center}
Lawvere category $\Law{\CMtheory}$ of commutative monoids 
$\CMtheory$ \\
  $\cong$    \\
prop $\freePROP{\Bialgtheory}$ of (co/commutative) bialgebras $\Bialgtheory$ 
\end{center}
\begin{center}
Lawvere category $\Law{\AGtheory}$  of abelian groups $\AGtheory$ \\
  $\cong$ \\  
 prop $\freePROP{\Hopftheory}$ of (co/commutative) Hopf algebras
\end{center}
Thus, in effect, bialgebras are what one gets if by considering classical commutative monoids and taking resource sensitivity seriously. Similarly, Hopf algebras can be seen as abelian groups in a ``resource sensitive'' universe.

The structure of props suggests that, for models, we ought to look at symmetric monoidal functors. Indeed, considering the category of sets and functions $\Set$ with cartesian product as monoidal product as codomain, the symmetric monoidal functors from the prop SMT of commutative monoids are in bijective correspondence with ordinary commutative monoids. Here it is the cartesianity of $\Set$ that means that, although the theory is non-cartesian, the models are classic. Similarly, commutative monoids are captured by symmetric monoidal functors $\mathbf{B}\to\Set$ -- it is not difficult to show that the only comonoid action on a set is given by the diagonal, so the ``copying'' comonoid structure is uniquely determined in any $\Set$-model of $\mathbf{B}$.

\subsection{Relations as a universe for models}
Our goal is to study algebras of relations (e.g. relational algebra, allegories, Kleene algebra, automata, labelled transition systems, ...), important in computer science.
Thus, we are interested in developing a theory of functorial semantics that takes its classical models in $\Rel_\times$ (object = sets, arrows = relations, and the subscript indicates that we take cartesian product as monoidal product). 

Here mere SMTs and monoidal functors are not enough to characterise commutative monoids. Considering the SMT of commutative monoids, monoidal functors to $\Rel_\times$ are not guaranteed to give a functional monoid action: e.g., one could map the monoid action to the opposite of the diagonal relation.
Considering the SMT of bialgebras fails also: there is no guarantee that the comultiplication maps to the diagonal. For a concrete example, consider 
\begin{multline}\label{eq:problematicmodel}
1 \mapsto \N \quad
\nabla \mapsto + = \lambda (x,y):X\times X.\, x+y \\
\top \mapsto \lambda x:\{\star\}: 0 \quad \Delta \mapsto \nabla^{op} \quad \bot \mapsto \top^{op}
\end{multline}
here the structure of addition (i.e. the fact that $\N$ is a rig--a ring without negatives) on $\N$ ensures that the bialgebra equations are satisfied, in particular,
if $x+y=0$ then $x=y=0$.

We could require -- for props that have a commutative comonoid structure that defines a product, that the product structure ought to be preserved, so that $\Delta$ is mapped to the diagonal. Unfortunately, this would preclude considering $\Rel_\times$ as a universe of models, since the monoidal product in $\Rel_\times$ is \emph{not} a cartesian product (recall that $\Rel$ actually has $+$ as biproduct). Indeed, when interpreted in $\Rel$, the general form of equations~\eqref{eq:comonoidhom}
\[
\lower20pt\hbox{$\includegraphics[height=1.75cm]{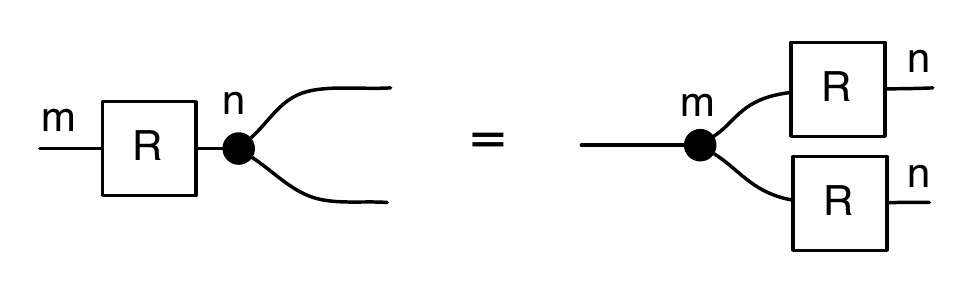}$}
\qquad
\lower15pt\hbox{$\includegraphics[height=1.2cm]{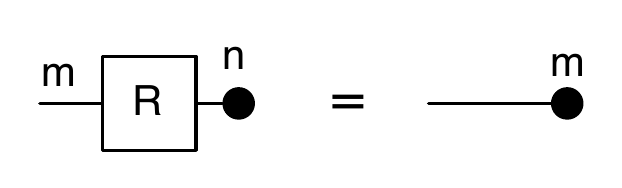}$}
\]
tells us, respectively, that $R$ is single-valued and total, which is true only of those relations that are (the graphs of total) functions. 

A crucial observation to make at this point is that products play \emph{two} different roles in functorial semantics \`{a} la Lawvere. The first is resource insensitivity---i.e.\ an assumption about the classical nature of the underlying data--- which we would like to discard. The second is \emph{preservation of arities} -- the idea that one should be able to specify algebraic operations and have that structure borne by a set (or more generally, an object in some category). We would like to keep this second role, and here we take advantage of the the notion of \emph{lax product}, which 
Carboni and Walters~\cite{Carboni1987}  identified as important for the algebra of categories of relations. Indeed, the monoidal product of $\Rel_\times$ satisfies a lax universal property. In practice this turns out to be, bureaucratically speaking, quite a tame notion of laxness: the 2-dimensional structure of $\Rel_\times$ is posetal (set inclusion), and indeed we will concentrate on the poset-enriched case. At specification level, this means that it's natural to introduce \emph{inequations} between terms.

In fact, the most we can say about relations $R$, in general, is that they are lax comonoid homomorphisms:
\begin{equation}\label{eq:cartesianbicats}
\lower20pt\hbox{$\includegraphics[height=1.7cm]{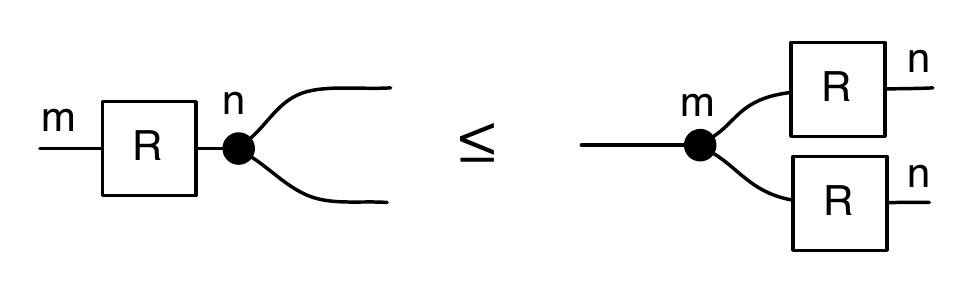}$}
\quad
\lower13pt\hbox{$\includegraphics[height=1.2cm]{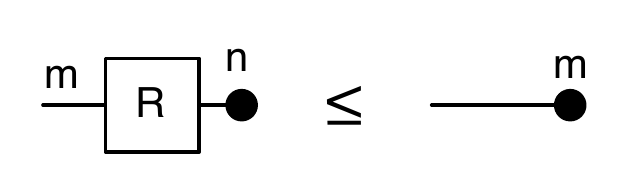}$}
\end{equation}
If \emph{all} arrows are lax homomorphisms in this sense then the monoidal product is a lax product. Crucially, \eqref{eq:cartesianbicats} holds in $\Rel_\times$, where $X$-comultiplication is the diagonal relation 
$\{\, (x,(x,x)) \,|\, x\in X \,\}$, that is, the graph of the diagonal function. Notice that, in $\Rel_\times$, the inequations~\eqref{eq:cartesianbicats} are strict exactly when, respectively, $R$ is not single valued and not total.

\subsection{Lax product theories}
By a \emph{lax product theory} we mean a generalisation of SMT that replaces equations with inequations,
and includes a chosen commutative comonoid structure. Moreover, we require inequations~\eqref{eq:cartesianbicats} that say that all other data is a lax comonoid homomorphism. We call the comonoid structure together with the aforementioned inequations a \emph{lax product structure}.
Every lax product theory leads to a free ordered prop (a prop enriched in the category or posets and monotone maps), where~\eqref{eq:cartesianbicats} ensure that the monoidal product is a lax product, in the bicategorical sense.
\begin{center}
\begin{tabular}{ | c | c | }
\hline
\textbf{Specification} & lax product theory \\
\textbf{Syntax} & string diagrams \\
\textbf{Category} & lax product 2-prop \\
\textbf{Models} & lax product structure preserving functors \\
\textbf{Homomorphisms} & monoidal lax natural transformations \\
\hline
\end{tabular}
\end{center}

Note that both the SMT of bialgebras and Hopf algebras are lax product theories (each equation is replaced by two inequations). And we now obtain a satisfactory ``resource sensitive'' 
generalisation of Lawvere's functorial semantics to $\Rel$-models. For example, the models of the SMT of bialgebras, given by lax product structure preserving functors to $\Rel_\times$, are exactly commutative monoids. This may appear surprising, since we are mapping to $\Rel_\times$, one could expect that $\nabla$ may map to an arbitrary relation. Instead, the fact that we need to preserve lax products means, since $\nabla$ is functional in the specification, it maps to a function in the model. Moreover, the \emph{categories} of models (where morphisms between models are given by monoidal natural transformations) coincide: both are the category of commutative monoids and homomorphisms. Thus the mismatch of~\eqref{eq:problematicmodel} is avoided. 

\medskip
Yet lax product theories are not quite expressive enough for our purposes. We have seen that, using the lax product structure, we can express equationally when a relation is a function. But we cannot, for instance, say when a relation is a ``co-function'', that is, the opposite relation of a function. This capability is very useful in examples, for example for the calculus of fractions in the SMT of Interacting Hopf Algebras~\cite{interactinghopf}.

\subsection{Frobenius theories}
A Frobenius theory---a concept introduced in this paper---is a lax product theory that includes additionally a ``black'' commutative monoid right adjoint to the comonoid.
\[
\includegraphics[height=1.2cm]{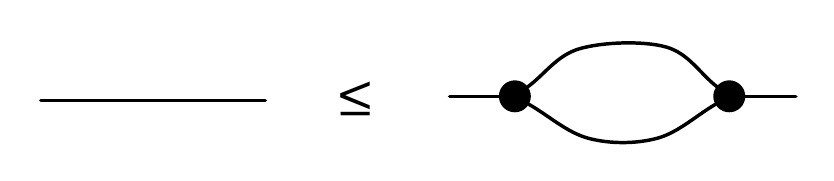}
\quad
\includegraphics[height=1.2cm]{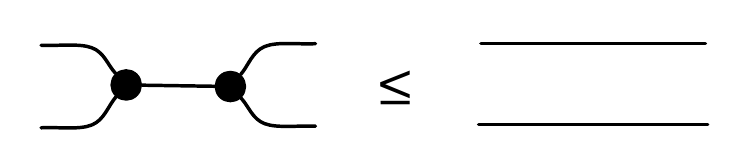}
\]
\[
\includegraphics[height=1cm]{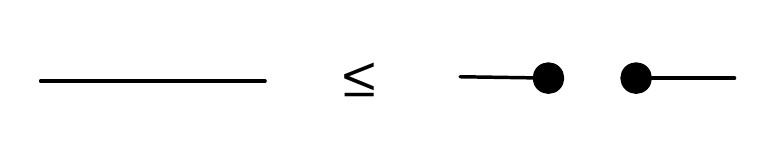}
\quad
\includegraphics[height=1cm]{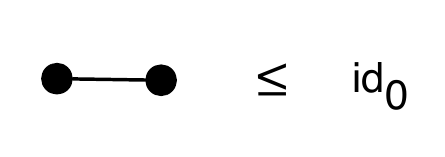}
\]
Moreover, the monoid-comonoid pair satisfies the Frobenius equations and the special law -- i.e. an additional
inequation relating the multiplication and comultiplication.
\[
\includegraphics[height=1.5cm]{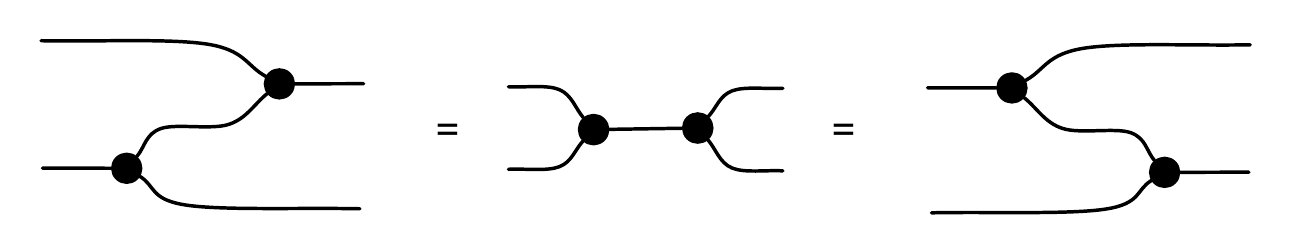}
\]
\[
\includegraphics[height=1.3cm]{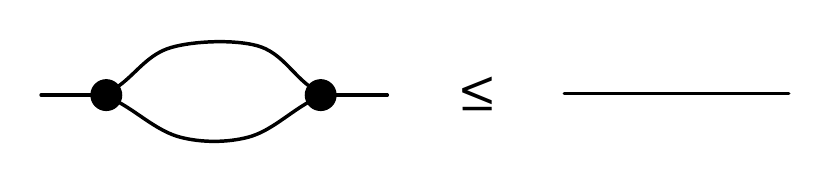}
\]
Notice that the presence of the Frobenius equations induces a self-dual compact closed structure. Together,
the Frobenius equations and the special law give us the spider theorem, which implies that the mirror images of~\eqref{eq:cartesianbicats} hold, i.e.
\[
\lower20pt\hbox{$\includegraphics[height=1.7cm]{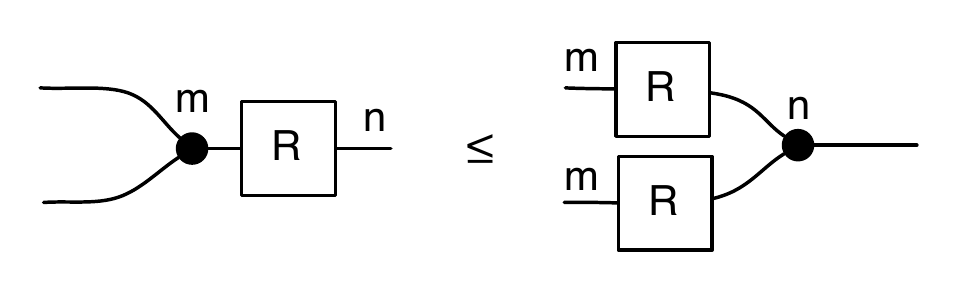}$}
\quad
\lower13pt\hbox{$\includegraphics[height=1.2cm]{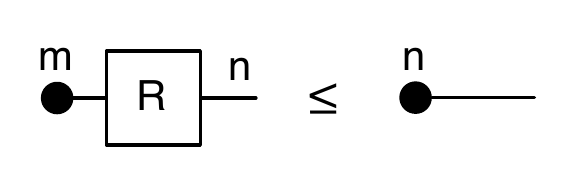}$}
\]

The resulting free ordered prop is what we refer to as a Frobenius prop (frop) or a Carboni-Walters category, after A Carboni and RFC Walters. Indeed, it is an example of Carboni and Walters' bicategory of relations~\cite[Sec.~2]{Carboni1987}: a cartesian bicategory where the ``black'' structure satisfies the Frobenius equations. Moreover, discarding the 2-structure, one obtains a hypergraph category (a.k.a.\ well-supported compact closed category). 
\begin{center}
\begin{tabular}{ | c | c | }
\hline
\textbf{Specification} & Frobenius theory \\
\textbf{Syntax} & string diagrams \\
\textbf{Category} & Carboni-Walters category \\
\textbf{Models}  & lax product preserving functors \\
\textbf{Homomorphisms} & monoidal lax natural transformations \\
\hline
\end{tabular}
\end{center}

The frop of commutative monoids can be thought of as the prop of bialgebras, together with an additional commutative ``black'' monoid. Again, as in the case of lax product theories, the models in $\Rel_\times$ are ordinary commutative monoids, and the model transformations are monoid homomorphisms.
The example of commutative monoids generalises to arbitrary algebraic theories: there is a procedure, analogous to that of producing an SMT from a classical algebraic theory, that results in a Frobenius theory, so that the models of the relevant Lawvere category in $\Set$ are in bijective correspondence with the models of the Frobenius theory in $\Rel_\times$. More than that, the categories of models are equivalent.

But Frobenius theories give us much more that a way of doing ``resource-honest'' algebraic theories in $\Rel_\times$: they are much more expressive and allow us to bring many new examples into the fold. This report introduces the basic theory together with a wide range of examples.

\subsection*{Structure of the paper.}
We recall the concepts of symmetric monoidal theory and props in Section~\ref{sec:smt} and explain how classical algebraic theories can be considered as symmetric monoidal theories, and the corresponding Lawvere theories as certain props. In Section~\ref{sec:laxproducttheories} we extend the picture to inequational theories, resulting in poset-enriched props. We also identify the crucial concept of lax product structure, which allows us to keep the ``arity-preservation'' property of classical models. In Section~\ref{sec:frobenius} we introduce the central concept of Frobenius theory, describe models and focus on some general properties. In Section~\ref{sec:ex}, we highlight interesting examples of Frobenius theories, showcasing the expressivity of the framework. In Section~\ref{sec:frobcartesian} we explain how cartesian theories can be considered as Frobenius theories, without altering the category of models. In the last three sections are an in-depth look at three ubiquitous mathematical theories, considered as Frobenius theories: commutative monoids (Section~\ref{sec:monoid}), abelian groups (Section~\ref{sec:ag}) and modules (Section~\ref{sec:modules}).

\section{Symmetric Monoidal Theories and Props}\label{sec:smt}

Our exposition is founded on \emph{symmetric monoidal theories}: presentations of algebraic structures borne by objects in a symmetric monoidal category.

\begin{definition} A (presentation of a) \emph{symmetric monoidal theory} (SMT) is a pair $\Theory{T} = (\Sigma, E)$ consisting of a \emph{signature} $\Sigma$ and a set of \emph{equations} $E$. The signature $\Sigma$ is a set of \emph{generators} $o \: n\to m$ with \emph{arity} $n$ and \emph{coarity} $m$. The set of \emph{$\Sigma$-terms} is obtained by composing generators in $\Sigma$, the unit $\id \: 1\to 1$ and the symmetry $\sigma_{1,1} \: 2\to 2$ with $;$ and $\tns$. This is a purely formal process: given $\Sigma$-terms $t \: k\to l$, $u \: l\to m$, $v \: m\to n$, one constructs new $\Sigma$-terms $t \mathrel{;} u \: k\to m$ and $t \tns v \: k+n \to l+n$. The set $E$ of \emph{equations} contains pairs $(t,t' \: n\to m)$ of $\Sigma$-terms with the same arity and coarity.
\end{definition}

The categorical concept associated with symmetric monoidal theories is the notion of prop (\textbf{pro}duct and \textbf{p}ermutation category~\cite{MacLane1965}).

\begin{definition} A \emph{prop} is a symmetric strict monoidal category with objects the natural numbers, where $\tns$ on objects is addition. The prop  \emph{freely generated} by a theory $\Theory{T}=(\Sigma,E)$, denoted by $\freePROP{\Theory{T}}$, has as its set of arrows $n\to m$ the set of $\Sigma$-terms $n\to m$ taken modulo the laws of symmetric strict monoidal categories --- Fig.~\ref{fig:axSMC} --- and the smallest congruence (with respect to $\poi$ and $\tns$) containing  equations $t=t'$ for any $(t,t')\in E$.
\end{definition}

\begin{figure}[t]
$$
\begin{array}{c}
(t_1 \poi t_3) \tns (t_2 \poi t_4) = (t_1 \tns t_2) \poi (t_3 \tns t_4)\end{array} $$ $$
\begin{array}{rcl}
(t_1 \poi t_2) \poi t_3= t_1 \poi (t_2 \poi t_3) & &
id_n \poi c = c = c\poi id_m\\
(t_1 \tns t_2) \tns t_3 = t_1 \tns (t_2\tns t_3) & &
id_0 \tns t = t  = t \tns id_0\\
\sigma_{1,1}\poi\sigma_{1,1}=id_2 & &
(t \tns id_z) \poi \sigma_{m,z} = \sigma_{n,z} \poi (id_z \tns t)
\end{array} $$
\caption{Axioms of symmetric strict monoidal categories for a prop $\T$.}\label{fig:axSMC}
\end{figure}

 There is a natural graphical representation for arrows of a prop as string diagrams, which we now sketch, referring to~\cite{Selinger2009} for the details. A $\Sigma$-term $n \to m$ is pictured as a box with $n$ ports on the left and $m$ ports on the right. Composition via $\poi$ and $\tns$ are rendered graphically by horizontal and vertical juxtaposition of boxes, respectively.
    \begin{eqnarray}\label{eq:graphlanguage}
t \poi s \text{ is drawn }
\lower7pt\hbox{$\includegraphics[height=.6cm]{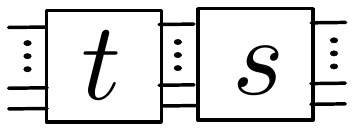}$}
\quad
t \tns s \text{ is drawn }
\lower13pt\hbox{$\includegraphics[height=1.1cm]{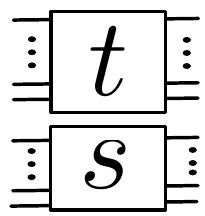}$}.
\end{eqnarray}
    In any SMT there are specific $\Sigma$-terms generating the underlying symmetric monoidal structure: these are $\id_1 \: 1 \to 1$, represented as $\Idnet$, the symmetry $\sigma_{1,1} \: 1+1 \to 1+1$, represented as $\symNet$, and the unit object for $\tns$, that is, $\id_0 \: 0 \to 0$, whose representation is an empty diagram. Graphical representation for arbitrary identities $\id_n$ and symmetries $\sigma_{n,m}$ are generated according to the pasting rules in~\eqref{eq:graphlanguage}.

  \begin{example}~\label{ex:equationalprops}
  \begin{enumerate}[(a)]
  \item We write $\CMtheory = (\Sigma_M,E_M)$ for the SMT of \emph{commutative monoids}. The signature $\Sigma_M$ contains a \emph{multiplication} $\Wmult \: 2 \to 1$ and a \emph{unit} $\Wunit \: 0 \to 1$.
 Equations $E_M$ assert associativity \eqref{eq:wmonassoc}, commutativity~\eqref{eq:wmoncomm} and unitality~\eqref{eq:wmonunitlaw}.
 \begin{multicols}{3}\noindent
 \begin{equation}
\label{eq:wmonassoc}
\lower10pt\hbox{$\includegraphics[height=1cm]{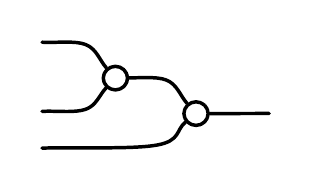}$}
\!\!\!
=
\!\!\!
\lower10pt\hbox{$\includegraphics[height=1cm]{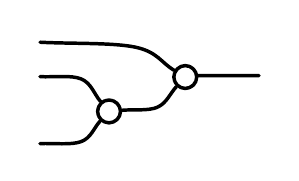}$}
\end{equation}
\begin{equation}
\label{eq:wmoncomm}
\lower7pt\hbox{$\includegraphics[height=.7cm]{graffles/Wmult.pdf}$}
\!
=
\!\!
\lower7pt\hbox{$\includegraphics[height=.7cm]{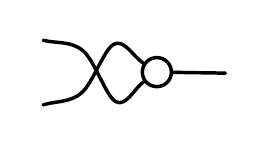}$}
\end{equation}
\begin{equation}
\label{eq:wmonunitlaw}
\lower8pt\hbox{$\includegraphics[height=.8cm]{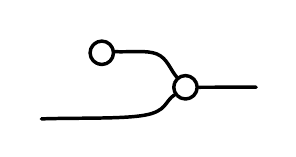}$}
\!
=\!
\lower5pt\hbox{$\includegraphics[height=.5cm]{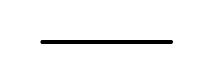}$}
\end{equation}
\end{multicols}
\item\label{it:comonoids} Next, the SMT $\CCtheory = (\Sigma_C, E_C)$ of \emph{commutative comonoids}. The signature $\Sigma_C$ consists of a \emph{comultiplication} $\Bcomult \: 1 \to 2$ and a \emph{counit} $\Bcounit \: 1\to 0$. $E_C$ consists of the following equations.
\begin{multicols}{3}\noindent
\begin{equation}
\label{eq:bcomonassoc}
\lower10pt\hbox{$\includegraphics[height=1cm]{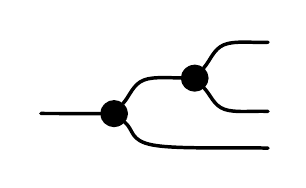}$}
\!\!
=
\!\!
\lower10pt\hbox{$\includegraphics[height=1cm]{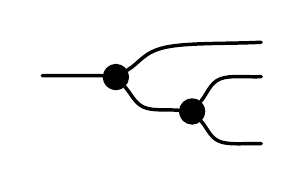}$}
\end{equation}
\begin{equation}
\label{eq:bcomoncomm}
\lower7pt\hbox{$\includegraphics[height=.9cm]{graffles/Bcomult.pdf}$}
\!
=
\!\!\!
\lower9pt\hbox{$\includegraphics[height=.9cm]{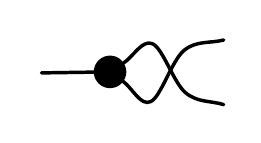}$}
\end{equation}
\begin{equation}
\label{eq:bcomonunitlaw}
\lower8pt\hbox{$\includegraphics[height=.8cm]{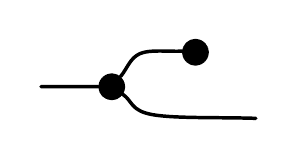}$}
\!\!\!
=
\!
\lower5pt\hbox{$\includegraphics[height=.5cm]{graffles/idcircuit.pdf}$}
\end{equation}
\end{multicols}
\item \label{it:frobenius} Monoids and comonoids can be combined into a theory that plays an important role in our exposition: the theory of \emph{special Frobenius algebras}~\cite{Carboni1987}. This is given by $\Frobtheory = (\Sigma_M \uplus \Sigma_C, E_M \uplus E_C \uplus F)$, where $F$ is the following set of equations.
    \begin{multicols}{2}\noindent
    \begin{equation}\label{eq:BWFrob}
\lower8pt\hbox{$\includegraphics[height=.8cm]{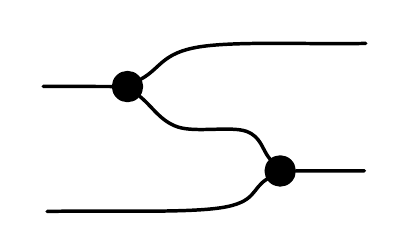}$}
=
\lower6pt\hbox{$\includegraphics[height=.6cm]{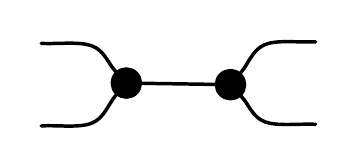}$}
=
\lower8pt\hbox{$\includegraphics[height=.8cm]{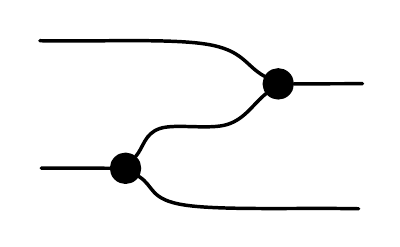}$}
\end{equation}
\begin{equation}\label{eq:BWSep}
\lower8pt\hbox{$\includegraphics[height=.8cm]{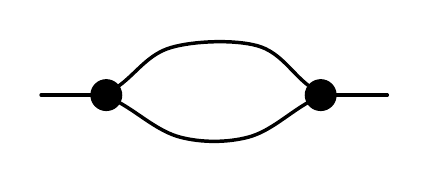}$}
=
\lower5pt\hbox{$\includegraphics[height=.5cm]{graffles/idcircuit.pdf}$}
\end{equation}
\end{multicols}

\item Another fundamental way to combine monoids and comonoids is the theory of (commutative/cocommutative) \emph{bialgebras} $\Bialgtheory = (\Sigma_M \uplus \Sigma_C, E_M \uplus E_C \uplus B)$, where $B$ is the following set of equations.
    \begin{multicols}{2}
\noindent
\begin{equation}
\label{eq:bialgunitsl}
\lower2pt\hbox{$
\lower7pt\hbox{$\includegraphics[height=.8cm]{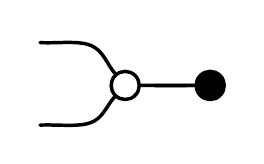}$}
=
\lower7pt\hbox{$\includegraphics[height=.8cm]{graffles/lunitsr.pdf}$}
$}
\end{equation}
\begin{equation}
\label{eq:bialgunitsr}
\lower7pt\hbox{$\includegraphics[height=.8cm]{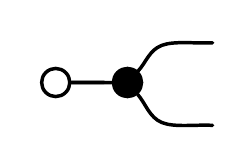}$}
=
\lower7pt\hbox{$\includegraphics[height=.8cm]{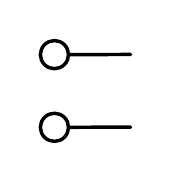}$}
\end{equation}
\begin{equation}
\label{eq:bialg}
\lower8pt\hbox{$\includegraphics[height=.8cm]{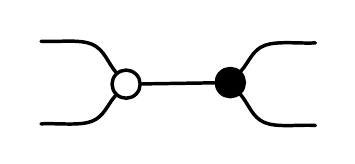}$}
=
\lower13pt\hbox{$\includegraphics[height=1.2cm]{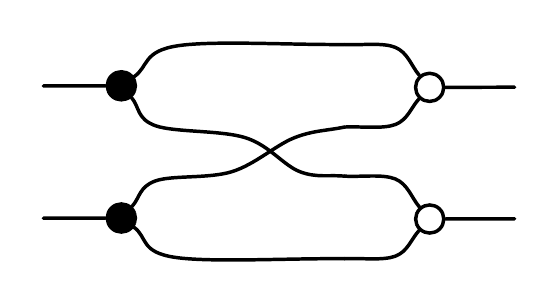}$}
\end{equation}
\begin{equation}
\label{eq:bwbone}
\lower4pt\hbox{$\includegraphics[height=.5cm]{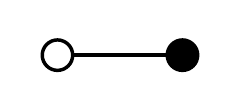}$}
=
\id_0
\end{equation}
\end{multicols}
One can read \eqref{eq:bialgunitsl}-\eqref{eq:bwbone} as stating that the monoid structure (multiplication,unit) is a comonoid homomorphism, and vice versa, the comonoid structure is a monoid homomorphism. 

Bialgebras and special Frobenius algebras play an important role in recent research threads in quantum~\cite{CoeckeDuncanZX2011,BialgAreFrob14}, concurrency~\cite{Bruni2006,Sobocinski2013a} and control theory~\cite{Bonchi2014b,BaezErbele-CategoriesInControl,Bonchi2015}.

\item Another theory that play a crucial role in the aforementioned works is the theory $\Hopftheory$ of \emph{Hopf algebras}. This is obtained from the theory of bialgebra by extending the $\Sigma_M \uplus \Sigma_C$, with the \emph{antipode} $\cgr[height=12pt]{antipode.pdf} \: 1 \to 1$ and the set of equations $E_M \uplus E_C \uplus B $ with the following three.
\begin{multicols}{3}
\noindent
\begin{equation} \label{eq:antipodehom1}
\lower7pt\hbox{$\includegraphics[height=.7cm]{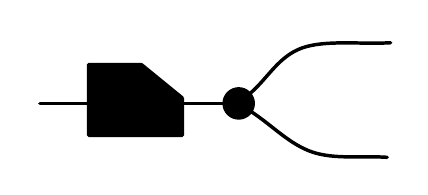}$}
\!=\!
\lower10pt\hbox{$\includegraphics[height=1cm]{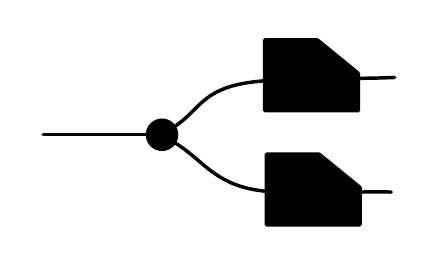}$}
\end{equation}
\begin{equation}
\label{eq:antipodehom2}
\lower7pt\hbox{$\includegraphics[height=.7cm]{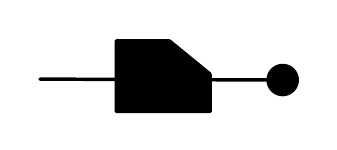}$}
=
\lower6pt\hbox{$\includegraphics[height=.6cm]{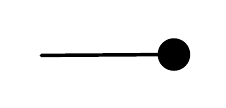}$}
\end{equation}
  \begin{equation} \label{eq:hopf}
  \!\lower8pt\hbox{$\includegraphics[height=.8cm]{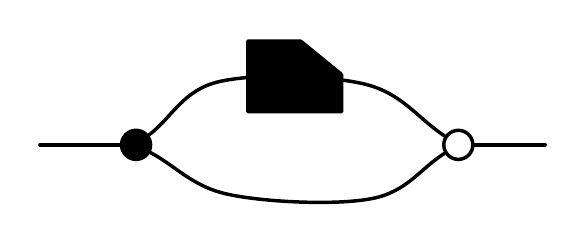}$}
  \!\!=\!\! 
  \lower3pt\hbox{$\includegraphics[height=.4cm]{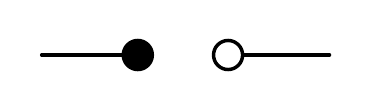}$}
   \end{equation}
   \end{multicols}
\end{enumerate}
  \end{example}

The assertion that $\CMtheory$ \emph{is the SMT of commutative monoids}---and similarly for other SMTs in our exposition---can be made precise through the notion of \emph{model} of an SMT.
\begin{definition}
Given a symmetric monoidal category $\catC$, 
a model of an SMT $\Theory{T}$ in $\catC$ is a symmetric monoidal functor $\funF \: \freePROP{\Theory{T}} \to \catC$. Then $\model{\T,\catC}$ is the category of models of $\T$ in $\catC$ and monoidal natural transformations between them. 
\end{definition}

Turning to commutative monoids, there is a category $\mathsf{Monoid}(\catC)$ whose objects are the commutative monoids in $\catC$, i.e., objects $x \in \catC$ equipped with arrows $x \tns x \to x$ and $I \to x$, satisfying the usual equations.
Given any model $\funF \: \Mon \to \catC$, it follows that $\funF 1$ is a commutative monoid in $\catC$: this yields a functor $\model{\Mon, \catC} \to \mathsf{Monoid}(\catC)$. Saying that $(\Sigma_M, E_M)$ is the SMT of commutative monoids means that this functor is an equivalence natural in $\catC$. 

We can  recover classical models by considering symmetric monoidal functors to $\Set_\times$, the symmetric monoidal category of sets, where the monoidal product is the cartesian product $\times$.
Indeed, the functor is determined, up-to natural isomorphism, by where it sends $1$. 
Concretely, we can consider the image of a symmetric monoidal functor of this type to consist of the
sets of $n$-tuples $n \mapsto X^n \Defeq \{\,(x_1,\dots,x_n) \,|\, x_i\in X\,\}$. Then $\mathsf{Monoid}(\Set_\times)$ is equivalent to the category of ordinary commutative monoids and monoid homomorphisms.

\subsection{Cartesian theories and Lawvere Categories}\label{sec:cartesian}

A \emph{cartesian category} (or finite product category) is a symmetric monoidal category where the monoidal product $\tns$ satisfies the universal property of the categorical product; a cartesian functor is a product preserving functor. It is well-known  that a symmetric monoidal category $\catC$ is cartesian \emph{iff} for every object $n$ in $\catC$, there are arrows $\dup_n \colon n \to n \tns n$ and $\dis_n \colon n\to I$ forming a cocommutative comonoid, graphically denoted by $\Bcomultn$ and $\Bcounitn$, and every arrow $f\colon m \to n$ in $\catC$ is a comonoid homomorphism. 
\begin{multicols}{2}
\noindent
\begin{equation}
\lower2pt\hbox{$
\lower10pt\hbox{$\includegraphics[height=1cm]{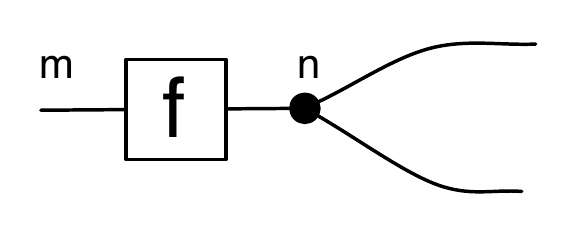}$}
=
\lower12pt\hbox{$\includegraphics[height=1.3cm]{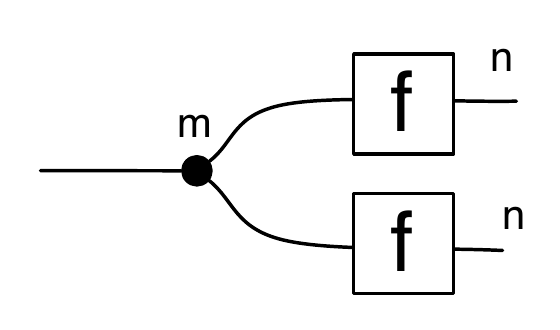}$}
$}
\end{equation}
\begin{equation}
\lower8pt\hbox{$\includegraphics[height=1cm]{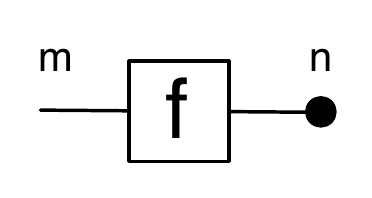}$}
=
\lower5pt\hbox{$\includegraphics[height=1cm]{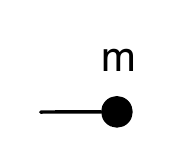}$}
\end{equation}
\end{multicols}

A \emph{Lawvere category}~\cite{LawvereOriginalPaper,hyland2007category} is then a symmetric monoidal category that is both cartesian and a prop.

\begin{example}~\label{ex:lawvere}
\begin{enumerate}[(a)]
\item
Recall the theory of commutative comonoids $\CCtheory = (\Sigma_C , E_C)$ from Example \ref{ex:equationalprops}(b). The resulting prop $\Com$ is the initial Lawvere category, the free category with products on one object. The 
comultiplication $\Bcomult \: 1 \to 2$ and the counit $\Bcounit \: 1\to 0$ are the comonoid on $1$.
For $n\in\N$, $\dup_n \colon n \to n \tns n$ and $\dis_n\colon n\to 0$ are defined recursively: $\dup_0 = id_0$ and $\dup_{n+1} = (\dup_1 \tns \dup_n) \poi (\id_1 \tns \sigma_{1,n} \tns id_n)$,  $\dis_0 = id_0$ and $\dis_{n+1} = \dis_1\tns \dis_n$.
\item The prop of \emph{bialgebras} $\B$   (Example \ref{ex:equationalprops}(c)) is also a Lawvere category. For every natural number, the comonoid structure is defined as above. Moreover all arrows in $\B$ are comonoid homomorphisms, since \eqref{eq:bialgunitsl}, \eqref{eq:bialgunitsr}, \eqref{eq:bialg}, \eqref{eq:bwbone} say exactly that $\Wmult$ and $\Wunit$ are comonoid homomorphisms.
\item Amongst the other SMTs in Example \ref{ex:equationalprops}, only $\Hopftheory$ freely generates a Lawvere category: indeed equations \eqref{eq:antipodehom1} and \eqref{eq:antipodehom2} state that the antipode is a comonoid homomorphism.
\end{enumerate}
\end{example}

\begin{definition}\label{def:Lawvere} 
A (presentation of a) \emph{cartesian theory} is a pair $\Theory{T}=(\Sigma, E)$ consisting of a signature $\Sigma$ and equations $E$.
$\Sigma$ is a set of \emph{generators} $o \: n\to 1$ with \emph{arity} $n$ and \emph{coarity} $1$.  The set of equations $E$ contains pairs $(t,t' \: n\to 1)$ of \emph{Cartesian $\Sigma$-terms}, namely arrows of the prop freely generated by the SMT $(\Sigma \uplus \Sigma_C, E_C \uplus E_{CH})$ where $E_{CH}$  contains equations 
\begin{equation}
\lower10pt\hbox{$\includegraphics[height=1cm]{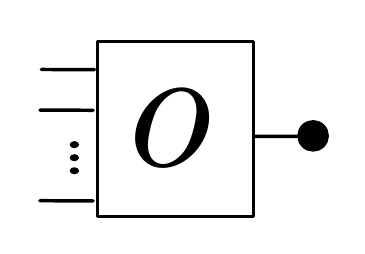}$}
=
\lower10pt\hbox{$\includegraphics[height=1cm]{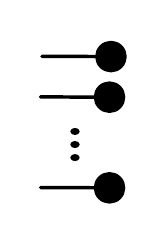}$}
\end{equation}
\begin{equation}
\lower10pt\hbox{$\includegraphics[height=1cm]{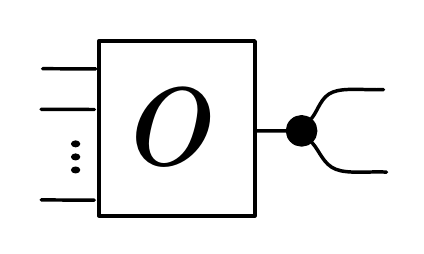}$}
=
\lower15pt\hbox{$\includegraphics[height=1.5cm]{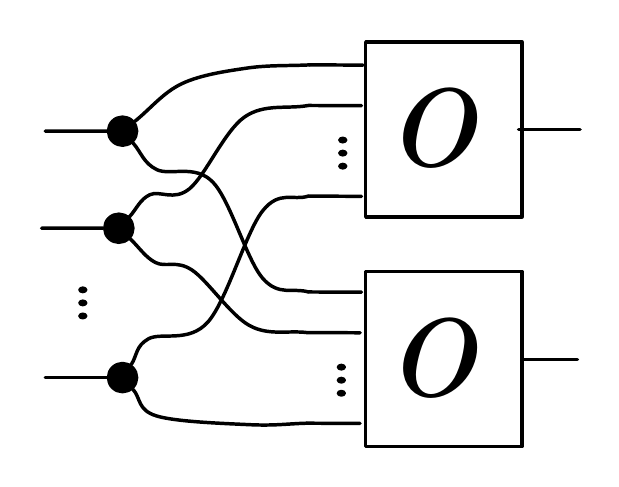}$}
\end{equation}
for each generator $\cgr[height=18pt]{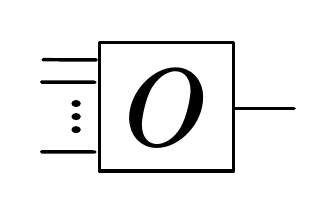} \: n \to 1$ of the signature $\Sigma$.

The Lawvere category \emph{freely generated} by a cartesian theory $\Theory{T}=(\Sigma,E)$, denoted by $\Law{\Theory{T}}$,  is the prop freely generated by the SMT $(\Sigma \uplus \Sigma_C, E\uplus E_C \uplus E_{CH})$. The latter will be often referred to as the SMT corresponding to the cartesian theory $(\Sigma,E)$.
\end{definition}

Cartesian terms can be thought as the familiar notion of standard syntactic term: trees with leaves labeled by variables. The ability to copy and discard variables is given by  $\Bcomult$ and $\Bcounit$, respectively. Since these structures are implicit in any cartesian theory, one can therefore think of cartesian terms as \emph{resource-insensitive} syntax. On the other hand, the string diagrams of SMTs provide a resource-aware syntax since the ability to add and copy variables, if available, is made explicit. 

\begin{example}\label{ex:cartesiantheory}~
\begin{enumerate}[(a)]
\item In $\Sigma$-terms of the SMT of commutative monoids $\CMtheory=(\Sigma_M,E_M)$ (Example \ref{ex:equationalprops}(a)), variables cannot be copied or discarded. The cartesian theory of commutative monoids has the same signature and equations, but terms have the implicit capability of being copied and discharged. Indeed, the Lawvere category $\Law{\CMtheory}$ is isomorphic to the prop $\freePROP{\Bialgtheory}$  generated by the SMT of bialgebras (Example \ref{ex:equationalprops}(d)). 
\item By adding to  $\CMtheory$ the antipode  $\cgr[height=13pt]{antipode.pdf} \: 1 \to 1$ and equation \eqref{eq:hopf}, one obtains the cartesian theory $\AGtheory$ of Abelian groups. The corresponding SMT is the theory $\Hopftheory$ of Hopf algebras (Example \ref{ex:equationalprops}(e)): i.e. $\freePROP{\Bialgtheory} \cong \Law{\AGtheory}$.
\end{enumerate}
\end{example}

As for SMTs, the assertion that $\CMtheory$ \emph{is the cartesian theory of commutative monoids} can be made precise using the notion of \emph{model} of an cartesian theory.
\begin{definition}
Given a cartesian category $\catC$, 
a model of a cartesian theory $\Theory{T}$ in $\catC$ is a cartesian functor $\funF \: \Law{\Theory{T}} \to \catC$. Then $\carmodel{\T,\catC}$ is the category of models of $\T$ in $\catC$ and monoidal natural transformations between them. 
\end{definition}

For an example take the cartesian category $\Set$. Every model $\funF \: \Law{\Theory{T}} \to \Set$ maps $1$ to some set $X$ and thus every natural number $n$ to  $X^n$. Requiring $F$ to be cartesian forces the counit $\Bcounit \colon 1\to 0$ to be mapped into the unique morphism from $X$ to the final object $1=X^0$ and the comultiplication $\Bcomult\colon 1\to 2$ to the diagonal $\dup_X = \langle id_X,id_X\rangle \colon X\to X\times X$. So, a model $F$ is uniquely determined by the set $F1$ and the functions $Fo \colon (F1)^n \to F1$ for each generator $o\colon n \to 1$ of the signature. In a nutshell, the notion of Cartesian model for $\Theory{T}$ coincides with the standard notion of algebra. By spelling out the definition of natural transformation, one can readily check that morphism of models are homomorphisms.

\section{Lax product theories}\label{sec:laxproducttheories}
A first step toward Frobenius theories and their models consists in relaxing products into lax products. In this section, we introduce the categorical machinery to deals with theories of inequations and lax products theories.

\medskip

Suppose that $\Sigma$ is a set of generators and $I$ is a set of \emph{inequations}: similarly to an equation, the underlying data of an inequation is simply a pair $(t_1,t_2)$ of equal-typed $\Sigma$-terms. Unlike equations, however, we will understand this data as being directed:
\[
t_1 \leq t_2
\]
We call the pair $(\Sigma,I)$ a (presentation of a) symmetric monoidal inequation theory (SMIT).

Throughout the paper we use \emph{ordered} as a synonym for ``enriched in $\mathbf{Pos}$'' - the category of posets and monotonic functions. Indeed, just as SMTs lead to props, SMITs lead to ordered props, as defined below.
\begin{definition}[Ordered prop]
An \emph{ordered prop} is a prop enriched over the category of posets: that is, it is a strict symmetric 2-category $\catC$ with objects the natural numbers, monoidal product on objects defined as $m\oplus n \Defeq m+n$, where each set of arrows $\catC[m,n]$ is a poset, with composition and monoidal product monotonic.
Similarly, a \emph{pre-ordered prop} is a prop enriched over the category of pre-orders.
\end{definition}

Analogously to how one---given and SMT $(\Sigma,E)$---constructs a free prop, we can use a SMIT $(\Sigma, I)$ to generate a free ordered prop. First, we constructs the free pre-ordered prop: arrows are $\Sigma$-terms. The homset orders are determined by whiskering $I$ and closing it under $\oplus$, then applying reflexive and transitive closure: this is the smallest preorder containing $I$ that makes $\catC$ into a pre-ordered prop (i.e. composition is monotonic and $\oplus$ is a 2-functor). Then, we obtain the free ordered prop by quotienting the free pre-ordered prop by the equivalence induced by the pre-order.

\medskip

Any SMT $(\Sigma,E)$ gives rise to a canonical SMIT $(\Sigma,I)$ where each equation is replaced with two inequalities $I=E\uplus E^{op}$, in the obvious way. The free prop for $(\Sigma,E)$ can then be obtained from the free ordered prop for $(\Sigma,I)$ by forgetting the underlying 2-structure. For this reason, we can safely abuse the notation $\freeordPROP{\Theory{T}}$ to denote the ordered prop freely generated by an SMIT $\Theory{T}$.

\begin{example}~\label{ex:SMIT}
\begin{enumerate}[(a)]
\item The SMT of commutative monoids $\CMtheory = (\Sigma_{M}, E_M)$ (Example \ref{ex:equationalprops} (a)) can be regarded as the SMIT $(\Sigma_{M}, E_M \uplus E_M^{op})$.
\item The SMT of cocommutative comonoids $\CCtheory = (\Sigma_{C}, E_C)$ (Example \ref{ex:equationalprops} (b)) can be regarded as the SMIT $(\Sigma_{C}, E_C \uplus E_C^{op})$.
\item The SMT of bialgebra $\Bialgtheory = (\Sigma_M \uplus \Sigma_C, E_M \uplus E_C \uplus B)$ (Example \ref{ex:equationalprops} (d)) can be regarded as the SMIT $(\Sigma_M \uplus \Sigma_C, E_M \uplus E_M^{op} \uplus E_C \uplus E_C^{op} \uplus B \uplus B^{op}) $.
\item From the SMIT of bialgebra, one can drop the inequations $B^{op}$ and obtain the SMIT of \emph{lax bialgebras} $\Laxbialgtheory = (\Sigma_M \uplus \Sigma_C, E_M \uplus E_M^{op} \uplus E_C \uplus E_C^{op} \uplus B)$. In this theory we have a monoid, a comonoid and the inequations of $B$ -- that we depict  below for the convenience of the reader -- force the monoid to be a lax comonoid homomorphism.
    \begin{multicols}{2}
\noindent
\begin{equation}
\label{eq:lunitsl}
\lower2pt\hbox{$
\lower7pt\hbox{$\includegraphics[height=.8cm]{graffles/lunitsl.pdf}$}
\leq
\lower7pt\hbox{$\includegraphics[height=.8cm]{graffles/lunitsr.pdf}$}
$}
\end{equation}
\begin{equation}
\label{eq:lunitsr}
\lower7pt\hbox{$\includegraphics[height=.8cm]{graffles/runitsl.pdf}$}
\leq
\lower7pt\hbox{$\includegraphics[height=.8cm]{graffles/runitsr.pdf}$}
\end{equation}
\begin{equation}
\label{eq:lbialg}
\lower7pt\hbox{$\includegraphics[height=.8cm]{graffles/bialgl.pdf}$}
\leq
\lower11pt\hbox{$\includegraphics[height=1.1cm]{graffles/bialgr.pdf}$}
\end{equation}
\begin{equation}
\label{eq:lbwbone}
\lower4pt\hbox{$\includegraphics[height=.5cm]{graffles/unitsl.pdf}$}
\leq
id_0
\end{equation}
\end{multicols}
\item Otherwise one can drop the inequations in $B$ and obtains the SMIT of \emph{oplax bialgebras} $\Oplaxbialgtheory = (\Sigma_M \uplus \Sigma_C, E_M \uplus E_M^{op} \uplus E_C \uplus E_C^{op} \uplus B^{op})$. The inequations  of $B^{op}$ are depicted below.
    \begin{multicols}{2}
\noindent
\begin{equation}
\label{eq:olunitsl}
\lower2pt\hbox{$
\lower7pt\hbox{$\includegraphics[height=.8cm]{graffles/lunitsl.pdf}$}
\geq
\lower7pt\hbox{$\includegraphics[height=.8cm]{graffles/lunitsr.pdf}$}$}
\end{equation}
\begin{equation}
\label{eq:olunitsr}
\lower7pt\hbox{$\includegraphics[height=.8cm]{graffles/runitsl.pdf}$}
\geq
\lower7pt\hbox{$\includegraphics[height=.8cm]{graffles/runitsr.pdf}$}
\end{equation}
\begin{equation}
\label{eq:olbialg}
\lower7pt\hbox{$\includegraphics[height=.8cm]{graffles/bialgl.pdf}$}
\geq
\lower11pt\hbox{$\includegraphics[height=1.1cm]{graffles/bialgr.pdf}$}
\end{equation}
\begin{equation}
\label{eq:olbwbone}
\lower4pt\hbox{$\includegraphics[height=.5cm]{graffles/unitsl.pdf}$}
\geq
id_0
\end{equation}
\end{multicols}
\end{enumerate}
\end{example}

Particularly relevant for our exposition is the SMIT of commutative comonoids: cartesian theories include an implicit comonoid structure and force the generators in the signature to be comonoid homomorphisms. The theories that we are going to introduce next -- lax product theories -- are analogous, but they require the generators to be \emph{lax} comonoid homomorphisms. 


\begin{definition}[Lax Product Theory]\label{def:LPT} 
A (presentation of a) \emph{lax product theory} (LPT)  is a pair $\Theory{T}=(\Sigma, I)$ consisting of a signature $\Sigma$ and a set of \emph{inequations} $I$.
The signature $\Sigma$ is a set of \emph{generators} $o \: n\to m$ with \emph{arity} $n$ and \emph{coarity} $m$.  The set of inequations $I$ contains pairs $(t,t' \: n\to m)$ of \emph{L-$\Sigma$-terms}, namely  arrows of the ordered prop freely generated by the SMIT $(\Sigma \uplus \Sigma_C, E_C \uplus E_C^{op} \uplus I_{LCH})$ where $I_{LCH}$ is the set containing 
\begin{equation}\label{eq:lcomhom1}
\lower15pt\hbox{$\includegraphics[height=1.5cm]{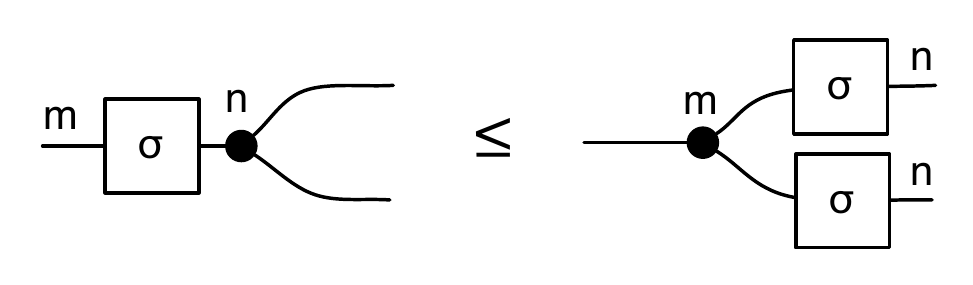}$}
\end{equation}
\begin{equation}\label{eq:lcomhom2}
\lower7pt\hbox{$\includegraphics[height=1cm]{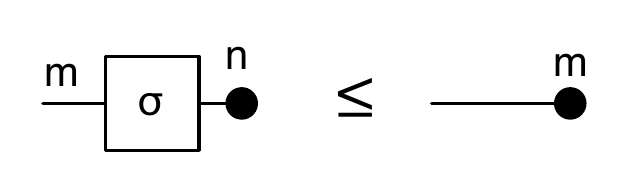}$}
\end{equation}
for each generator $\sigma$ in $\Sigma$.

We refer to $(\Sigma \uplus \Sigma_C, I \uplus E_C \uplus E_C^{op} \uplus I_{LCH})$ as the SMIT corresponding to a LPT $\Theory{T} = (\Sigma, I)$. The ordered prop freely generated by the SMIT corresponding to $\Theory{T}$ is called the \emph{lax product prop freely generated} by $\Theory{T}$ and denoted by $\freeLPPROP{\Theory{T}}$.
\end{definition}

The mismatch between and SMITs and LPTs is analogous to the one of SMTs and cartesian theories: the theory of comonoids (Example \ref{ex:SMIT} (b)) is the SMIT corresponding to the empty LPT $(\emptyset, \emptyset)$; the theory of lax bialgebra (Example \ref{ex:SMIT} (d)) is the SMIT corresponding to the LPT of commutative monoids (Example \ref{ex:SMIT} (a)).

An important difference between lax product theories and cartesian theories is that generators in $\Sigma$ can have arbitrary coarity, not necessarily $1$ as is the case in any cartesian theory.
Indeed, the presence of finite products eliminates the need for coarities other than 1, since to give an arrow $X^m \to X^n$, in a cartesian category is to give an $n$-tuple of arrows $X^m\to X$, obtained by composing with the projections. In a lax product theory, instead, this is not the case. As we shall see below, the category of relations can be considered as a source of models for a lax product theory and it is clearly not true, in general, that relations $X^m\to X^n$ are determined by their projections.

 The notion of lax product prop will be formalised in the next subsection. For the moment, the reader can think of these structures as ordered props where objects are equipped with a comonoid structure and arrows are lax comonoid homomorphism. This is the case in $\freeLPPROP{\Theory{T}}$ as shown below.

\begin{theorem}\label{thm:lptlaxhom}
Let $\Theory{T}= (\Sigma,I)$ be an LPT and $\freeLPPROP{\Theory{T}}$ be lax product prop freely generated by it.
Then every $t\colon m\to n$ in $\freeLPPROP{\Theory{T}}$ is a lax comonoid homomorphism.
\end{theorem}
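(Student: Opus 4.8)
The plan is to argue by structural induction on the terms representing arrows of $\freeLPPROP{\Theory{T}}$. Recall that every arrow $t\colon m\to n$ is represented by a $(\Sigma\uplus\Sigma_C)$-term built, via $\poi$ and $\oplus$, from the signature generators $\sigma\in\Sigma$, the comonoid generators $\Bcomult$ and $\Bcounit$, the identity $\id_1$ and the symmetry $\sigma_{1,1}$. Being a \emph{lax comonoid homomorphism} means that the two inequations
\[
t\poi\dup_n\ \leq\ \dup_m\poi(t\oplus t), \qquad t\poi\dis_n\ \leq\ \dis_m
\]
hold in $\freeLPPROP{\Theory{T}}$, where $\dup_k$ and $\dis_k$ are the $k$-fold comonoid structures of Example~\ref{ex:lawvere}(a). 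Since this is a property of the arrow (two inequalities that hold or fail in the ordered prop), it is enough to establish it for every representative term; hence the induction is over term structure and the extra (in)equations of the presentation play no role beyond shrinking homsets.

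First I would dispatch the base cases. The signature generators $\sigma\in\Sigma$ are lax comonoid homomorphisms by construction: these are exactly the inequations $I_{LCH}$ of~\eqref{eq:lcomhom1}--\eqref{eq:lcomhom2} imposed in the SMIT corresponding to $\Theory{T}$. The identity $\id_1$ is trivially a strict, hence lax, comonoid homomorphism, and $\sigma_{1,1}$ is one because the comonoid is cocommutative. Finally $\Bcomult$ and $\Bcounit$ are \emph{strict} comonoid homomorphisms, witnessed equationally by coassociativity, cocommutativity and counitality (i.e.\ by $E_C$ and $E_C^{op}$): this is the standard fact that in any cocommutative comonoid the comultiplication and counit are themselves coalgebra maps. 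As equalities they yield in particular the required lax inequalities.

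Next I would prove two closure lemmas, which constitute the inductive step. (i) If $t\colon k\to l$ and $u\colon l\to m$ are lax comonoid homomorphisms, then so is $t\poi u$: for the comultiplication one chains
\[
(t\poi u)\poi\dup_m \ \leq\ t\poi(\dup_l\poi(u\oplus u)) \ =\ (t\poi\dup_l)\poi(u\oplus u) \ \leq\ \dup_k\poi((t\poi u)\oplus(t\poi u)),
\]
using laxness of $u$, then of $t$, monotonicity of $\poi$, and the interchange law; the counit inequation composes even more directly. (ii) If $t\colon k\to l$ and $u\colon k'\to l'$ are lax homomorphisms, then so is $t\oplus u$. For the counit this is immediate since $\dis_{l+l'}=\dis_l\oplus\dis_{l'}$. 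For the comultiplication one uses the identity $\dup_{k+k'}=(\dup_k\oplus\dup_{k'})\poi(\id_k\oplus\sigma_{k,k'}\oplus\id_{k'})$ (and its analogue at $l,l'$), applies the laxness of $t$ and $u$ separately under monotonicity of $\oplus$, and then invokes naturality of the symmetry to slide the middle swap past $t\oplus u\oplus t\oplus u$, recovering $\dup_{k+k'}\poi((t\oplus u)\oplus(t\oplus u))$. Together with the base cases, structural induction then yields the claim.

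I expect the main obstacle to be the monoidal-product closure (ii), specifically the comultiplication inequality, where the symmetry twist in $\dup_{k+k'}$ must be matched on both sides. This is the one place where the recursive definition of $\dup_n$ and naturality of the symmetry genuinely interact; it is routine string-diagram bookkeeping but must be done carefully so that the $\leq$ is preserved in the correct direction throughout. Everything else reduces to a single application of monotonicity of $\poi$ and $\oplus$ together with the inequations already available by induction.
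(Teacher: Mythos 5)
Your proposal is correct and takes essentially the same route as the paper: a structural induction whose base cases are the generators of $\Sigma$ (handled by the imposed inequations \eqref{eq:lcomhom1} and \eqref{eq:lcomhom2}) and whose inductive steps are the two closure arguments for $\poi$ and $\oplus$, which are precisely the diagrammatic calculations the paper displays. Your additional care with the remaining base cases (identity, symmetry, and the comultiplication and counit, which are strict and hence lax homomorphisms) only makes explicit what the paper leaves implicit.
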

\begin{proof}
By inequations \eqref{eq:lcomhom1} and \eqref{eq:lcomhom2}, every generator in $\Sigma$ is a lax comonoid homomorphism. A simple structural induction confirms it for compound terms. 
\[
\includegraphics[height=1.5cm]{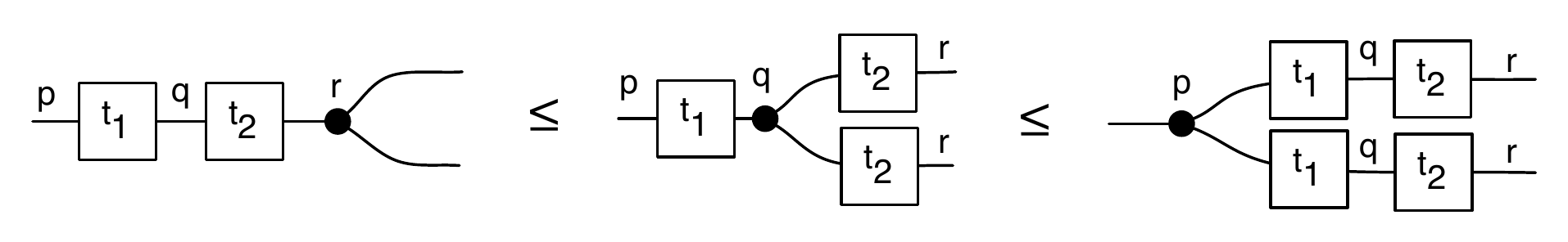}
\]
\[
\includegraphics[height=2.5cm]{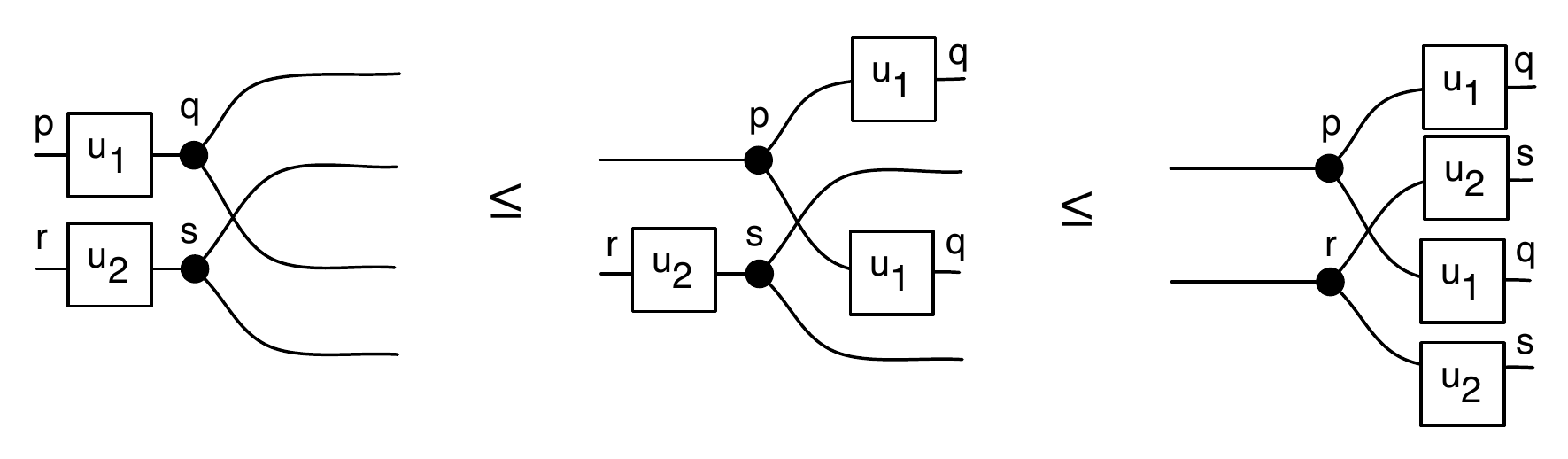}
\]
\end{proof}

\subsection{Lax product structures and lax products} 
In Section \ref{sec:cartesian} we recalled that the monoidal product is a categorical product precisely when
all arrows are comonoid homomorphisms. Here we show that the property of arrows being lax-comonoid homomorphisms force the monoidal product to be a lax product, a bicategorical limit. We begin by noting that the commutative comonoid structure in any lax product theory is an instance of something we call a \emph{lax product structure}.

\begin{definition}[Lax product structure]\label{defn:laxproductstructure}
Given an ordered monoidal category $\mathbf{C}$, a \emph{lax product structure} is a choice, for each object $C\in\mathbf{C}$, of commutative comonoid $(\Delta_C,\bot_C)$, compatible with the monoidal product in the obvious way, i.e.:
\[
\Delta_{C\oplus D} = (\Delta_C \oplus \Delta_D);(C\oplus \sigma_{C,D} \oplus D)
\qquad
\bot_{C\oplus D} = \bot_C \oplus \bot_D
\]
 such that for every arrow $\alpha:B\to C$ we have 
\[
\alpha;\Delta_C \leq \Delta_B \poi (\alpha\oplus\alpha)\quad\text{and}\quad \alpha;\bot_C \leq \bot_C.
\]
\end{definition}
\begin{lem}
In an ordered monoidal category $\mathbf{C}$, a lax product structure, if it exists, is unique.
\end{lem}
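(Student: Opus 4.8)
The plan is to exploit antisymmetry of the poset enrichment: since $\mathbf{C}$ is \emph{ordered}, it suffices to show, for two candidate lax product structures $(\Delta_C,\bot_C)$ and $(\Delta'_C,\bot'_C)$, that $\bot_C\leq\bot'_C$ and $\bot'_C\leq\bot_C$, and likewise $\Delta_C\leq\Delta'_C$ and $\Delta'_C\leq\Delta_C$, for every object $C$. Each such inequality will come from applying the defining lax-homomorphism inequality of \emph{one} structure to a structural map of the \emph{other}, and then cleaning up with the comonoid laws. I would proceed in three stages: first pin down the structure on the monoidal unit $I$, then the counits, then the comultiplications.

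For the unit object, compatibility with $C=D=I$ forces $\bot_I=\bot_I\oplus\bot_I$, and since the interchange and strict unit laws give $f\oplus g=f;g$ for endomaps of $I$, this reads $\bot_I=\bot_I;\bot_I$ (idempotency). The counit law gives $\Delta_I;\bot_I=\id_I$; composing the idempotency on the left with $\Delta_I$ then yields $\bot_I=\id_I$, and likewise $\bot'_I=\id_I$. With this in hand the counits agree everywhere: the lax counit inequality of the primed structure applied to $\alpha=\bot_C\colon C\to I$ gives $\bot_C;\bot'_I\leq\bot'_C$, i.e.\ $\bot_C\leq\bot'_C$, and the symmetric application of the unprimed inequality to $\alpha=\bot'_C$ gives $\bot'_C\leq\bot_C$; antisymmetry yields $\bot_C=\bot'_C$.

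For the comultiplications, I would apply the lax comultiplication inequality of the unprimed structure to the arrow $\alpha=\Delta'_C\colon C\to C\oplus C$, obtaining
\[
\Delta'_C;\Delta_{C\oplus C}\ \leq\ \Delta_C;(\Delta'_C\oplus\Delta'_C),
\]
and then post-compose both sides (composition is monotone) with the projection $(\bot_C\oplus\id_C)\oplus(\id_C\oplus\bot_C)\colon(C\oplus C)\oplus(C\oplus C)\to C\oplus C$. On the right, expanding $\Delta'_C\oplus\Delta'_C$ and using counitality of the primed comonoid twice — legitimate because we have already shown $\bot_C=\bot'_C$ — collapses the expression to $\Delta_C$. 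On the left, substituting $\Delta_{C\oplus C}=(\Delta_C\oplus\Delta_C);(\id_C\oplus\sigma_{C,C}\oplus\id_C)$ and discarding the appropriate copies via unprimed counitality turns $\Delta_{C\oplus C};\big((\bot_C\oplus\id_C)\oplus(\id_C\oplus\bot_C)\big)$ into the symmetry $\sigma_{C,C}$, so the left side becomes $\Delta'_C;\sigma_{C,C}=\Delta'_C$ by cocommutativity. Hence $\Delta'_C\leq\Delta_C$; running the same argument with the roles of the two structures exchanged gives $\Delta_C\leq\Delta'_C$, and antisymmetry finishes the claim.

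I expect the two delicate points to be (i) the base case on $I$ — getting $\bot_I=\id_I$ honestly from compatibility and the comonoid laws, which is where the Eckmann–Hilton-style observation that $f\oplus g=f;g$ on $\mathbf{C}[I,I]$ is used — and (ii) the bookkeeping in the comultiplication step, namely verifying that the chosen projection really collapses the unprimed copy $\Delta_{C\oplus C}$ to $\sigma_{C,C}$ while collapsing the mixed expression on the right to $\Delta_C$. It is also worth stressing that the whole argument relies essentially on antisymmetry: the lax-homomorphism axioms only ever deliver inequalities in one direction, and it is only because $\mathbf{C}$ is poset-enriched (not merely preorder-enriched) that the two-sided bounds can be promoted to genuine equalities.
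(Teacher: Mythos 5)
Your proof is correct and follows essentially the same route as the paper's: you apply each structure's lax-homomorphism inequality to the other structure's counit and comultiplication, collapse the resulting composites using counitality and compatibility with $\oplus$ (establishing $\bot_C=\bot'_C$ first, exactly as the paper does before treating $\Delta$), and invoke antisymmetry of the poset enrichment to turn the two one-sided bounds into equalities. The only differences are cosmetic: you make the step $\bot_I=\id_I$ explicit via the Eckmann--Hilton argument (the paper leaves this implicit, since $\bot_I$ is invisible in string-diagram notation), and your projection keeps the middle two wires so that a symmetry $\sigma_{C,C}$ appears and cocommutativity is needed to cancel it, whereas discarding the middle wires instead avoids that appeal.
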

\begin{proof}
Suppose that for some $C\in \mathbf{C}$ we have lax product structures 
$\{\,(\Delta_C, \bot_C)\,|\,C\in\mathbf{C}\,\}$ and $\{\,(\Delta'_C, \bot'_C)\,|\,C\in\mathbf{C}\,\}$
which we shall draw $(\Bcomult,\Bcounit)$ and 
$(\lower7pt\hbox{$\includegraphics[height=.7cm]{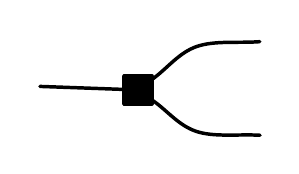}$},
\lower4pt\hbox{$\includegraphics[height=.5cm]{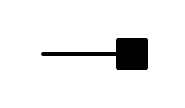}$}
)$, respectively. It follows that, for all $C$, $\bot_C = \bot'_C$ since $\bot'_C\leq \bot_C$:
\[
\includegraphics[height=1.2cm]{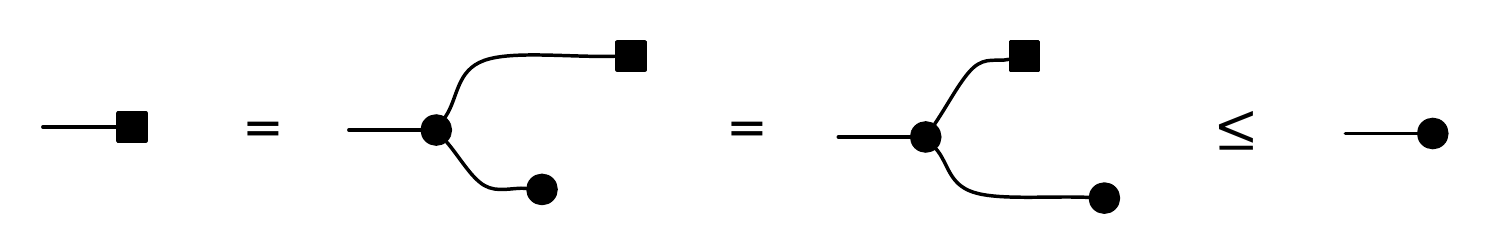}
\]
and, using a symmetric argument, $\bot_C \leq \bot'_C$. Using the fact that the lax product structure is, by definition, assumed to be compatible with monoidal product and the fact that $\bot_C=\bot'_C$ it follows that
$\Delta'_C \leq \Delta_C$:
\[
\includegraphics[height=1.7cm]{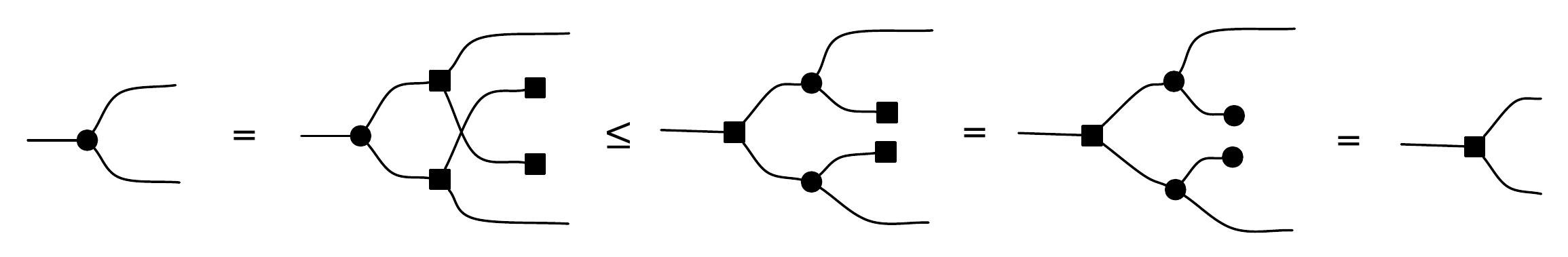}
\]
and, again by a symmetric argument, that $\Delta_C\leq \Delta'_C$.
\end{proof}

\medskip
We will now show that, if an ordered monoidal category has a lax product structure then the monoidal product is a lax product, by which we mean the following bicategorical limit.
Given objects $m$ and $n$, a \emph{lax product} is an object $m\times n$
with \emph{projections}, that is arrows 
$\pi_1: m\times n \to m$, $\pi_2:m\times n \to n$
s.\ t.\ for any $f: k\to m$, $g:k\to n$ there exists $h: k\to m\times n$ 
and 2-cells $\rho_1$, $\rho_2$ as illustrated below:
\begin{equation}\label{eq:laxprod1}
\raise15pt\hbox{$
\xymatrix{
{m} & {m\times n} 
\ar@{=>}[]!<-3ex,-3ex>;[dl]!<5ex,3ex>_{\rho_1}
\ar@{=>}[]!<3ex,-3ex>;[dr]!<-5ex,3ex>^{\rho_2}
\ar[l]_{\pi_1} \ar[r]^{\pi_2} &  {n} \\ 
& {k} \ar@/^1pc/[ul]^{f} \ar@/_1pc/[ur]_{g} \ar@{.>}[u]|{h} &
}$}
\end{equation}
such that, given any other $h'$, $\sigma_1$ and $\sigma_2$ as below
\[
\raise15pt\hbox{$
\xymatrix{
{m} & {m\times n} 
\ar@{=>}[]!<-3ex,-3ex>;[dl]!<5ex,3ex>_{\sigma_1}
\ar@{=>}[]!<3ex,-3ex>;[dr]!<-5ex,3ex>^{\sigma_2}
\ar[l]_{\pi_1} \ar[r]^{\pi_2} &  {n} \\ 
& {k} \ar@/^1pc/[ul]^{f} \ar@/_1pc/[ur]_{g} \ar@{.>}[u]|{h'} &
}$}
\]
there exists unique $\xi: h' \Ra h$ such that 
\[
\raise20pt\hbox{$
\xymatrix{
m & {m\times n} \ar[l]_{\pi_1} 
\ar@{=>}[]!<-4ex,-3ex>;[dl]!<4ex,3ex>_{\rho_1}
& 
\ar@{=>}[]!<-8ex,-4ex>;[ll]!<9ex,-4ex>_{\xi}
\\
& k \ar@/_1pc/[u]_{h'} \ar@/^1pc/[u]|h \ar@/^1pc/[ul]^f
}$}
\hspace{-1.5pc}
=
\raise20pt\hbox{$
\xymatrix{
m & {m\times n} \ar[l]_{\pi_1} 
\ar@{=>}[]!<-3ex,-3ex>;[dl]!<4ex,3ex>_{\sigma_1}
\\
& k \ar[u]_{h'} \ar@/^1pc/[ul]^f
}$}
\quad \text{and} \quad
\raise20pt\hbox{$
\xymatrix{
\ar@{=>}[]!<8ex,-4ex>;[rr]!<-8ex,-4ex>^{\xi}
& {m\times n} \ar[r]^{\pi_2} 
\ar@{=>}[]!<4ex,-3ex>;[dr]!<-4ex,3ex>^{\rho_2}
 & n  
\\
& k \ar@/_1pc/[u]|{h} \ar@/^1pc/[u]^{h'} \ar@/_1pc/[ur]_g & 
}$}
=
\raise20pt\hbox{$
\xymatrix{
{m\times n} \ar[r]^{\pi_2} \ar@{=>}[]!<3ex,-3ex>;[dr]!<-4ex,3ex>^{\sigma_2} & n \\
k \ar[u]^{h'} \ar@/_1pc/[ur]_g &}$}
\]

\begin{thm}[Carboni and Walters]\label{thm:laxproduct}
Suppose that $\cat{C}$ has a lax product structure. Then the monoidal product
of $\cat{C}$ is a lax product.
\end{thm}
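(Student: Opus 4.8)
The plan is to take the monoidal product $m\oplus n$ itself as the candidate lax product, with projections obtained by discarding: $\pi_1 \Defeq \id_m \oplus \bot_n \colon m\oplus n \to m$ and $\pi_2 \Defeq \bot_m \oplus \id_n \colon m\oplus n \to n$. The first thing I would record is that, since $\cat{C}$ is poset-enriched, a $2$-cell $a \Ra b$ is nothing but the assertion $a\leq b$, and there is at most one such $2$-cell between any parallel pair of $1$-cells. Consequently the uniqueness of the mediating $\xi$ and both pasting equations in the definition of lax product are automatic: each equates two $2$-cells having the same source and target $1$-cells. The whole proof therefore reduces to two order-theoretic verifications: an \emph{existence} part producing $h$ together with inequalities $h\poi\pi_1 \leq f$ and $h\poi\pi_2 \leq g$ (the $2$-cells $\rho_1,\rho_2$); and a \emph{universality} part showing that any competitor $h'$ with $h'\poi\pi_1 \leq f$ and $h'\poi\pi_2 \leq g$ necessarily satisfies $h' \leq h$ (the $2$-cell $\xi$).

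For existence, the natural candidate is the \emph{lax pairing} $h \Defeq \Delta_k \poi (f\oplus g)$. To obtain $\rho_1$ I would compute, by functoriality of $\oplus$, that $h\poi\pi_1 = \Delta_k \poi \big(f \oplus (g\poi\bot_n)\big)$, then invoke the lax counit axiom $g\poi\bot_n \leq \bot_k$ of the lax product structure, so that monotonicity of $\oplus$ and of composition give $h\poi\pi_1 \leq \Delta_k \poi (f\oplus\bot_k)$. Finally the comonoid counit law, in the form $f\oplus\bot_k = (\id_k\oplus\bot_k)\poi f$ followed by $\Delta_k\poi(\id_k\oplus\bot_k) = \id_k$, collapses the right-hand side to $f$. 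The verification of $\rho_2$ is symmetric.

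For universality, the key lemma I would isolate is the comonoid identity $\Delta_{m\oplus n}\poi(\pi_1\oplus\pi_2) = \id_{m\oplus n}$, which expresses that recombining the two projections of the product comonoid yields the identity. Granting it, and using that every arrow is a lax comonoid homomorphism by the very definition of a lax product structure, the argument is
\begin{align*}
h' &= h' \poi \Delta_{m\oplus n} \poi (\pi_1\oplus\pi_2)
\leq \Delta_k \poi (h'\oplus h') \poi (\pi_1\oplus\pi_2) \\
&= \Delta_k \poi \big((h'\poi\pi_1)\oplus(h'\poi\pi_2)\big)
\leq \Delta_k \poi (f\oplus g) = h,
\end{align*}
where the first inequality is the lax comultiplication axiom $h'\poi\Delta_{m\oplus n} \leq \Delta_k\poi(h'\oplus h')$ applied to $\alpha = h'$ and whiskered by $(\pi_1\oplus\pi_2)$, and the second uses $h'\poi\pi_1 \leq f$, $h'\poi\pi_2 \leq g$ and monotonicity. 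This yields the required $\xi$, namely $h'\leq h$.

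The main obstacle is bookkeeping rather than conceptual: the identity $\Delta_{m\oplus n}\poi(\pi_1\oplus\pi_2) = \id_{m\oplus n}$, though transparent in string-diagram form, must be derived from coassociativity, cocommutativity, the counit law, and the compatibility equations $\Delta_{C\oplus D} = (\Delta_C\oplus\Delta_D)\poi(C\oplus\sigma_{C,D}\oplus D)$ and $\bot_{C\oplus D} = \bot_C\oplus\bot_D$, since these are precisely where the symmetry $\sigma_{m,n}$ enters. Everything else is an application of monotonicity, functoriality of $\oplus$, and the two lax-structure inequalities, after which poset-enrichment discharges all remaining $2$-dimensional coherence for free.
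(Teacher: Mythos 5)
Your proposal is correct and follows essentially the same route as the paper's proof: the same projections $\id_m\oplus\bot_n$ and $\bot_m\oplus\id_n$, the same pairing $h=\Delta_k\poi(f\oplus g)$, the lax counit axiom to obtain $\rho_1,\rho_2$, and the lax comultiplication axiom (via duplicating $h'$ and recombining the projections) for universality. The paper presents these steps as string-diagram calculations and leaves the poset-enrichment trivialities and the identity $\Delta_{m\oplus n}\poi(\pi_1\oplus\pi_2)=\id_{m\oplus n}$ implicit, which you have simply made explicit in term notation.
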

\begin{proof}
The projections are 
\[
\pi_1 \quad \Defeq \quad \lower15pt\hbox{$\includegraphics[height=1.5cm]{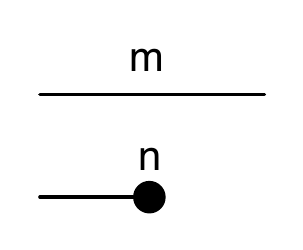}$}
\qquad
\pi_2 \quad \Defeq \quad \lower15pt\hbox{$\includegraphics[height=1.5cm]{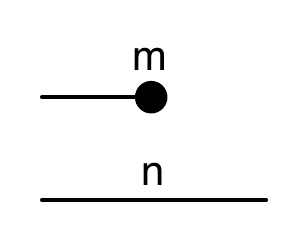}$}.
\]
It is easy to show that the universal property holds.
Indeed,
given $f:k\to m$, $g:k\to n$, we see that
\[
\includegraphics[height=1.75cm]{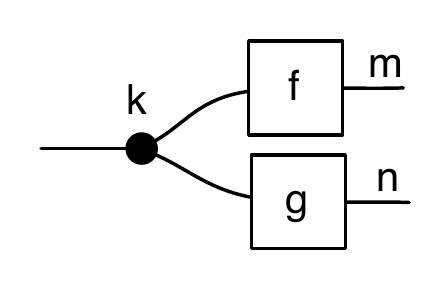}
\]
gives us $\rho_1$ and $\rho_2$ (\eqref{eq:laxprod1}) since
\[
\includegraphics[height=1.75cm]{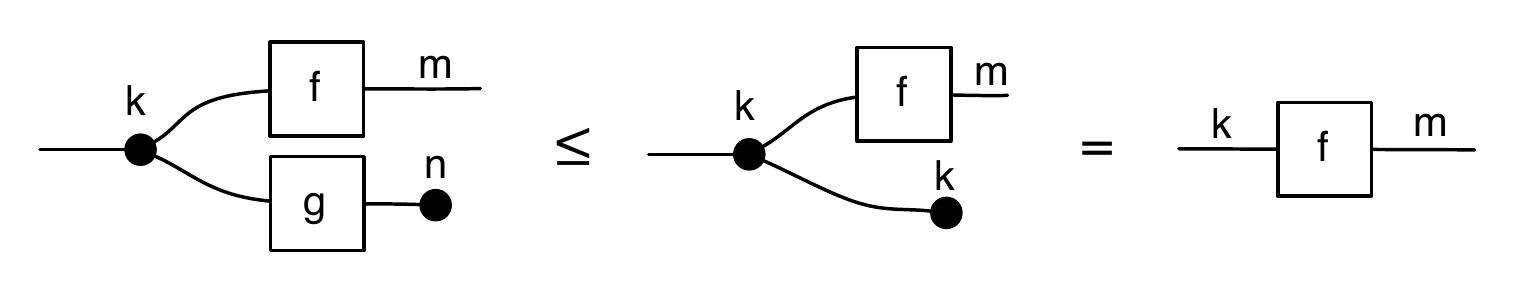}
\]
and similarly
\[
\includegraphics[height=1.75cm]{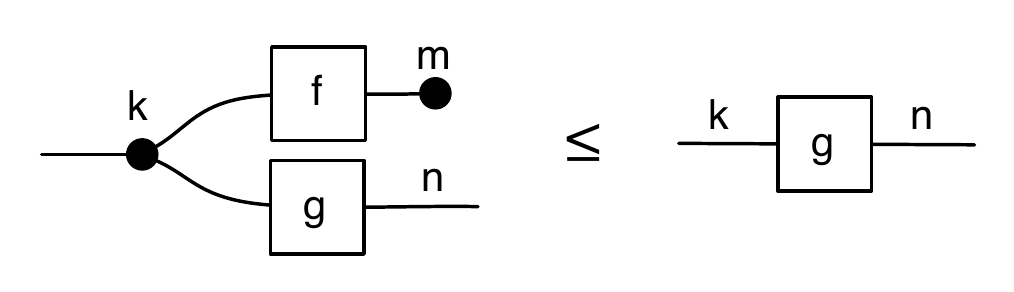}.
\]
Given any $h':k\to m\oplus n$ with
\[
\lower13pt\hbox{$\includegraphics[height=1.3cm]{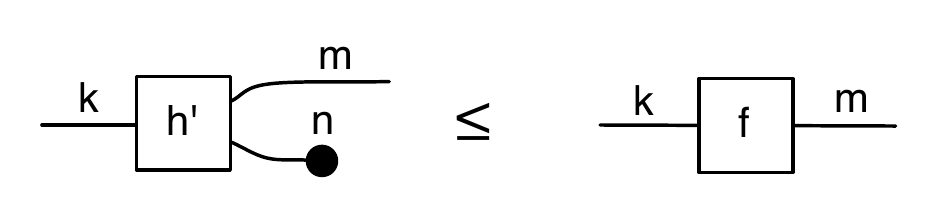}$}
\quad \text{and} \quad
\lower13pt\hbox{$\includegraphics[height=1.3cm]{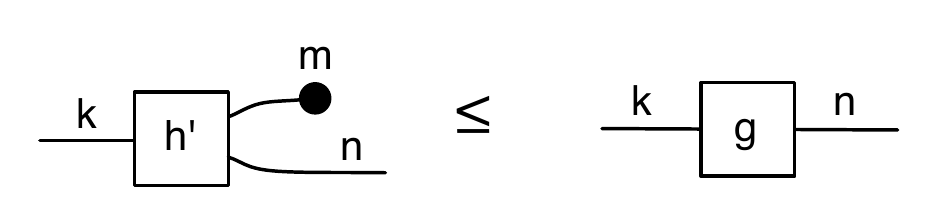}$}
\]
We have
\[
\includegraphics[height=2cm]{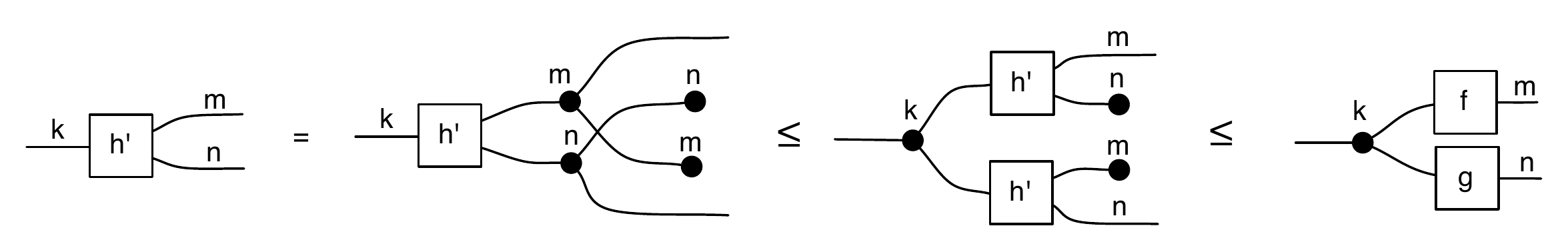}.
\]
\end{proof}

\begin{corollary}
Let $\Theory{T}$ be an SMIT. Then in $\freeLPPROP{\Theory{T}}$, the
monoidal product is a lax product.
\end{corollary}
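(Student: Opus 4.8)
The plan is to recognise the statement as a direct consequence of the machinery already assembled: I would check that $\freeLPPROP{\Theory{T}}$ carries a lax product structure in the sense of Definition~\ref{defn:laxproductstructure}, and then quote Theorem~\ref{thm:laxproduct}.

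First I would fix the candidate comonoid family. Since the SMIT corresponding to $\Theory{T}$ contains $\Sigma_C$ together with $E_C \uplus E_C^{op}$, the generating object $1$ of $\freeLPPROP{\Theory{T}}$ is equipped with a commutative comonoid $(\Delta_1,\bot_1)$, and for each $n$ one defines $(\Delta_n,\bot_n)$ by the recursive clauses recalled in Example~\ref{ex:lawvere}(a). I would then verify the compatibility required by Definition~\ref{defn:laxproductstructure}, namely $\Delta_{m\oplus n} = (\Delta_m \oplus \Delta_n);(\id_m \oplus \sigma_{m,n}\oplus \id_n)$ and $\bot_{m\oplus n} = \bot_m \oplus \bot_n$. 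Both identities are exactly those recursive definitions, so this amounts to routine bookkeeping using only the comonoid axioms $E_C$.

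The remaining clause of Definition~\ref{defn:laxproductstructure} asks that every $\alpha\colon m\to n$ satisfy $\alpha;\Delta_n \leq \Delta_m;(\alpha\oplus\alpha)$ and $\alpha;\bot_n \leq \bot_m$. This is precisely Theorem~\ref{thm:lptlaxhom}, already proved by structural induction on terms. Combined with the previous paragraph, this shows that $\{(\Delta_n,\bot_n)\}_{n\in\N}$ is a lax product structure on $\freeLPPROP{\Theory{T}}$.

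Finally I would invoke Theorem~\ref{thm:laxproduct} without modification: an ordered monoidal category equipped with a lax product structure has its monoidal product as a lax product, so the monoidal product of $\freeLPPROP{\Theory{T}}$ is a lax product. There is no genuine obstacle here --- the argument is pure assembly of prior results. The only point demanding a modicum of care is the monoidal compatibility of the comonoid family, but this collapses to the standard coherence computation for the recursively defined $\Delta_n$ and $\bot_n$, already implicit in the treatment of Lawvere categories in Section~\ref{sec:cartesian}.
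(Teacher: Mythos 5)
Your proposal is correct and follows exactly the paper's own route: the paper's proof is precisely the two-step assembly of Theorem~\ref{thm:lptlaxhom} (every arrow of $\freeLPPROP{\Theory{T}}$ is a lax comonoid homomorphism) followed by Theorem~\ref{thm:laxproduct}. The additional bookkeeping you supply --- the recursive definition of $(\Delta_n,\bot_n)$ and its compatibility with $\oplus$ --- is left implicit in the paper but is the right thing to check, so your write-up is simply a more detailed version of the same argument.
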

\begin{proof}
By Theorem~\ref{thm:lptlaxhom} we know that every arrow $\freeLPPROP{\Theory{T}}$ is a lax comonoid homomorphism,
thus the monoidal product is a lax product by Theorem~\ref{thm:laxproduct}.
\end{proof}

\begin{corollary}\label{cor:laxproductRel}
In $\Rel$, considered as a 2-category (the 2-cells are set inclusions), the cartesian product is a lax product.
\end{corollary}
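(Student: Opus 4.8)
The plan is to exhibit a \emph{lax product structure} on $\Rel$ in the sense of Definition~\ref{defn:laxproductstructure}, viewing $\Rel$ as an ordered (i.e.\ $\mathbf{Pos}$-enriched) monoidal category with cartesian product of sets as the monoidal product and set inclusion as the order on each homset. Once such a structure is produced, the statement follows immediately from Theorem~\ref{thm:laxproduct}, which says precisely that a lax product structure forces the monoidal product to be a lax product. So the entire content reduces to naming the comonoid data and verifying the defining conditions.

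First I would fix the candidate data. For each set $X$, take $\Delta_X \colon X \to X \times X$ to be the graph of the diagonal, $\{\,(x,(x,x)) \mid x\in X\,\}$, and $\bot_X \colon X \to I$ the graph of the unique map to the singleton $I=\{\star\}$, namely $\{\,(x,\star)\mid x\in X\,\}$. Since each of these is the graph of the corresponding function witnessing the cartesian comonoid on $X$ in $\Set$, and relational composition restricts to functional composition on graphs, the commutative comonoid laws and the monoidal-compatibility equations $\Delta_{X\times Y} = (\Delta_X \times \Delta_Y)\poi(\id_X \times \sigma_{X,Y} \times \id_Y)$ and $\bot_{X\times Y} = \bot_X \times \bot_Y$ are inherited directly from the corresponding equalities of functions. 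These verifications are routine bookkeeping.

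The only substantive step is the laxness inequalities: for every relation $R\colon X\to Y$ one must check $R\poi\Delta_Y \leq \Delta_X\poi(R\times R)$ and $R\poi\bot_Y \leq \bot_X$. Unfolding the composites in diagrammatic order gives $R\poi\Delta_Y = \{\,(x,(y,y)) \mid (x,y)\in R\,\}$, whereas $\Delta_X\poi(R\times R) = \{\,(x,(y_1,y_2)) \mid (x,y_1)\in R \text{ and } (x,y_2)\in R\,\}$; the former is contained in the latter by specialising $y_1=y_2=y$. Likewise $R\poi\bot_Y = \{\,(x,\star)\mid \exists y.\,(x,y)\in R\,\} \subseteq \{\,(x,\star)\mid x\in X\,\} = \bot_X$. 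Hence the lax product structure exists, and Theorem~\ref{thm:laxproduct} yields the claim. I expect no genuine obstacle here beyond keeping the composition direction straight; the one conceptual remark worth recording is that each inclusion is strict exactly when $R$ fails to be single-valued, respectively total, recovering the observation made just after~\eqref{eq:cartesianbicats}.
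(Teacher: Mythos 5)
Your proposal is correct and follows essentially the same route as the paper: the paper's proof likewise observes that every relation is a lax comonoid homomorphism with respect to the diagonal comonoid and then invokes Theorem~\ref{thm:laxproduct}. You have merely spelled out the details (the explicit description of $\Delta_X$, $\bot_X$, and the unfolded composites) that the paper leaves implicit.
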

\begin{proof}
It suffices to show that every relation is a lax comonoid homomorphism. The lax product structure is then obtained as in the proof of Theorem~\ref{thm:laxproduct}.
\end{proof}

The appropriate notion of model for a lax product theory is a monoidal functor that preserves lax product structure, in the sense of Definition~\ref{defn:laxproductstructure}.  The notion of homomorphism of models is then a lax monoidal natural transformation between such functors.

More concretely, we define a \emph{lax product category} to be a symmetric monoidal category, enriched over the category of posets, that contains a lax product structure in the sense of Definition~\ref{defn:laxproductstructure}.
A lax product prop is both an ordered prop and lax product category. A lax product functor is a poset-enriched functor that preserves the lax product structure. Fixing a lax product category $\catC$, models for an LPT $\Theory{T}$ are lax product functors $F\colon \freeLPPROP{\Theory{T}} \to \catC$. 

In order to avoid duplication, we postpone a more comprehensive discussion to Section~\ref{sec:frobenius}, and first introduce \emph{Frobenius theories}, which are a particularly interesting and expressive variant of lax product theories.

\section{Frobenius Theories and their models}\label{sec:frobenius}
In this section we introduce the main contribution of this work: Frobenius theories and their models. To this aim we recall the concept of \emph{cartesian bicategory of relations} from~\cite{Carboni1987} and flesh out some of its properties.

Recall that an arrow $f:B\to C$ in a 2-category has a right adjoint $g:C\to B$ when there exist 2-cells 
$\eta: \id_B \to f;g$ and $\epsilon: g;f \to \id_C$ satisfying the well-known triangle equations. In 
the poset-enriched case, this simplifies to requiring merely
\begin{equation}\label{eq:adjoints}
\id_B \leq f\poi g \qquad g\poi f \leq \id_C.
\end{equation}

In Section~\ref{sec:laxproducttheories} we saw a particular emphasis on the the SMT of commutative comonoids (Example~\ref{ex:equationalprops}\ref{it:comonoids}).
Given~\eqref{eq:adjoints}, to define adjoints to the generators $\{\,\Bcomult,\,\Bcounit\,\}$, in a lax product theory, it suffices to add generators $\{\,\Bmult,\,\Bunit\,\}$ and inequations
\[
\lower10pt\hbox{$\includegraphics[height=1.2cm]{graffles2/relnabladelta}
\quad
\includegraphics[height=1.2cm]{graffles2/reldeltanabla}$}
\]
\[
\lower8pt\hbox{$
\includegraphics[height=1cm]{graffles2/relunitsadjunit}
\quad
\includegraphics[height=1cm]{graffles2/relunits}$}
\]

Next, recall from Example~\ref{ex:equationalprops}\ref{it:frobenius} that the SMT $\Frobtheory$ of special Frobenius monoids has as its set of generators
\{\,\Bmult,\,\Bunit,\,\Bcomult,\,\Bcounit\,\} 
and as equations, those that guarantee that the generators above form, respectively, a commutative monoid 
and commutative comonoid, together with the Frobenius equation and the special equation:
\begin{equation}\label{eq:specialfrobenius}
\lower10pt\hbox{$
\lower7pt\hbox{$\includegraphics[height=1.5cm]{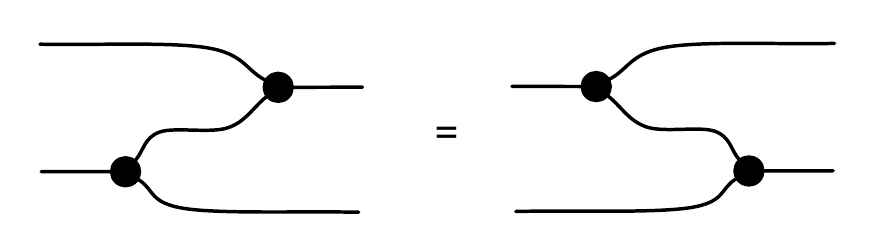}$}
\qquad
\includegraphics[height=1.2cm]{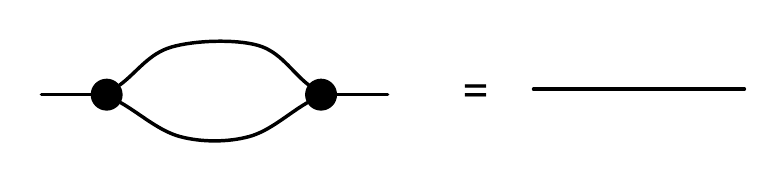}$}
\end{equation}

Succinctly, cartesian bicategories of relations are lax product categories where the lax product structure has right adjoints (a commutative monoid), satisfying the equations of special Frobenius monoids. We spell out the details below.


\begin{definition}
A \emph{cartesian bicategory of relations} is a poset enriched category that is symmetric monoidal and additionally
\begin{enumerate}
\item for every object $n$, there are 
arrows $\dup_n \colon n \to n \tns n$ and $\dis_n \colon n\to I$, graphically denoted by $\Bcomultn$ and $\Bcounitn$ forming a cocommutative comonoid; 
\item for every object $n$, there are arrows $\codup_n \colon n \tns n \to n $ and $\codis_n \colon I\to n$, graphically denoted by $\Bmultn$ and $\Bunitn$, forming a commutative monoid;
\item such that the monoids and the comonoids satisfy the following inequations
\begin{multicols}{2}
\noindent
\begin{equation}\label{eq:adj1}
\lower12pt\hbox{$\includegraphics[height=1.2cm]{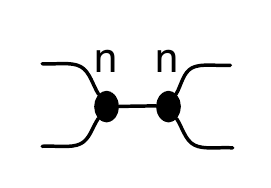}$}
\leq
\lower10pt\hbox{$\includegraphics[height=1.2cm]{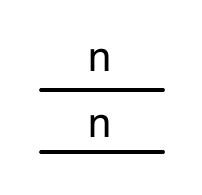}$}
\end{equation}
\begin{equation}\label{eq:adj2}
\lower9pt\hbox{$\includegraphics[height=1.2cm]{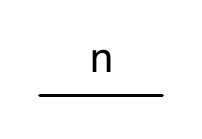}$}
\leq
\lower12pt\hbox{$\includegraphics[height=1.2cm]{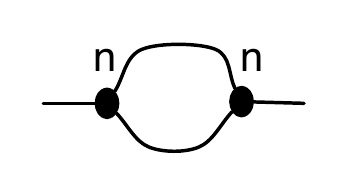}$}\end{equation}
\end{multicols}
\begin{multicols}{2}
\noindent
\begin{equation}\label{eq:adj3}
\lower5pt\hbox{$\includegraphics[height=1cm]{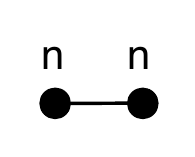}$}
\leq
id_I
\end{equation}
\begin{equation}
\label{eq:adj4}
\lower8pt\hbox{$\includegraphics[height=1cm]{graffles/idn.pdf}$}
\leq
\lower5pt\hbox{$\includegraphics[height=1cm]{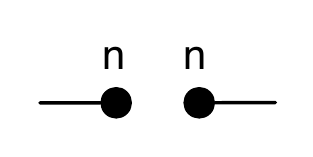}$}
\end{equation}
\end{multicols}
\begin{equation}\label{eq:frobn}
\lower17pt\hbox{$\includegraphics[height=1.5cm]{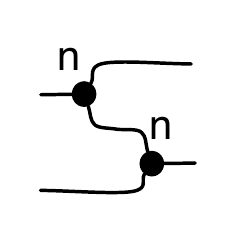}$}
=
\lower12pt\hbox{$\includegraphics[height=1.2cm]{graffles/Xn.pdf}$}
=
\lower17pt\hbox{$\includegraphics[height=1.5cm]{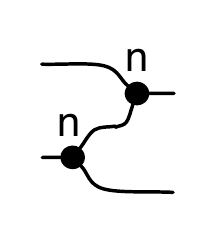}$}
\end{equation}

\item and, every arrow $R\colon m \to n$ is a lax comonoid homomorphism. 
\begin{multicols}{2}
\noindent
\begin{equation}\label{eq:CBCH1}
\lower10pt\hbox{$\includegraphics[height=1cm]{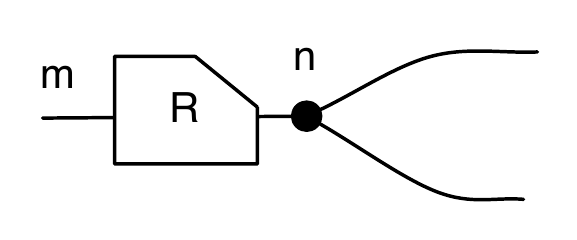}$}
\leq
\lower13pt\hbox{$\includegraphics[height=1.2cm]{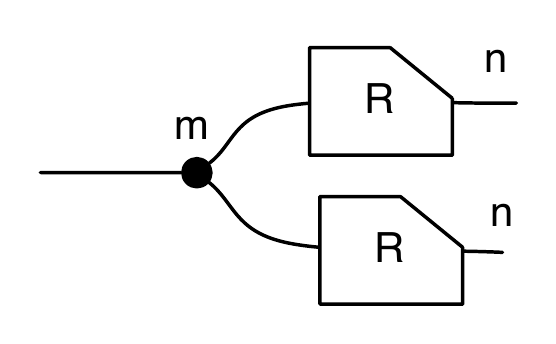}$}
\end{equation}
\begin{equation}
\label{eq:CBCH2}
\lower8pt\hbox{$\includegraphics[height=1cm]{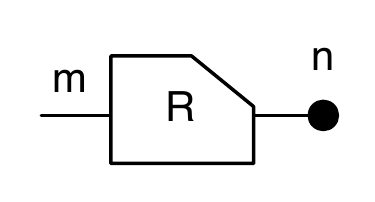}$}
\leq
\lower5pt\hbox{$\includegraphics[height=1cm]{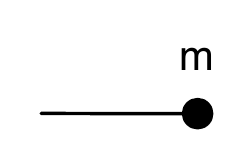}$}
\end{equation}
\end{multicols}
\end{enumerate}
\end{definition}

Inequations \eqref{eq:adj1} and \eqref{eq:adj2} state that $\Bcomultn$ is the left adjoint of $\Bmultn$, while inequations \eqref{eq:adj3} and \eqref{eq:adj4} state that $\Bcounitn$ is the left adjoint of $\Bunitn$. Note that \eqref{eq:frobn} is 
\eqref{eq:BWFrob} of the SMT of special Frobenius algebra (Example \ref{ex:equationalprops} (c)). The other equation \eqref{eq:BWSep} holds in any Cartesian bicategory of relations: one direction is given by \eqref{eq:adj2} and the other is proved as follows. 
\begin{equation*}
\lower10pt\hbox{$\includegraphics[height=1cm]{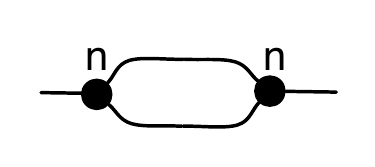}$}
\leq
\lower10pt\hbox{$\includegraphics[height=1.3cm]{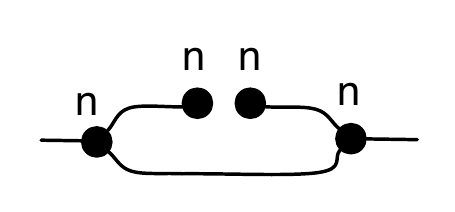}$}
=
\lower5pt\hbox{$\includegraphics[height=1cm]{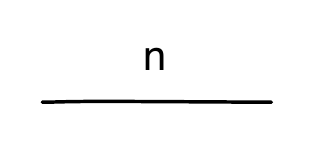}$}
\end{equation*}
Therefore, we will often refer to the monoid and the comonoid of a Cartesian bicategory of relations $\catC$ as to the \emph{(special) Frobenius structure} of $\catC$.

Since, point 4 requires every arrow to be a lax comonoid homomorphism, we know by Theorem \ref{thm:laxproduct}, that the monoidal product $\tns$ is a lax-product. This helps us in showing that $\Rel$ is a Cartesian bicategory of relations: indeed, from Corollary \ref{cor:laxproductRel} we know that the lax product in $\Rel$ is simply the Cartesian one. Now for every set $X$, $\dup_X \colon X \to X\times X$ is the diagonal relation $\{( \,x, \,(x,x) \,) \text{ s.t. } x\in X\}$, $\dis_X \colon X\to 1=\{\bullet\}$ is the relation  $\{(x,\bullet) \text{ s.t. } x\in X\}$, $\codup_X \colon X \times X \to X$ and $\dis_X \colon 1\to X$ are, respectively, their opposite relations. One can easily check that the inequations in point 3 hold.

\medskip

A \emph{cartesian bifunctor} is the notion of structure-preserving homomorphism between cartesian bicategories of relations. In fact, it suffices to require that the lax product structure, in the sense of Definition~\ref{defn:laxproductstructure} is preserved, that is, the notion of cartesian bifunctor is the same as lax product functor. Indeed: 
the fact that adjoints are preserved follows from (2-)functoriality, since adjoints, if they exist, are uniquely defined in ordered categories. 


\medskip

\paragraph{Compact closed structure.}
In any Cartesian bicategory of relations we have a self-dual compact closed structure. To describe it, we adopt the following 
graphical notation.
\[
\lower10pt\hbox{$\includegraphics[height=1cm]{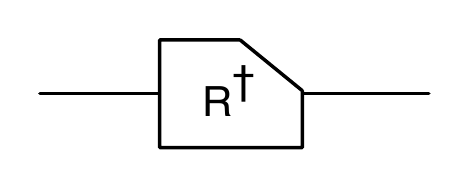}$}
\Defeq
\lower10pt\hbox{$\includegraphics[height=1cm]{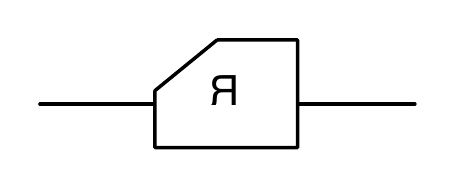}$}
\Defeq 
\lower15pt\hbox{$\includegraphics[height=1.5cm]{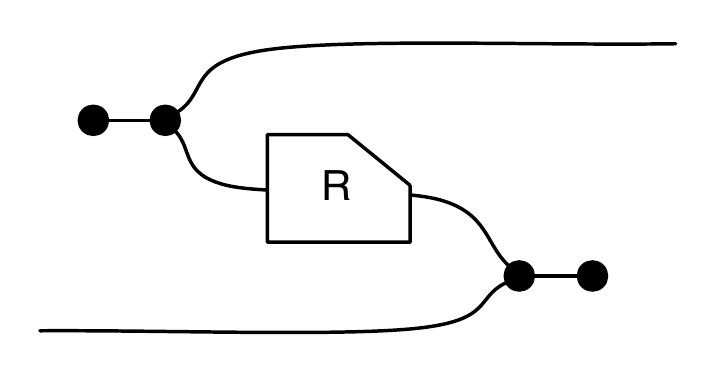}$}
\]
For an intuition consider $\Rel$: it is easy to check that $R^\dagger$ is just the opposite relation of $R$.

\begin{lemma}
\label{lemma:dagger}
If $\cat{C}$ is a Cartesian bicategory of relations, then
$\dagger \colon \cat{C}^{op}\to\cat{C}$ is a 2-functor.
\begin{enumerate}[(i)]
\item $\id^\dagger = \id$
\item $(R;S)^\dagger = S^\dagger;R^\dagger$
\item $(R\oplus S)^\dagger = R^\dagger \oplus S^\dagger$
\item if $\lower10pt\hbox{$\includegraphics[height=1cm]{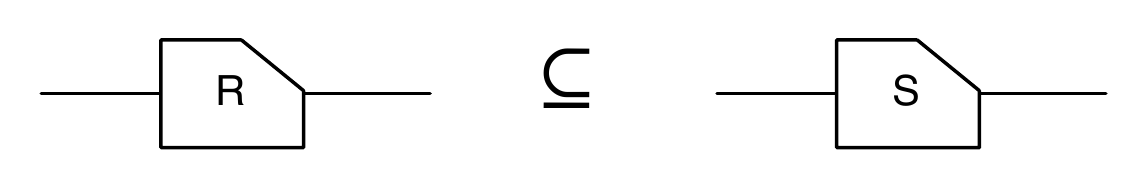}$}$
then  $\lower10pt\hbox{$\includegraphics[height=1cm]{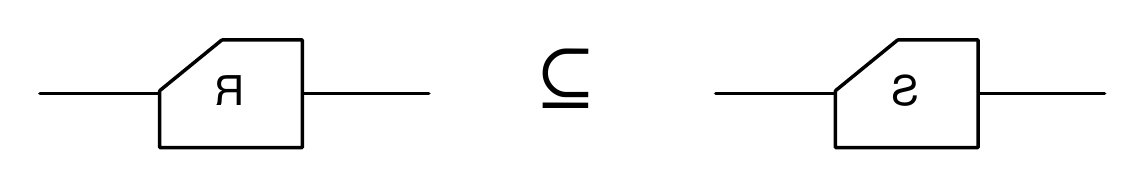}$}$.
\end{enumerate}
\end{lemma}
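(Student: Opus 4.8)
The plan is to work entirely inside the self-dual compact closed structure established immediately before the lemma. Write $\eta_n \colon I \to n\oplus n$ for the cup ($\codis_n$ followed by $\dup_n$) and $\epsilon_n \colon n\oplus n \to I$ for the cap ($\codup_n$ followed by $\dis_n$). Self-duality makes these symmetric, and the special Frobenius equations give the two snake (zig-zag) identities
\[
(\eta_n \oplus \id_n);(\id_n \oplus \epsilon_n) = \id_n = (\id_n \oplus \eta_n);(\epsilon_n \oplus \id_n).
\]
In this notation the graphical definition of $\dagger$ unfolds, for $R \colon m \to n$, into
\[
R^\dagger = (\eta_m \oplus \id_n);(\id_m \oplus R \oplus \id_n);(\id_m \oplus \epsilon_n) \colon n \to m,
\]
and each clause then becomes a routine diagram manipulation.

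I would dispatch (iv) first, as it is the only ingredient genuinely new with respect to the ordinary compact closed setting. The right-hand side above is built from $R$ purely by whiskering with the fixed $1$-cells $\eta_m$, $\id$, $\epsilon_n$ and by $\oplus$; since $\cat{C}$ is poset-enriched, both $;$ and $\oplus$ are monotone, so $R \leq S$ immediately yields $R^\dagger \leq S^\dagger$. This is precisely what makes $\dagger$ act correctly on $2$-cells, i.e.\ a genuine $2$-functor.

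For (i), setting $R = \id_n$ collapses the middle factor, leaving $(\eta_n \oplus \id_n);(\id_n \oplus \epsilon_n)$, which is one of the snake identities, so $\id_n^\dagger = \id_n$. For (ii), I would expand $S^\dagger; R^\dagger$ from the definition and straighten the single zig-zag on the strand $n$ created between the cap of $S^\dagger$ and the cup of $R^\dagger$, using the other snake identity; the resulting diagram is, verbatim, the definition of $(R;S)^\dagger$. For (iii), the preliminary step is the compatibility of the cup and cap with $\oplus$: $\eta_{m\oplus n}$ equals $\eta_m \oplus \eta_n$ reorganised by the symmetry $\sigma$, and dually for $\epsilon$, which follows from the compatibility of the comonoid/monoid structure with the monoidal product recorded in Definition~\ref{defn:laxproductstructure} (together with the recursive definition of $\dup_n$ and $\dis_n$). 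Substituting this decomposition into the definition of $(R\oplus S)^\dagger$ and cancelling the symmetries gives $R^\dagger \oplus S^\dagger$.

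The main obstacle is bookkeeping rather than anything conceptual: in (ii), and especially (iii), one must keep track of the symmetries arising from self-duality and from the monoidal compatibility of the Frobenius structure, and verify that they cancel. The cleanest route is to carry out all manipulations diagrammatically, where the snake identities amount to pulling a bent wire straight and the compatibility in (iii) is the familiar splitting of the $(m\oplus n)$-strand cup into an $m$-strand and an $n$-strand cup; this sidesteps any explicit coherence computation with symmetries.
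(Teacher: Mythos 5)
Your proposal is correct and takes essentially the same approach as the paper, whose proof is precisely this argument carried out in string diagrams: the snake identities for (i)--(ii), splitting the cups and caps across $\oplus$ for (iii), and monotonicity of $;$ and $\oplus$ under poset-enrichment for (iv). The only (harmless) slip is in (ii): the $n$-strand zig-zag in $S^\dagger;R^\dagger$ lies between the cup $\eta_n$ inside $S^\dagger$ and the cap $\epsilon_n$ inside $R^\dagger$, not between the cap of $S^\dagger$ and the cup of $R^\dagger$ as written.
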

\begin{proof}
(i) is easy to check. It is the so called snake lemma, for (ii)
\[
\includegraphics[height=2cm]{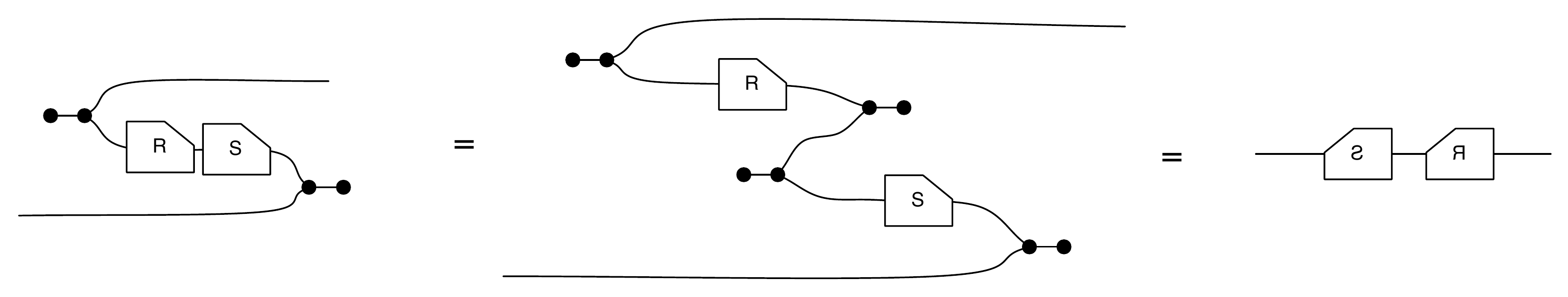}
\]
for (ii)
\[
\includegraphics[height=2.5cm]{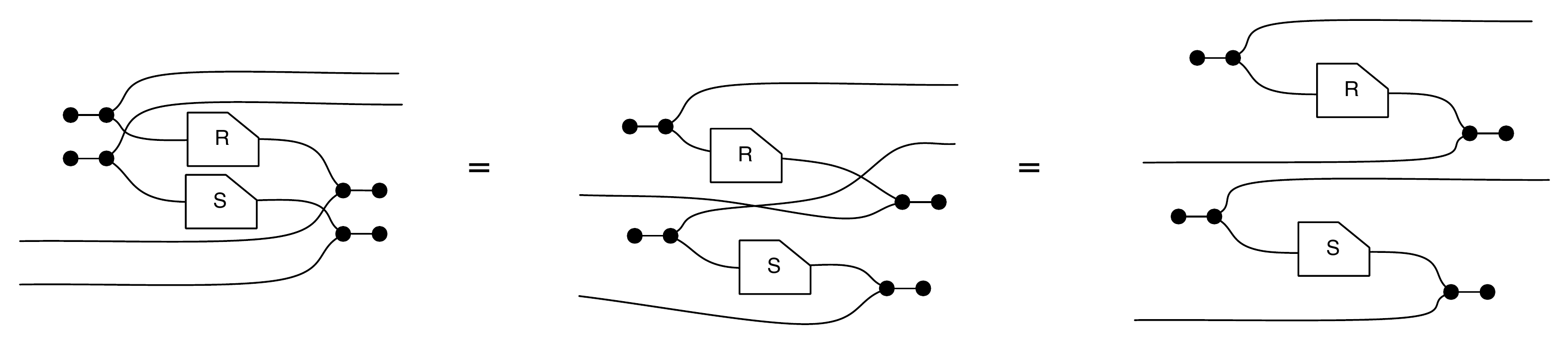}
\]
for (iv)
\[
\includegraphics[height=1.5cm]{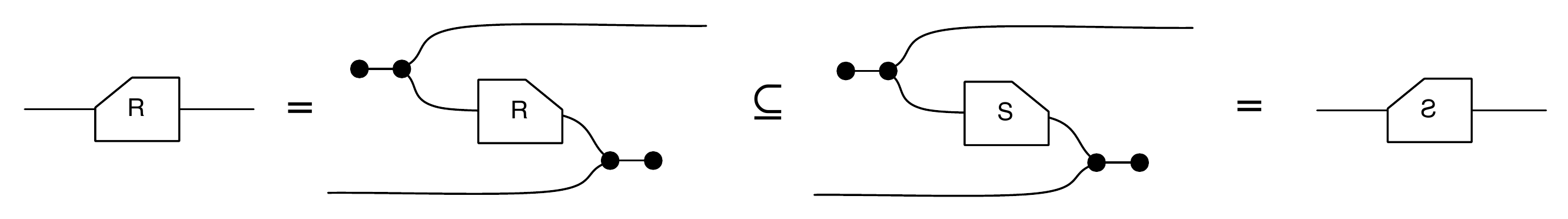}
\]
\end{proof}

\paragraph{Maps and comaps.}
By applying Lemma \ref{lemma:dagger} (ii) and (iv) to inequations \eqref{eq:CBCH1} and \eqref{eq:CBCH2}, one obtains that the following two hold for any arrow $R\colon m\to n$ of a Cartesian bicategory of relations.
\begin{multicols}{2}
\noindent
\begin{equation}\label{eq:inverselaxhom1}
\lower5pt\hbox{$\includegraphics[height=.8cm]{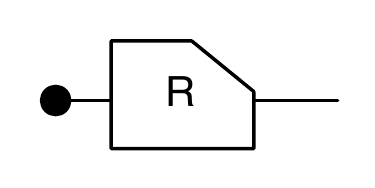}$}
\leq
\lower1pt\hbox{$\includegraphics[height=.5cm]{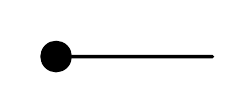}$}
\end{equation}
\begin{equation}\label{eq:inverselaxhom2}
\lower10pt\hbox{$\includegraphics[height=1cm]{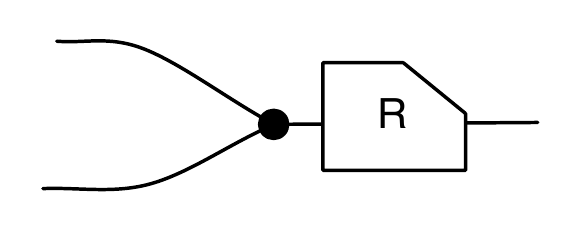}$}
\leq
\lower11pt\hbox{$\includegraphics[height=1.2cm]{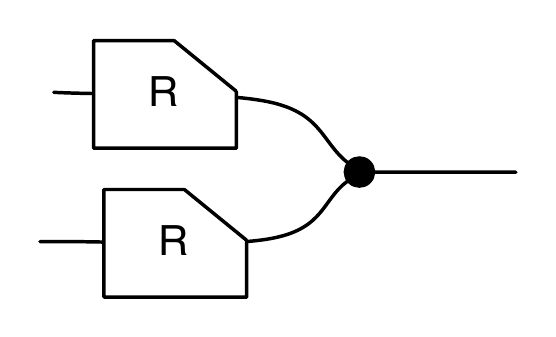}$}
\end{equation}
\end{multicols}

The following inequations do \emph{not} in general hold.

\begin{multicols}{2}
\noindent
\begin{equation}\label{eq:sv}\tag{SV}
\lower9pt\hbox{$\includegraphics[height=.9cm]{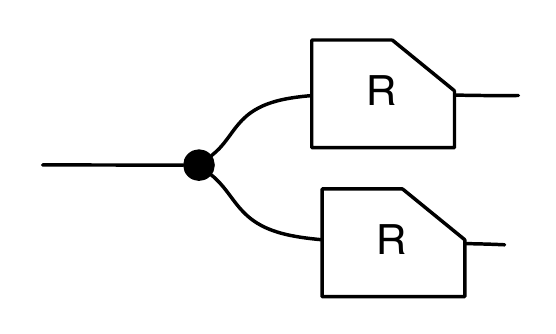}$}
\leq
\lower7pt\hbox{$\includegraphics[height=.7cm]{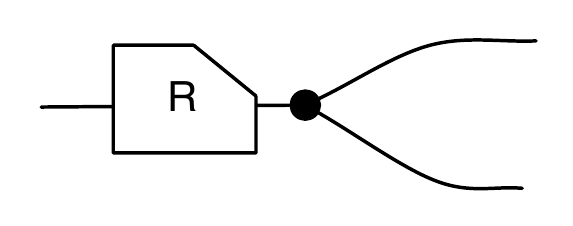}$}
\end{equation}
\begin{equation}\label{eq:tot}\tag{TOT}
\lower4pt\hbox{$\includegraphics[height=.4cm]{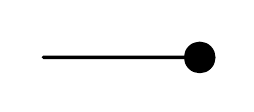}$}
\leq
\lower7pt\hbox{$\includegraphics[height=.7cm]{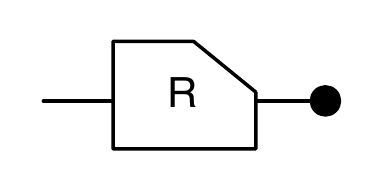}$}\end{equation}
\end{multicols}
\begin{multicols}{2}
\noindent
\begin{equation}\label{eq:inj}\tag{INJ}
\lower9pt\hbox{$\includegraphics[height=.9cm]{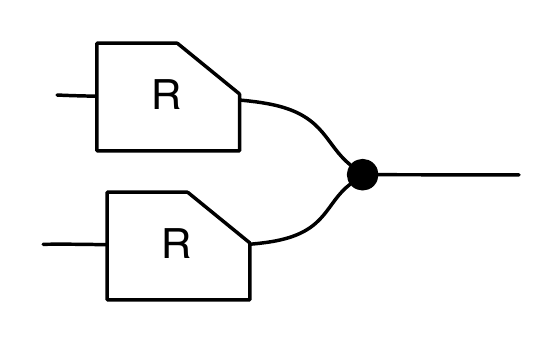}$}
\leq
\lower7pt\hbox{$\includegraphics[height=.7cm]{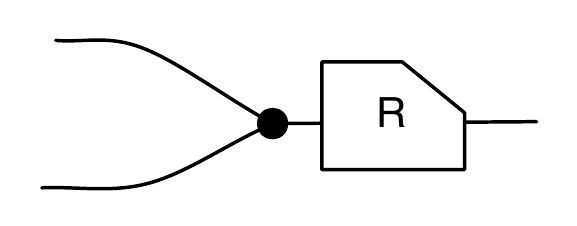}$}
\end{equation}
\begin{equation}
\label{eq:sur}\tag{SUR}
\lower4pt\hbox{$\includegraphics[height=.4cm]{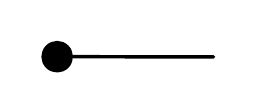}$}
\leq
\lower7pt\hbox{$\includegraphics[height=.7cm]{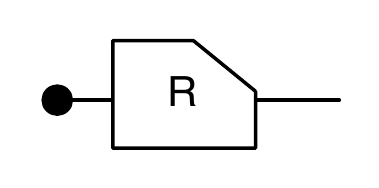}$}
\end{equation}
\end{multicols}
An arrow $R\colon m\to n$ in a cartesian bicategory is said to be \emph{single valued} iff satisfies \eqref{eq:sv}, \emph{total} iff satisfies \eqref{eq:tot}, \emph{injective} iff satisfies \eqref{eq:inj} and \emph{surjective} iff satisfies \eqref{eq:sur}. A \emph{map} is an arrow that is both single valued and total, namely a comonoid homomorphism. A \emph{comap} is an arrow that is both injective and surjective, namely a monoid homomorphism. It is easy to see that in $\Rel$, these coincide with the familiar notions.

\medskip

The following simple lemma will be useful in subsequent proofs.
\begin{lemma}[Wrong way]
\[
\includegraphics[height=1.4cm]{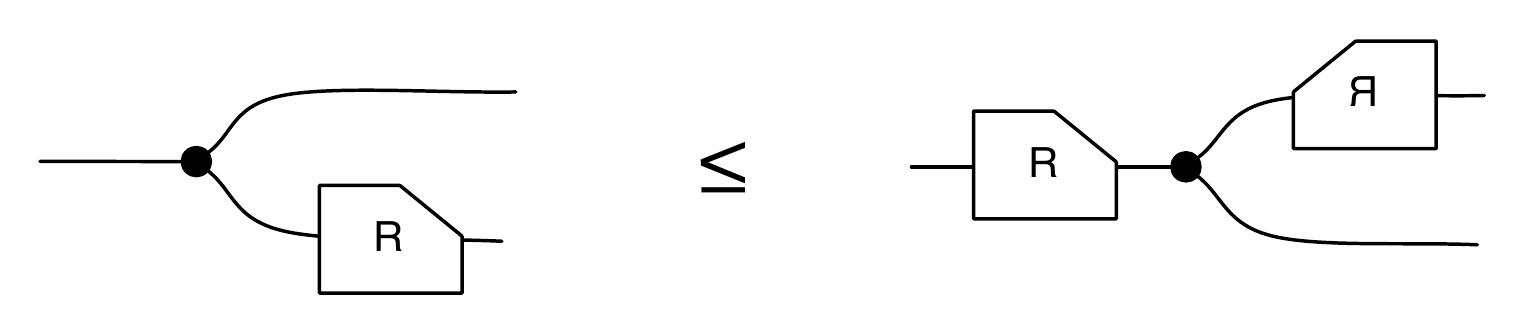}
\]
\end{lemma}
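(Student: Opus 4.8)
The plan is to treat the statement as a graphical (in)equation between string diagrams in a Cartesian bicategory of relations and to prove it by a short diagrammatic calculation, exploiting the Frobenius structure and the compact closed structure just established. First I would rewrite both diagrams into a common normal form: using the graphical definition of $\dagger$ and the canonical cups and caps supplied by the special Frobenius structure, together with Lemma~\ref{lemma:dagger}(i)--(iii), any occurrence of a ``bent'' wire or of $R^\dagger$ can be moved freely across composition ($;$) and monoidal product ($\oplus$). This reduces the claim to a relation between diagrams built only from the generators $\dup$, $\codup$, $\dis$, $\codis$ and the arrow in question.

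The core of the argument is then the spider theorem: the Frobenius law \eqref{eq:frobn} together with the special law \eqref{eq:BWSep} lets any connected sub-diagram assembled from the black/white (co)multiplications and (co)units be collapsed to a single node determined only by its numbers of legs. I would apply this locally to slide the relevant wire ``the wrong way'' past the (co)multiplication, and then use the snake/triangle identities of the compact closed structure (the equations invoked as the snake lemma in the proof of Lemma~\ref{lemma:dagger}) to straighten the resulting bent wires, producing the target diagram. If the statement is genuinely an equality, this chain of Frobenius and snake moves should close it directly; if it is an inequality, I would additionally feed in the lax comonoid homomorphism inequations \eqref{eq:CBCH1}--\eqref{eq:CBCH2}, their dagger-dual forms \eqref{eq:inverselaxhom1}--\eqref{eq:inverselaxhom2}, and the adjunction inequations \eqref{eq:adj1}--\eqref{eq:adj4} for the remaining one-directional steps, assembling them via monotonicity of $;$ and $\oplus$ in the ordered prop.

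The main obstacle I anticipate is bookkeeping rather than conceptual depth: one must apply each Frobenius/spider reduction in exactly the right region of the diagram, and, in the inequational case, chain the lax-homomorphism bounds and their duals in a consistent direction---mixing \eqref{eq:CBCH1} with \eqref{eq:inverselaxhom1} in the wrong order would yield an inequality pointing the wrong way and hence an unusable bound. Since the lemma is advertised as \emph{simple}, I expect a two- or three-step chain of the above moves to suffice, with no structural induction required.
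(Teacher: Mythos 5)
You have assembled the right toolbox, but the proposal stops short of an actual derivation, and the point where it stops is exactly where the content of the lemma lies. First, your central hedge does real damage: the statement is \emph{not} an equality that Frobenius/snake moves could close. Up to mirror symmetry it reads $\Delta_m ; (R \oplus \id_m) \;\leq\; R ; \Delta_n ; (\id_n \oplus R^\dagger)$, and in $\Rel$ the two sides genuinely differ (the left relates $x$ to $(y,x)$ with $x\,R\,y$; the right relates $x$ to $(y,z)$ with $x\,R\,y$ and $z\,R\,y$), so the ``equality branch'' of your plan is vacuous and everything rests on the branch you defer to last. Second, the engine you put at the core --- the spider theorem --- cannot carry that weight: spider collapses apply only to connected subdiagrams built from the Frobenius generators, and no amount of collapsing can move the arbitrary box $R$ across a (co)multiplication. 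That move is available only through the one-directional lax-homomorphism laws, which you list as auxiliary patches ``for the remaining one-directional steps.''

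Those ``remaining steps'' are in fact the whole proof. The paper's argument is a short chain in which each tool is used once, in a forced position and direction: the Frobenius law \eqref{eq:frobn} together with the unit law rewrites $\Delta_m$ so that $R$ sits against a multiplication $\nabla_m$; then the single genuine inequality is the dagger-dual lax homomorphism law \eqref{eq:inverselaxhom2}, $\nabla_m ; R \leq (R\oplus R);\nabla_n$, which duplicates $R$; finally one copy of $R$ is slid around the cup --- becoming $R^\dagger$ by its definition together with the snake identity of Lemma~\ref{lemma:dagger} --- and the Frobenius law folds the diagram back into $R;\Delta_n;(\id_n\oplus R^\dagger)$. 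Your proposal never identifies \eqref{eq:inverselaxhom2} (equivalently, the dagger-dual use of \eqref{eq:CBCH1}) as the step that does the work, and your own caveat --- that chaining the bounds in the wrong order yields an unusable inequality --- is left unresolved; resolving it \emph{is} the lemma. As written, the proposal is a search strategy over the axioms rather than a proof, so it has a genuine gap.
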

\begin{proof}
\[
\includegraphics[height=2cm]{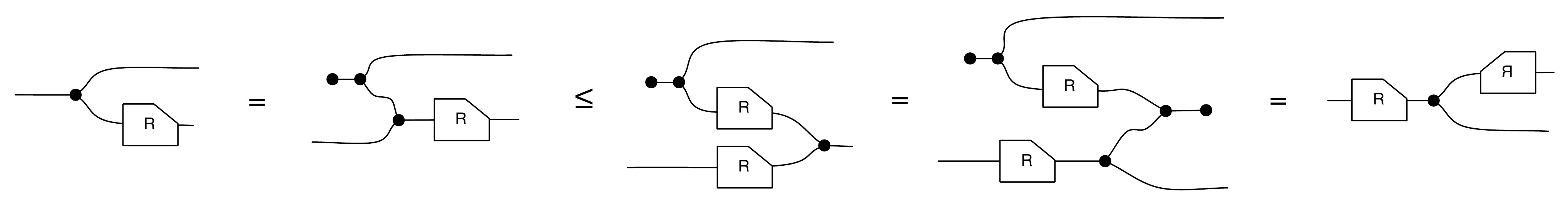}
\]

\end{proof}

\begin{lemma}\label{lem:characterizationmap}
Consider the following inequalities.
\begin{multicols}{2}
\noindent
\begin{equation}\label{eq:adjcounit} 
\lower10pt\hbox{$\includegraphics[height=1cm]{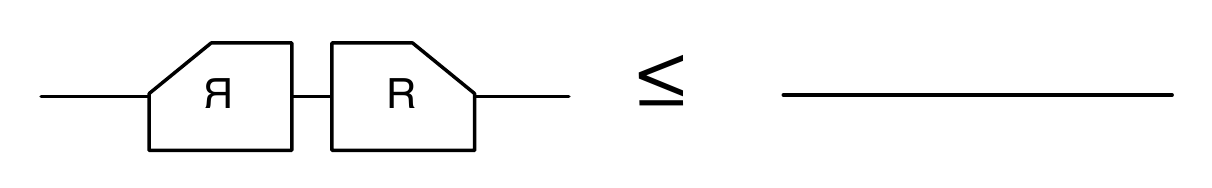}$}
\end{equation}
\begin{equation}\label{eq:adjunit}
\lower10pt\hbox{$\includegraphics[height=1cm]{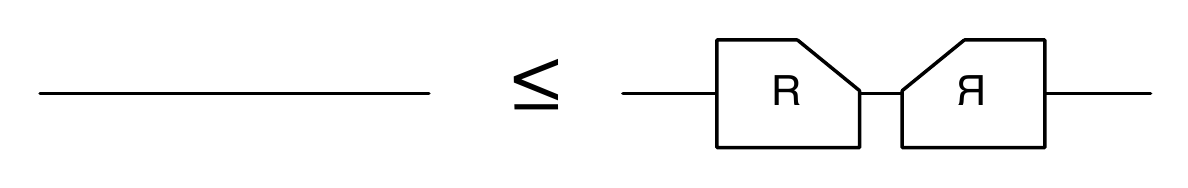}$}
\end{equation}
\end{multicols}
\begin{multicols}{2}
\noindent
\begin{equation}\label{eq:opadjcounit}
\lower10pt\hbox{$\includegraphics[height=1cm]{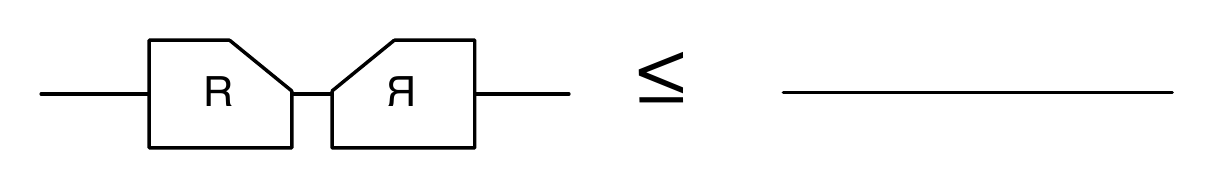}$}
\end{equation}
\begin{equation}
\label{eq:opasjunit}
\lower10pt\hbox{$\includegraphics[height=1cm]{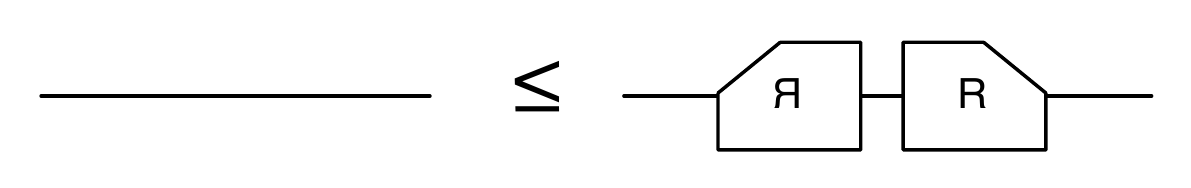}$}
\end{equation}
\end{multicols}
An arrow $R \colon m \to n$ is single valued iff \eqref{eq:adjcounit}, it is total iff \eqref{eq:adjunit}, it is injective iff \eqref{eq:opadjcounit} and it is surjective iff \eqref{eq:opasjunit}.
\end{lemma}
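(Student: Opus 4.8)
The plan is to prove the single-valuedness and totality equivalences directly, and then obtain the injectivity and surjectivity ones for free by dualising. Concretely, Lemma~\ref{lemma:dagger} shows that $\dagger$ is an order-isomorphism on homsets with $(R^\dagger)^\dagger = R$; applying it to \eqref{eq:sv} turns that inequality into \eqref{eq:inj} for $R^\dagger$, while it carries \eqref{eq:adjcounit} for $R$ to \eqref{eq:opadjcounit} for $R^\dagger$ (and symmetrically for \eqref{eq:tot}/\eqref{eq:sur} and \eqref{eq:adjunit}/\eqref{eq:opasjunit}). Hence, once ``single valued $\iff$ \eqref{eq:adjcounit}'' and ``total $\iff$ \eqref{eq:adjunit}'' are established for every arrow, instantiating them at $R^\dagger$ yields the remaining two equivalences at no extra cost.

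For totality both directions are short. Assuming \eqref{eq:adjunit}, i.e. $\id_m \leq R \poi R^\dagger$, I would postcompose with $\dis_m$ and use the universally valid dual counit law $R^\dagger \poi \dis_m \leq \dis_n$ coming from \eqref{eq:CBCH2} applied to $R^\dagger$, giving $\dis_m \leq R \poi R^\dagger \poi \dis_m \leq R \poi \dis_n$, which is \eqref{eq:tot}. Conversely, assuming \eqref{eq:tot}, I would expand the identity via the comonoid counit law as $\id_m = \dup_m \poi (\id_m \tns \dis_m)$, replace $\dis_m$ using \eqref{eq:tot} to obtain $\id_m \leq \dup_m \poi (\id_m \tns R) \poi (\id_m \tns \dis_n)$, and then recognise the right-hand side as a lower bound of $R \poi R^\dagger$ by bending $R$ back through the cup — an instance of the \emph{Wrong way} lemma — which delivers \eqref{eq:adjunit}.

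Single-valuedness is the more delicate case, since the two sides of \eqref{eq:sv} are parallel arrows $m \to n \tns n$ and no unit can simply be appended. The key move is to transpose \eqref{eq:sv} across the self-dual compact closed structure (the snake equations of Lemma~\ref{lemma:dagger}, packaged as the \emph{Wrong way} lemma): bending the wires rewrites \eqref{eq:sv} equivalently as
\[
R^\dagger \poi \dup_m \poi (\id_m \tns R) \ \leq\ \dup_n \poi (R^\dagger \tns \id_n).
\]
From this transposed form both implications are immediate. Assuming \eqref{eq:adjcounit}, the lax comonoid homomorphism law \eqref{eq:CBCH1} applied to $R^\dagger$ gives $R^\dagger \poi \dup_m \leq \dup_n \poi (R^\dagger \tns R^\dagger)$, whence $R^\dagger \poi \dup_m \poi (\id_m \tns R) \leq \dup_n \poi (R^\dagger \tns (R^\dagger \poi R)) \leq \dup_n \poi (R^\dagger \tns \id_n)$, exactly the displayed inequality. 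Conversely, postcomposing the displayed inequality with $\dis_m \tns \id_n$ collapses the left-hand side to $R^\dagger \poi R$ (by the counit law) and bounds the right-hand side by $\id_n$ (using $R^\dagger \poi \dis_m \leq \dis_n$ from \eqref{eq:CBCH2} together with the counit law), yielding \eqref{eq:adjcounit}.

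The main obstacle is not the inequational bookkeeping — which only ever invokes the always-valid lax homomorphism inequalities \eqref{eq:CBCH1}, \eqref{eq:CBCH2} and the comonoid counit law — but setting up the correct compact-closed transpositions: verifying that bending $R$ across the cups and caps genuinely rewrites \eqref{eq:sv} into the displayed form, and that $\dup_m \poi (\id_m \tns R) \poi (\id_m \tns \dis_n)$ compares as claimed with $R \poi R^\dagger$. These are precisely the snake/Frobenius identities encapsulated by the \emph{Wrong way} lemma, and getting their wire-bending right is the crux of the argument.
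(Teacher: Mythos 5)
Your proposal is correct and takes essentially the same approach as the paper: the single-valued and total characterisations are established by short diagrammatic derivations using the lax-homomorphism inequalities \eqref{eq:CBCH1}--\eqref{eq:CBCH2}, the comonoid counit law and the Wrong way lemma, and the injective and surjective cases are then obtained, exactly as in the paper, by applying the monotone involution $\dagger$ of Lemma~\ref{lemma:dagger} to the first two equivalences. The only divergence is presentational: you route single-valuedness through an explicitly transposed form of \eqref{eq:sv}, whereas the paper performs the corresponding wire-bending inside single chains of diagram (in)equalities.
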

\begin{proof}
\eqref{eq:sv} $\Ra$ \eqref{eq:adjcounit}
\[
\includegraphics[height=1.5cm]{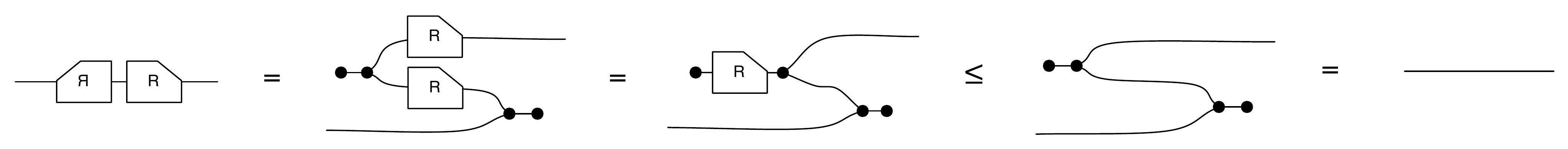}
\]
\eqref{eq:adjcounit} $\Ra$ \eqref{eq:sv}
\[
\includegraphics[height=1.5cm]{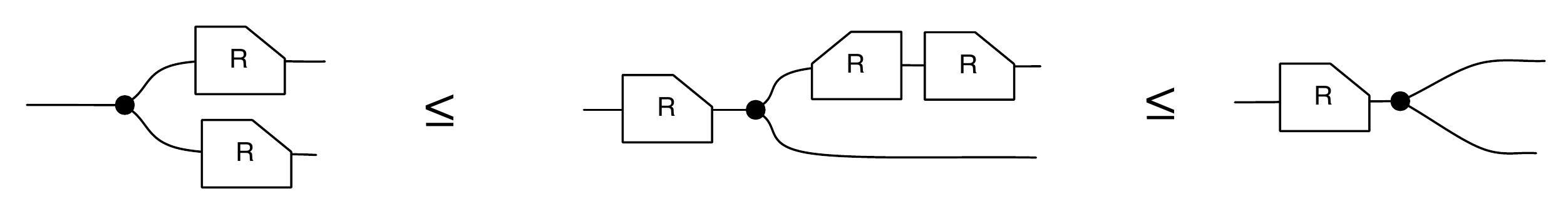}
\]
\eqref{eq:tot} $\Ra$ \eqref{eq:adjunit}
\[
\includegraphics[height=1.5cm]{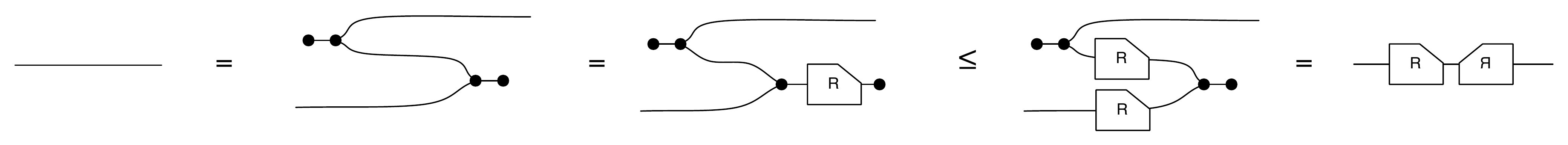}
\]
\eqref{eq:adjunit} $\Ra$ \eqref{eq:tot}
\[
\includegraphics[height=1cm]{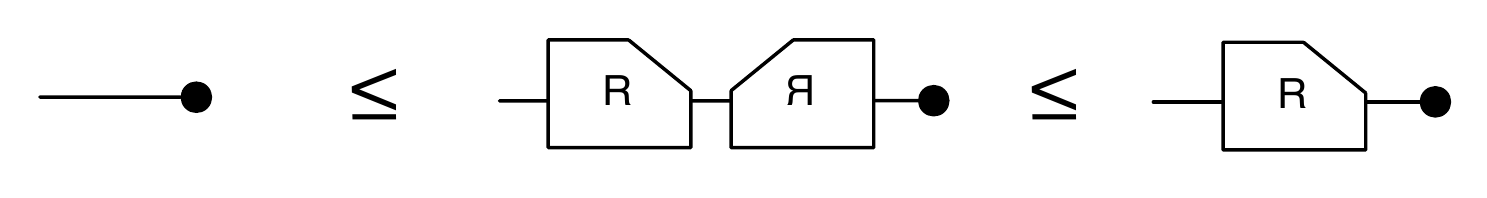}
\]

\eqref{eq:inj} $\Leftrightarrow$ \eqref{eq:opadjcounit} follows as corollary of \eqref{eq:sv} $\Leftrightarrow$ \eqref{eq:adjcounit}.
\eqref{eq:sur} $\Leftrightarrow$ \eqref{eq:opasjunit} follows as corollary of \eqref{eq:tot} $\Leftrightarrow$ \eqref{eq:adjunit}.
\end{proof}

\begin{corollary}\label{cor:mapssmaller}
Let $R,S\colon m\to n$ be two maps. If $R\leq S$ then $R=S$.
\end{corollary}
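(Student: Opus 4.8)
The plan is to reduce the statement to the adjunction characterisation of maps established in Lemma~\ref{lem:characterizationmap}, combined with the monotonicity of the dagger from Lemma~\ref{lemma:dagger}(iv). A map is by definition both single valued and total, so by Lemma~\ref{lem:characterizationmap} it satisfies the counit inequality \eqref{eq:adjcounit} and the unit inequality \eqref{eq:adjunit}; in symbols, a map $R$ satisfies $R^\dagger\poi R \leq \id_n$ and $\id_m \leq R\poi R^\dagger$, i.e.\ it is a left adjoint to its own dagger. The key observation is that I only need \emph{totality of} $R$ together with \emph{single-valuedness of} $S$, so the full map hypothesis on both arrows is more than enough.

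First I would apply the dagger to the hypothesis $R\leq S$: by Lemma~\ref{lemma:dagger}(iv) this gives $R^\dagger \leq S^\dagger$. Then I would run the following monotone chain, starting from the totality of $R$ (inequality \eqref{eq:adjunit}) postcomposed with $S$, using the order-preservation of $\poi$ at each step:
\[
S \;=\; \id_m\poi S \;\leq\; (R\poi R^\dagger)\poi S \;=\; R\poi(R^\dagger\poi S) \;\leq\; R\poi(S^\dagger\poi S) \;\leq\; R\poi\id_n \;=\; R.
\]
Here the second inequality uses $R^\dagger\leq S^\dagger$ and the third uses the single-valuedness of $S$ (inequality \eqref{eq:adjcounit}, namely $S^\dagger\poi S\leq\id_n$). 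This yields $S\leq R$, and combining with the hypothesis $R\leq S$ gives $R=S$.

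The argument is essentially a two-step calculation, so there is no genuine obstacle; the only point requiring care is bookkeeping about the \emph{orientation} of the adjunction inequalities---which composite, $R^\dagger\poi R$ or $R\poi R^\dagger$, lies below $\id$, and which of the two arrows supplies totality versus single-valuedness. This is precisely what Lemma~\ref{lem:characterizationmap} pins down, and once those are fixed correctly the chain closes automatically. I would also double-check the orientation against the intuition in $\Rel$ (where $R^\dagger$ is the opposite relation), which confirms that single-valuedness corresponds to $R^\dagger\poi R\leq\id_n$ and totality to $\id_m\leq R\poi R^\dagger$.
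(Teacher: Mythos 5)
Your proof is correct and takes the route the paper intends: the corollary is stated as an immediate consequence of Lemma~\ref{lem:characterizationmap}, and your chain $S = \id_m\poi S \leq R\poi R^\dagger\poi S \leq R\poi S^\dagger\poi S \leq R$ is exactly the standard adjunction calculation (unit from totality of $R$, dagger monotonicity from Lemma~\ref{lemma:dagger}(iv), counit from single-valuedness of $S$) that the paper leaves implicit. Your remark that only totality of $R$ and single-valuedness of $S$ are actually needed is a correct sharpening.
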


\begin{corollary}[Span cancellation]\label{cor:spancancellation}
An arrow $R$ is single valued and surjective iff 
$$\lower10pt\hbox{$\includegraphics[height=1cm]{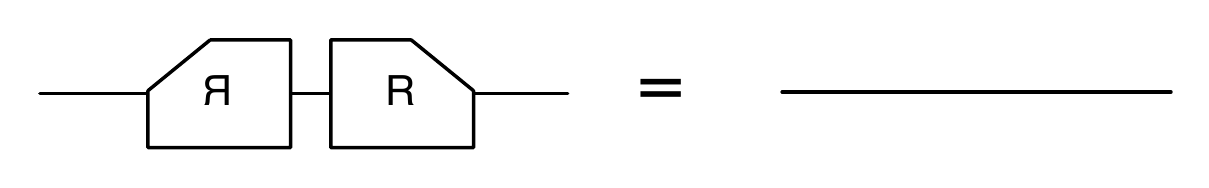}$}$$
\end{corollary}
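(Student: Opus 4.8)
The equality displayed in the statement asserts exactly that $R^\dagger ; R = \id_n$, so the plan is to split this single equation into the two inequalities $R^\dagger ; R \leq \id_n$ and $\id_n \leq R^\dagger ; R$, and then to recognise each of them as one of the adjoint characterisations already established in Lemma~\ref{lem:characterizationmap}. Since that lemma converts each of the four ``function-like'' properties into an inequality phrased via the compact closed structure $\dagger$, there is essentially no diagrammatic work left to do: the corollary should fall out as a direct repackaging of two of its four cases.

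Concretely, I would invoke Lemma~\ref{lem:characterizationmap} twice. That lemma tells us that $R$ is single valued iff \eqref{eq:adjcounit} holds, and the content of \eqref{eq:adjcounit}, once the notation $R^\dagger$ is unfolded, is precisely $R^\dagger ; R \leq \id_n$. Dually, the same lemma says that $R$ is surjective iff \eqref{eq:opasjunit} holds, which unfolds to $\id_n \leq R^\dagger ; R$. Hence $R$ is single valued and surjective iff both inequalities hold simultaneously, i.e. iff $R^\dagger ; R = \id_n$, which is the displayed equation.

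The only genuine point to get right --- and the step I would check most carefully --- is the \emph{orientation}: one must verify that it is single-valuedness and surjectivity (rather than, say, injectivity and totality) that produce the two opposite inequalities between the \emph{same} composite $R^\dagger ; R$. Reading off Lemma~\ref{lem:characterizationmap}, single-valuedness bounds $R^\dagger ; R$ above by the identity while surjectivity bounds it below, so the pairing is correct; by contrast, injectivity and totality combine in the same manner to give the mirror-image identity $R ; R^\dagger = \id_m$. This dual observation is worth recording, since it explains the name ``span cancellation'' and clarifies why the companion pair (injective, total) yields the cancellation on the other side.
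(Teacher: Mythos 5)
Your proof is correct and matches the paper's (implicit) argument exactly: the corollary is stated as an immediate consequence of Lemma~\ref{lem:characterizationmap}, with single-valuedness giving $R^\dagger\poi R\leq \id_n$ via \eqref{eq:adjcounit}, surjectivity giving $\id_n\leq R^\dagger\poi R$ via \eqref{eq:opasjunit}, and antisymmetry of the poset-enrichment yielding the displayed equality. Your closing remark on orientation is also right, and is precisely the content of the companion Corollary~\ref{cor:cospancancellation}.
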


\begin{corollary}[Cospan cancellation]\label{cor:cospancancellation}
An arrow $R$ is injective and total iff
$$\lower10pt\hbox{$\includegraphics[height=1cm]{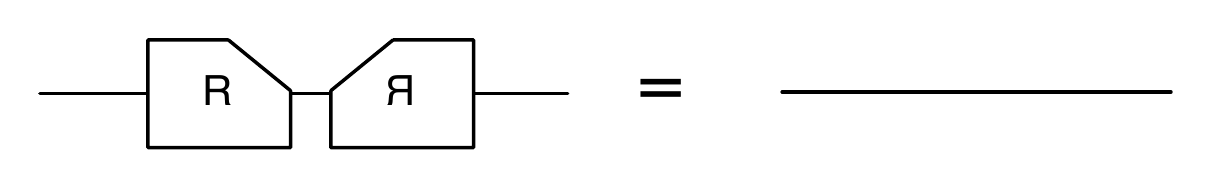}$}$$
\end{corollary}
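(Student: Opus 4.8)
The plan is to obtain Cospan cancellation (Corollary~\ref{cor:cospancancellation}) by transporting Span cancellation (Corollary~\ref{cor:spancancellation}) along the dagger $\dagger\colon\catC^{op}\to\catC$ of Lemma~\ref{lemma:dagger}. The guiding observation is that $\dagger$ interchanges the two halves of the map/comap dichotomy: an arrow $R\colon m\to n$ is injective exactly when $R^\dagger$ is single valued, and $R$ is total exactly when $R^\dagger$ is surjective. Consequently ``$R$ injective and total'' is equivalent to ``$R^\dagger$ single valued and surjective,'' a property for which Span cancellation already supplies an equational characterisation; daggering that equation should deliver the characterisation of $R$ sought in the statement.

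First I would make the property-duality precise. Unwinding \eqref{eq:inj}, injectivity of $R$ reads $R\poi R^\dagger\leq\id_m$, whereas single-valuedness \eqref{eq:sv} applied to the arrow $R^\dagger\colon n\to m$ reads $(R^\dagger)^\dagger\poi R^\dagger\leq\id_m$; using that $\dagger$ is involutive these coincide. The analogous computation with \eqref{eq:tot} and \eqref{eq:sur} shows $R$ is total iff $R^\dagger$ is surjective. In fact this duality is already latent in Lemma~\ref{lem:characterizationmap}: the adjoint forms \eqref{eq:opadjcounit} and \eqref{eq:opasjunit} are precisely the daggers of \eqref{eq:adjcounit} and \eqref{eq:adjunit}, so no new work is needed beyond recording the equivalences.

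Next I would apply Corollary~\ref{cor:spancancellation} to $R^\dagger$ in place of $R$ and then apply the $2$-functor $\dagger$ to both sides of the resulting equation. By Lemma~\ref{lemma:dagger}(ii)--(iv), $\dagger$ reverses $\poi$, preserves $\oplus$ and $\id$, and is an order-isomorphism on homsets, while on the Frobenius structure it swaps the comonoid $(\dup,\dis)$ with the monoid $(\codup,\codis)$, i.e. $\dup^\dagger=\codup$ and $\dis^\dagger=\codis$. Term by term, the daggered span diagram is therefore exactly the cospan diagram of Corollary~\ref{cor:cospancancellation}, and each occurrence of $R^{\dagger\dagger}$ collapses to $R$; since $\dagger$ preserves both directions of an inequality, this yields the equivalence at once. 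The main obstacle is the bookkeeping in this final step: one must verify carefully that $\dagger$ acts on the (co)monoid generators as claimed --- this is where the compact closed structure and the special Frobenius law are genuinely used --- and that $\dagger$ is truly involutive, a fact that follows from the snake equations underlying the proof of Lemma~\ref{lemma:dagger}(i). Once these two points are secured, the transport is purely mechanical.
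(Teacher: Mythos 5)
Your proof is correct, but it takes a genuinely different (and longer) route than the paper's. The paper states Corollary~\ref{cor:cospancancellation} as an immediate consequence of Lemma~\ref{lem:characterizationmap}: injectivity of $R$ is equivalent to \eqref{eq:opadjcounit}, i.e.\ $R\poi R^\dagger\leq\id_m$, totality is equivalent to \eqref{eq:adjunit}, i.e.\ $\id_m\leq R\poi R^\dagger$, and holding both is---by antisymmetry of the homset order---exactly the displayed equation $R\poi R^\dagger=\id_m$; no transport along $\dagger$ is needed. Your reduction to Corollary~\ref{cor:spancancellation} is nevertheless sound: injectivity and totality of $R$ coincide with single-valuedness and surjectivity of $R^\dagger$, and instantiating the span equation at $R^\dagger$ gives $R^{\dagger\dagger}\poi R^\dagger=\id_m$, which collapses to the cospan equation by involutivity (note that your final step of ``applying $\dagger$ to both sides'' is redundant: the instantiated equation is already the desired one, and it is in any case fixed by $\dagger$). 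What your approach buys is an explicit use of the self-duality that the paper exploits one level down, inside the proof of Lemma~\ref{lem:characterizationmap}, where the injective and surjective characterisations are obtained ``as corollaries'' of the single-valued and total ones by the same trick; the price is the extra bookkeeping you yourself flag ($\dup^\dagger=\codup$, $\dis^\dagger=\codis$, $R^{\dagger\dagger}=R$), none of which the two-line direct argument requires. Two wording slips are worth fixing. First, \eqref{eq:inj} is the monoid-homomorphism-shaped inequality; it is only \emph{equivalent} to $R\poi R^\dagger\leq\id_m$ (which is \eqref{eq:opadjcounit}) via Lemma~\ref{lem:characterizationmap}, so you should cite that equivalence rather than say \eqref{eq:inj} ``reads'' as the adjoint form. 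Second, \eqref{eq:opadjcounit} is not the dagger of \eqref{eq:adjcounit}: since $(R^\dagger\poi R)^\dagger=R^\dagger\poi R$, the inequality \eqref{eq:adjcounit} is $\dagger$-invariant; rather, \eqref{eq:opadjcounit} for $R$ is \eqref{eq:adjcounit} \emph{instantiated at} $R^\dagger$, which is what your argument actually uses, so nothing breaks, but the phrasing should say so.
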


\begin{lemma}\label{lem:dagger}
$R$ is a map iff it has a right adjoint.
Moreover, the right adjoint is $R^\dagger$.
\end{lemma}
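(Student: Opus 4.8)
The plan is to prove the two implications of the ``iff'' separately, and then invoke the uniqueness of adjoints in an ordered category to pin the right adjoint down as $R^\dagger$.

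For the forward direction, suppose $R\colon m\to n$ is a map, i.e.\ single valued and total. Lemma~\ref{lem:characterizationmap} identifies single-valuedness with \eqref{eq:adjcounit}, namely $R^\dagger\poi R\leq\id_n$, and totality with \eqref{eq:adjunit}, namely $\id_m\leq R\poi R^\dagger$. These are exactly the two inequalities \eqref{eq:adjoints} witnessing the adjunction $R\dashv R^\dagger$, so $R$ has a right adjoint and it is $R^\dagger$.

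For the converse, suppose $R$ has some right adjoint $S\colon n\to m$, with unit $\id_m\leq R\poi S$ and counit $S\poi R\leq\id_n$. The idea is to transport the lax comonoid-homomorphism property of $S$ (the defining inequalities \eqref{eq:CBCH1}--\eqref{eq:CBCH2}) across the adjunction, promoting the lax laws for $R$ to genuine equalities and thereby exhibiting $R$ as a comonoid homomorphism, i.e.\ a map. For the counit law I would chain $\dis_m = \id_m\poi\dis_m \leq (R\poi S)\poi\dis_m = R\poi(S\poi\dis_m)\leq R\poi\dis_n$, using the unit for the first inequality and \eqref{eq:CBCH2} for $S$ for the last; together with the reverse inequality \eqref{eq:CBCH2} for $R$ this gives $R\poi\dis_n=\dis_m$. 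For the comultiplication law I would start from $\dup_m\poi(R\oplus R)$, insert the unit, apply \eqref{eq:CBCH1} for $S$, rewrite $(S\oplus S)\poi(R\oplus R)=(S\poi R)\oplus(S\poi R)$ by the interchange law, and then collapse using the counit $S\poi R\leq\id_n$ to land below $R\poi\dup_n$; with \eqref{eq:CBCH1} for $R$ this yields $R\poi\dup_n=\dup_m\poi(R\oplus R)$. Hence $R$ is a comonoid homomorphism, that is, a map.

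Finally, once $R$ is known to be a map, the forward direction supplies $R\dashv R^\dagger$; since right adjoints are unique in an ordered category, $S=R^\dagger$, which proves the ``moreover''. I expect the main obstacle to be the comultiplication step of the converse: unlike the counit law, it requires orchestrating the unit, the lax comultiplication law for $S$, the interchange identity, and the counit in exactly the right order, with care that each rewriting is genuinely monotone in the relevant argument. The counit step and the appeal to uniqueness are routine by comparison.
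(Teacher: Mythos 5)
Your proposal is correct and takes essentially the same route as the paper: the forward direction reads the unit and counit of $R\dashv R^\dagger$ off Lemma~\ref{lem:characterizationmap}, and your converse derives single-valuedness and totality of $R$ from an arbitrary right adjoint $S$ via exactly the unit / lax-comonoid-homomorphism / interchange / counit chains that the paper carries out diagrammatically. The final appeal to uniqueness of right adjoints in an ordered category to conclude $S=R^\dagger$ is also the paper's ``standard argument''.
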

\begin{proof}
If $R$ is a map, then it is total and single valued.
The fact that it is total gives the unit~\eqref{eq:adjunit} of the adjunction and the fact that it is single valued gives the 
counit~\eqref{eq:adjcounit}. 

Suppose that $R$ has right adjoint $S$, i.e.
\[
\includegraphics[height=1cm]{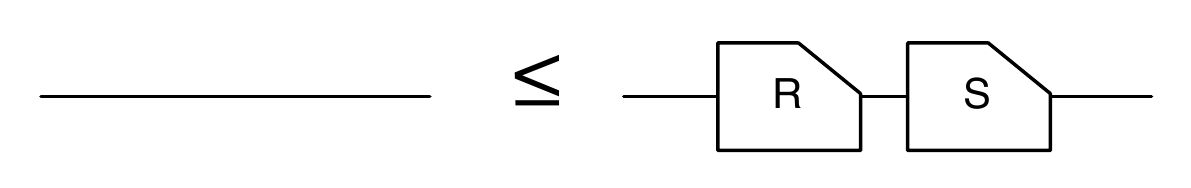}
\quad
\includegraphics[height=1cm]{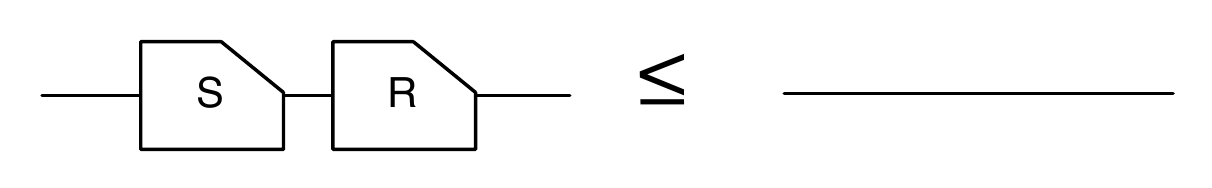}
\]
Then 
\[
\includegraphics[height=1.5cm]{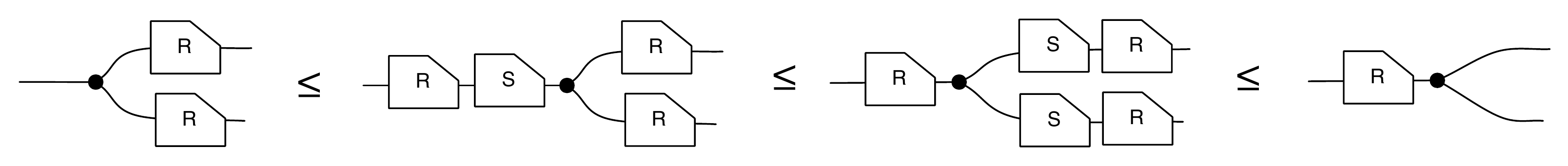}
\]
So $R$ is single valued, and 
\[
\includegraphics[height=1cm]{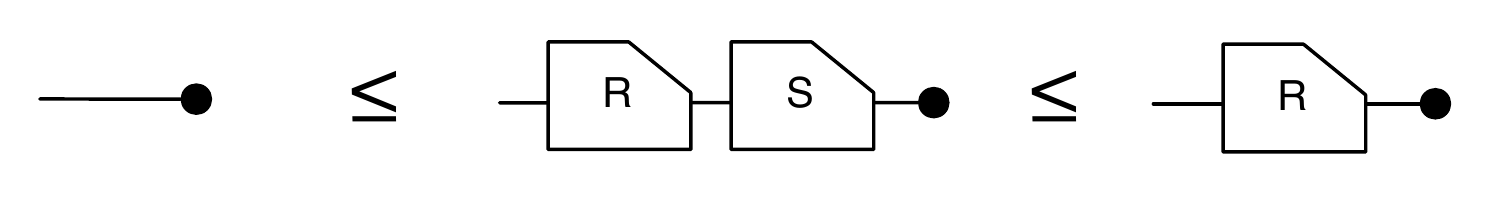}
\]
so $R$ is total. By Lemma~\ref{lem:characterizationmap},
$R^\dagger$ is also right adjoint to $R$. Thus, by the standard argument, we have
\[
\includegraphics[height=1cm]{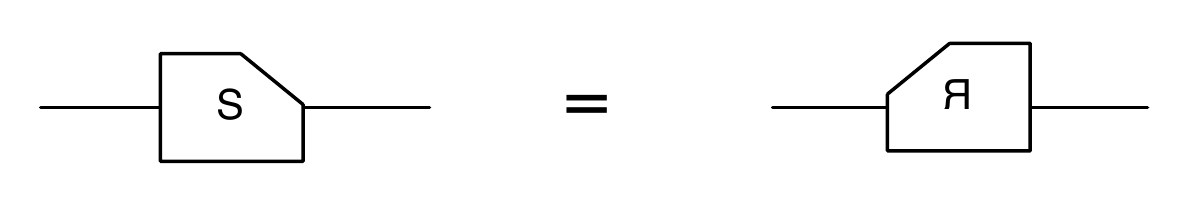}.
\]
\begin{corollary}\label{cor:inverse}
An arrow $R$ is comap iff it has a left adjoint $S$, and in that case $S=R^\dagger$. 
\end{corollary}
\begin{corollary}
An arrow $R$ is an isomorphism iff $R^{-1}=R^\dagger$.
\end{corollary}

\end{proof}

\paragraph{Convolution.} We now show that in a cartesian bicategory of relations every homset is a meet semi-lattice. 
Given arrows $A \colon m\to n$ and $B \colon m\to n$ we define the \emph{convolution} of $A$ and $B$, written $A\owedge B$, as follows:
\[
\owedge \Defeq \Delta_m \mathrel{;} A\oplus B \mathrel{;} \nabla_n = \lower20pt\hbox{\includegraphics[height=1.75cm]{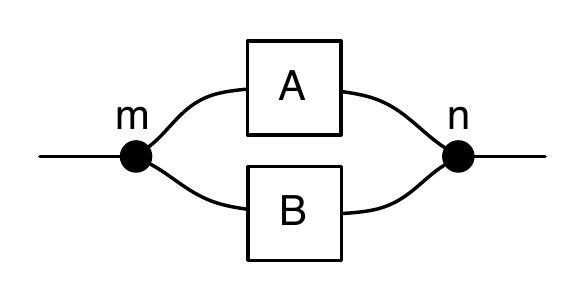}}
\]
\begin{lemma}
Convolution is associative, commutative, idempotent and unital, with unit: 
\[
\top_{m,n} \Defeq \dis_m ; \codis_n = \lower7pt\hbox{\includegraphics[height=1cm]{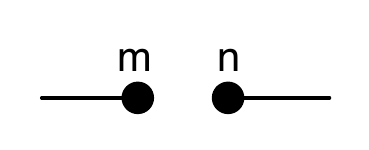}}
\]
\end{lemma}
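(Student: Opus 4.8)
The plan is to read the four identities as the single statement that each hom-poset $\cat{C}[m,n]$ is a bounded meet-semilattice, with $\owedge$ the binary meet and $\top_{m,n}$ the greatest element. I would separate the three ``soft'' properties---commutativity, associativity and unitality---which are purely equational consequences of the cocommutative comonoid $(\Delta,\dis)$ and the commutative monoid $(\nabla,\codis)$, from the single genuinely two-dimensional property, idempotency, which is where the lax comonoid-homomorphism inequations and the special law must intervene.

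For commutativity I would rewrite $B\owedge A = \Delta_m\poi(B\oplus A)\poi\nabla_n$ by inserting a symmetry: cocommutativity gives $\Delta_m = \Delta_m\poi\sigma_{m,m}$, naturality of the symmetry turns $\sigma_{m,m}\poi(B\oplus A)$ into $(A\oplus B)\poi\sigma_{n,n}$, and commutativity of $\nabla$ gives $\sigma_{n,n}\poi\nabla_n=\nabla_n$, collapsing everything to $A\owedge B$. For associativity I would expand both $(A\owedge B)\owedge C$ and $A\owedge(B\owedge C)$ via the interchange law and observe that each equals $\Delta^{(3)}_m\poi(A\oplus B\oplus C)\poi\nabla^{(3)}_n$, the two ternary comultiplications agreeing by coassociativity and the two ternary multiplications by associativity. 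For unitality, $A\owedge\top_{m,n}=\Delta_m\poi(A\oplus(\dis_m\poi\codis_n))\poi\nabla_n$ factors, by interchange, as $\Delta_m\poi(A\oplus\dis_m)\poi(\id_n\oplus\codis_n)\poi\nabla_n$; the monoid unit law collapses $(\id_n\oplus\codis_n)\poi\nabla_n$ to $\id_n$, and then, sliding $A$ off the first wire, the comonoid counit law $\Delta_m\poi(\id_m\oplus\dis_m)=\id_m$ leaves exactly $A$. None of these three uses an inequation.

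Idempotency is the crux, and here I would prove two inequalities separately. For $A\le A\owedge A$ I would start from $A = A\poi\Delta_n\poi\nabla_n$ (special law $\Delta_n\poi\nabla_n=\id_n$), apply the lax comonoid-homomorphism inequation~\eqref{eq:CBCH1}, namely $A\poi\Delta_n\le\Delta_m\poi(A\oplus A)$, and whisker on the right by $\nabla_n$, obtaining $A\le\Delta_m\poi(A\oplus A)\poi\nabla_n = A\owedge A$. For the reverse inequality $A\owedge A\le A$ I would first establish that $\top_{m,n}$ is the greatest element of $\cat{C}[m,n]$: for any $R\colon m\to n$ we have $R=R\poi\id_n\le R\poi\dis_n\poi\codis_n$ by~\eqref{eq:adj4}, and then $R\poi\dis_n\le\dis_m$ by the lax counit inequation~\eqref{eq:CBCH2}, so $R\le\dis_m\poi\codis_n=\top_{m,n}$. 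Since $\owedge$ is monotone in each argument (composition and $\oplus$ being monotone), the unitality identity just proved yields $A\owedge A\le A\owedge\top_{m,n}=A$.

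I expect the reverse inequality of idempotency to be the main obstacle, in the following precise sense. In the strict cartesian case every arrow is an actual comonoid homomorphism, so $A\poi\Delta_n=\Delta_m\poi(A\oplus A)$ and idempotency is immediate from the special law; for a genuinely lax arrow only ``$\le$'' holds, so one direction comes for free but the other cannot be obtained by the same manipulation and must instead be routed through the fact that $\top_{m,n}$ is the top element---which itself relies on the adjunction datum~\eqref{eq:adj4} together with the lax counit homomorphism~\eqref{eq:CBCH2}. Recognising that this detour is needed is the only non-routine step; the remaining verifications are diagrammatic.
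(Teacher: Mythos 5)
Your proposal is correct and follows essentially the same route as the paper: the soft laws are read off the (co)monoid equations, $A\le A\owedge A$ comes from the special law plus the lax comultiplication inequation~\eqref{eq:CBCH1}, and the reverse direction uses~\eqref{eq:adj4} and~\eqref{eq:CBCH2} to compare one copy of $A$ with $\dis_m;\codis_n$. The only cosmetic difference is that you factor the hard direction through the statement that $\top_{m,n}$ is the greatest element (which the paper proves immediately afterwards as a separate lemma) and then invoke monotonicity, whereas the paper inlines exactly those two inequations in its displayed chain $A\owedge A\le\cdots\le A\owedge\top_{m,n}=A$.
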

\begin{proof}
The proof for associativity, commutativity and unitality is trivial using the fact that $(\dup, \dis)$ is a commutative comonoid and
$(\codup, \codis)$ is a commutative monoid. For idempotency we observe the following two inequalities.

\begin{equation*}
\lower10pt\hbox{$\includegraphics[height=0.8cm]{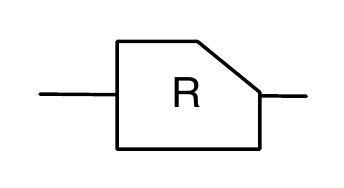}$}
=
\lower12pt\hbox{$\includegraphics[height=1.1cm]{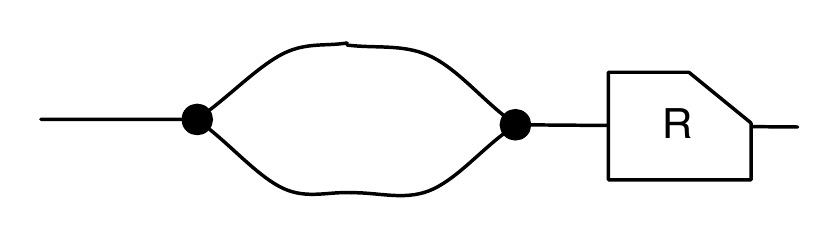}$}
\leq
\lower18pt\hbox{$\includegraphics[height=1.5cm]{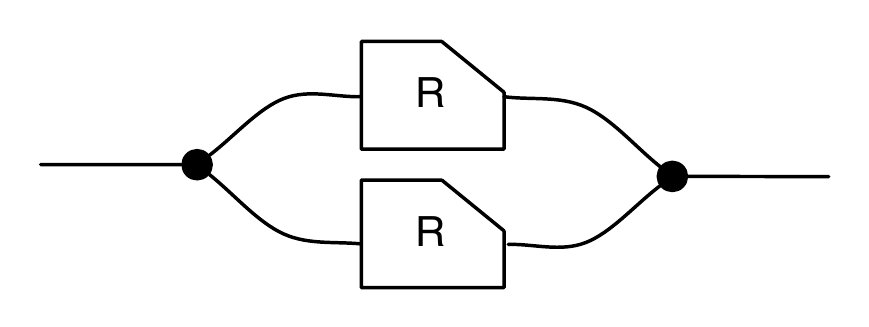}$}
\end{equation*}

\begin{equation*}
\lower16pt\hbox{$\includegraphics[height=1.5cm]{graffles/idempoctency3.pdf}$}
\leq
\lower16pt\hbox{$\includegraphics[height=1.5cm]{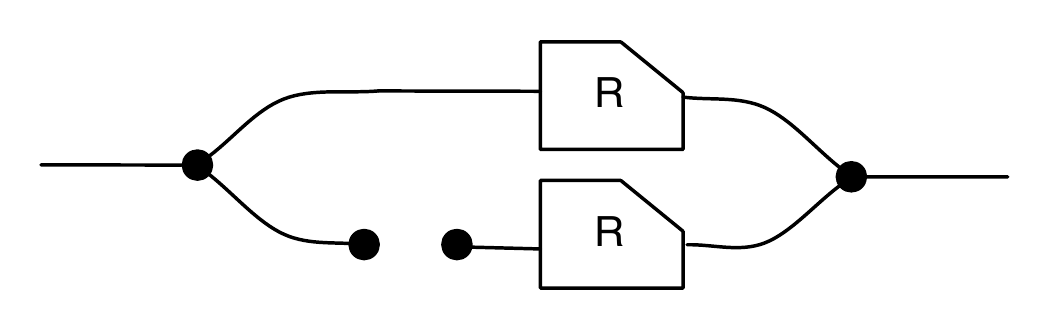}$}
\leq
\lower14pt\hbox{$\includegraphics[height=1.3cm]{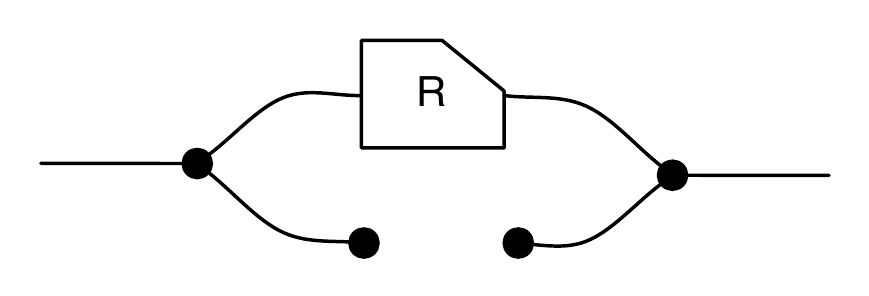}$}
=
\lower8pt\hbox{$\includegraphics[height=0.8cm]{graffles/idempotency1.pdf}$}
\end{equation*}

\end{proof}

Since $\owedge$ is associative, commutative and idempotent, it induces an ordering. In the following we show that this ordering is exactly $\leq$. 

\begin{lemma}\label{lemma:top}
For all arrows $R\colon n\to m$,
$R \leq \top_{m,n}$.
\end{lemma}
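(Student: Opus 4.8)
The plan is to inflate $R$ up to $\top_{m,n} = \dis_m ; \codis_n$ by a single short chain of inequalities, relying only on monotonicity of composition in the ordered prop together with two facts already in hand. Throughout I take $R\colon m\to n$, so that it is comparable with $\top_{m,n}\colon m\to n$. The two ingredients are: the \emph{unit} of the adjunction $\dis_n\dashv\codis_n$, namely $\id_n \leq \dis_n ; \codis_n$, supplied by \eqref{eq:adj4} (recall that \eqref{eq:adj3}--\eqref{eq:adj4} assert $\dis \dashv \codis$); and the counit half of the lax comonoid homomorphism law, $R ; \dis_n \leq \dis_m$, supplied by \eqref{eq:CBCH2}.

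First I would rewrite $R$ as $R ; \id_n$ and enlarge the identity using \eqref{eq:adj4}: whiskering $\id_n \leq \dis_n ; \codis_n$ on the left by $R$ and using that composition is monotone gives $R = R ; \id_n \leq R ; \dis_n ; \codis_n$. Next I would apply \eqref{eq:CBCH2}, $R ; \dis_n \leq \dis_m$; whiskering on the right by $\codis_n$ yields $R ; \dis_n ; \codis_n \leq \dis_m ; \codis_n$. Since $\dis_m ; \codis_n$ is by definition $\top_{m,n}$, concatenating the two steps gives exactly $R \leq \top_{m,n}$.

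There is essentially no serious obstacle here: once the target is unpacked as $\dis_m ; \codis_n$, the argument is a two-step diagram chase. The only point requiring a moment's care is choosing the correct direction of the adjunction --- it is the unit $\id_n \leq \dis_n ; \codis_n$ (the "generation" inequality \eqref{eq:adj4}), not the counit $\codis_n ; \dis_n \leq \id_I$, that lets one split off a $\codis_n$ on the right of $R$ while introducing a $\dis_n$ that \eqref{eq:CBCH2} can then absorb into $\dis_m$. A symmetric route is also available: enlarge $R = \id_m ; R$ via \eqref{eq:adj4} on the $m$-side and then use the dualised lax law \eqref{eq:inverselaxhom1}, $\codis_m ; R \leq \codis_n$, whiskered by $\dis_m$, to reach $\dis_m ; \codis_n$ again. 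Either path works and both are immediate given Lemma~\ref{lemma:dagger} and the defining inequations.
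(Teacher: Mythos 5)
Your proof is correct and is essentially the paper's own argument: the paper's (graphical) proof is the same two-step chain, first inflating $R$ to $R ; \dis_n ; \codis_n$ via the unit \eqref{eq:adj4} of the adjunction $\dis\dashv\codis$, then absorbing $R;\dis_n$ into $\dis_m$ via the lax counit law \eqref{eq:CBCH2} to reach $\dis_m;\codis_n = \top_{m,n}$. Your observation that the typing forces reading $R\colon m\to n$ (the paper's statement has the indices transposed) is also a fair and correct reconciliation.
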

\begin{proof}
\begin{equation*}
\lower2pt\hbox{$
\lower8pt\hbox{$\includegraphics[height=.8cm]{graffles/idempotency1.pdf}$}
\leq
\lower8pt\hbox{$\includegraphics[height=.8cm]{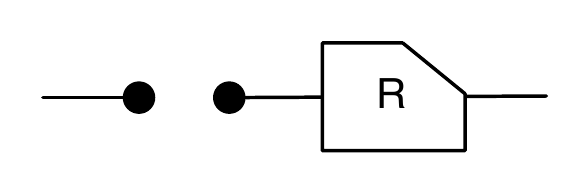}$}
\leq
\lower6pt\hbox{$\includegraphics[height=.6cm]{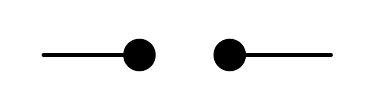}$}
$}
\end{equation*}
\end{proof}

\begin{lemma} 
$R\owedge S = S$ iff $S\leq R$.
\end{lemma}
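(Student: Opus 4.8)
The plan is to reduce the biconditional to the two \emph{meet inequalities} $R \owedge S \le R$ and $R \owedge S \le S$, valid for arbitrary $R,S\colon m\to n$, and then obtain both directions of the statement by purely formal order-theoretic manipulation.

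First I would establish the meet inequalities. The key observation is that $\owedge$ is monotone in each argument: since convolution is built from $\oplus$ and $;$ (namely $R \owedge S = \Delta_m ; (R \oplus S) ; \nabla_n$), and both $\oplus$ and composition are monotone in a poset-enriched prop, $\owedge$ is monotone. Combining this with Lemma~\ref{lemma:top}, which gives $S \le \top_{m,n}$, and with unitality of $\owedge$ (established in the preceding lemma, whose unit is exactly $\top_{m,n}$), I obtain
\[
R \owedge S \;\le\; R \owedge \top_{m,n} \;=\; R .
\]
By commutativity of $\owedge$ the symmetric inequality $R \owedge S = S \owedge R \le S$ follows as well.

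With these in hand the two directions are immediate. For the forward direction, assume $R \owedge S = S$; then $S = R \owedge S \le R$ by the first meet inequality, so $S \le R$. For the reverse direction, assume $S \le R$; monotonicity together with idempotency gives $S = S \owedge S \le R \owedge S$, while the second meet inequality gives $R \owedge S \le S$, so $R \owedge S = S$ by antisymmetry of the homset order.

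I do not expect a genuine obstacle here: the entire content is the recognition that unitality of convolution, together with the fact that $\top_{m,n}$ is the top element of each homset (Lemma~\ref{lemma:top}), forces $\owedge$ to behave as the binary meet for $\le$. Once the meet inequalities are secured, everything reduces to routine lattice bookkeeping using only monotonicity, idempotency, commutativity and antisymmetry of $\le$; the only point requiring care is making sure the monotonicity of $\owedge$ is justified from that of $\oplus$ and composition.
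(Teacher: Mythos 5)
Your proof is correct and follows essentially the same route as the paper's: both directions rest on monotonicity of $\owedge$, unitality $R \owedge \top_{m,n} = R$, the top property of Lemma~\ref{lemma:top}, idempotency and commutativity. The only difference is presentational --- you isolate the meet inequalities $R \owedge S \le R$ and $R \owedge S \le S$ as an explicit intermediate step and spell out why $\owedge$ is monotone, which the paper leaves implicit.
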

\begin{proof}
Assume $R \owedge S = S$. Then $R = R  \owedge \top \geq R \ovee S = S$.
Assume $S\leq R$. Then $R \owedge S \geq S\owedge S = S$. Moreover $S = S\owedge \top \geq S \owedge R$.
\end{proof}

\begin{corollary} The partial order $\leq$ is a meet semi-lattice with top.
\end{corollary}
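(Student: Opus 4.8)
The plan is to recognise that we are in the standard situation of a commutative idempotent monoid, which always induces a meet-semilattice, and then to check that the order it induces is exactly $\leq$.

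First I would invoke the purely algebraic fact underlying semilattices: for any set equipped with an associative, commutative, idempotent binary operation $\owedge$, the relation $A \sqsubseteq B \Defeq (A \owedge B = A)$ is a partial order for which $A \owedge B$ is the binary meet (greatest lower bound) of $A$ and $B$. All three requirements follow from the operation laws already established for convolution: $A \owedge B \sqsubseteq A$ and $A \owedge B \sqsubseteq B$ because $A \owedge (A \owedge B) = (A \owedge A) \owedge B = A \owedge B$ by associativity and idempotency (and symmetrically on the $B$ side), so $A \owedge B$ is a lower bound of $A$ and $B$; and if $C \sqsubseteq A$ and $C \sqsubseteq B$ then $(A \owedge B) \owedge C = A \owedge (B \owedge C) = A \owedge C = C$, so $C \sqsubseteq A \owedge B$, which is the universal property of the meet.

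Next I would identify $\sqsubseteq$ with the given order $\leq$. This is precisely the content of the preceding lemma, which states $R \owedge S = S$ iff $S \leq R$: using commutativity of $\owedge$, $A \owedge B = A$ iff $B \owedge A = A$ iff $A \leq B$, so $\sqsubseteq$ and $\leq$ coincide on every homset. Consequently $(\catC[m,n], \leq)$ is a meet-semilattice, with the meet of $A$ and $B$ computed as their convolution $A \owedge B$.

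Finally, for the top element I would use that $\top_{m,n}$ is the unit of $\owedge$: since $R \owedge \top_{m,n} = R$ we obtain $R \sqsubseteq \top_{m,n}$, that is $R \leq \top_{m,n}$, for every $R$; this is also recorded directly as Lemma~\ref{lemma:top}. Hence each homset is a meet-semilattice with top $\top_{m,n}$, as claimed. I do not expect a genuine obstacle here: everything reduces to the standard dictionary between idempotent commutative monoids and meet-semilattices-with-top, and the one point requiring care is matching the direction of the induced order to $\leq$, which the preceding lemma supplies.
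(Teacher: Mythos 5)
Your proposal is correct and follows essentially the same route as the paper: the paper states this corollary without further proof, taking it to follow immediately from the preceding lemmas that $\owedge$ is associative, commutative, idempotent and unital with unit $\top$, and that the induced order ($R \owedge S = S$ iff $S \leq R$) coincides with $\leq$. Your write-up simply makes explicit the standard dictionary between idempotent commutative monoids and meet-semilattices-with-top that the paper leaves implicit, including the careful matching of the direction of the induced order.
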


\subsection{Carboni-Walters categories and Frobenius theories}
In ordinary functorial semantics, any finite product category can serve as a semantic domain, with the category
of sets and functions serving as the default choice. 
Cartesian bicategories of relations will serve as the domain for our notion of functorial semantics, with the category
of sets and \emph{relations} as a particularly useful universe.

We call an ordered prop that is, additionally, a cartesian bicategory of relations a \emph{frop} (Frobenius prop) or a \emph{Carboni-Walters category}. 
Similarly to how a cartesian theory results a Lawvere category, a Frobenius theory results in a Carboni-Walters category. To formally introduce this fact is convenient to consider the SMIT $\Theory{CW}$: the signature $\Sigma_{CW}$ consists of $\Bcomult \colon 1 \to 2$, $\Bcounit \colon 1 \to 0$, $\Bmult \colon 2 \to 1$ and  $\Bunit \colon 0 \to 1$. The set $I_{CW}$ contains the inequations for comonoids, monoids and \eqref{eq:adj1}-\eqref{eq:frobn} for $n=1$. The ordered PROP freely generated by $\Theory{CW}$ is a cartesian bicategory of relations: for every natural number $n$, the comonoids and monoid are defined analogously to Example \ref{ex:lawvere} (a). It is then easy to check that the inequations \eqref{eq:adj1}-\eqref{eq:CBCH2} hold.

Similarly to how as cartesian theories implicitly contain the SMT of comonoids when considered as props, Frobenius theories contain implicitly the SMIT $\Theory{CW}$.

\begin{definition}\label{def:FrobeniusTheory} 
A (presentation of a) \emph{Frobenius theory} (FT) is a pair $\Theory{T}=(\Sigma, I)$ consisting of a signature $\Sigma$ and a set of inequations $I$.
The signature $\Sigma$ is a set of \emph{generators} $o \: n\to m$ with \emph{arity} $n$ and \emph{coarity} $m$.  The set of inequations $I$ contains pairs $(t,t' \: n\to m)$ of \emph{Frobenius $\Sigma$-terms}, namely  arrows of the ordered PROP freely generated by the SMIT $(\Sigma \uplus \Sigma_{CW}, I_{CW} \uplus I_{LCH})$ where $I_{LCH}$ is the set containing the inequations \eqref{eq:lcomhom1} and \eqref{eq:lcomhom2} for each generator $\sigma \colon m \to n$ in $\Sigma$.

The Carboni-Walters category \emph{freely generated} by a Frobenius theory $\Theory{T}=(\Sigma,I)$, denoted by $\Frob{\Theory{T}}$,  is the ordered PROP freely generated by the SMIT $(\Sigma \uplus \Sigma_{CW}, I \uplus I_{CW} \uplus I_{LCH})$. The latter will be often referred to as the SMIT corresponding to the Frobenius theory $(\Sigma,I)$.
\end{definition}

\begin{example}\label{ex:FT}~
\begin{enumerate}[(a)]
\item The SMIT of commutative monoids $\CMtheory = (\Sigma_{M}, E_M \uplus E_M^{op})$ from Example \ref{ex:SMIT}(a) can be regarded also as a Frobenius theory. In the corresponding SMIT,  $(\Sigma_{M} \uplus \Sigma_{CW}, E_M \uplus E_M^{op} \uplus I_{CW} \uplus I_{LCH})$ one has two monoidal structures which we refer as  the \emph{white monoid} (coming from $\Sigma_{M}$) and the \emph{black monoid} (coming from $\Sigma_{CW}$). Moreover, since $\Frob{\Theory{\CMtheory}}$ is an CW-category, we have also $\cgr[height=20pt]{Wcomult.pdf} ::= (\cgr[height=20pt]{Wmult.pdf})^\dag$ and $\cgr[height=20pt]{Wcounit.pdf} ::= (\cgr[height=20pt]{Wunit.pdf})^\dag$. Graphically,
\begin{multicols}{2}
\noindent
\begin{equation*}
\lower2pt\hbox{$
\lower5pt\hbox{$\includegraphics[height=.6cm]{graffles/Wcomult.pdf}$}
::=
\lower17pt\hbox{$\includegraphics[height=1.7cm]{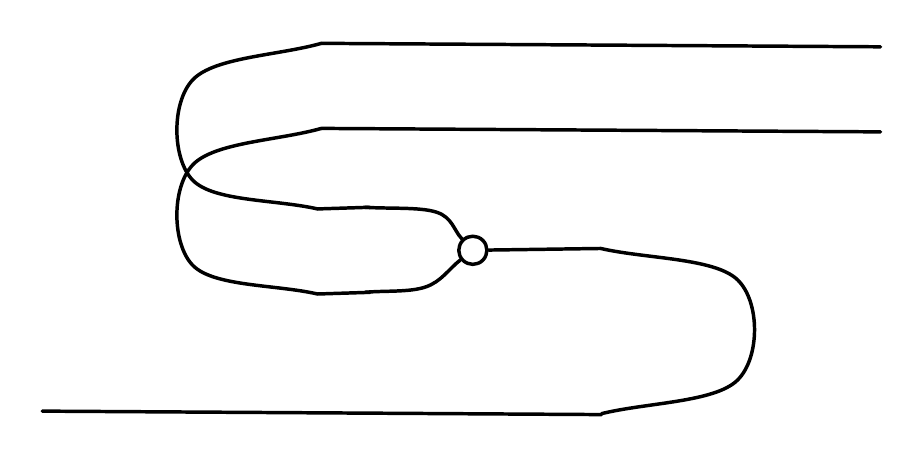}$}
$}
\end{equation*}
\begin{equation*}
\lower2pt\hbox{$
\lower5pt\hbox{$\includegraphics[height=.6cm]{graffles/Wcounit.pdf}$}
::=
\lower5pt\hbox{$\includegraphics[height=.6cm]{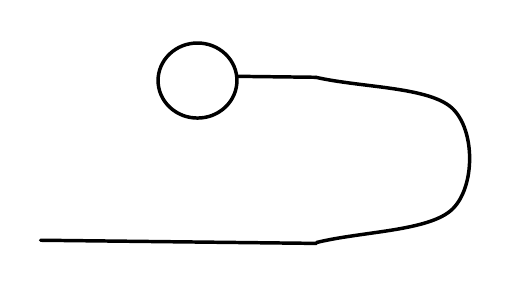}$}
$}
\end{equation*}
\end{multicols}
It is easy to prove that  $(\cgr[height=20pt]{Wcomult.pdf}, \cgr[height=20pt]{Wcounit.pdf})$ form a cocommutative comonoid. To distinguish it from the \emph{black} comonoid (from $\Sigma_{CW}$), we will often refer to it as to the \emph{white} comonoid. Observe that the set of inequations $I_{LCH}$ consists of \eqref{eq:lunitsl}-\eqref{eq:lbwbone} of lax bialgebras (Example \ref{ex:SMIT} (d)).
\item To the previous theory we can add inequalities \eqref{eq:olunitsl}-\eqref{eq:olbwbone} from the theory of oplax bialgebra (Example \ref{ex:SMIT}(e)). Since in all these equations, the left and the right hand side are Frobenius terms, the result is a Frobenius theory: we denote it by $\CMtheoryF = (\Sigma_{M}, E_M \uplus E_M^{op} \uplus B^{op})$.
Since both inequalities hold, the black and white structures form a bialgebra.
The corresponding SMIT is illustrated in Figure \ref{fig:cmonoid}.

In Section \ref{sec:frobcartesian}, we will see that this theory can be understood as the result of a more general construction and that, in a precise sense, it is equivalent to the cartesian theory of monoids (Example \ref{ex:cartesiantheory}(a)). For this reason, we will refer to this theory as \emph{the Frobenius theory of monoids}. In Section \ref{sec:monoid}, we will show that it provides a rich algebraic playground.
\end{enumerate}
\end{example}

We can define an appropriate category of models for Frobenius theories in a similar way to how it is defined for symmetric monoidal and cartesian theories, 

\begin{definition}\label{def:modelFrob}
Given a cartesian bicategory of relations $\catC$, 
a \emph{model} of a Frobenius theory $\Theory{T}$ in $\catC$ is a cartesian bifunctor $\funF \: \Frob{\Theory{T}} \to \catC$. 
A \emph{morphisms} of models $\funF\to \funG$ is a lax-natural monoidal transformation $\alpha\colon F \Rightarrow G$. This means that $\alpha$ is a family of $\catC$-morphisms  $\{\alpha_n\colon Fn \to Gn\}_{n\in \mathcal{N}}$ such that the following holds for all $f\colon n \to m$ in $\Frob{\Sigma, E}$
$$\xymatrix{
Fn \ar[dd]_{Ff} \ar[rr]^{\alpha_n} & & Gn \ar[dd]^{Gf}\\
\\
Fm\ar[rr]_{\alpha_m} \ar@{}[uurr]|{\leq} & & Gm
}$$
and $\alpha_{n + m} = \alpha_{n} \tns \alpha_{m}$ with $\alpha_0 = id_0$. 

The category of models of $\T$ in $\catC$ and their morphisms is denoted by $\frobmodel{\T,\catC}$. 
\end{definition}

Since a cartesian bifunctor is obliged to preserve lax products, it is forced to map the Frobenius structure of $\Frob{\Theory{T}}$ into the unique Frobenius structure of $\catC$ that determines the lax product. When $\catC = \Rel$, this means that any cartesian bifunctor $\funF \: \Frob{\Theory{T}} \to \Rel$ maps
$$\begin{array}{cc}
\Bcomult \mapsto \{( \,x, \,(x,x) \,) \text{ s.t. } x\in \funF1\} &
\Bcounit \mapsto \{  (x,\bullet) \text{ s.t. } x\in \funF1 \} 
 \\
\Bmult \mapsto \{( \,(x,x), \, x \,) \text{ s.t. } x\in \funF1\} & 
\Bunit \mapsto \{  (\bullet, x) \text{ s.t. } x\in \funF1 \}
\end{array}$$
where $\bullet$ is the unique element of the singleton set $\{\bullet\} = 1 = (F1)^0$.
Therefore, a model $F$ is determined by the object $F1$ and the arrows $Fo$ for all $o\in \Sigma$.
%
The implications of using lax natural transformations as model homomorphisms are explained in the next section.

\section{Examples of Frobenius Theories}\label{sec:ex}

In this section, we consider some examples of simple Frobenius theories and their models. We usually interpret the theory in the cartesian bicategory of relations $\Rel$.

\paragraph{The theory of sets.} We first answer the obvious question: what is the category of models for the empty Frobenius theory $(\emptyset, \emptyset)$?
The answer is at first sight surprising: this is just $\Set$ the category of sets and functions.
Indeed, a Cartesian bifunctor $F\colon \Frob{\emptyset , \emptyset}\to \Rel$ is uniquely determined by the object $F1$, which is just a set.

A morphisms of models $\alpha\colon F\Rightarrow G$ is determined by $\alpha_1\colon F1\to G1$ which is a relation satisfying the requirement that the following four squares laxly-commute.
$$\xymatrix{
F1 \ar[dd]_{\dup} \ar[rr]^{\alpha_1} & & G1 \ar[dd]^{\dup}\\
\\
F2\ar[rr]_{\alpha_1\tns \alpha_1 } \ar@{}[uurr]|{\leq} & & G2
}
\quad
\xymatrix{
F1 \ar[dd]_{\dis} \ar[rr]^{\alpha_1} & & G1 \ar[dd]^{\dis}\\
\\
F0\ar[rr]_{id_0 } \ar@{}[uurr]|{\leq} & & G0
}
\quad
\xymatrix{
F2 \ar[dd]_{\codup} \ar[rr]^{\alpha_1\tns \alpha_1} & & G1 \ar[dd]^{\codup}\\
\\
F1\ar[rr]_{\alpha_1 } \ar@{}[uurr]|{\leq} & & G1
}\quad
\xymatrix{
F0 \ar[dd]_{\codis} \ar[rr]^{ id_0 } & & G0 \ar[dd]^{\codis}\\
\\
F1\ar[rr]_{ \alpha_1} \ar@{}[uurr]|{\leq} & & G1
}
$$
The inequalities in the two rightmost squares hold for any relations. 
Instead the inclusion in the two leftmost squares holds if and only if the relation is a map, and maps in $\Rel$ coincide with functions.

\begin{remark}
Requiring morphisms of models to be strict natural transformations rather than just lax (as in Definition \ref{def:modelFrob}) would mean to force the four above inequalities to be equalities. In this case, a morphisms of model would be both a map and a comap, namely an isomorphism.
\end{remark}

\paragraph{The theory of non-empty sets.}
Let us now consider the Frobenius theory having empty signature and the following inequation.
\begin{equation}
\label{eq:ne}
\lower2pt\hbox{$
\lower5pt\hbox{$\includegraphics[height=.6cm]{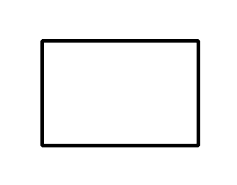}$}
\leq
\lower5pt\hbox{$\includegraphics[height=.6cm]{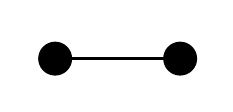}$}
$}
\end{equation}
Observe that the reverse of~\eqref{eq:ne} is~\eqref{eq:adj3}. Here, therefore, the so-called \emph{bone} equation holds:
\begin{equation*}
\lower2pt\hbox{$
\lower5pt\hbox{$\includegraphics[height=.6cm]{graffles/bbone.pdf}$}
=
id_0
$}
\end{equation*}

The corresponding SMIT has been studied in \cite{bruni2003some,coya2016corelations,dovsen2013syntax,zanasi2016algebra}. In these works it is proven that the resulting prop is isomorphic -- forgetting the posetal structure -- to the prop of equivalence relations (where a morphism $n\to m$ is an equivalence relation on $n+m$ regarded as a set).

\smallskip
From our perspective, this theory has quite a different meaning. 
Its models are sets that contain at least one element: indeed any Cartesian bifunctor $F$ to $\Rel$ maps $0$ to $(F1)^0$, i.e., the singleton set $\{\bullet\}$, the left-hand-side of \eqref{eq:ne} to $id_{\{\bullet\}} = \{(\bullet, \bullet)\}$ and the right-hand-side to the relational composition of $\{(\bullet, x) \text{ s.t. } x\in F1  \}$ and $\{(x, \bullet) \text{ s.t. } x\in F1\}$. 

Morphisms are functions, for the same reasons as in the previous example.

\paragraph{The theory of predicates.}
We start adding symbols to the signature: consider the theory containing an operation $\Wunit \: 0 \to 1$ and no equations. 
A model for this theory consists of a set $F1$ and a predicate $F(\Wunit) \subseteq F1$. A morphism of models $\alpha\colon F\Rightarrow G$ is uniquely determined by $\alpha_1\colon F1\to G1$ which is a function 
(for the same reason discussed above) which additionally satisfies the following inequality.
$$
\xymatrix{
F0 \ar[dd]_{F(\Wunit)} \ar[rr]^{ id_0 } & & G0 \ar[dd]^{G(\Wunit)}\\
\\
F1\ar[rr]_{ \alpha_1} \ar@{}[uurr]|{\subseteq} & & G1
}
$$
To make it more explicit the category of models for this theory is the category of predicates and predicate-preserving functions.

\paragraph{The theory of pointed sets.}
We can extend the theory of predicates by requiring $\Wunit$ to be total and single valued. That is we impose  inequations  \eqref{eq:olunitsr} and \eqref{eq:olbwbone}.

A model for this theory is a set $F1$ with a function $F(\Wunit)\colon F0 \to F1$. Since $F0$ is the singleton set $\{\bullet\}$, this is a pointed set. A morphism of models is a function preserving the point: in the following diagram
$$\xymatrix{
F0 \ar[dd]_{F\cgr[height=13pt]{Wunit.pdf}} \ar[rr]^{ id_0 } & & G0 \ar[dd]^{G\cgr[height=13pt]{Wunit.pdf}}\\
\\
F1\ar[rr]_{ \alpha_1} \ar@{}[uurr]|{\leq} & & G1
}
$$
all arrows are maps and therefore, by Corollary \ref{cor:mapssmaller}, they commutes strictly not just laxly.

Observe that the reverse of inequations  \eqref{eq:olunitsr} and \eqref{eq:olbwbone} are equations in the SMIT corresponding to this theory. Therefore we have that 
 \begin{multicols}{2}
\noindent
\begin{equation*}
\lower4pt\hbox{$\includegraphics[height=.5cm]{graffles/unitsl.pdf}$}
=
id_0
\end{equation*}
\begin{equation*}
\lower9pt\hbox{$\includegraphics[height=.9cm]{graffles/runitsr.pdf}$}
=
\lower8pt\hbox{$\includegraphics[height=.8cm]{graffles/runitsl.pdf}$}
\end{equation*}
\end{multicols}
With these equations, we can prove that every pointed set is non-empty, namely that inequation \eqref{eq:ne} holds:
\begin{equation*}
\lower2pt\hbox{$
id_0
=
\lower4pt\hbox{$\includegraphics[height=.5cm]{graffles/unitsl.pdf}$}
\leq
\lower5pt\hbox{$\includegraphics[height=.6cm]{graffles/bbone.pdf}$}
$}
\end{equation*}

Another simple graphical derivation proves that $\cgr[height=13pt]{Wunit.pdf}$ is injective
\begin{equation}\label{ineq:injective}
\lower2pt\hbox{$
\lower12pt\hbox{$\includegraphics[height=1cm]{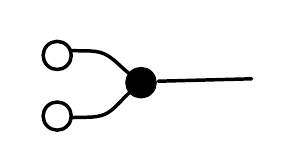}$}
=
\lower12pt\hbox{$\includegraphics[height=1cm]{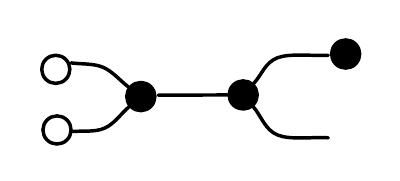}$}
\leq 
\lower12pt\hbox{$\includegraphics[height=1cm]{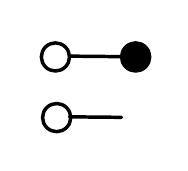}$}
=
\lower5pt\hbox{$\includegraphics[height=.6cm]{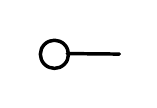}$}
$}
\end{equation}
and thus, by  Corollary~\ref{cor:cospancancellation}, the (white) bone law follows,
\begin{equation}
\label{eq:wbone}
\lower2pt\hbox{$
\lower5pt\hbox{$\includegraphics[height=.6cm]{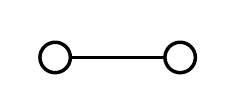}$}
=
id_0
$}
\end{equation}
where, as in Example \ref{ex:FT}(a), $\cgr[height=13pt]{Wcounit.pdf} \: 1 \to 0$ is the notation for $\cgr[height=13pt]{Wunit.pdf}^\dag$. 

The last property that we are going to show is the following.
\begin{equation}\label{eq:wubm}
\lower2pt\hbox{$
\lower12pt\hbox{$\includegraphics[height=1cm]{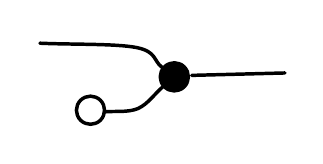}$}
=
\lower12pt\hbox{$\includegraphics[height=1cm]{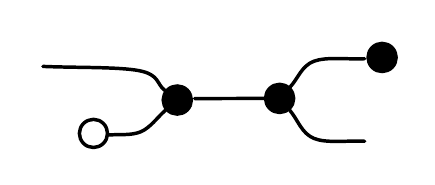}$}
=
\lower15pt\hbox{$\includegraphics[height=1.2cm]{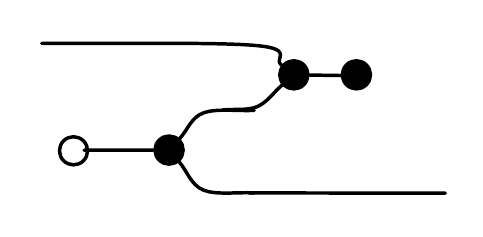}$}
=
\lower12pt\hbox{$\includegraphics[height=1cm]{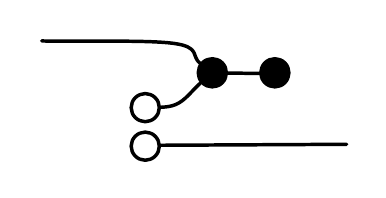}$}
=
\lower5pt\hbox{$\includegraphics[height=.6cm]{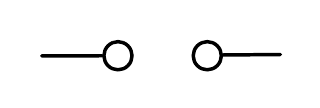}$}
$}
\end{equation}
The above equation proves that the Carboni-Walters category generated by the Frobenius theory of pointed sets is isomorphic (disregarding the posetal structure) to the prop of \emph{partial equivalence relations} studied in \cite{zanasi2016algebra}.

\paragraph{The theory of binary relations.}
We now consider the theory containing an operation $R \: 1 \to 1$ and no equations.
A model for this theory consists of a set $F1$ and a relation $FR \subseteq F1\times F1$. A morphism of models $\alpha\colon F\Rightarrow G$ is determined by $\alpha_1\colon F1\to G1$ which is a function that satisfies the following inequality:
$$
\xymatrix{
F0 \ar[dd]_{FR} \ar[rr]^{ \alpha_1 } & & G0 \ar[dd]^{GR}\\
\\
F1\ar[rr]_{ \alpha_1} \ar@{}[uurr]|{\leq} & & G1
}
$$
This simply means that the function $\alpha_1$ \emph{preserves} the relation $R$: if $(x,y)\in FR$, then $(\alpha_1(x),\alpha_1(y))\in GR$. The category of models is, therefore, the category of binary relations.

\paragraph{The theory of partial orders.}
The category of partial order and monotone maps can be obtained as the category of models of a Frobenius theory :
the signature consists of a single symbol $\cgr[height=15pt]{lesseq.pdf} \: 1 \to 1$ representing a relation and the following three inequations impose reflexivity, transitivity and antisymmetry.
\begin{multicols}{3}
\noindent
\begin{equation*}
\lower8pt\hbox{$
\lower3pt\hbox{$\includegraphics[height=.4cm]{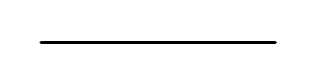}$}
\leq
\lower8pt\hbox{$\includegraphics[height=.8cm]{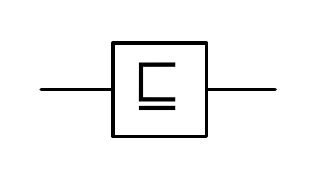}$}
$}
\end{equation*}
\begin{equation*}
\label{eq:unitsr}
\lower8pt\hbox{$
\lower8pt\hbox{$\includegraphics[height=.8cm]{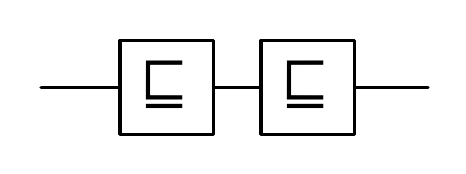}$}
\leq
\lower8pt\hbox{$\includegraphics[height=.8cm]{graffles/lesseq.pdf}$}
$}
\end{equation*}
\begin{equation*}
\label{eq:antisim}
\lower16pt\hbox{$\includegraphics[height=1.4cm]{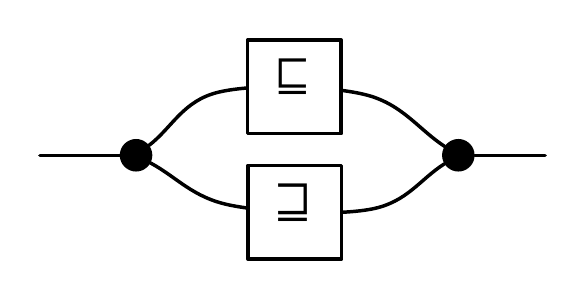}$}
\leq
\lower3pt\hbox{$\includegraphics[height=.4cm]{graffles/identity.pdf}$}
\end{equation*}
\end{multicols}

In the rightmost inequation, $\cgr[height=15pt]{greatereq.pdf} \: 1 \to 1$ is the inverse relation of $\cgr[height=13pt]{lesseq.pdf}$ formally defined as $\cgr[height=15pt]{lesseq.pdf}^\dag$. Graphically,
\begin{equation*}
\lower2pt\hbox{$
\lower5pt\hbox{$\includegraphics[height=.6cm]{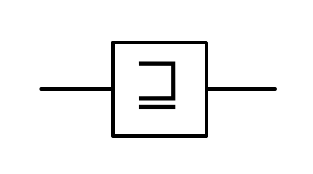}$}
::=
\lower15pt\hbox{$\includegraphics[height=1.2cm]{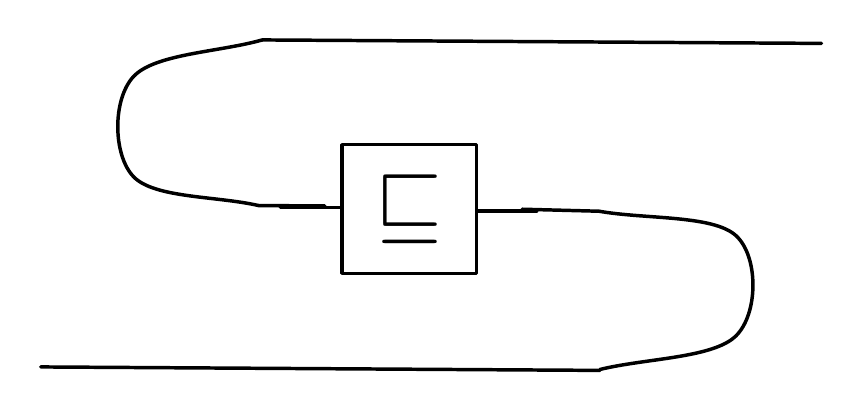}$}
$}
\end{equation*}
It is easy to see that a model for this theory is exactly a set equipped with a relation that is a partial order.
By spelling out the definitions of morphism of model, the same arguments of the previous example applies: a morphism is just a function that preserves the relation, namely a monotone function. The category of models is thus the usual category of posets and monotone functions.

\begin{remark}
Modifying the above, one obtains many familiar theories. For instance, by replacing antisymmetry (the rightmost inequations) by symmetry ($\cgr[height=15pt]{lesseq.pdf} \leq \cgr[height=15pt]{greatereq.pdf}$ ) one obtains equivalence relations. By dropping it, one gets pre-orders. Now replacing the underlying category $\Rel$, by the category of relations over a quantale, one obtains generalized metric spaces in the sense of Lawvere \cite{lawvere1973metric}.
\end{remark}

\paragraph{The theory of deterministic automata.}
A deterministic automaton (DA) on a finite alphabet $A$ consists of a triple $(X,t,f)$, where $X$ is a set of states, $t\colon X\to X^A$ is the transition function and $f\subseteq X$ is its set of final states. Given DAs $(X_1,t_1,f_1)$ and $(X_2,t_2,f_2)$, a function $h \colon X_1 \to X_2$ is a \emph{simulation} if (1) $h(f_1)\subseteq f_2$ and (2), for all $x\in X$, and $a\in A$, $h(t_1(x)(a)) = t_2(h(x)(a))$.

\medskip

We now introduce a Frobenius theory $\Theory{DA}$ for deterministic automata and simulations. 
The signature consists of two symbols $f\colon 1\to 0$ and $t\colon 1 \to |A|$. The set of inequations only requires that $t$ is a map: $\dup \poi (t\tns t) \leq t \poi \dup_{|A|}$ and $\dis \leq t \poi \dis_{|A|}$.

Then models are exactly deterministic automata. Indeed a model $F\colon \Frob{\Theory{DA}}\to \Rel$ consists of a set $F1$ (representing the states) a function  $Ft \colon F1 \to (F1)^{|A|}$ (the transition) and a relation $Ff \colon F1 \to 1 = (F1)^0$, namely a predicate (the final states).

Morphisms are exactly simulations. They are functions for the usual reason and the following lax naturality squares
$$
\xymatrix{
F1 \ar[dd]_{Ff} \ar[rr]^{\alpha_1} & & G1 \ar[dd]^{Gf}\\
\\
F0\ar[rr]_{id_0 } \ar@{}[uurr]|{\leq} & & G0
}
\quad
\xymatrix{
F1 \ar[dd]_{Ft} \ar[rr]^{\alpha_1} & & G1 \ar[dd]^{Gt}\\
\\
F|A|\ar[rr]_{\alpha_{|A|} } \ar@{}[uurr]|{\leq} & & G|A|
}
$$
impose, respectively, conditions (1) and (2) of the definition of simulation. 

\begin{remark}
This example suggests that one may use Frobenius theory to express coalgebraic structures. Indeed, one can encode a function $f\colon X\to X^n$ either by having in the signature a symbol $f\colon1\to n$ and imposing the axioms for functions or as a symbol $f\colon n\to 1$ and impose the axioms of cofunctions. Then the lax-naturality conditions make the morphism the expected coalgebra homomorphisms.

Unfortunately, one can express only coalgebras for functors of the shapes $F(X)=X^n$, which are usually not particularly interesting: the final coalgebras is always the one element set $1$. More interesting functors, such as $F(X)=X \times X +1$ (finite and infinite trees),  cannot be expressed in our framework: intuitively we are missing the coproduct $+$.
\end{remark}

\paragraph{The theory of non-negative monoids.}\label{ssec:non-neg}
Recall the Frobenius theory of monoids $\CMtheoryF$ from Example \ref{ex:FT} (b). It is now easy to see that its models in $\Rel$ are simply ordinary monoids in $\Set$, and morphisms of models are just monoid homomorphisms.

By adding to $\CMtheoryF$ the following inequation
\begin{equation}
\label{eq:nneg}
\lower2pt\hbox{$
\lower7pt\hbox{$\includegraphics[height=.8cm]{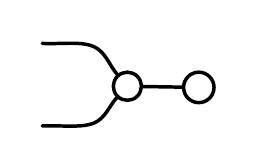}$}
\leq
\lower7pt\hbox{$\includegraphics[height=.8cm]{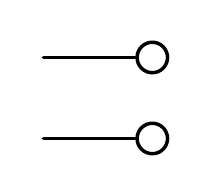}$}
$}
\end{equation}
we obtain the theory of \emph{non-negative monoids}. These are monoids $(X,+,0)$ with the additional property that $\forall x,y \in X$, if $x+y= 0$, then $x=0$ and $y=0$.
Note that the string diagrams on both sides of \eqref{eq:nneg} are Frobenius terms: the definition of the white counit requires the Frobenius structure.

\section{From cartesian theories to Frobenius theories}\label{sec:frobcartesian}
In Example \ref{ex:cartesiantheory}(a) we introduced the cartesian theory of commutative monoids $\CMtheory$. In Example \ref{ex:FT}(b) we showed that the Frobenius theory of commutative monoid $\CMtheoryF$ is obtained from $\CMtheory$ by adding inequality of oplax bialgebras, that force the (white) monoid structure to be a oplax (black) comonoid homomorphism. Since in any Frobenius theory, the generators are implicitly forced to be lax comonoid homomorphism (inequations \eqref{eq:lcomhom1} and \eqref{eq:lcomhom2}), one has that in the SMIT corresponding to $\CMtheoryF$, as in the SMT corresponding to $\CMtheory$, the white monoid is a \emph{strict} black comonoid homomorphism. 

\medskip

This construction can be generalised as follows: given any cartesian theory $\Theory{T} = (\Sigma, E)$ one builds the Frobenius theory $\FTheory{T}=(\Sigma, E\uplus E^{op} \uplus I_{OLCH} )$ where $I_{OLCH}$ consists of the following two inequalities
\begin{equation}
\label{eq:cf1}
\lower2pt\hbox{$
\lower12pt\hbox{$\includegraphics[height=1cm]{graffles/law2.pdf}$}
\leq
\lower12pt\hbox{$\includegraphics[height=1cm]{graffles/law1.pdf}$}
$}
\end{equation}
\begin{equation}
\label{eq:cf2}
\lower20pt\hbox{$\includegraphics[height=1.5cm]{graffles/law4.pdf}$}
\leq
\lower12pt\hbox{$\includegraphics[height=1cm]{graffles/law3.pdf}$}
\end{equation}
for each generator $\cgr[height=13pt]{operator.pdf} \: n \to 1$ in the signature $\Sigma$.
The two theories are closely related: indeed, the category of cartesian models for $\Theory{T}$ in $\Set$ is isomorphic to the category of Frobenius models for  $\FTheory{T}$ in $\Rel$.

\medskip

One can further generalize by replacing $\Rel$ with an arbitrary cartesian bicategory of relations $\catC$ and $\Set$ by its \emph{category of maps} $\Map{\catC}$. This is the defined to be the sub-bicategory of $\catC$ having as arrows the maps of $\catC$. Indeed, $\catC$ has finite products.
\begin{proposition}
Let $\catC$ be a cartesian bicategory of relations. Then $\Map{\catC}$ is a cartesian category.
\end{proposition}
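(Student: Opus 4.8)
The plan is to verify that $\Map{\catC}$ is a symmetric monoidal subcategory of $\catC$ whose monoidal product computes finite products. First I would check closure of maps under the categorical operations. Since by Lemma~\ref{lem:dagger} a map is precisely an arrow possessing a right adjoint, and right adjoints are built functorially by $\dagger$ (Lemma~\ref{lemma:dagger}: $(R\poi S)^\dagger = S^\dagger \poi R^\dagger$ and $(R\oplus S)^\dagger = R^\dagger\oplus S^\dagger$), the composite and monoidal product of maps are again maps, with $(R\poi S)^\dagger$ and $(R\oplus S)^\dagger$ as their right adjoints. Identities and symmetries are isomorphisms, hence maps, so $\Map{\catC}$ is a symmetric monoidal subcategory of $\catC$. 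Moreover, by Corollary~\ref{cor:mapssmaller} any two comparable maps are equal, so the order inherited from $\catC$ is discrete: $\Map{\catC}$ is an ordinary $1$-category, and every $2$-cell between maps that witnesses the lax product of $\catC$ collapses to an identity.

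Next I would identify the finite products. The monoidal unit $0$ is terminal: for any object $m$ the counit $\dis_m\colon m\to 0$ is a map (its right adjoint is $\codis_m=\dis_m^\dagger$ by \eqref{eq:adj3}--\eqref{eq:adj4}), and it is the unique such, since the lax comonoid homomorphism inequation \eqref{eq:CBCH2} gives $R\leq\dis_m$ for every $R\colon m\to 0$ (using $\dis_0=\id_0$), whence any map $m\to 0$ equals $\dis_m$ by Corollary~\ref{cor:mapssmaller}. For binary products I would take $m\oplus n$ with projections $\pi_1\Defeq \id_m\oplus\dis_n$ and $\pi_2\Defeq\dis_m\oplus\id_n$, which are maps, and pairing $\langle f,g\rangle\Defeq\dup_k\poi(f\oplus g)$ for maps $f\colon k\to m$, $g\colon k\to n$; this is a map because $\dup_k$ is (its right adjoint is $\codup_k$ by \eqref{eq:adj1}--\eqref{eq:adj2}) and maps are closed under $\poi$ and $\oplus$. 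The equations $\langle f,g\rangle\poi\pi_1=f$ and $\langle f,g\rangle\poi\pi_2=g$ then follow exactly as the $2$-cells $\rho_1,\rho_2$ of Theorem~\ref{thm:laxproduct} were produced, now read as equalities since all arrows involved are maps and comparable maps coincide; concretely one uses that a map is a \emph{strict} comonoid homomorphism, so $g\poi\dis_n=\dis_k$ and the counit law $\dup_k\poi(\id_k\oplus\dis_k)=\id_k$ collapse the projection composite to $f$.

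The delicate point, and the one I expect to be the main obstacle, is \emph{uniqueness} of the pairing. Given any map $h'\colon k\to m\oplus n$ with $h'\poi\pi_1=f$ and $h'\poi\pi_2=g$, I would exploit that $h'$ is itself a comonoid homomorphism. Using the compatibility of the comultiplication with $\oplus$ together with the identity $\dup_{m\oplus n}\poi(\pi_1\oplus\pi_2)=\id_{m\oplus n}$ (a routine counit computation), one obtains
\begin{multline*}
h' = h'\poi\dup_{m\oplus n}\poi(\pi_1\oplus\pi_2) = \dup_k\poi(h'\oplus h')\poi(\pi_1\oplus\pi_2) \\
= \dup_k\poi\big((h'\poi\pi_1)\oplus(h'\poi\pi_2)\big) = \dup_k\poi(f\oplus g) = \langle f,g\rangle,
\end{multline*}
where the second equality is precisely the comonoid homomorphism property of the map $h'$. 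Thus the pairing is unique, $m\oplus n$ is a genuine categorical product in $\Map{\catC}$, and together with the terminal object $0$ this shows $\Map{\catC}$ has all finite products, i.e.\ is cartesian. The whole argument is the one-object-per-natural-number instance of Carboni and Walters' theorem that the maps of a cartesian bicategory form a finite product category; the only real work is bookkeeping the collapse of laxness via Corollary~\ref{cor:mapssmaller} and verifying the two comonoid identities used above.
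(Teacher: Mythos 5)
Your proof is correct, but it takes a genuinely different route from the paper: the paper's entire proof is a one-line citation to Theorem 6.1(i) of Carboni--Walters \cite{Carboni1987}, whereas you reconstruct that theorem from scratch using only the paper's internal toolkit. Your argument assembles correctly: closure of maps under $\poi$ and $\oplus$ follows from Lemma~\ref{lem:dagger} plus the functoriality of $\dagger$ (Lemma~\ref{lemma:dagger}); the counit $\dis_m$ is a map by the adjunction \eqref{eq:adj3}--\eqref{eq:adj4} and is the unique map into $0$ by \eqref{eq:CBCH2} together with Corollary~\ref{cor:mapssmaller}; the pairing $\dup_k\poi(f\oplus g)$ is a map because $\dup_k$ has right adjoint $\codup_k$ by \eqref{eq:adj1}--\eqref{eq:adj2}; existence of the projection equations uses that maps are \emph{strict} comonoid homomorphisms; and your uniqueness computation, hinging on $\dup_{m\oplus n}\poi(\pi_1\oplus\pi_2)=\id_{m\oplus n}$ (which indeed reduces to the counit laws after unwinding the compatibility of $\dup$ with $\oplus$) and on the homomorphism property of $h'$, is exactly the missing ingredient that upgrades the lax universal property of Theorem~\ref{thm:laxproduct} to a strict one on maps. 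What the paper's approach buys is brevity and an explicit pointer to the original, more general result (arbitrary cartesian bicategories, not just props); what your approach buys is self-containedness --- it demonstrates that the proposition is derivable from the lemmas already proved in the paper, and it makes visible precisely where laxness collapses (Corollary~\ref{cor:mapssmaller}), which the citation hides.
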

\begin{proof}
See Theorem 6.1(i) of \cite{Carboni1987}.
\end{proof}

We can now state the main result of this section.
\begin{theorem}
Let $\Theory{T} = (\Sigma, E)$ be a cartesian theory and  $\FTheory{T}=(\Sigma, E\uplus E^{op} \uplus I_{OLCH} )$. Then $\frobmodel{\FTheory{T},\catC} \cong  \carmodel{\Theory{T},\Map{\catC}}$. 
\end{theorem}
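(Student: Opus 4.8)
The plan is to build a pair of mutually inverse functors $\Phi\colon\frobmodel{\FTheory{T},\catC}\to\carmodel{\Theory{T},\Map{\catC}}$ and $\Psi$ in the opposite direction, using that both kinds of model are freely determined by their action on the generators $\Sigma$ once the structural part is pinned down. First I analyse a Frobenius model $\funF\colon\Frob{\FTheory{T}}\to\catC$. As a cartesian bifunctor it preserves the lax product structure, so it sends the black comonoid and monoid of $\Sigma_{CW}$ to the (unique) Frobenius structure of $\catC$; hence $\funF$ is determined by the arrows $\funF o$ for $o\in\Sigma$. The key point is that in $\Frob{\FTheory{T}}$ every generator $o\colon n\to 1$ is a \emph{map}: the inequations $I_{LCH}$ coincide with \eqref{eq:CBCH1}--\eqref{eq:CBCH2} and make $o$ a lax comonoid homomorphism (automatic in any cartesian bicategory of relations), while $I_{OLCH}$, namely \eqref{eq:cf1}--\eqref{eq:cf2}, are exactly the single-valuedness \eqref{eq:sv} and totality \eqref{eq:tot} inequalities for $o$; together they say $o$ is single-valued and total, i.e.\ a map. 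Since $\funF$ is monotone and preserves the Frobenius structure, each $\funF o$ is a map of $\catC$, so $\funF o\in\Map{\catC}$. By freeness of $\Law{\Theory{T}}$ I define $\Phi(\funF)\colon\Law{\Theory{T}}\to\Map{\catC}$ by sending the comonoid $\Sigma_C$ to the comonoid of $\Map{\catC}$ and each $o$ to $\funF o$; this respects $E_C$, respects $E_{CH}$ because $\funF$ validates both $I_{LCH}$ and $I_{OLCH}$ (so the comonoid-homomorphism inequalities become equalities), and respects $E$ because $\funF$ validates $E\uplus E^{op}$. By the preceding Proposition $\Map{\catC}$ is cartesian with products given by that comonoid, so $\Phi(\funF)$ is a cartesian functor.

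For the inverse I take a cartesian model $\funG\colon\Law{\Theory{T}}\to\Map{\catC}$ and, again by freeness (now of $\Frob{\FTheory{T}}$), define $\Psi(\funG)\colon\Frob{\FTheory{T}}\to\catC$ sending $\Sigma_{CW}$ to the Frobenius structure of $\catC$ and each $o$ to the map $\funG o$ viewed in $\catC$. The crux is to verify that $\Psi(\funG)$ respects every defining inequation: $I_{CW}$ holds because $\catC$ is a cartesian bicategory of relations; $I_{LCH}$ holds because every arrow of $\catC$ is a lax comonoid homomorphism; $I_{OLCH}$ holds because each $\funG o$, being a map, is single-valued and total; and $E\uplus E^{op}$ hold because $\funG$ validates $E$ as equalities in $\Map{\catC}$, so the corresponding Frobenius terms, evaluated through maps, agree in $\catC$. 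By construction $\Psi(\funG)$ preserves the lax product structure, so it is a cartesian bifunctor, and $\Phi,\Psi$ are evidently mutually inverse on objects.

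It remains to match morphisms. A morphism $\alpha\colon\funF\Rightarrow\funG$ of Frobenius models is a lax monoidal natural transformation; the lax naturality squares against the black comultiplication and counit force each component $\alpha_n$ to be single-valued and total, hence a map, exactly as in the ``theory of sets'' example. Being monoidal with map components, $\alpha$ is a candidate morphism of cartesian models; and for any $f$ in $\Law{\Theory{T}}$ both $\funF f\poi\alpha_m$ and $\alpha_n\poi\funG f$ are composites of maps, so the lax inequality collapses to an equality by Corollary~\ref{cor:mapssmaller}. Thus $\alpha$ is a strict natural transformation in $\carmodel{\Theory{T},\Map{\catC}}$, and conversely every morphism there, having map components and satisfying strict (hence lax) naturality, is a morphism of Frobenius models. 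These assignments are functorial and inverse, yielding the isomorphism.

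I expect the main obstacle to be the well-definedness of $\Psi$, precisely the clause that $E\uplus E^{op}$ is preserved: one must check that evaluating a cartesian $\Sigma$-term in $\Map{\catC}$ produces the same arrow of $\catC$ as evaluating the corresponding Frobenius term, which relies on the coincidence of the black comonoid structures and on Corollary~\ref{cor:mapssmaller} to pin maps down uniquely. The morphism correspondence hinges on the same collapse of lax naturality to strict naturality for maps.
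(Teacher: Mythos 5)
Your proposal is correct and takes essentially the same approach as the paper's own proof: both identify a Frobenius model with the data $(\funF 1,\{\funF o\}_{o\in\Sigma})$, use the inequations $I_{OLCH}$ (\eqref{eq:cf1}--\eqref{eq:cf2}) to conclude that each $\funF o$ is a map and hence lives in $\Map{\catC}$, and handle morphisms by noting that lax naturality against the black comonoid forces map components, after which Corollary~\ref{cor:mapssmaller} collapses lax naturality to strict naturality. Your write-up merely fills in details the paper leaves implicit (the explicit inverse functor $\Psi$, its well-definedness on the inequations, and the converse morphism direction), but the underlying argument is the same.
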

\begin{proof}
This theorem can be proved in several ways. We give a concrete proof that uses the intuitions developed so far for models and their morphisms.

A cartesian bifunctor $F\colon \Frob{\FTheory{T}} \to \catC$ is uniquely determined by: (a) the object $F1$ in $\catC$ and (b) arrows $Fo$ in $\catC$ for any generator $o\in \Sigma$. Inequations \eqref{eq:cf1} and \eqref{eq:cf2} force each of the $Fo$ to be a map, namely an arrow in $\Map{\catC}$. The same data uniquely determine a cartesian functor $F\colon \Law{\Theory{T}} \to \Map{\catC}$.

For morphisms, observe that $\alpha\colon F\to G$ of Frobenius models is uniquely determined by an arrow of $\catC$ $\alpha_1 \colon F1\to G1$ that is a map (recall from Section \ref{sec:ex} the example of the theory of sets). The morphism $\alpha$ is additionally required to be a lax natural transformation, but since all the $Fo$ are maps then, by Corollary \ref{cor:mapssmaller},  this just means to be a natural transformation. This data uniquely induces a morphism of cartesian models.
\end{proof}

Thus Frobenius theories are at least as expressive as cartesian theories. The examples of non-empty sets, predicates, binary relations, partial orders, DAs and non-negative monoids in Section \ref{sec:ex} show that Frobenius theories are strictly more expressive than classical (cartesian) algebraic theories.
In the remainder of this paper, we explore several Frobenius theories arising from well-known cartesian theories.

\section{The theory of commutative monoids }\label{sec:monoid}
In this section we explore the algebraic properties of the Frobenius theory of commutative monoids $\CMtheoryF$ introduced in Example \ref{ex:FT} (b). Its SMIT is illustrated in Figure \ref{fig:cmonoid}.

\begin{figure}
\begin{center}
\includegraphics[height=7cm]{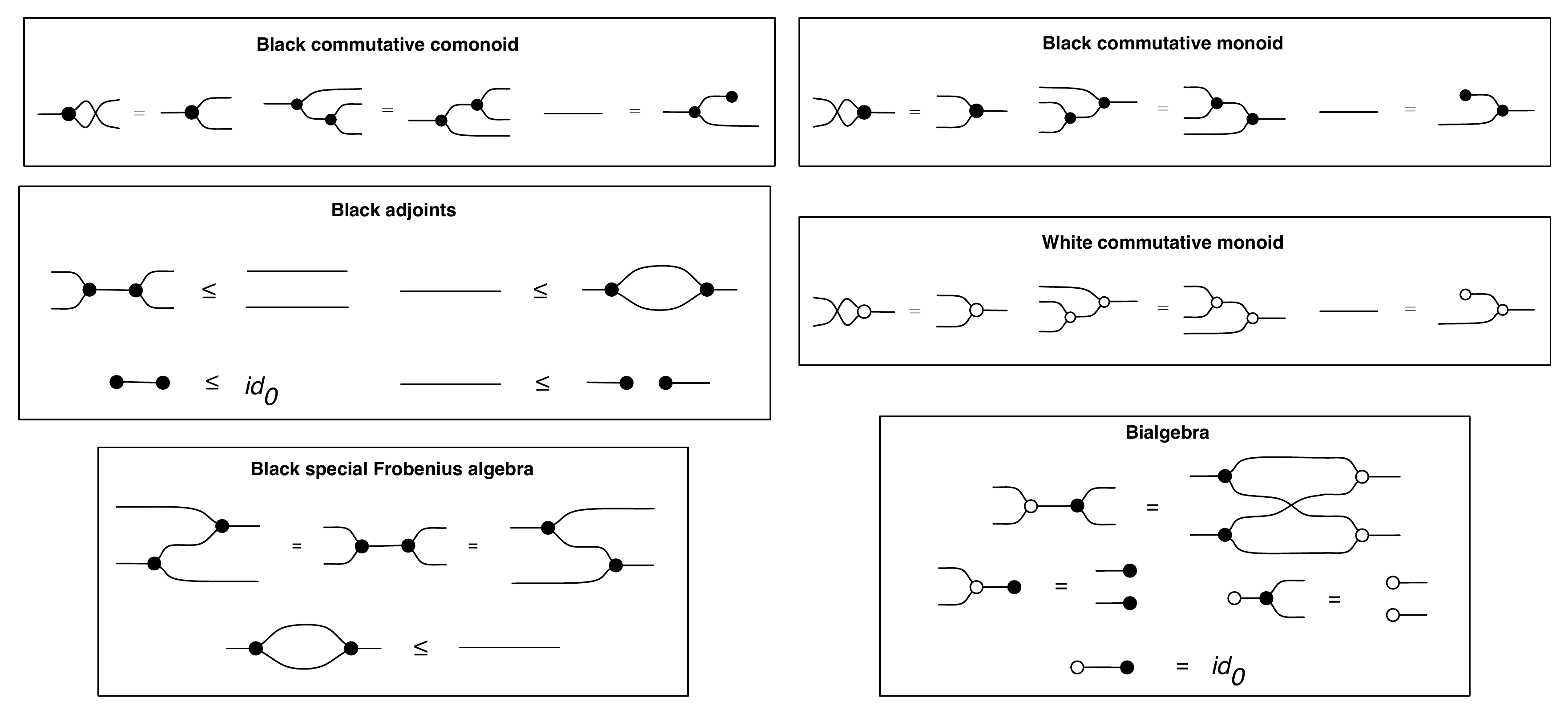}
\end{center}
\caption{The SMIT corresponding to the Frobenius theory of commutative monoids}\label{fig:cmonoid}
\end{figure}

Observe that the white monoid and the black comonoid form a \emph{bialgebra} structure (Example \ref{ex:equationalprops}(d)). This has two interesting consequences; First, by Lemma \ref{lemma:dagger}(iv), we have that also the black monoid and the white comonoid give rise to a bialgebra. Second, we have that both $\cgr[height=15pt]{Wmult.pdf}$ and  $\cgr[height=15pt]{Wunit.pdf}$ are total and single valued. By Lemma \ref{lem:characterizationmap}, we have the following inequalities.
\begin{multicols}{4}
\noindent
\begin{equation*}
\label{eq:nameless}
\lower2pt\hbox{$
\lower7pt\hbox{$\includegraphics[height=.8cm]{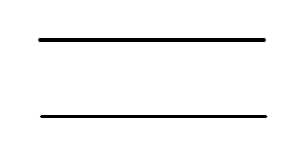}$}
\leq
\lower7pt\hbox{$\includegraphics[height=.8cm]{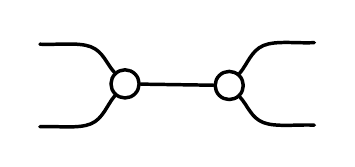}$}
$}
\end{equation*}
\begin{equation*}
\label{eq:nameless}
\lower2pt\hbox{$
\lower7pt\hbox{$\includegraphics[height=.8cm]{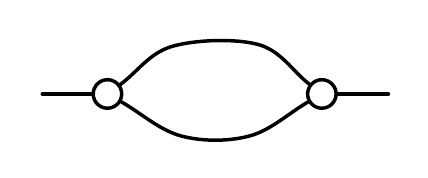}$}
\leq
\lower3pt\hbox{$\includegraphics[height=.4cm]{graffles/identity.pdf}$}
$}
\end{equation*}
\begin{equation*}
\label{eq:nameless}
\lower2pt\hbox{$
id_0
\leq
\lower5pt\hbox{$\includegraphics[height=.6cm]{graffles/wbone.pdf}$}
$}
\end{equation*}
\begin{equation*}
\label{eq:nameless}
\lower2pt\hbox{$
\lower5pt\hbox{$\includegraphics[height=.6cm]{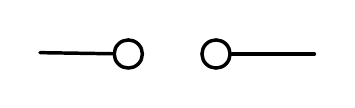}$}
\leq
\lower3pt\hbox{$\includegraphics[height=.4cm]{graffles/identity.pdf}$}
$}
\end{equation*}
\end{multicols}
By using Lemma \ref{lemma:top}, it is easy to see that $\cgr[height=13pt]{Wmult.pdf}$ is surjective
\begin{equation}\label{ineq:surjective}
\lower2pt\hbox{$
\lower12pt\hbox{$\includegraphics[height=1cm]{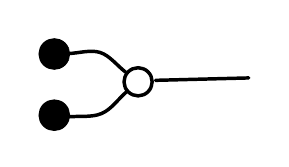}$}
\geq\lower12pt\hbox{$\includegraphics[height=1cm]{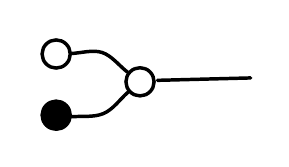}$}
= 
\lower5pt\hbox{$\includegraphics[height=.6cm]{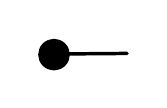}$}
$}
\end{equation}
and thus, by  Corollary~\ref{cor:spancancellation}, the \emph{white special law} holds.
\begin{equation}
\lower12pt\hbox{$\includegraphics[height=1.2cm]{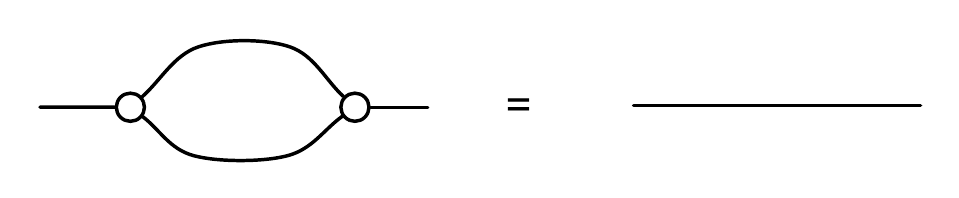}$}
\end{equation}

As in the theory of pointed sets, $\cgr[height=13pt]{Wunit.pdf}$ is injective
\begin{equation}\label{ineq:injective}
\lower2pt\hbox{$
\lower12pt\hbox{$\includegraphics[height=1cm]{graffles/inj1.pdf}$}
=
\lower12pt\hbox{$\includegraphics[height=1cm]{graffles/inj2.pdf}$}
\leq 
\lower12pt\hbox{$\includegraphics[height=1cm]{graffles/inj3.pdf}$}
=
\lower5pt\hbox{$\includegraphics[height=.6cm]{graffles/inj4.pdf}$}
$}
\end{equation}
and thus, by  Corollary~\ref{cor:cospancancellation}, the so-called \emph{white bone law} holds.
\begin{equation*}
\lower2pt\hbox{$
\lower5pt\hbox{$\includegraphics[height=.6cm]{graffles/wbone.pdf}$}
=
id_0
$}
\end{equation*}

Observe that the inverse inclusions for \eqref{ineq:surjective} and \eqref{ineq:injective} hold by \eqref{eq:inverselaxhom1} and \eqref{eq:inverselaxhom2}. Therefore we have that
\begin{multicols}{2}
\noindent
\begin{equation*}
\lower2pt\hbox{$
\lower12pt\hbox{$\includegraphics[height=1cm]{graffles/bubuwm1.pdf}$}
= 
\lower5pt\hbox{$\includegraphics[height=.6cm]{graffles/bubuwm3.pdf}$}
$}
\end{equation*}
\begin{equation*}
\lower2pt\hbox{$
\lower12pt\hbox{$\includegraphics[height=1cm]{graffles/inj1.pdf}$}
=
\lower5pt\hbox{$\includegraphics[height=.6cm]{graffles/inj4.pdf}$}
$}
\end{equation*}
\end{multicols}
These two equations can be weakened at some extent. For the rightmost, we have that:
\begin{equation}\label{eq:wubm}
\lower2pt\hbox{$
\lower12pt\hbox{$\includegraphics[height=1cm]{graffles/wubm.pdf}$}
=
\lower12pt\hbox{$\includegraphics[height=1cm]{graffles/wubm1.pdf}$}
=
\lower15pt\hbox{$\includegraphics[height=1.2cm]{graffles/wubm2.pdf}$}
=
\lower12pt\hbox{$\includegraphics[height=1cm]{graffles/wubm3.pdf}$}
=
\lower5pt\hbox{$\includegraphics[height=.6cm]{graffles/wcwu.pdf}$}
$}
\end{equation}
The analogous of the leftmost only holds laxly by Lemma \ref{lemma:top}.
\begin{equation}
\label{eq:buwm}
\lower2pt\hbox{$
\lower12pt\hbox{$\includegraphics[height=1cm]{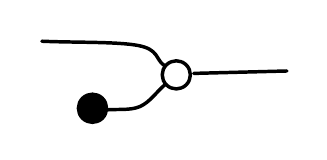}$}
\leq
\lower5pt\hbox{$\includegraphics[height=.6cm]{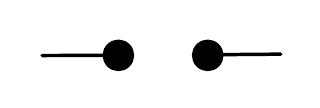}$}
$}
\end{equation}
The other inclusion would mean that for all $x,z\in X$, there exists $y$ such that $x+y=z$. Later we will show that this holds for abelian groups.

\medskip

Another interesting law which can be derived in the theory of commutative monoid is the \emph{black bone}:
\begin{equation}
\label{eq:bbone}
\lower2pt\hbox{$
\lower5pt\hbox{$\includegraphics[height=.6cm]{graffles/bbone.pdf}$}
=
id_0
$}
\end{equation}
Indeed, the left-to-right inequation is just \eqref{eq:adj3}. For the right-to-left, observe that:
\begin{equation*}
\lower2pt\hbox{$
id_0
=
\lower5pt\hbox{$\includegraphics[height=.6cm]{graffles/unitsl.pdf}$}
\leq
\lower5pt\hbox{$\includegraphics[height=.6cm]{graffles/bbone.pdf}$}
$}
\end{equation*}
Recall that \eqref{eq:bbone} does not hold in the theory of sets: the empty set does not satisfy it. Intuitively, \eqref{eq:bbone} states that every commutative monoid has a non empty carrier set; indeed, it must contain at least a unit element.

\bigskip
The pair of white-monoid white-comonoid forms a \emph{lax Frobenius} structure. This is proved as follows:
\begin{equation*}
\lower2pt\hbox{$
\lower15pt\hbox{$\includegraphics[height=1.2cm]{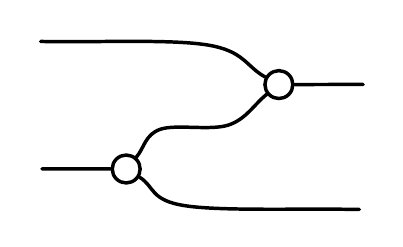}$}
\leq 
\lower15pt\hbox{$\includegraphics[height=1.2cm]{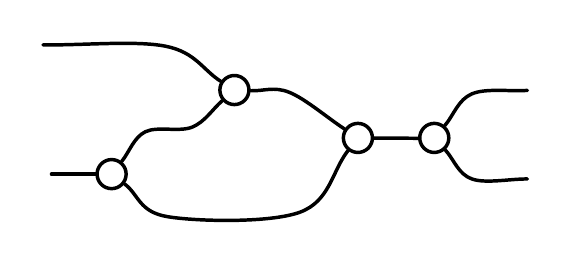}$}
=
\lower15pt\hbox{$\includegraphics[height=1.2cm]{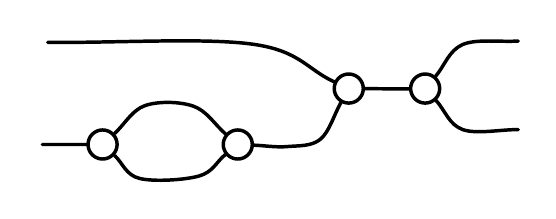}$}
=
\lower12pt\hbox{$\includegraphics[height=1cm]{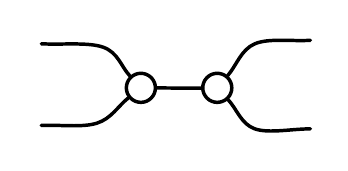}$}
$}
\end{equation*}
The following inclusion follows by Lemma \ref{lemma:dagger} (iv).
\begin{equation*}
\lower2pt\hbox{$
\lower15pt\hbox{$\includegraphics[height=1.2cm]{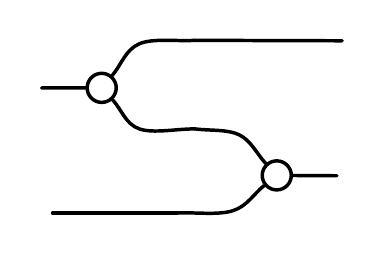}$}
\leq 
\lower12pt\hbox{$\includegraphics[height=1cm]{graffles/laxwfrob4.pdf}$}
$}
\end{equation*}


\medskip

The pair of white monoid-comonoid forms a \emph{lax bialgebra}. We have seen that \eqref{eq:wbone} holds. The other laws only hold laxly.
\begin{equation*}
\lower2pt\hbox{$
\lower12pt\hbox{$\includegraphics[height=1cm]{graffles/nonneg2.pdf}$}
\leq 
\lower12pt\hbox{$\includegraphics[height=1cm]{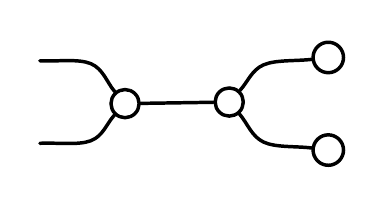}$}
=
\lower12pt\hbox{$\includegraphics[height=1cm]{graffles/nonneg1.pdf}$}
$}
\end{equation*}
Note that the other direction does not hold in general for monoid: it would mean that every monoid is non-negative (see Section \ref{sec:ex}).
The remaining rule can be proved by using the lax Frobenius as follows. 
\begin{equation*}
\lower2pt\hbox{$
\lower15pt\hbox{$\includegraphics[height=1.2cm]{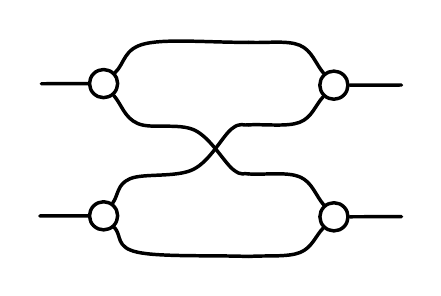}$}
=
\lower15pt\hbox{$\includegraphics[height=1.2cm]{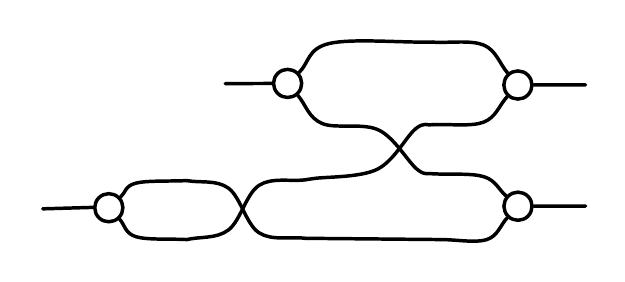}$}
=
\lower15pt\hbox{$\includegraphics[height=1.2cm]{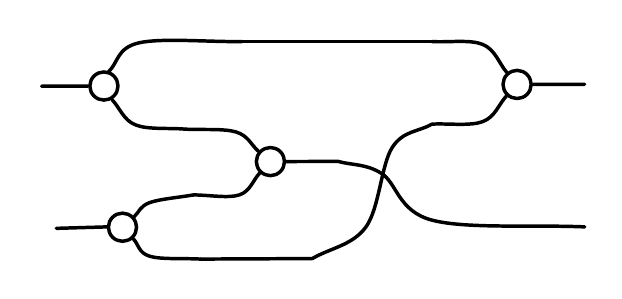}$}
\leq
\lower15pt\hbox{$\includegraphics[height=1.2cm]{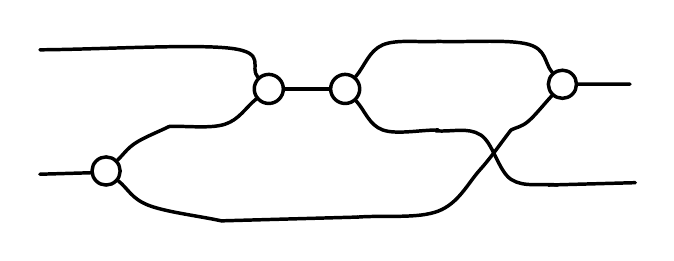}$}
\leq
\lower15pt\hbox{$\includegraphics[height=1.2cm]{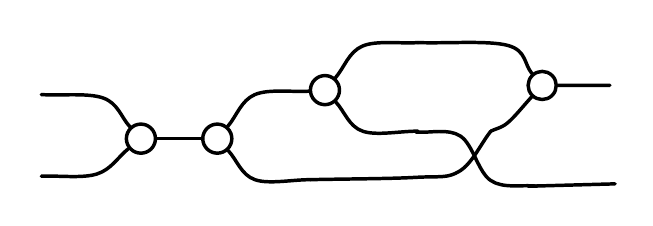}$}$}
\end{equation*}
\begin{equation*}
\lower2pt\hbox{$
=
\lower15pt\hbox{$\includegraphics[height=1.2cm]{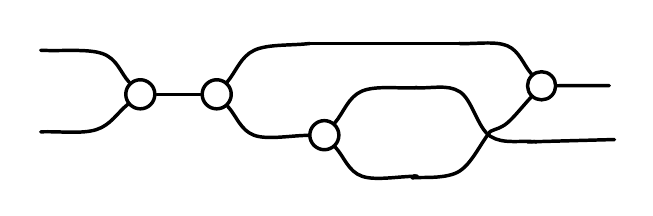}$}
=
\lower15pt\hbox{$\includegraphics[height=1.2cm]{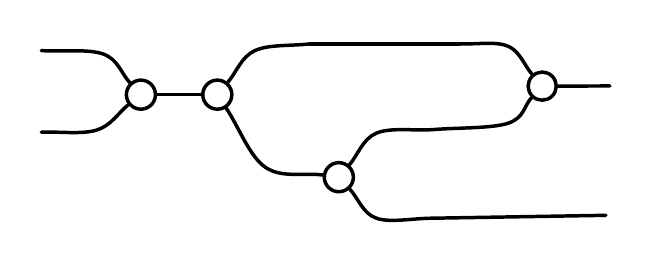}$}
\leq
\lower12pt\hbox{$\includegraphics[height=1cm]{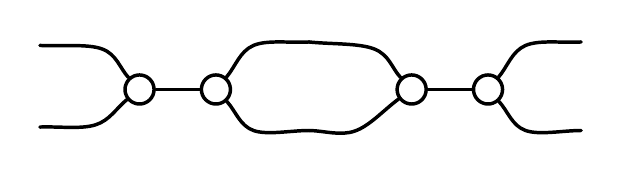}$}
=
\lower12pt\hbox{$\includegraphics[height=1cm]{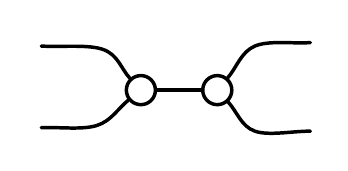}$}
$}
\end{equation*}

Again, the inverse inequation does not hold in general for monoids. We will see in Section \ref{sec:ag} that it holds for abelian groups. 

\medskip

The following laws state that the black multiplication laxly distributes over the white one.
\begin{multicols}{2}
\noindent
\begin{equation}\label{ineq:bwmlaxdist}
\lower2pt\hbox{$
\lower20pt\hbox{$\includegraphics[height=1.5cm]{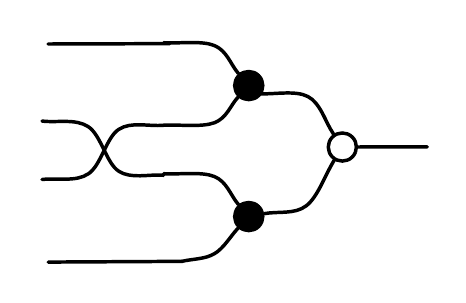}$}
\leq
\lower20pt\hbox{$\includegraphics[height=1.5cm]{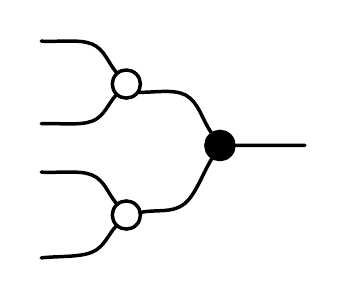}$}
$}
\end{equation}
\begin{equation}\label{ineq:bwmlaxdist}
\lower2pt\hbox{$
\lower20pt\hbox{$\includegraphics[height=1.5cm]{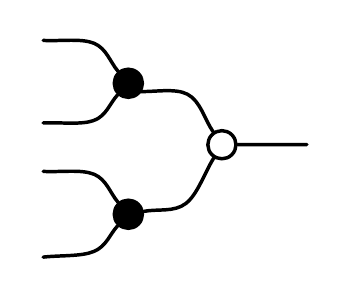}$}
\leq
\lower20pt\hbox{$\includegraphics[height=1.5cm]{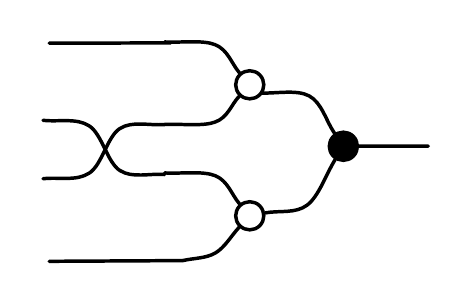}$}
$}
\end{equation}
\end{multicols}
Observe that the leftmost rule is an instance of \eqref{eq:inverselaxhom2}, while the rightmost can be obtained simply by precomposing with a permutation. It is easy to see that the inverse implication does not hold.

\medskip

Another law, which is useful in several occasions, is the following 
\begin{equation}
\label{eq:weird}
\lower2pt\hbox{$
\lower15pt\hbox{$\includegraphics[height=1.2cm]{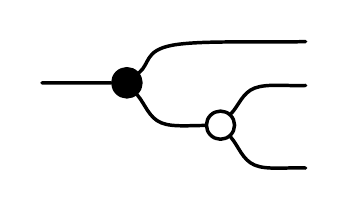}$}
=
\lower20pt\hbox{$\includegraphics[height=1.5cm]{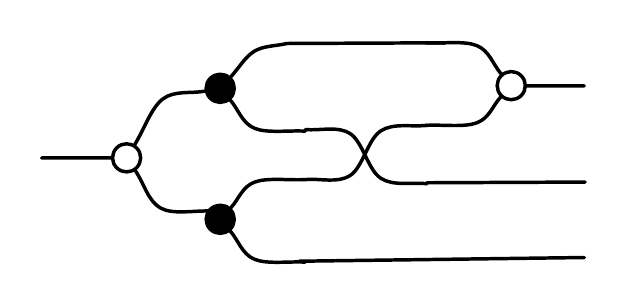}$}
$}
\end{equation}
which can be derived as shown below:
\begin{multline*}
\lower15pt\hbox{$\includegraphics[height=1.2cm]{graffles/weirdlawL.pdf}$}
=
\lower15pt\hbox{$\includegraphics[height=1.2cm]{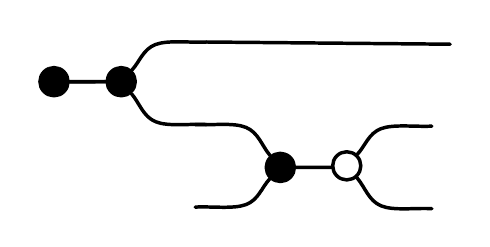}$}
=
\lower20pt\hbox{$\includegraphics[height=1.5cm]{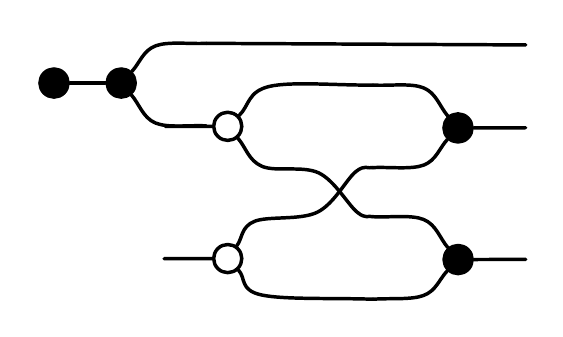}$}
\\
=
\lower28pt\hbox{$\includegraphics[height=2.2cm]{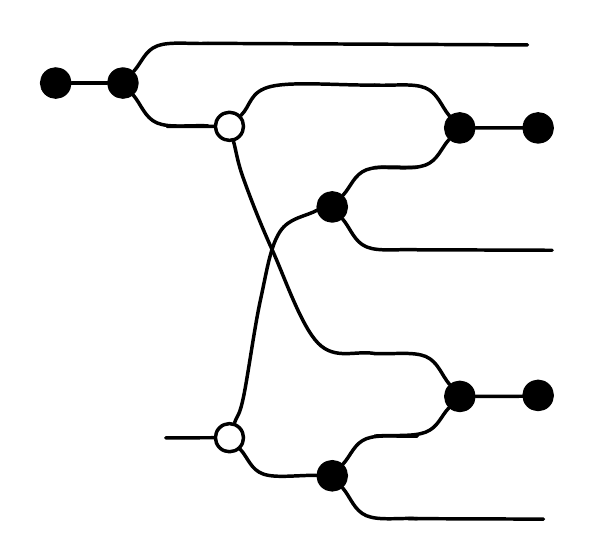}$}
=
\lower28pt\hbox{$\includegraphics[height=2.2cm]{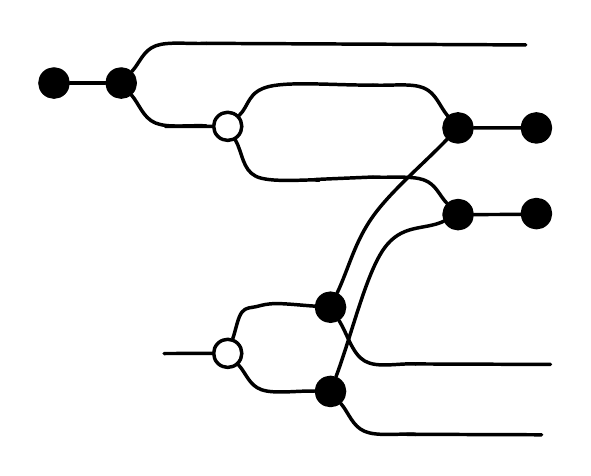}$}
=
\lower20pt\hbox{$\includegraphics[height=1.5cm]{graffles/weirdlawR.pdf}$}
\end{multline*}

\paragraph{Additive arrows.} We now study a general properties of the arrows in $\Frob{\CMtheoryF}$.
An arrow $R\colon m\to n$ is said to be \emph{additive} if the two inequations below hold. 

\begin{multicols}{2}
\noindent
\begin{equation}
\lower2pt\hbox{$
\lower17pt\hbox{$\includegraphics[height=1.4cm]{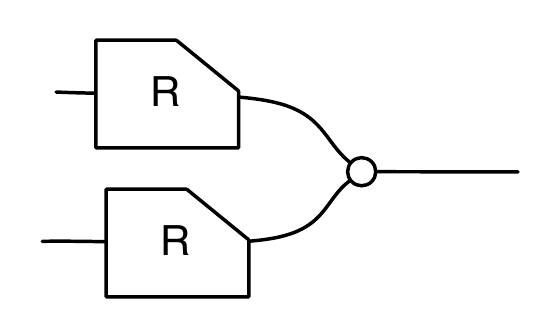}$}
\leq
\lower12pt\hbox{$\includegraphics[height=1cm]{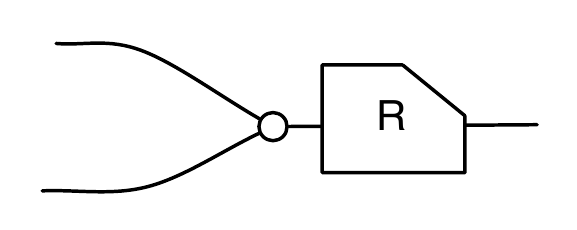}$}
$}
\end{equation}
\begin{equation}
\label{eq:unitsr}
\lower5pt\hbox{$\includegraphics[height=.6cm]{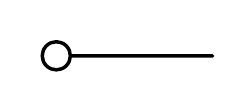}$}
\leq
\lower10pt\hbox{$\includegraphics[height=1cm]{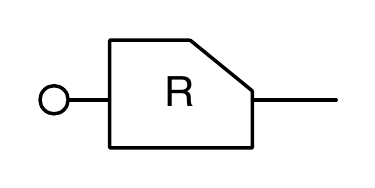}$}
\end{equation}
\end{multicols}
In $\Rel$, a relation $R$ is additive when is closed w.r.t. the structure of a monoid $(X,+,0)$, that is: (a) if $(x_1,y_1)\in R$ and $(x_2,y_2)\in R$, then $(x_1+x_2, y_1+y_2)\in R$ and (b) $(0,0)\in R$ for all $x_1,x_2,y_1,y_2\in X$.

An additive arrow is, therefore, an oplax white monoid homomorphism.
The fact that is also an oplax white comonoid homomorphism easily follows from the fact that white multiplication and unit are maps.
\begin{equation*}
\lower15pt\hbox{$\includegraphics[height=1.2cm]{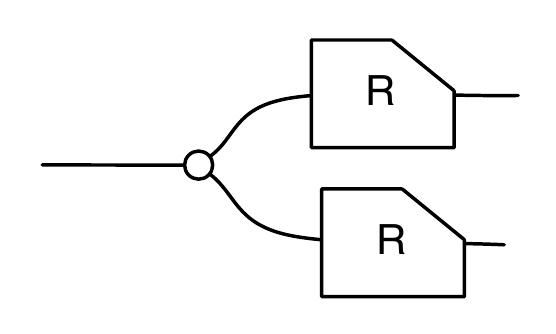}$}
\ \leq\ 
\lower15pt\hbox{$\includegraphics[height=1.2cm]{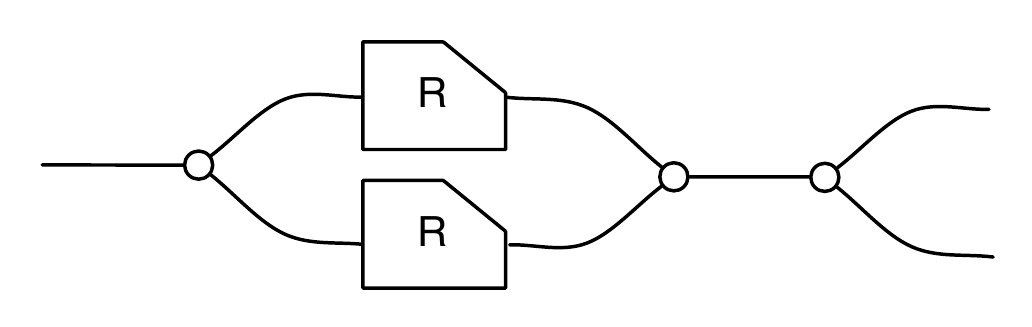}$}
\ \leq\ 
\lower12pt\hbox{$\includegraphics[height=1cm]{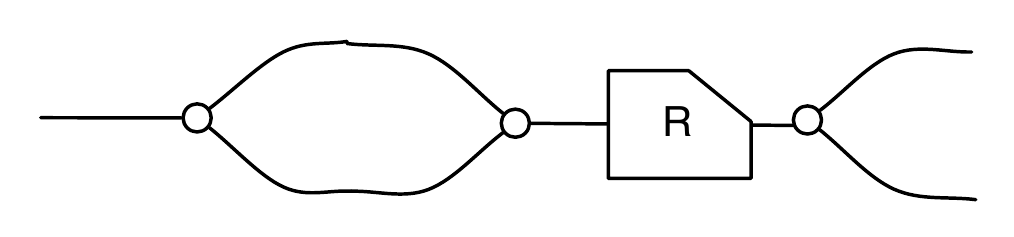}$}
\ \leq\ 
\lower12pt\hbox{$\includegraphics[height=1cm]{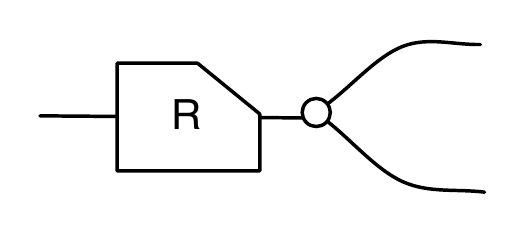}$}
\end{equation*}

\begin{equation*}
\lower5pt\hbox{$\includegraphics[height=.6cm]{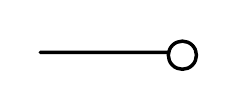}$}
\leq
\lower5pt\hbox{$\includegraphics[height=.6cm]{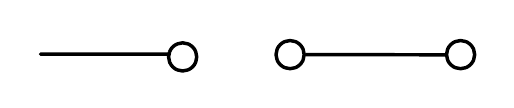}$}
\leq
\lower10pt\hbox{$\includegraphics[height=1cm]{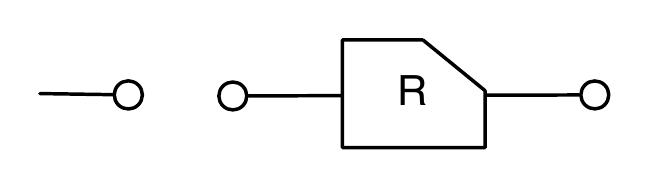}$}
\leq
\lower10pt\hbox{$\includegraphics[height=1cm]{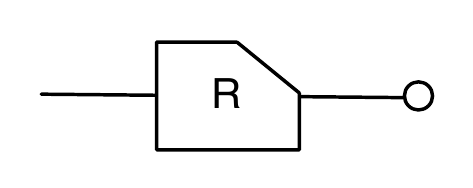}$}
\end{equation*}

\begin{lemma}\label{lem:dagadd}
If $R$ is additive, then $R^\dag$ is additive. 
\end{lemma}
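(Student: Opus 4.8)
The plan is to combine the two facts recorded immediately before the statement with the functoriality of $\dagger$ from Lemma~\ref{lemma:dagger}. Recall that, by definition, $\dagger$ exchanges the white monoid and the white comonoid: $\cgr[height=13pt]{Wcomult.pdf}\Defeq(\cgr[height=13pt]{Wmult.pdf})^\dagger$ and $\cgr[height=13pt]{Wcounit.pdf}\Defeq(\cgr[height=13pt]{Wunit.pdf})^\dagger$, and since $\dagger$ is involutive (identity on objects, with $\dagger\dagger=\id$) one also has $\cgr[height=13pt]{Wmult.pdf}=(\cgr[height=13pt]{Wcomult.pdf})^\dagger$ and $\cgr[height=13pt]{Wunit.pdf}=(\cgr[height=13pt]{Wcounit.pdf})^\dagger$. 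The upshot is that applying $\dagger$ to the two inequations expressing ``$R$ is an oplax white monoid homomorphism'' yields the two inequations expressing ``$R^\dagger$ is an oplax white \emph{comonoid} homomorphism'', and conversely. Hence the correct move is \emph{not} to dagger the defining additivity inequations of $R$ directly (that would produce comonoid-homomorphism statements about $R^\dagger$), but rather to dagger the comonoid-homomorphism inequations of $R$.

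Concretely, I would proceed in three steps. First, since $R$ is additive, the two derivations displayed just above the lemma show that $R$ is also an oplax white comonoid homomorphism; call the resulting inequations the comultiplication inequation and the counit inequation. Second, I apply $\dagger$ to each. By Lemma~\ref{lemma:dagger}(ii) the functor reverses $\poi$, by (iii) it preserves $\oplus$, and by (iv) it is monotone, so the composite and monoidal shapes of the two diagrams get turned around while the direction of $\leq$ is retained; moreover, as noted, $\dagger$ turns the white comonoid occurring in each inequation back into the white monoid. Third, I read off the two resulting inequations and check that they are \emph{verbatim} the two defining inequations of additivity, now for the arrow $R^\dagger$. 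This yields that $R^\dagger$ is additive, as required.

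The only delicate point---and the step I would verify most carefully---is the orientation bookkeeping: one must confirm that $\dagger$ sends the two oplax-white-comonoid-homomorphism inequalities of $R$ to inequalities oriented exactly as in the definition of \emph{additive}, and not to their reverses. This comes down to two facts already in hand: $\dagger$ preserves (rather than reverses) $\leq$ by Lemma~\ref{lemma:dagger}(iv), and the white monoid is precisely the $\dagger$-image of the white comonoid. Granting these, the orientations match on the nose and no further inequality is needed; everything else is the purely formal contravariant-monoidal rewriting of string diagrams under $\dagger$.
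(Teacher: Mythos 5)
Your proof is correct and takes essentially the same route as the paper: its one-line proof (``Trivial by using the inequations above'') refers precisely to the two derivations preceding the lemma showing that an additive $R$ is an oplax white comonoid homomorphism, which one then transports through $\dagger$ via Lemma~\ref{lemma:dagger}(ii)--(iv) and the fact that $\dagger$ exchanges the white monoid and white comonoid. Your explicit bookkeeping---daggering the comonoid-homomorphism inequations rather than the additivity inequations themselves, and noting that $\dagger$ preserves $\leq$---is exactly the argument the paper leaves implicit.
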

\begin{proof}
Trivial by using the inequations above.
\end{proof}

\begin{lemma}\label{lem:compadd}
Let $R$ and $S$ be two arrows.
\item If $R$ and $S$ are additive then $R;S$ and $R\tns S$ are additive.
\end{lemma}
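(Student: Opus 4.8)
The plan is to treat both parts as routine diagram chases in the ordered prop $\Frob{\CMtheoryF}$, using only monotonicity of $\poi$ and $\tns$, the interchange law, naturality of the symmetries, and the two defining inequalities of additivity applied to $R$ and $S$. It is convenient to abbreviate: write $\mu_k \colon k \tns k \to k$ and $\nu_k \colon 0 \to k$ for the $k$-fold white multiplication and unit, so that additivity of an arrow $T \colon a \to b$ unfolds to the two inequalities $(T \tns T)\poi \mu_b \leq \mu_a \poi T$ and $\nu_b \leq \nu_a \poi T$ (these are precisely the two displays defining ``additive'' above). I would prove the multiplicative inequality and the unit inequality separately in each case.

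For composition, let $R\colon m\to n$ and $S\colon n\to p$. For the multiplicative inequality I would start from
\[
((R\poi S)\tns(R\poi S))\poi \mu_p \;=\; (R\tns R)\poi\bigl((S\tns S)\poi \mu_p\bigr),
\]
using interchange, then apply additivity of $S$ whiskered on the left by $R\tns R$ to bound this by $(R\tns R)\poi \mu_n \poi S$, and finally additivity of $R$ whiskered on the right by $S$ to reach $\mu_m \poi R\poi S = \mu_m\poi(R\poi S)$; each step is justified by monotonicity of $\poi$. The unit inequality is the even shorter chain $\nu_p \leq \nu_n \poi S \leq \nu_m \poi R \poi S$, applying additivity of $S$ and then additivity of $R$ whiskered by $S$. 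This case is entirely mechanical.

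For the monoidal product, let $R\colon m\to n$ and $S\colon m'\to n'$. The only real subtlety is that $\mu_{n\tns n'}$ and $\nu_{n\tns n'}$ are not literally $\mu_n\tns\mu_{n'}$ and $\nu_n\tns\nu_{n'}$: by compatibility of the white monoid structure with $\tns$ one has $\mu_{k\tns k'}=\tau_k\poi(\mu_k\tns\mu_{k'})$ and $\nu_{k\tns k'}=\nu_k\tns\nu_{k'}$, where $\tau_k$ is the reshuffling symmetry from $(k\tns k')\tns(k\tns k')$ to $(k\tns k)\tns(k'\tns k')$. The key step is to push $\tau$ through $(R\tns S)\tns(R\tns S)$ by naturality of the symmetry, rewriting the left-hand side as
\[
\tau_m\poi\bigl((R\tns R)\poi\mu_n\bigr)\tns\bigl((S\tns S)\poi\mu_{n'}\bigr);
\]
applying additivity of $R$ and $S$ componentwise (monotonicity of $\tns$) and re-collecting via interchange then gives the bound $\tau_m\poi(\mu_m\tns\mu_{m'})\poi(R\tns S)=\mu_{m\tns m'}\poi(R\tns S)$. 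The unit inequality is immediate from the strict compatibility $\nu_{n\tns n'}=\nu_n\tns\nu_{n'}$ together with additivity of $R$ and $S$ and monotonicity of $\tns$.

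I expect the main (and only) obstacle to be the bookkeeping of the reshuffling symmetry $\tau$ in the monoidal case: one must check that it is natural and cancels correctly against the copies of $R\tns S$ on both sides, which is exactly the naturality/interchange argument above and requires no new idea. I would also make explicit (or cite the recursive definition of the white monoid, as in Example~\ref{ex:FT}(a) and the black case of Example~\ref{ex:lawvere}(a)) that the compatibility laws $\mu_{k\tns k'}=\tau_k\poi(\mu_k\tns\mu_{k'})$ and $\nu_{k\tns k'}=\nu_k\tns\nu_{k'}$ do hold, since these underpin the monoidal case.
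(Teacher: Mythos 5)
Your proposal is correct and matches the paper's intent: the paper's entire proof of this lemma is ``Trivial by definition,'' and your diagram chase (interchange plus monotonicity for composition; naturality of the reshuffling symmetry plus componentwise additivity for the monoidal product) is precisely the routine verification that remark leaves implicit. The only added value in your write-up is making explicit the compatibility laws $\mu_{k\oplus k'}=\tau_k\poi(\mu_k\tns\mu_{k'})$ and $\nu_{k\oplus k'}=\nu_k\tns\nu_{k'}$ for the recursively defined white monoid, which is a reasonable thing to record but not a different approach.
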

\begin{proof}
Trivial by definition.
\end{proof}

\begin{proposition}\label{prop:monoidadditive}
All arrows in $\Frob{\CMtheoryF}$ are additive.
\end{proposition}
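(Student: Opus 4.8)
The plan is to argue by structural induction on the arrows of $\Frob{\CMtheoryF}$. Every arrow is obtained from the generators, together with $\id_1$ and $\sigma_{1,1}$, by the operations $\poi$ and $\tns$. By Lemma~\ref{lem:compadd} additivity is preserved by $\poi$ and $\tns$, so the inductive step is immediate, and the whole statement reduces to checking that each generator is additive. The generators of the SMIT corresponding to $\CMtheoryF$ are the white monoid $\{\,\Wmult,\Wunit\,\}$ together with the black Frobenius structure $\{\,\Bcomult,\Bcounit,\Bmult,\Bunit\,\}$; the identity $\id_1$ and the symmetry $\sigma_{1,1}$ are additive by inspection, both inequations holding as equalities (commutativity of the white monoid being what makes the symmetry additive).

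By definition ``additive'' means ``oplax homomorphism for the white monoid''. Hence the cases of $\Wmult$ and $\Wunit$ amount to the statement that the multiplication and unit of a commutative monoid are themselves monoid homomorphisms; this follows from associativity, commutativity and unitality \eqref{eq:wmonassoc}--\eqref{eq:wmonunitlaw}, and in fact the two additive inequations hold here as equalities.

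For the black comonoid generators I would invoke the bialgebra structure. As noted in Example~\ref{ex:FT}(b), in $\CMtheoryF$ the white monoid and the black comonoid form a genuine bialgebra (both the $B$ and the $B^{op}$ inequations hold). The bialgebra equations say precisely that $\Bcomult$ and $\Bcounit$ are white-monoid homomorphisms: the additive inequations for $\Bcomult$ are exactly \eqref{eq:bialg}, and those for $\Bcounit$ are \eqref{eq:bwbone} together with the unit-interaction laws \eqref{eq:bialgunitsl}--\eqref{eq:bialgunitsr}, all of which hold as equalities in this theory. Thus $\Bcomult$ and $\Bcounit$ are additive. The black monoid generators are then obtained by duality: since the black structure is special Frobenius, the dagger interchanges the black comonoid and the black monoid, i.e.\ $\Bmult=\Bcomult^\dagger$ and $\Bunit=\Bcounit^\dagger$, so Lemma~\ref{lem:dagadd} promotes the additivity of $\Bcomult,\Bcounit$ to additivity of $\Bmult,\Bunit$.

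The step I expect to be the crux is the black monoid: these generators cannot be shown additive directly, since additivity is a property relative to the white monoid and there is no bialgebra law linking the black monoid to the white monoid. The clean route is the indirect one above---establish additivity for the black comonoid via the bialgebra and transport it across the dagger using Lemma~\ref{lem:dagadd}---so the identifications $\Bmult=\Bcomult^\dagger$, $\Bunit=\Bcounit^\dagger$ and the dagger-invariance of additivity are the load-bearing facts. Everything else is either a direct verification on a generator or an application of Lemma~\ref{lem:compadd}.
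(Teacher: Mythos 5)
Your proof is correct and follows essentially the same route as the paper's: structural induction with Lemma~\ref{lem:compadd} for the inductive step, the bialgebra equalities of $\CMtheoryF$ for the white monoid and black comonoid generators, and Lemma~\ref{lem:dagadd} together with the identities $\Bmult = (\Bcomult)^\dagger$, $\Bunit = (\Bcounit)^\dagger$ for the black monoid generators --- indeed the paper itself remarks that these last cases are ``redundant by Lemma~\ref{lem:dagadd}'', which is exactly your dagger-transport step.

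One side remark of yours is wrong, although it does not affect the validity of the proof: the black monoid generators \emph{can} be shown additive directly, and the paper does so. The first additivity inequation for $\Bmult$ is (up to composition with a symmetry) an instance of \eqref{eq:inverselaxhom2} with $R$ taken to be the white multiplication, and the remaining inequations follow from \eqref{eq:inverselaxhom1} and Lemma~\ref{lemma:top}; these general facts, valid for \emph{every} arrow of a cartesian bicategory of relations, play the role of the ``missing'' bialgebra law linking the black monoid to the white one. The paper keeps this extended case analysis (including the derived generators $\Wcomult$, $\Wcounit$, which are not needed for the induction) precisely because the explicit inequations are reused later, e.g.\ in the treatment of abelian groups.
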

\begin{proof}
The proof proceeds by induction on  $\Frob{\CMtheoryF}$. For the inductive cases, we use Lemma \ref{lem:compadd}. For the base cases, we proceed with a case analysis. We show the cases for both $\cgr[height=13pt]{Wmult.pdf}$,  $\cgr[height=13pt]{Wunit.pdf}$, $\cgr[height=13pt]{Bcomult.pdf}$, $\cgr[height=13pt]{Bcounit.pdf}$ and  $\cgr[height=13pt]{Bmult.pdf}$,  $\cgr[height=13pt]{Bunit.pdf}$, $\cgr[height=13pt]{Wcomult.pdf}$, $\cgr[height=13pt]{Wcounit.pdf}$. The latter are redundant by Lemma \ref{lem:dagadd}, but this extended case analysis will turn out to be useful later on.
\begin{itemize}
\item For $\cgr[height=20pt]{Wmult.pdf}$, 
\begin{multicols}{2}
\noindent
\begin{equation*}
\lower2pt\hbox{$
\lower20pt\hbox{$\includegraphics[height=1.5cm]{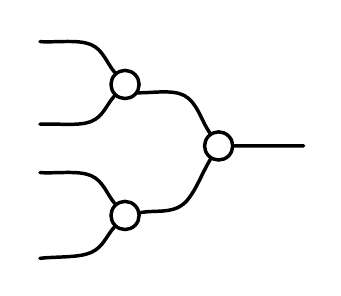}$}
\leq
\lower20pt\hbox{$\includegraphics[height=1.5cm]{graffles/wtree.pdf}$}
$}
\end{equation*}
\begin{equation*}
\label{eq:unitsr}
\lower5pt\hbox{$\includegraphics[height=.6cm]{graffles/Wunit.pdf}$}
\leq
\lower8pt\hbox{$\includegraphics[height=.8cm]{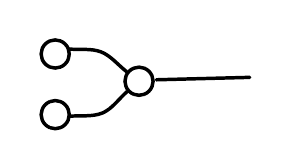}$}
\end{equation*}
\end{multicols}
\item For $\cgr[height=16pt]{Wunit.pdf}$, 
\begin{multicols}{2}
\noindent
\begin{equation*}
\lower2pt\hbox{$
\lower8pt\hbox{$\includegraphics[height=.8cm]{graffles/wuum.pdf}$}
\leq
\lower5pt\hbox{$\includegraphics[height=.6cm]{graffles/Wunit.pdf}$}
$}
\end{equation*}
\begin{equation*}
\label{eq:unitsr}
\lower5pt\hbox{$\includegraphics[height=.6cm]{graffles/Wunit.pdf}$}
\leq
\lower5pt\hbox{$\includegraphics[height=.6cm]{graffles/Wunit.pdf}$}
\end{equation*}
\end{multicols}
\item For $\cgr[height=20pt]{Bcomult.pdf}$, 
\begin{multicols}{2}
\noindent
\begin{equation*}
\lower2pt\hbox{$
\lower15pt\hbox{$\includegraphics[height=1.2cm]{graffles/bialgr.pdf}$}
\leq
\lower9pt\hbox{$\includegraphics[height=.9cm]{graffles/bialgl.pdf}$}
$}
\end{equation*}
\begin{equation*}
\label{eq:unitsr}
\lower7pt\hbox{$\includegraphics[height=.8cm]{graffles/runitsr.pdf}$}
\leq
\lower7pt\hbox{$\includegraphics[height=.8cm]{graffles/runitsl.pdf}$}
\end{equation*}
\end{multicols}
\item For $\cgr[height=16pt]{Bcounit.pdf}$, 
\begin{multicols}{2}
\noindent
\begin{equation*}
\lower2pt\hbox{$
\lower7pt\hbox{$\includegraphics[height=.8cm]{graffles/lunitsr.pdf}$}
\leq
\lower7pt\hbox{$\includegraphics[height=.8cm]{graffles/lunitsl.pdf}$}
$}
\end{equation*}
\begin{equation*}
\label{eq:unitsr} 
id_0
\leq
\lower4pt\hbox{$\includegraphics[height=.5cm]{graffles/unitsl.pdf}$}
\end{equation*}
\end{multicols}
\end{itemize}
The inequations that we have displayed so far are rather trivial. Those for their opposite are much more interesting.
\begin{itemize}
\item For $\cgr[height=20pt]{Bmult.pdf}$, 
\begin{multicols}{2}
\noindent
\begin{equation*}
\lower2pt\hbox{$
\lower20pt\hbox{$\includegraphics[height=1.5cm]{graffles/bwmlaxdist3.pdf}$}
\leq
\lower20pt\hbox{$\includegraphics[height=1.5cm]{graffles/bwmlaxdist4.pdf}$}$}
\end{equation*}
\begin{equation*}
\label{eq:unitsr}
\lower5pt\hbox{$\includegraphics[height=.6cm]{graffles/inj4.pdf}$}
\leq
\lower8pt\hbox{$\includegraphics[height=.8cm]{graffles/inj1.pdf}$}
\end{equation*}
\end{multicols}
\item For $\cgr[height=13pt]{Bunit.pdf}$, 
\begin{multicols}{2}
\noindent
\begin{equation*}
\lower2pt\hbox{$
\lower8pt\hbox{$\includegraphics[height=.8cm]{graffles/bubuwm1.pdf}$}
\leq
\lower5pt\hbox{$\includegraphics[height=.6cm]{graffles/bubuwm3.pdf}$}
$}
\end{equation*}
\begin{equation*}
\label{eq:unitsr}
\lower5pt\hbox{$\includegraphics[height=.6cm]{graffles/Wunit.pdf}$}
\leq
\lower5pt\hbox{$\includegraphics[height=.6cm]{graffles/Bunit.pdf}$}
\end{equation*}
\end{multicols}
\item For $\cgr[height=20pt]{Wcomult.pdf}$, 
\begin{multicols}{2}
\noindent
\begin{equation*}
\lower2pt\hbox{$
\lower15pt\hbox{$\includegraphics[height=1.2cm]{graffles/laxwbialg1.pdf}$}
\leq
\lower8pt\hbox{$\includegraphics[height=.8cm]{graffles/laxwbialg9.pdf}$}
$}
\end{equation*}
\begin{equation*}
\label{eq:unitsr} 
\lower8pt\hbox{$\includegraphics[height=.8cm]{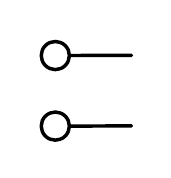}$}
\leq
\lower8pt\hbox{$\includegraphics[height=.8cm]{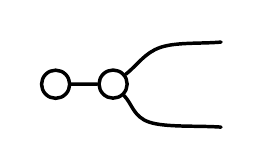}$}
\end{equation*}
\end{multicols}
\item For $\cgr[height=13pt]{Wcounit.pdf}$, 
\begin{multicols}{2}
\noindent
\begin{equation*}
\lower2pt\hbox{$
\lower8pt\hbox{$\includegraphics[height=.8cm]{graffles/nonneg2.pdf}$}
\leq
\lower8pt\hbox{$\includegraphics[height=.8cm]{graffles/nonneg1.pdf}$}$}
\end{equation*}
\begin{equation*}
\label{eq:unitsr} 
id_0
\leq
\lower5pt\hbox{$\includegraphics[height=.6cm]{graffles/wbone.pdf}$}
\end{equation*}
\end{multicols}
\end{itemize}
\end{proof}

\begin{remark} The second sets of inequations in the above proof contains several of those that we have proved in the main text of this section, e.g., the lax bialgebra for the white structure. Observe that these follow from Lemma \ref{lem:dagadd} and the first sets of inequations. This is a more efficient and more structured way to prove those laws. We decided to keep the direct derivations for the sake of exposition.
\end{remark}

\begin{remark}
Suppose now that we want to invert the direction of the above inequations. The first set of inequations would continue to hold: they are actually equalities. The second set no: in particular it would never hold in any non-trivial theory where
\begin{equation*}
\label{eq:unitsr}
\lower5pt\hbox{$\includegraphics[height=.6cm]{graffles/Wunit.pdf}$}
\neq
\lower5pt\hbox{$\includegraphics[height=.6cm]{graffles/Bunit.pdf}$}
\end{equation*}
This also tells us that an analogous of Lemma \ref{lem:dagadd} does not hold for the dual properties of additivity. 
\end{remark}

\paragraph{White convolution.} Previously we showed that in any cartesian bicategory of relations, every homset carries a meet-semilattice structure. In $\Frob{\CMtheoryF}$ this is actually a lattice. Given arrows $R \colon m\to n$ and $S\colon m\to n$ we define the \emph{white convolution} of $R$ and $S$, written $R\ovee S$, as follows:
\[
R \ovee S \Defeq   \lower20pt\hbox{\includegraphics[height=1.75cm]{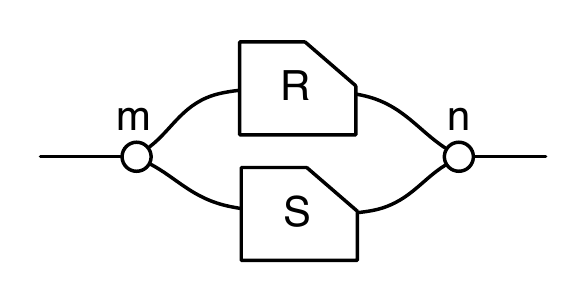}}
\]
\begin{lem}
Convolution is associative, commutative, idempotent and unital. The unit is 
\[
\bot_{m,n} \ \Defeq  \lower8pt\hbox{\includegraphics[height=1cm]{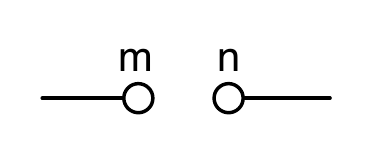}}
\]
\end{lem}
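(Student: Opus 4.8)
The plan is to separate the three ``lattice'' axioms (associativity, commutativity, unitality), which are pure bookkeeping with the white (co)monoid, from idempotency, which is the only step that actually uses the special structure of $\Frob{\CMtheoryF}$. For the first group I would argue exactly as for the black convolution $\owedge$: associativity and commutativity of $\ovee$ follow from coassociativity/cocommutativity of the white comonoid $(\Wcomult,\Wcounit)$, associativity/commutativity of the white monoid $(\Wmult,\Wunit)$, the interchange law, and naturality of the symmetry. For unitality I would show $R\ovee\bot_{m,n}=R$ by collapsing the $\bot$-leg: writing $\bot_{m,n}=\Wcounit_m\poi\Wunit_n$, the right unit law $(\id_n\oplus\Wunit_n)\poi\Wmult_n=\id_n$ of the white monoid deletes $\Wunit_n$, and then the counit law $\Wcomult_m\poi(\id_m\oplus\Wcounit_m)=\id_m$ of the white comonoid deletes the remaining $\Wcounit_m$, leaving $R$.

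The crux is idempotency $R\ovee R=R$, which is \emph{not} formal here (unlike for $\owedge$, which is a meet and hence automatically idempotent): it rests on Proposition~\ref{prop:monoidadditive}, which tells us that every arrow $R$ of $\Frob{\CMtheoryF}$ is additive. For the inclusion $R\ovee R\le R$ I would left-whisker the ``closure under $+$'' additivity inequality $(R\oplus R)\poi\Wmult_n\le\Wmult_m\poi R$ by $\Wcomult_m$ and then invoke the white special law $\Wcomult_m\poi\Wmult_m=\id_m$, obtaining $R\ovee R=\Wcomult_m\poi(R\oplus R)\poi\Wmult_n\le\Wcomult_m\poi\Wmult_m\poi R=R$. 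For the reverse inclusion $R\le R\ovee R$ I would first establish $\bot_{m,n}\le R$: left-whiskering the ``has-zero'' additivity inequality $\Wunit_n\le\Wunit_m\poi R$ by $\Wcounit_m$ gives $\Wcounit_m\poi\Wunit_n\le\Wcounit_m\poi\Wunit_m\poi R$, and since $\Wunit_m$ is a map its counit gives $\Wcounit_m\poi\Wunit_m=\Wunit_m^\dagger\poi\Wunit_m\le\id_m$ (Lemma~\ref{lem:dagger}), so $\bot_{m,n}\le R$. Monotonicity of $\ovee$ in each argument (since $\poi$ and $\oplus$ are monotone) together with the already-proved unitality then yields $R=R\ovee\bot_{m,n}\le R\ovee R$, and the two inclusions give equality.

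The main obstacle I anticipate is precisely idempotency, because it is the single place where generic cartesian-bicategory reasoning is insufficient: one must feed in additivity of \emph{all} arrows and the white special law, and be careful about the two orientations of the additivity inequalities (the ``closure'' inequality is oplax and points one way, the ``has-zero'' inequality the other). The routine laws, by contrast, reduce to white (co)monoid identities and should be dispatched in the same breath as the corresponding statements for $\owedge$.
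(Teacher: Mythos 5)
Your proof is correct and takes essentially the same route as the paper: associativity, commutativity and unitality are dismissed as routine, and idempotency is derived from Proposition~\ref{prop:monoidadditive} via the same two inequalities---the closure-under-$+$ additivity inequality whiskered by the white comultiplication plus the white special law for $R\ovee R\le R$, and the has-zero inequality plus the adjunction counit of the white unit (combined with unitality) for $R\le R\ovee R$. The only cosmetic difference is that you factor out $\bot_{m,n}\le R$ and invoke monotonicity, whereas the paper runs the same ingredients as a single diagrammatic chain (and states $\bot_{m,n}\le R$ as a separate lemma immediately afterwards).
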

\begin{proof}
The proof for associativity, commutativity and unitality is trivial.
For idempotency, we use Proposition \ref{prop:monoidadditive}:  
\begin{equation*}
\lower2pt\hbox{$
\lower13pt\hbox{$\includegraphics[height=1.3cm]{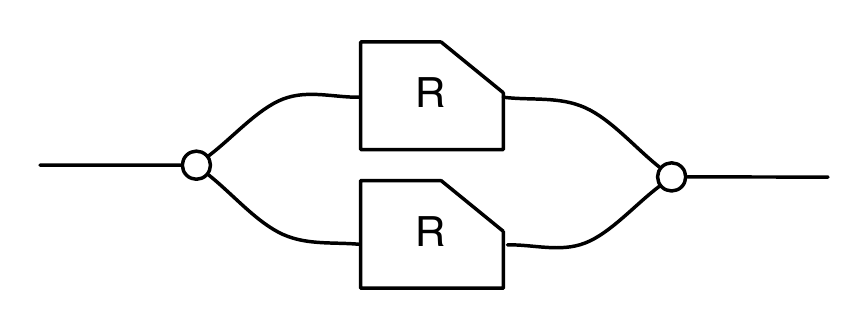}$}
\ \leq\ 
\lower11pt\hbox{$\includegraphics[height=1.1cm]{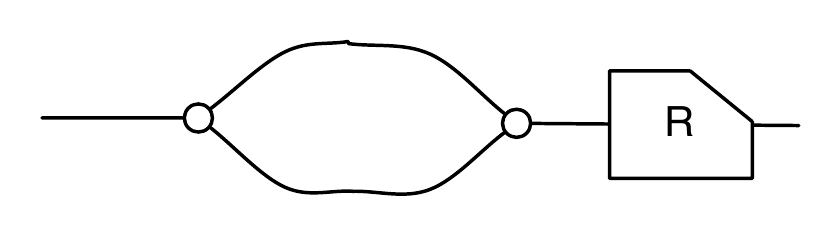}$}
\ =\ 
\lower8pt\hbox{$\includegraphics[height=.8cm]{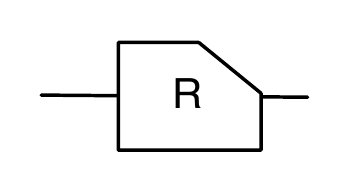}$}
$}
\end{equation*}

\begin{equation*}
\lower2pt\hbox{$
\lower8pt\hbox{$\includegraphics[height=.8cm]{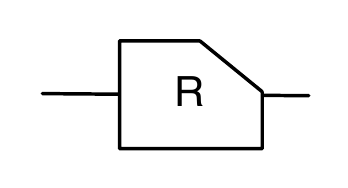}$}
\ =\ 
\lower12pt\hbox{$\includegraphics[height=1.2cm]{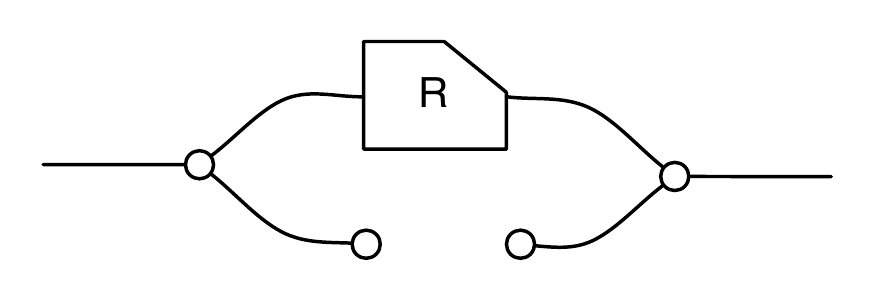}$}
\ \leq\ 
\lower12pt\hbox{$\includegraphics[height=1.2cm]{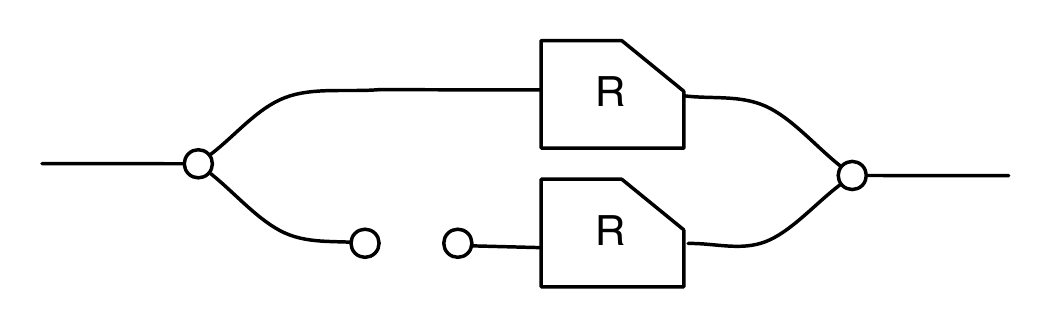}$}
\ \leq\ 
\lower14pt\hbox{$\includegraphics[height=1.3cm]{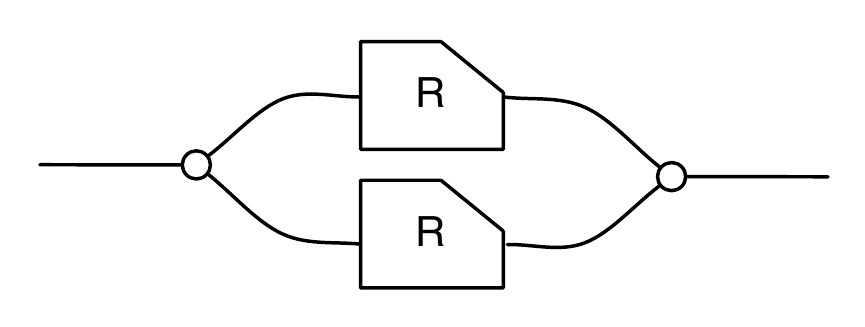}$}
$}
\end{equation*}
\end{proof}

Since $\ovee$ is associative, commutative and idempotent, it induces an ordering. In the following we show that this ordering is exactly $\leq$. 

\begin{lemma}
For all arrows $R\colon n\to m$,
$\bot_{m;n} \leq R$.
\end{lemma}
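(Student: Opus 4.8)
The claim is that $\bot_{m,n}$ is the bottom element of the order, dual to Lemma~\ref{lemma:top}, which exhibits $\top_{m,n}$ as the top by exploiting that every arrow is a (black) lax comonoid homomorphism. The plan is to run the mirror-image argument using the \emph{white} structure: by Proposition~\ref{prop:monoidadditive} every arrow of $\Frob{\CMtheoryF}$ is additive, and this additivity plays exactly the role that lax comonoid homomorphism played for the black top. First I would unfold the definition of the white convolution unit, writing $\bot_{m,n} = \nu_m^\dagger \poi \nu_n$, where $\nu_n \colon 0\to n$ denotes the (white) unit and $\nu_m^\dagger \colon m \to 0$ its dagger, the white counit.

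The argument then proceeds in two whiskering steps. The second clause in the definition of additivity---the ``has-zero'' inequation, which in $\Rel$ records that $(0,0)\in R$---reads $\nu_n \leq \nu_m \poi R$. Whiskering on the left by the white counit $\nu_m^\dagger$ yields
\[
\bot_{m,n} \;=\; \nu_m^\dagger \poi \nu_n \;\leq\; \nu_m^\dagger \poi \nu_m \poi R.
\]
Next I would use that the white unit $\nu_m$ is a map: it was shown single valued earlier in this section, so by Lemma~\ref{lem:characterizationmap} (the characterisation of \eqref{eq:sv}) we have $\nu_m^\dagger \poi \nu_m \leq \id_m$. Whiskering this inequality on the right by $R$ gives $\nu_m^\dagger \poi \nu_m \poi R \leq \id_m \poi R = R$, and composing the two displays delivers $\bot_{m,n}\leq R$.

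I expect the only real obstacle to be bookkeeping rather than mathematics: getting the unfolding of $\bot_{m,n}$ and the orientation of the has-zero inequation exactly right (both are easy to flip), and making sure the single valuedness of the white unit is invoked in the form $\nu_m^\dagger\poi\nu_m\leq\id_m$ rather than its total counterpart $\id_0\leq\nu_m\poi\nu_m^\dagger$. The relational picture is a reliable guide throughout: $\bot_{m,n}$ is the singleton relation $\{(0_m,0_n)\}$, and the statement is simply that every additive relation contains the pair of zeros, which is precisely the has-zero condition.
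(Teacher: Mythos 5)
Your proof is correct and is essentially the paper's own argument: the paper's graphical derivation is exactly the chain $\bot_{m,n} = \nu_m^\dagger \poi \nu_n \leq \nu_m^\dagger \poi \nu_m \poi R \leq R$, using the has-zero clause of additivity (Proposition~\ref{prop:monoidadditive}) followed by the single-valuedness of the white unit ($\nu_m^\dagger \poi \nu_m \leq \id_m$, established earlier in the section via Lemma~\ref{lem:characterizationmap}). Your bookkeeping, including the orientation of the has-zero inequation and the choice of the single-valued rather than the total form of the adjunction inequality, matches the paper's proof.
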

\begin{proof}
\begin{equation*}
\lower2pt\hbox{$
\lower5pt\hbox{$\includegraphics[height=.6cm]{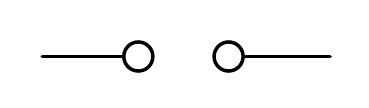}$}
\leq
\lower10pt\hbox{$\includegraphics[height=1cm]{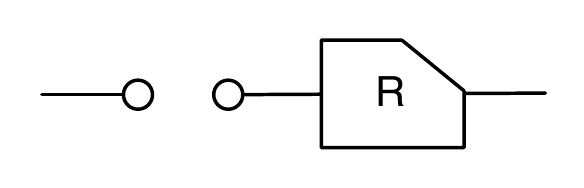}$}
\leq
\lower10pt\hbox{$\includegraphics[height=1cm]{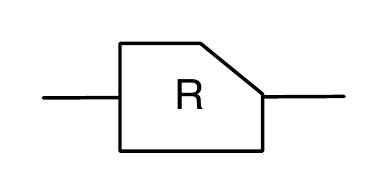}$}
$}
\end{equation*}
\end{proof}

\begin{lemma} 
$R\ovee S = R$ iff $S\leq R$.
\end{lemma}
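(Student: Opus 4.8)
The plan is to treat this exactly as the order-theoretic characterisation of a join, dualising the proof of the earlier meet lemma ($R\owedge S = S$ iff $S\leq R$). The only ingredients are the semilattice laws for $\ovee$ just established — associativity, commutativity, idempotency and unitality with unit $\bot_{m,n}$ — together with monotonicity of $\ovee$ in each argument and the bound $\bot_{m,n}\leq R$ supplied by the immediately preceding lemma. Monotonicity of $\ovee$ is not proved separately but is immediate: since $\ovee$ is defined by pre- and post-composing $R\oplus S$ with the fixed Frobenius structure $\Delta_m$ and $\nabla_n$, and both composition and $\oplus$ are monotone in any ordered prop, the operation $(R,S)\mapsto R\ovee S$ is monotone in both arguments.

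For the direction $S\leq R \Rightarrow R\ovee S = R$, I would argue by two inclusions. From $S\leq R$ and idempotency, monotonicity gives $R\ovee S \leq R\ovee R = R$; and from $\bot_{m,n}\leq S$ together with unitality, $R = R\ovee\bot_{m,n}\leq R\ovee S$. The two inclusions combine to $R\ovee S = R$.

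For the converse $R\ovee S = R\Rightarrow S\leq R$, I would use unitality to write $S = S\ovee\bot_{m,n}$; since $\bot_{m,n}\leq R$ by the preceding lemma, monotonicity gives $S = S\ovee\bot_{m,n}\leq S\ovee R$, and by commutativity together with the hypothesis $S\ovee R = R\ovee S = R$, whence $S\leq R$.

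I expect no genuine obstacle here: the statement is just the universal-algebra fact that an idempotent commutative monoid with a bottom unit orders its carrier by $a\vee b = a \iff b\leq a$, and every property of $\ovee$ that this requires is already in hand. The only point worth recording explicitly, as above, is that the poset-enrichment makes $\ovee$ monotone in each argument, which is precisely what licenses the two monotonicity steps.
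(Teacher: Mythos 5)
Your proof is correct and matches the paper's own argument essentially step for step: both directions use exactly the combination of unitality, idempotency, the bound $\bot_{m,n}\leq R$ from the preceding lemma, and monotonicity of $\ovee$ (which the paper leaves implicit but you rightly justify from poset-enrichment of composition and $\oplus$). Nothing further is needed.
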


\begin{proof}
%
Assume$R\ovee S = R$. Then $S = \bot \ovee S \leq R \ovee S = R$.
Assume $S\leq R$. Then $R\ovee S \leq R\ovee R = R$. Moreover $R = R\ovee \bot \leq R \ovee S$.
\end{proof}

\begin{corollary} 
The partial order $\leq$ is a lattice with top and bottom.
\end{corollary}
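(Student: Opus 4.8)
The plan is to assemble the bounded-lattice structure from the two dual semilattice structures already established in this section. First I would recall the earlier \emph{Convolution} results: $\owedge$ is associative, commutative, idempotent and unital with unit $\top_{m,n}$, the induced order coincides with $\leq$ via the equivalence $R \owedge S = S \iff S \leq R$, and $R \leq \top_{m,n}$ for all $R$ by Lemma~\ref{lemma:top}. Together these already yield that each homset of $\Frob{\CMtheoryF}$ is a meet-semilattice with top, the meet of $R$ and $S$ being $R \owedge S$ and the top being $\top_{m,n}$.

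Second, I would treat the white convolution $\ovee$ dually, using the two lemmas immediately preceding the statement. They give that $\ovee$ is idempotent, commutative and associative with unit $\bot_{m,n}$, that $\bot_{m,n} \leq R$ for every $R$, and that $R \ovee S = R \iff S \leq R$. From this last equivalence it follows formally that $R \ovee S$ is the least upper bound of $R$ and $S$: one has $R \leq R \ovee S$ because $(R \ovee S) \ovee R = R \ovee S$, and if $T$ is any common upper bound of $R$ and $S$ then $T \ovee (R \ovee S) = (T \ovee R) \ovee S = T \ovee S = T$, whence $R \ovee S \leq T$. Hence each homset is a join-semilattice with bottom $\bot_{m,n}$, the join being $R \ovee S$.

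Finally I would observe that the meet order induced by $\owedge$ and the join order induced by $\ovee$ are literally the \emph{same} partial order $\leq$ on the homset. Since a poset in which every pair of elements has both a meet and a join is by definition a lattice, and since we have moreover exhibited a top $\top_{m,n}$ and a bottom $\bot_{m,n}$, each homset is a bounded lattice, which is exactly the claim.

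Because this is a corollary, there is essentially no hard step: the only point requiring any care is to confirm that the two semilattice structures refer to one and the same order $\leq$ — precisely what the two ``iff'' lemmas (for $\owedge$ and for $\ovee$) guarantee — so that their meets and joins genuinely combine into a single lattice rather than two unrelated orders.
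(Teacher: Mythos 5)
Your proposal is correct and follows exactly the route the paper intends: the corollary is stated without proof precisely because it is the immediate assembly of the black-convolution results (meet-semilattice with top $\top_{m,n}$) and the white-convolution lemmas (join-semilattice with bottom $\bot_{m,n}$), with the two ``iff'' lemmas guaranteeing both structures live over the single order $\leq$. Your explicit verification that $R \ovee S$ is the least upper bound is the standard semilattice argument and is carried out correctly.
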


The following fact holds in any lattice.
\begin{lemma}\label{lemma:laxdistr}
$R\owedge (S \ovee T) \geq (R\owedge S) \ovee (R\owedge T)$ and 
$R\ovee (S \owedge T) \leq (R\ovee S) \owedge (R\ovee T)$.
\end{lemma}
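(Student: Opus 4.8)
The plan is to treat both inequalities as purely order-theoretic facts about the lattice structure carried by each homset of $\Frob{\CMtheoryF}$, invoking only that $\owedge$ computes binary meets and $\ovee$ binary joins for the single order $\leq$. This is exactly what the preceding lemmas and the corollary (``the partial order $\leq$ is a lattice with top and bottom'') establish, so no further string-diagram manipulation is needed; in particular I would not unfold the definitions of $\owedge$ and $\ovee$ again.

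First I would record the defining (co)universal inequalities. From the lemma ``$R\owedge S = S$ iff $S\leq R$'', together with associativity and idempotency of $\owedge$, one gets $R\owedge S \leq R$ and $R\owedge S\leq S$, with $R\owedge S$ the greatest lower bound of $\{R,S\}$; dually $R\leq R\ovee S$ and $S\leq R\ovee S$, with $R\ovee S$ the least upper bound. For the first inequality $R\owedge(S\ovee T)\geq (R\owedge S)\ovee(R\owedge T)$, I observe that $R\owedge S \leq R$ and $R\owedge S\leq S\leq S\ovee T$, so $R\owedge S$ is a lower bound of $\{R,\,S\ovee T\}$ and hence $R\owedge S\leq R\owedge(S\ovee T)$; the same argument gives $R\owedge T\leq R\owedge(S\ovee T)$. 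Since $R\owedge(S\ovee T)$ is then an upper bound of both $R\owedge S$ and $R\owedge T$, and $(R\owedge S)\ovee(R\owedge T)$ is their least upper bound, the claim follows. The second inequality $R\ovee(S\owedge T)\leq (R\ovee S)\owedge(R\ovee T)$ is the order dual: from $R\leq R\ovee S$ and $S\owedge T\leq S\leq R\ovee S$ one gets $R\ovee(S\owedge T)\leq R\ovee S$, symmetrically $R\ovee(S\owedge T)\leq R\ovee T$, and being below both joins it lies below their meet.

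I do not expect a genuine obstacle here, since the content is precisely the pair of distributive inequalities valid in every lattice. The only point requiring care is to cite correctly that the earlier lemmas identify $\owedge$ and $\ovee$ as meet and join for the \emph{same} order $\leq$ (rather than for two a priori distinct orders), so that the ``lower bound/upper bound'' reasoning above is legitimate; once that identification is in place, the argument is a few lines of formal order theory.
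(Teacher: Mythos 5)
Your proposal is correct and matches the paper's approach: the paper gives no explicit proof, justifying the lemma solely by the remark that these distributive inequalities ``hold in any lattice,'' which is exactly the order-theoretic argument you spell out. Your extra care in noting that the earlier lemmas ($R\owedge S = S$ iff $S\leq R$, and $R\ovee S = R$ iff $S\leq R$) exhibit $\owedge$ and $\ovee$ as meet and join for the \emph{same} order $\leq$ is precisely the identification the paper relies on implicitly.
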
 

Two other useful properties are displayed below.

\begin{lemma}
$(R\ovee S) ; T  \geq (R ; T) \ovee (S ; T)$ and $T ; (R\ovee S)   \geq ( T ; R) \ovee ( T ; S)$.
\end{lemma}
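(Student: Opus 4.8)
The plan is to unfold both inequalities using the definition of white convolution and then reduce each one to a single application of additivity, which holds for \emph{every} arrow of $\Frob{\CMtheoryF}$ by Proposition~\ref{prop:monoidadditive}. Write $\delta$ for the white comultiplication and $\mu$ for the white multiplication, so that by definition $A \ovee B = \delta ; (A \oplus B) ; \mu$ for any parallel pair $A, B$. The whole argument is then pure rewriting plus one monotonicity step in each case.

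For the first inequality, take $R, S \colon m \to n$ and $T\colon n\to k$. Expanding $\ovee$ and using the interchange law to rewrite $(R;T)\oplus(S;T)$ as $(R\oplus S);(T\oplus T)$, I obtain
\[
(R;T)\ovee (S;T) = \delta_m ; (R\oplus S); (T\oplus T); \mu_k, \qquad (R\ovee S); T = \delta_m; (R\oplus S); \mu_n; T.
\]
The two expressions share the common prefix $\delta_m;(R\oplus S)$, so by monotonicity of composition it suffices to prove $(T\oplus T);\mu_k \leq \mu_n; T$. This is exactly the first additive inequation for $T$, i.e.\ that $T$ is an oplax white monoid homomorphism, which holds by Proposition~\ref{prop:monoidadditive}.

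For the second inequality, take $T\colon \ell\to m$ and $R, S\colon m\to n$. The same unfolding gives
\[
(T;R)\ovee(T;S) = \delta_\ell; (T\oplus T); (R\oplus S); \mu_n, \qquad T;(R\ovee S) = T; \delta_m; (R\oplus S); \mu_n,
\]
which now share the common \emph{suffix} $(R\oplus S);\mu_n$. By monotonicity it therefore suffices to prove $\delta_\ell;(T\oplus T) \leq T;\delta_m$, i.e.\ that $T$ is an oplax white comonoid homomorphism. This is precisely the second consequence of additivity recorded immediately after the definition of additive arrow (the derivation using that the white multiplication and unit are maps), and again it applies to $T$ via Proposition~\ref{prop:monoidadditive}.

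The main thing to get right is the bookkeeping: one must track which additive inequation (monoid- versus comonoid-homomorphism) is needed on each side and in which direction, together with the observation that the common factor sits at opposite ends in the two cases, so that left- and right-monotonicity of $;$ respectively can be applied. I expect no genuinely hard step once additivity of $T$ is invoked; the only subtlety worth flagging is that neither reduction can be strengthened to an equality, since the reverse inclusions would force $T$ to be a \emph{strict} white (co)monoid homomorphism, which fails for general arrows.
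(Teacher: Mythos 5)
Your proof is correct and takes essentially the same approach as the paper: the paper's entire proof reads ``It follows immediately from the fact that $R$, $S$ and $T$ are additive,'' and your argument is exactly that immediacy spelled out --- unfolding $\ovee$, regrouping by the interchange law, and discharging each inequality with one of the two oplax (co)monoid-homomorphism consequences of additivity via Proposition~\ref{prop:monoidadditive}. Your bookkeeping even sharpens the statement slightly by showing that only the additivity of $T$ is actually used.
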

\begin{proof}
It follows immediately from the fact that $R$, $S$ and $T$ are additive.
\end{proof}

\begin{lemma}
$(R\ovee S)^\dag =  R^\dag \ovee S^\dag $.
\end{lemma}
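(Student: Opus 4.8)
The plan is to reduce the statement to the functoriality properties of $\dagger$ recorded in Lemma~\ref{lemma:dagger}, so that essentially no new graphical calculation is required. Write $w_n \colon n \oplus n \to n$ for the white multiplication and $w_n^\dagger \colon n \to n \oplus n$ for the white comultiplication at arity $n$; as in Example~\ref{ex:FT}(a) the white comonoid structure is by definition the $\dagger$-image of the white monoid structure. With this notation the white convolution of two arrows $R,S \colon m \to n$ unfolds, exactly as the (black) convolution $\owedge$ does, to
\[
R \ovee S \;=\; w_m^\dagger \mathrel{;} (R \oplus S) \mathrel{;} w_n.
\]
The idea is simply to apply $\dagger$ to this composite and push it through using parts (ii) and (iii) of Lemma~\ref{lemma:dagger}.

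Carrying this out, Lemma~\ref{lemma:dagger}(ii) (contravariance on composition) and (iii) (distribution over $\oplus$) give
\[
(R \ovee S)^\dagger \;=\; w_n^\dagger \mathrel{;} (R \oplus S)^\dagger \mathrel{;} (w_m^\dagger)^\dagger \;=\; w_n^\dagger \mathrel{;} (R^\dagger \oplus S^\dagger) \mathrel{;} (w_m^\dagger)^\dagger.
\]
Using that $\dagger$ is involutive (a consequence of the snake equations for the self-dual compact closed structure, \cf\ the proof of Lemma~\ref{lemma:dagger}(i)), we have $(w_m^\dagger)^\dagger = w_m$, so that
\[
(R \ovee S)^\dagger \;=\; w_n^\dagger \mathrel{;} (R^\dagger \oplus S^\dagger) \mathrel{;} w_m.
\]
Since $R^\dagger, S^\dagger \colon n \to m$, the right-hand side is precisely the defining expression for $R^\dagger \ovee S^\dagger$, which establishes the claim. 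A quick type-check confirms the composite is well formed: $w_n^\dagger \colon n \to n\oplus n$, then $R^\dagger \oplus S^\dagger \colon n\oplus n \to m\oplus m$, then $w_m \colon m\oplus m \to m$.

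The one point that needs care — and the only real obstacle — is that the definitional relationship ``white comultiplication $= (\text{white multiplication})^\dagger$'' is stated in Example~\ref{ex:FT}(a) at arity $1$, whereas here I use it at the arbitrary arities $m$ and $n$. I would justify this by recalling that the arity-$n$ white (co)monoid is built recursively from the arity-$1$ generators via $\mathrel{;}$, $\oplus$ and symmetries (as in Example~\ref{ex:lawvere}(a)); since $\dagger$ is a $2$-functor preserving $\mathrel{;}$ and $\oplus$ (Lemma~\ref{lemma:dagger}(ii),(iii)) and fixes symmetries (each $\sigma$ is an isomorphism with $\sigma^\dagger = \sigma^{-1} = \sigma$), the equality $w_n^{\,\dagger\dagger} = w_n$ and $w_n^\dagger = (\text{white comult}_n)$ propagate from arity $1$ to all arities. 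Everything else is a direct instance of Lemma~\ref{lemma:dagger}, so the proof is short once this compatibility is noted.
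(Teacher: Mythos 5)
Your proof is correct and takes essentially the same route as the paper, whose entire proof reads ``Trivial by definition'': unfolding $R \ovee S$ as (white comultiplication)$\,;\,(R\oplus S)\,;\,$(white multiplication) and pushing $\dagger$ through via Lemma~\ref{lemma:dagger}(ii),(iii) together with involutivity of $\dagger$ is exactly what that phrase abbreviates. Your additional care in propagating the defining relation (white comultiplication) $=$ (white multiplication)$^\dagger$ from arity $1$ to arbitrary arities is sound, though the paper does not bother to record it.
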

\begin{proof}
Trivial by definition.
\end{proof}

\paragraph{Involution.}
For every arrow $R$, we define  $R^\circ$ as the arrow that is obtained by switching white and black coloring. To be more formal $(-)^\circ \colon \Frob{\CMtheoryF} \to \Frob{\CMtheoryF}$ is the unique symmetric monoidal functor switching black and white colouring. This can be defined inductively: $(R;S)^\circ = R^\circ ; S^\circ$ and $(R\tns S)^\circ = R^\circ \tns S^\circ$  and for the base cases it is just the switching of colours, e.g., $(\Bcomult)^{\circ} = \Wcomult$.

\begin{lemma} The functor $(-)^\circ$ is involutive: $(R^\circ)^\circ = R$.
\end{lemma}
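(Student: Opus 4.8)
The plan is to prove $(R^\circ)^\circ = R$ by structural induction on the arrows of $\Frob{\CMtheoryF}$, using its presentation as a freely generated ordered PROP. The two inductive clauses are immediate: for arrows $R,S$ satisfying $(R^\circ)^\circ = R$ and $(S^\circ)^\circ = S$, the defining equations for $(-)^\circ$ give $((R\poi S)^\circ)^\circ = ((R^\circ)\poi(S^\circ))^\circ = (R^\circ)^\circ\poi(S^\circ)^\circ = R\poi S$, and likewise $((R\tns S)^\circ)^\circ = (R^\circ)^\circ\tns(S^\circ)^\circ = R\tns S$. The identity and symmetry wires carry no colour, hence are fixed by $(-)^\circ$, and so cause no trouble.

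It remains to settle the base cases. For the generators that $(-)^\circ$ permutes within the primitive signature $\Sigma_M\uplus\Sigma_{CW}$ --- namely the pairs $\Wmult\leftrightarrow\Bmult$ and $\Wunit\leftrightarrow\Bunit$ --- the colour swap is literally an involution on the generating data, so $(g^\circ)^\circ=g$ is immediate, e.g.\ $((\Bmult)^\circ)^\circ=(\Wmult)^\circ=\Bmult$. The remaining two base cases, treated next, then complete the induction.

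The delicate base cases are $\Bcomult$ and $\Bcounit$, because their images $\Wcomult$ and $\Wcounit$ are not primitive but are defined as the $\dagger$-mates $(\Wmult)^\dagger$ and $(\Wunit)^\dagger$ of the white monoid, through the black compact-closed structure. To compute $(\Wcomult)^\circ$ I would first record how $(-)^\circ$ acts on the black cup and cap: since the black cup is $\Bunit\poi\Bcomult$ and the black cap is $\Bmult\poi\Bcounit$, functoriality sends them to $\Wunit\poi\Wcomult$ and $\Wmult\poi\Wcounit$, i.e.\ to the corresponding white cup and cap. Consequently $(-)^\circ$ carries the $\dagger$ built from the black Frobenius structure to the $\dagger$ built from the white one, so the task reduces to verifying that this swapped white transpose of $\Bmult$ (respectively $\Bunit$) recovers $\Bcomult$ (respectively $\Bcounit$). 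This identification is exactly the point at which the two Frobenius structures must be reconciled, and discharging it will draw on the bialgebra interaction between the black and white structures recorded in Figure~\ref{fig:cmonoid} together with the black Frobenius law; I expect this reconciliation to be the main obstacle, as it is the only step that is not a formal consequence of functoriality.

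Finally, a more economical packaging avoids the explicit induction altogether: $(-)^\circ\poi(-)^\circ$ is a symmetric monoidal endofunctor of $\Frob{\CMtheoryF}$, so by the uniqueness part of the universal property of the freely generated ordered PROP it is determined by its effect on generators. Since it fixes every generator --- immediately for the monoid and unit pairs, and by the compact-structure computation above for $\Bcomult$ and $\Bcounit$ --- it must coincide with the identity functor, which is precisely the assertion $(R^\circ)^\circ=R$.
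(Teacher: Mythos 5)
Your inductive cases and the base cases $\Wmult\leftrightarrow\Bmult$, $\Wunit\leftrightarrow\Bunit$ are fine, but the way you handle the remaining base cases opens a gap that cannot be closed, because the statement you reduce them to is false. You compute $(\Wcomult)^\circ$ by unfolding $\Wcomult$ as the term $(\Wmult)^\dagger$ and pushing $(-)^\circ$ through the unfolding, concluding that it suffices to show that the \emph{white} transpose of $\Bmult$ equals $\Bcomult$ (and similarly that the white transpose of $\Bunit$ equals $\Bcounit$). In $\Frob{\CMtheoryF}$ this fails. Take the model $F\colon\Frob{\CMtheoryF}\to\Rel$ given by the commutative monoid $(\N,+,0)$: the white cup $\Wunit\poi\Wcomult$ and the white cap $\Wmult\poi\Wcounit$ denote $\{(\bullet,(0,0))\}$ and $\{((0,0),\bullet)\}$, because $x+y=0$ forces $x=y=0$ in $\N$; consequently the white transpose of $\Bmult$ denotes the relation $\{(0,(0,0))\}$, whereas $\Bcomult$ denotes the full diagonal $\{(a,(a,a))\ \text{s.t.}\ a\in\N\}$. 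So no amount of bialgebra-plus-Frobenius reasoning will deliver the ``reconciliation'' you defer: it would hold in $\Frob{\AGtheoryF}$, where the white structure is genuinely Frobenius, but not in the theory of commutative monoids. Under your reading of the definition, the lemma would in fact be false.

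The underlying misconception is treating $(-)^\circ$ as a functor that can be pushed through derived notation. The paper's $(-)^\circ$ is a purely syntactic colour swap on string diagrams in which all eight coloured components are first-class: $(\Wcomult)^\circ$ is $\Bcomult$ \emph{by stipulation}, not by computation, and involutivity is then immediate on generators and extends through $\poi$ and $\tns$ --- that is exactly what the paper means by ``trivial by definition''. Note that $(-)^\circ$ cannot be obtained from the universal property of the free ordered prop, since it does not preserve the defining (in)equations: the black Frobenius \emph{equation} is sent to the white Frobenius law, which holds only laxly in $\Frob{\CMtheoryF}$ (the paper makes precisely this observation right after the lemma). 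For the same reason your final ``economical packaging'' --- that $(-)^\circ\poi(-)^\circ$ is a symmetric monoidal endofunctor fixing the generators, hence the identity by freeness --- does not go through: the swap is not an endofunctor induced by that universal property in the first place, so there is no uniqueness to appeal to, and its claimed fixing of the generators $\Bcomult,\Bcounit$ is the very point that fails.
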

\begin{proof}
Trivial by definition.
\end{proof}

\begin{lemma}
$(R\ovee S)^\circ = R^\circ \owedge S^\circ$ and $(R\owedge S)^\circ = R^\circ \ovee S^\circ$
\end{lemma}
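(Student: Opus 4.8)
The plan is to derive both identities directly from the fact that $(-)^\circ$ is a symmetric monoidal functor interchanging the black and white structures, together with its involutivity. The key observation is that the white convolution $\ovee$ is \emph{literally} the black convolution $\owedge$ with the two colours swapped, so conjugating by $(-)^\circ$ transforms one into the other.

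First I would record that $(-)^\circ$ carries the black $m$-fold comultiplication $\Delta_m$ to the white $m$-fold comultiplication, and the black $n$-fold multiplication $\nabla_n$ to the white one. Indeed, on the base generators $(\Bcomult)^\circ = \Wcomult$ and $(\Bmult)^\circ = \Wmult$, and both families are built from these via $\poi$, $\oplus$ and the symmetries, all of which $(-)^\circ$ preserves. Writing the black convolution of $R^\circ, S^\circ$ as $R^\circ \owedge S^\circ = \Delta_m \poi (R^\circ \oplus S^\circ) \poi \nabla_n$ and the white convolution as $R \ovee S = \Delta_m^\circ \poi (R \oplus S) \poi \nabla_n^\circ$, a single computation settles the first identity:
\[
(R \ovee S)^\circ = (\Delta_m^\circ)^\circ \poi (R^\circ \oplus S^\circ) \poi (\nabla_n^\circ)^\circ = \Delta_m \poi (R^\circ \oplus S^\circ) \poi \nabla_n = R^\circ \owedge S^\circ,
\]
where the first equality uses that $(-)^\circ$ preserves $\poi$ and $\oplus$, and the second uses involutivity, $(\Delta_m^\circ)^\circ = \Delta_m$ and $(\nabla_n^\circ)^\circ = \nabla_n$.

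For the second identity I would avoid repeating the calculation and instead instantiate the first at $R^\circ$ and $S^\circ$, obtaining $(R^\circ \ovee S^\circ)^\circ = R \owedge S$; applying $(-)^\circ$ to both sides and invoking involutivity once more yields $R^\circ \ovee S^\circ = (R \owedge S)^\circ$, as required. There is essentially no obstacle: the only point demanding care is the bookkeeping that $(-)^\circ$ genuinely sends the structural maps $\Delta_m, \nabla_n$ of black convolution to those of white convolution — which is precisely the statement that $(-)^\circ$ swaps colours — after which functoriality makes both equations immediate.
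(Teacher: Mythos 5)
Your proof is correct and is precisely the argument the paper leaves implicit: its proof is simply ``trivial by definition,'' and your computation spells out that unfolding, using that $(-)^\circ$ is an involutive monoidal functor sending the black structural maps $\Delta_m,\nabla_n$ to their white counterparts. The bookkeeping point you flag and the trick of deducing the second identity from the first via involutivity are both sound, so nothing is missing.
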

\begin{proof}
Trivial by definition.
\end{proof}

Observe that $(-)^\circ$ does not preserve the posetal structure. Indeed, it is not true in general that $R \leq S$ entails that $S^\circ \leq R^\circ$. Take for instance the black Frobenius equation: we know that the white Frobenius only holds laxly. We will see in Section \ref{sec:ag} that the involution operator satisfies such property when considering Abelian groups.

\paragraph{The algebra of additive relations.}
Our set of operations $$\top, \;\; \bot, \;\; \owedge, \;\; \ovee, \;\; (-)^\dag, \;\; (-)^\circ,\;\; ; , \;\; id$$
is similar to the algebra of relations \cite{jonsson1948representation}: full relation, empty relation, intersection, union, inverse, complement, composition and identity relation. While the allegorical fragment \cite{freyd1990categories}
$$\top,  \;\; \owedge,  \;\; (-)^\dag, \;\; ; , \;\; id$$
-- given by the black structure -- coincides exactly with the one in  the algebra of relations, the remaining part 
$$ \bot,  \;\; \ovee,  \;\; (-)^\circ$$
is not exactly the same: $\bot$ is not the empty relation and $\ovee$ is not the union.

We believe that the connection with relational algebra is worth exploring further.



\section{The theory of abelian groups}\label{sec:ag}
Recall the cartesian theory of abelian groups $\AGtheory$ (Example \ref{ex:FT}(b)). In this section we study the Frobenius theory $\AGtheoryF$ which is obtained by adding to $\AGtheory$ the inequations \eqref{eq:olunitsl}-\eqref{eq:olbwbone} of oplax bialgebras (Example \ref{ex:SMIT}(e)) and the following inequalities stating that the antipode is a map (single valued and total).
\begin{multicols}{2}
\noindent
\begin{equation}
\label{eq:antipodelax1}
\lower2pt\hbox{$
\lower7pt\hbox{$\includegraphics[height=.8cm]{graffles/antipodef1.pdf}$}
\leq
\lower10pt\hbox{$\includegraphics[height=1cm]{graffles/antipodef2.pdf}$}
$}
\end{equation}
\begin{equation}
\label{eq:antipodelax2}
\lower9pt\hbox{$\includegraphics[height=1cm]{graffles/antipodef3.pdf}$}
\leq
\lower5pt\hbox{$\includegraphics[height=.6cm]{graffles/antipodef4.pdf}$}
\end{equation}
\end{multicols}

\begin{figure}
\begin{center}
\includegraphics[height=9cm]{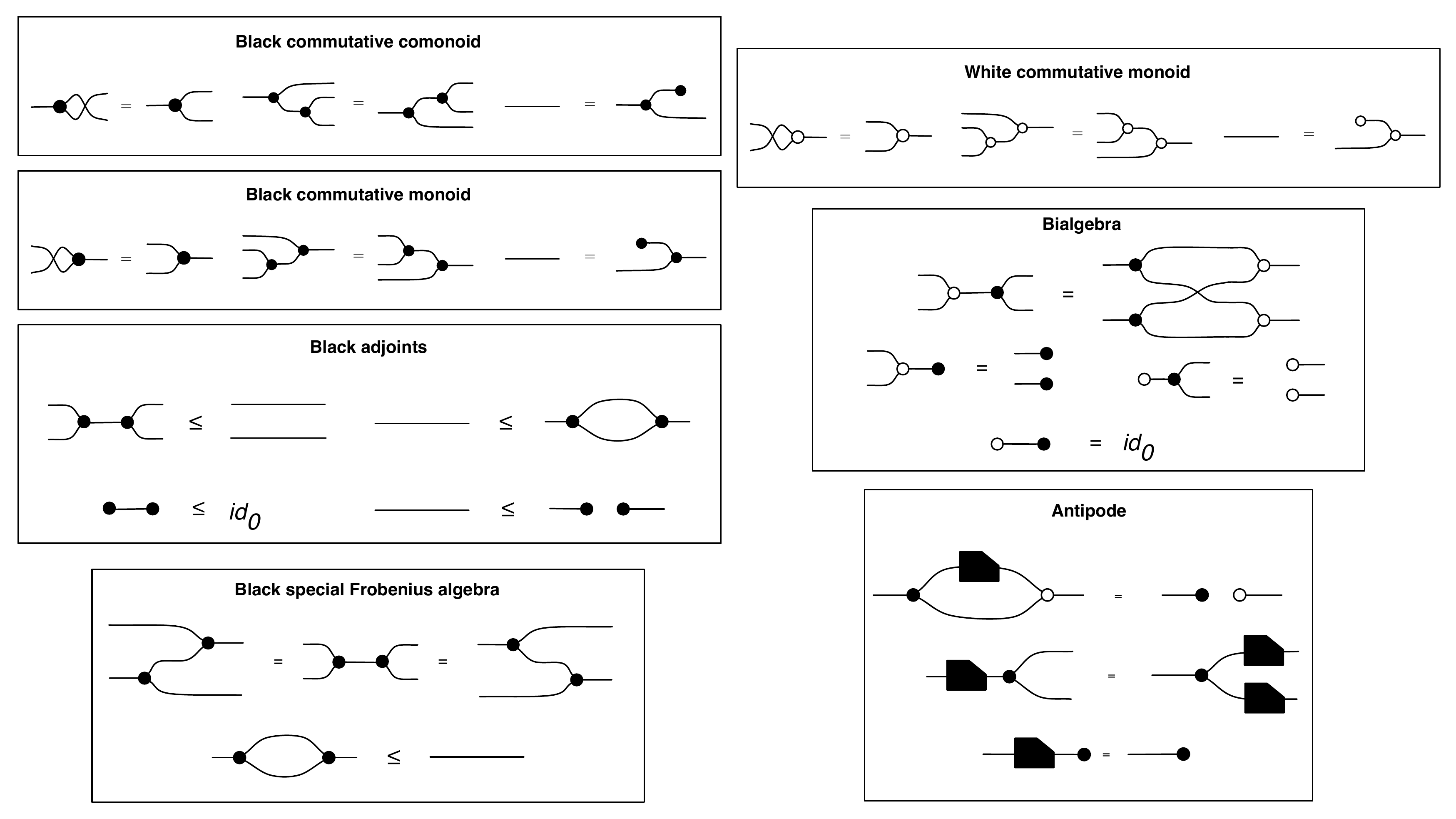}
\end{center}
\caption{The SMIT corresponding to the Frobenius theory of abelian groups.}\label{fig:summaryAbelianGroup}
\end{figure}
The corresponding SMIT is shown in Figure \ref{fig:summaryAbelianGroup}.
As usual, we adopt the following graphical convention. 
$$\lower1pt\hbox{$\cgr[height=20pt]{antipodeinv.pdf}$}::= (\cgr[height=20pt]{antipode.pdf})^\dag$$

We start our treatment by showing that the antipode is surjective 
\begin{equation*}
\lower2pt\hbox{$
\lower5pt\hbox{$\includegraphics[height=.6cm]{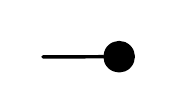}$}
=
\lower5pt\hbox{$\includegraphics[height=.6cm]{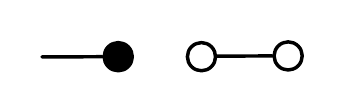}$}
=
\lower11pt\hbox{$\includegraphics[height=1.1cm]{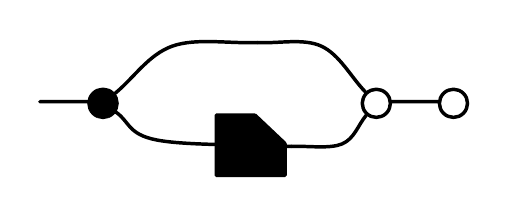}$}
\leq
\lower11pt\hbox{$\includegraphics[height=1.1cm]{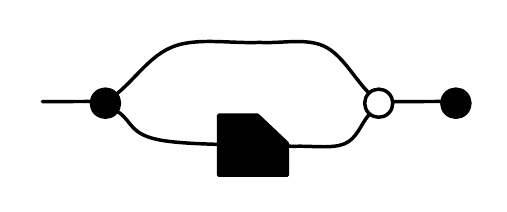}$}
=
\lower11pt\hbox{$\includegraphics[height=1.1cm]{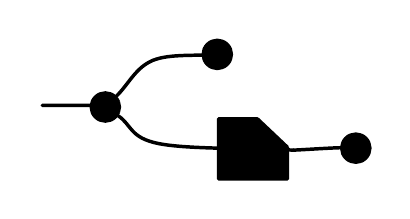}$}
=
\lower8pt\hbox{$\includegraphics[height=.8cm]{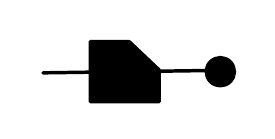}$}
$}
\end{equation*}
and injective.
\begin{multline*}
\lower8pt\hbox{$\includegraphics[height=.8cm]{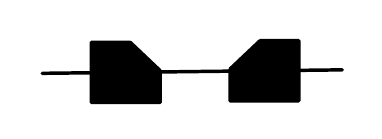}$}
=
\lower12pt\hbox{$\includegraphics[height=1.2cm]{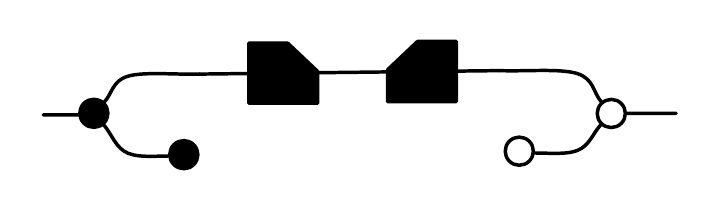}$}
=
\lower15pt\hbox{$\includegraphics[height=1.5cm]{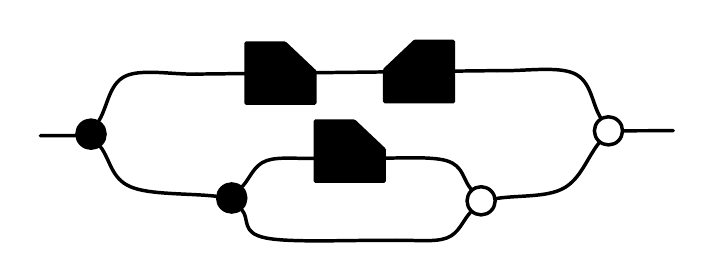}$}
\\
=
\lower10pt\hbox{$\includegraphics[height=1.5cm]{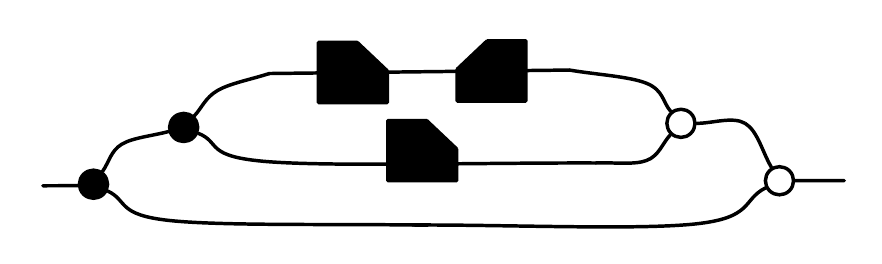}$}
=
\lower10pt\hbox{$\includegraphics[height=1.5cm]{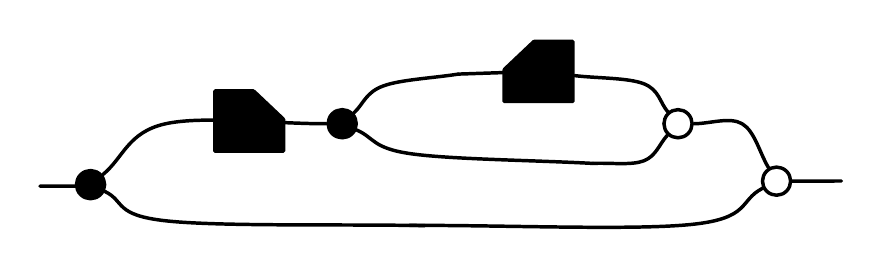}$}
\leq
\lower10pt\hbox{$\includegraphics[height=1.5cm]{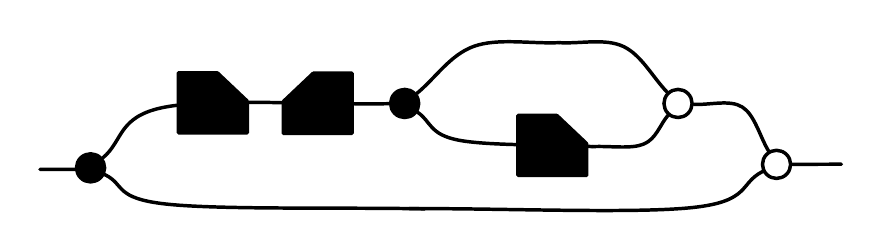}$}
\\
=
\lower9pt\hbox{$\includegraphics[height=1.2cm]{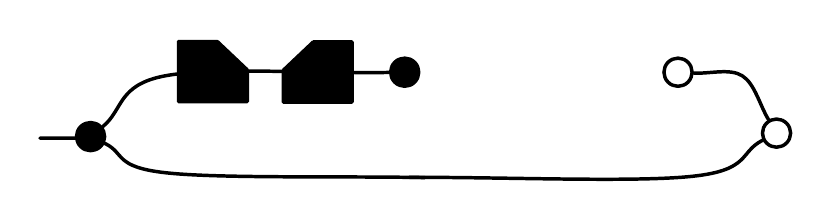}$}
=
\lower8pt\hbox{$\includegraphics[height=1cm]{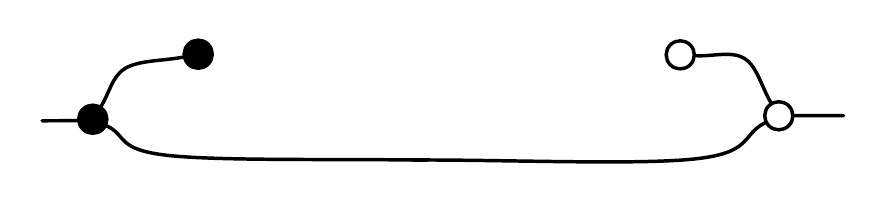}$}
=
\lower5pt\hbox{$\includegraphics[height=.6cm]{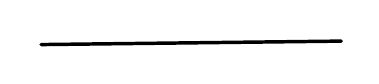}$}
\end{multline*}
In a similar way, one can show that 
\begin{multline*}
\label{eq:inverseantipode}
\lower8pt\hbox{$\includegraphics[height=.8cm]{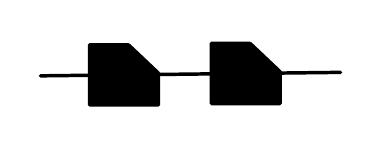}$}
=
\lower12pt\hbox{$\includegraphics[height=1.2cm]{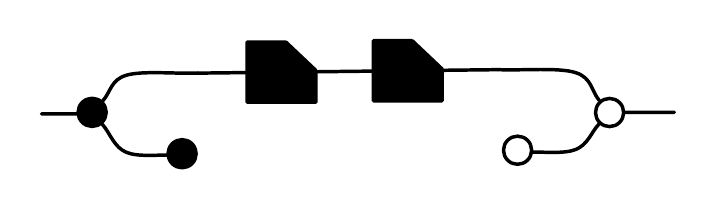}$}
=
\lower18pt\hbox{$\includegraphics[height=1.5cm]{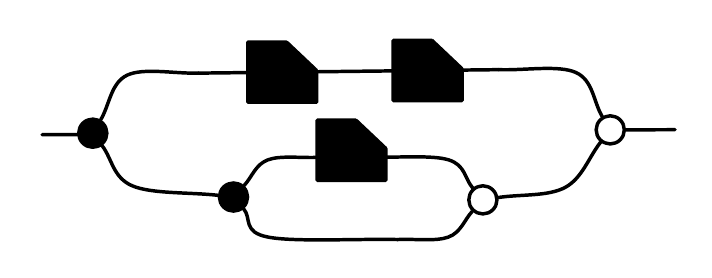}$}
\\
=
\lower10pt\hbox{$\includegraphics[height=1.5cm]{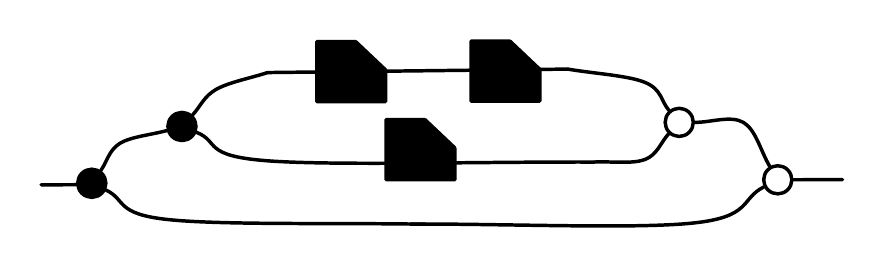}$} 
= 
\lower10pt\hbox{$\includegraphics[height=1.5cm]{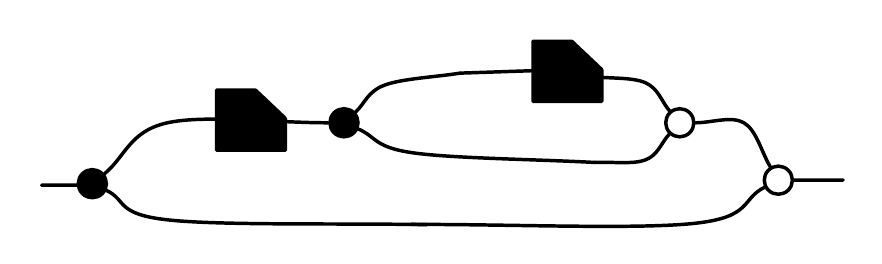}$}
=
\\
\lower10pt\hbox{$\includegraphics[height=1.2cm]{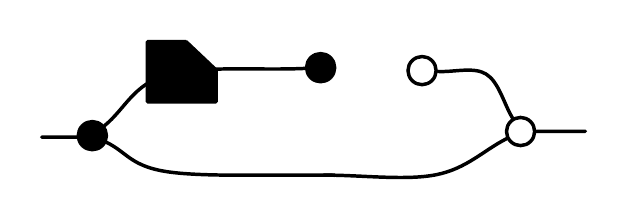}$}
=
\lower8pt\hbox{$\includegraphics[height=1cm]{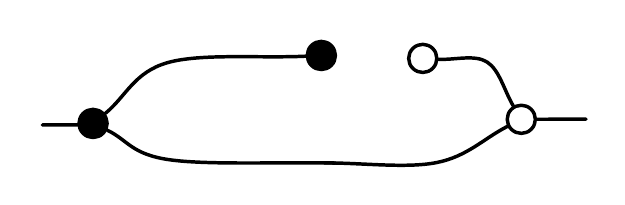}$}
=
\lower5pt\hbox{$\includegraphics[height=.6cm]{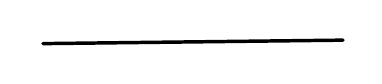}$}
\end{multline*}
which, by Corollary \eqref{cor:inverse}, means that $\cgr[height=17pt]{antipodeinv.pdf}= \cgr[height=17pt]{antipode.pdf}$. This fact justifies the adoption of following graphical notation.
$$ \lower9pt\hbox{$\includegraphics[height=23pt]{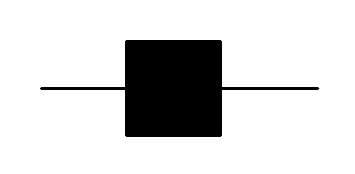}$} ::= 
\lower9pt\hbox{$\includegraphics[height=23pt]{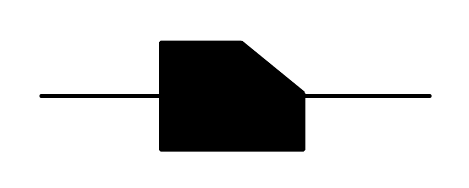}$}=
\lower9pt\hbox{$\includegraphics[height=23pt]{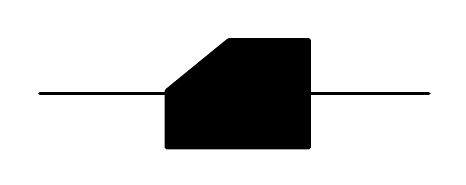}$}
 $$

\medskip

We now investigate how the antipode interacts with the white monoid. The following derivation shows that it distributes over the white unit.

\begin{equation}
\label{eq:antipodewunit}
\lower2pt\hbox{$
\lower5pt\hbox{$\includegraphics[height=.6cm]{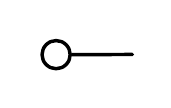}$}
=
\lower5pt\hbox{$\includegraphics[height=.6cm]{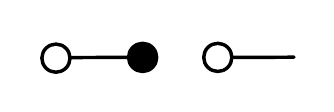}$}
=
\lower14pt\hbox{$\includegraphics[height=1.2cm]{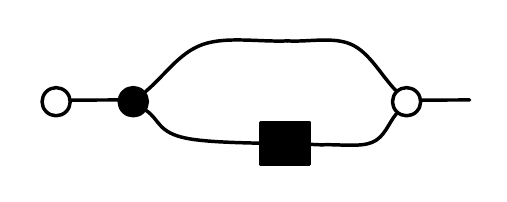}$}
=
\lower14pt\hbox{$\includegraphics[height=1.2cm]{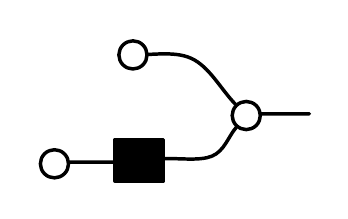}$}
=
\lower8pt\hbox{$\includegraphics[height=.8cm]{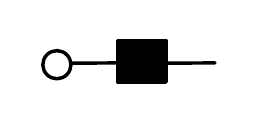}$}
$}
\end{equation}

The same happens for the white multiplication. This is easy to show, using the following useful ``De Morgan'' property: 
\begin{equation}\label{eq:demorgan}
\lower13pt\hbox{$\includegraphics[height=1.1cm]{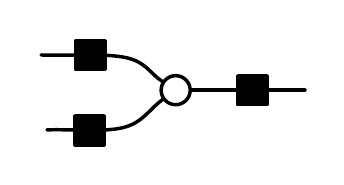}$}
= 
\lower12pt\hbox{$\includegraphics[height=1cm]{graffles/Wmult.pdf}$}
\end{equation}
First we prove~\eqref{eq:demorgan}:
\[
\lower13pt\hbox{$\includegraphics[height=1.1cm]{graffles/demorgan.pdf}$}
\leq 
\lower12pt\hbox{$\includegraphics[height=1cm]{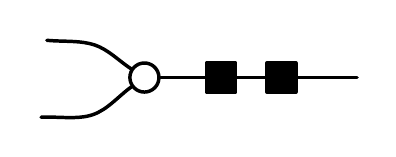}$}
=
\lower12pt\hbox{$\includegraphics[height=1cm]{graffles/Wmult.pdf}$}
\]
\[
\lower12pt\hbox{$\includegraphics[height=1cm]{graffles/Wmult.pdf}$}
=
\lower13pt\hbox{$\includegraphics[height=1.1cm]{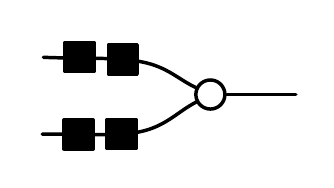}$}
\leq
\lower13pt\hbox{$\includegraphics[height=1.1cm]{graffles/demorgan.pdf}$}
\]
And now it follows that
\begin{equation}
\label{eq:antipodewmult}
\lower13pt\hbox{$\includegraphics[height=1.1cm]{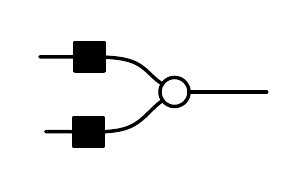}$}
=
\lower13pt\hbox{$\includegraphics[height=1.1cm]{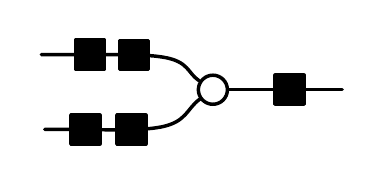}$}
=
\lower12pt\hbox{$\includegraphics[height=1cm]{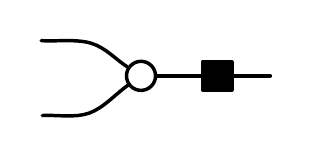}$}
\end{equation}

A useful ``quasi-Frobenius'' interaction between the white structure is the following.
\begin{equation}\label{eq:quasifrob}
\lower17pt\hbox{$\includegraphics[height=1.5cm]{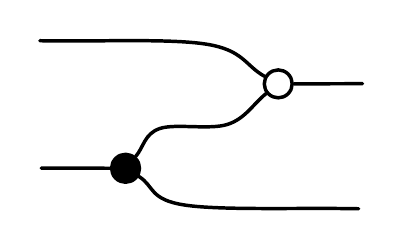}$}
=
\lower17pt\hbox{$\includegraphics[height=1.5cm]{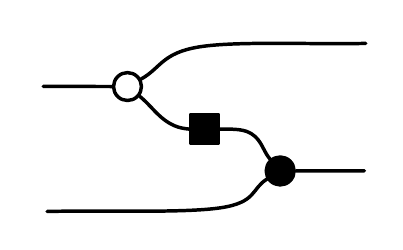}$}
\end{equation}
To see this, let 
\[
\lower12pt\hbox{$\includegraphics[height=1cm]{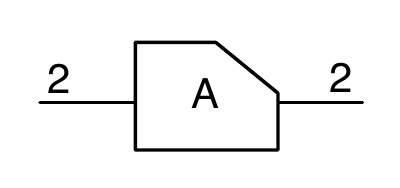}$}
:=
\lower15pt\hbox{$\includegraphics[height=1.2cm]{graffles/quasifrob1.pdf}$}
\qquad
\lower12pt\hbox{$\includegraphics[height=1cm]{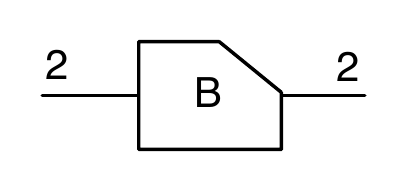}$}
:=
\lower15pt\hbox{$\includegraphics[height=1.2cm]{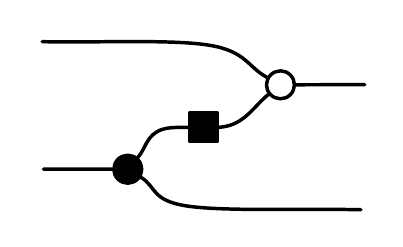}$}
\]
Now it is easy to show that $A\poi B = B\poi A = id_2$. By the conclusion of Lemma~\ref{lem:dagger}, it follows that 
\[
\lower12pt\hbox{$\includegraphics[height=1cm]{graffles/Abox.pdf}$}
=
\lower12pt\hbox{$\includegraphics[height=1cm]{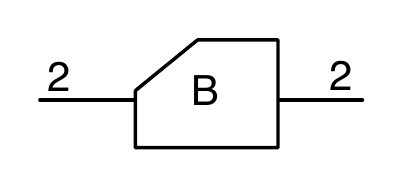}$}
\]
which is just~\eqref{eq:quasifrob}.

Another important law is the following
\begin{equation}
\label{eq:bwcc}
\lower2pt\hbox{$
\lower10pt\hbox{$\includegraphics[height=1cm]{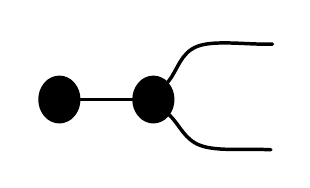}$}
=
\lower10pt\hbox{$\includegraphics[height=1cm]{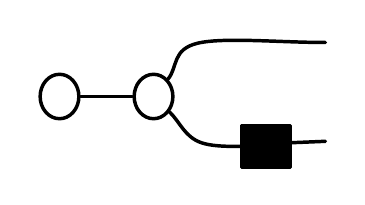}$}
$}
\end{equation}
This follows from~\eqref{eq:quasifrob}:
\[
\lower10pt\hbox{$\includegraphics[height=1cm]{graffles/bwcc2.pdf}$}
=
\lower20pt\hbox{$\includegraphics[height=2cm]{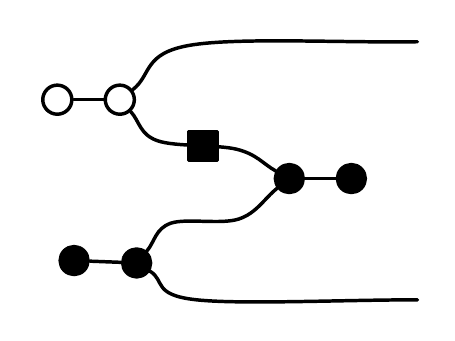}$}
=
\lower20pt\hbox{$\includegraphics[height=2cm]{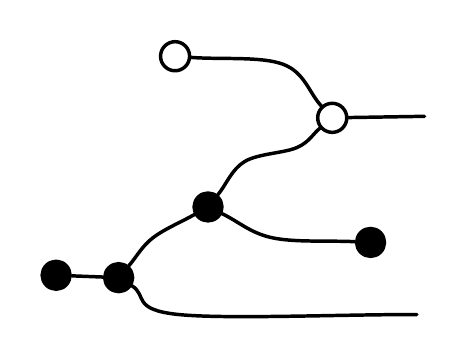}$}
=
\lower10pt\hbox{$\includegraphics[height=1cm]{graffles/bwcc.pdf}$}
\]

\begin{proposition}\label{prop:whiteFrobenius}
White monoid and comonoid form a Frobenius structure.
\end{proposition}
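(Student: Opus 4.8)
The plan is to prove the white Frobenius equations \eqref{eq:BWFrob} for the white monoid and comonoid, i.e.\ to show that the two ``bent'' diagrams and the middle ``straight'' diagram (white comultiplication followed by white multiplication) all coincide. I would split this into the two opposite inclusions and treat them separately, because one of them is already in hand. Recall that in Section~\ref{sec:monoid} the white monoid together with the white comonoid was shown to form a \emph{lax} Frobenius structure: each bent diagram lies below the straight one. That half is derivable already in $\Frob{\CMtheoryF}$, hence a fortiori in $\Frob{\AGtheoryF}$, so it may be quoted directly. By commutativity and cocommutativity of the white structure the two bent diagrams are interchanged by a symmetry, so to finish I only need a single reverse inclusion, namely that the straight diagram lies below one bent diagram.

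The reverse inclusion is where the group structure is indispensable: semantically it asks that whenever $x+y=u+v$ there is a witness $a$ with $u+a=x$ and $v=a+y$, and this witness is exactly the formal difference $a=x-u$ supplied by the antipode. This is precisely the content flagged as failing for bare monoids in \eqref{eq:buwm}, which was promised to hold for abelian groups. The mechanism I would use is the Hopf cancellation law \eqref{eq:hopf}, which lets one create a canceling pair $x+(-x)=0$ on a wire; inserting such a pair into the straight diagram and then sliding the antipode past the white multiplication using \eqref{eq:antipodewmult} (and past the white unit using \eqref{eq:antipodewunit}) reshapes the straight diagram into a bent one. The quasi-Frobenius law \eqref{eq:quasifrob} and its consequence \eqref{eq:bwcc} are the main bookkeeping tools for this rearrangement, since they govern exactly how the antipode commutes through the interaction of the black and white structures, while Lemma~\ref{lem:dagger} is used to convert the resulting isomorphisms into daggers where needed. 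Combining this reverse inclusion with the lax inclusions of the first step, and then applying the symmetry argument, yields the full three-way equality \eqref{eq:BWFrob}.

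I expect the reverse inclusion to be the only genuine obstacle: the lax half is free, the symmetry step is routine, and the remainder is diagrammatic rewriting. The delicate point will be to arrange the antipode insertions so that all spurious antipodes are ultimately annihilated, using that the antipode is a self-inverse map, leaving a clean bent Frobenius diagram; aligning the white (co)multiplication with the black compact-closed structure correctly via \eqref{eq:bwcc} is the step most likely to require care.
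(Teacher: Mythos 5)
Your proposal is correct in substance but organised differently from the paper's proof. The paper does not split the statement into two inclusions at all: it proves the three-way equality outright, in a single equational chain (``Since the black structure is Frobenius, it suffices to show\dots''), whose steps rest on the black Frobenius/compact closed structure together with the two laws derived immediately beforehand, namely the quasi-Frobenius law \eqref{eq:quasifrob} and its consequence \eqref{eq:bwcc}. Your decomposition --- import the lax Frobenius inclusions from $\Frob{\CMtheoryF}$ (legitimate, since $\AGtheoryF$ extends $\CMtheoryF$, a transfer principle the paper itself invokes elsewhere in Section~\ref{sec:ag}), prove one reverse inclusion, and close up using (co)commutativity and the symmetry --- is sound, and it has the presentational virtue of isolating exactly where the group structure enters. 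But note that every tool you cite for the reverse inclusion (the Hopf law \eqref{eq:hopf}, the antipode-sliding equations \eqref{eq:antipodewunit} and \eqref{eq:antipodewmult}, \eqref{eq:quasifrob}, \eqref{eq:bwcc}, Lemma~\ref{lem:dagger}) is an \emph{equality}; any derivation of ``straight $\leq$ bent'' assembled purely from them is therefore already a derivation of ``straight $=$ bent''. So the lax half quoted from Section~\ref{sec:monoid} and the final symmetry step are redundant scaffolding: once your reverse inclusion is carried out, it collapses into essentially the paper's chain. Two small cautions. First, the reverse Frobenius inclusion is not ``precisely'' the content of \eqref{eq:buwm}: that inequation expresses divisibility ($\forall x,z\ \exists y.\ x+y=z$), a related but distinct consequence of invertibility. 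Second, the equational strengthening of \eqref{eq:buwm} is obtained in the paper only \emph{after} Proposition~\ref{prop:whiteFrobenius}, via the involution technique (Proposition~\ref{prop:involutionreverse}), so it could not be used as an ingredient here without circularity --- you cite it only as motivation, which is fine, but the distinction is worth keeping sharp.
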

\begin{proof}
Since the black structure is Frobenius, it suffices to show:
\begin{multline*}
\lower15pt\hbox{$\includegraphics[height=1.5cm]{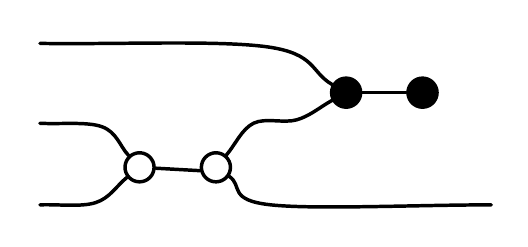}$}
=
\lower17pt\hbox{$\includegraphics[height=1.7cm]{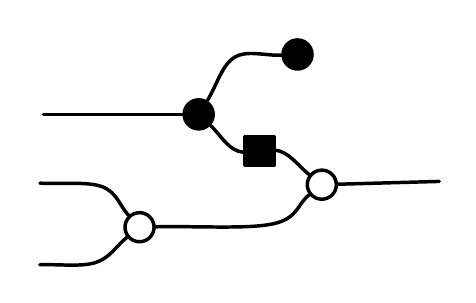}$}
=
\lower15pt\hbox{$\includegraphics[height=1.5cm]{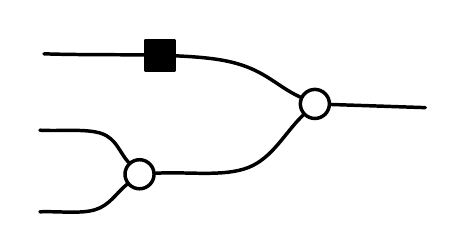}$} \\
=
\lower15pt\hbox{$\includegraphics[height=1.5cm]{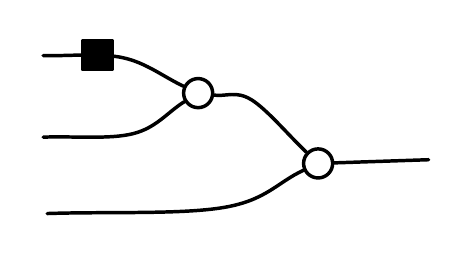}$}
=
\lower19pt\hbox{$\includegraphics[height=1.9cm]{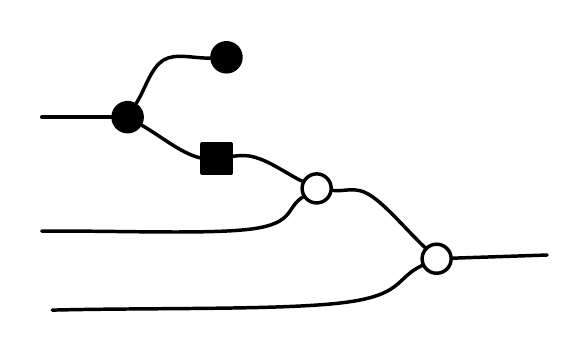}$}
=
\lower18pt\hbox{$\includegraphics[height=1.8cm]{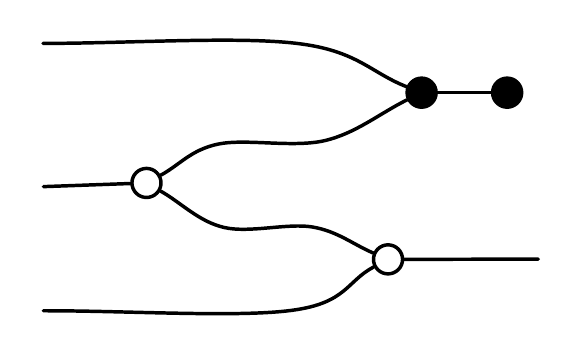}$}
\end{multline*}
\end{proof}

\paragraph{An alternative presentation.} Thanks to Proposition \ref{prop:whiteFrobenius}, it is easy to see that all the (in)equations in Figure \ref{fig:summaryAlternativeAG} hold in $\Frob{\AGtheoryF}$. Actually, one can prove that the ordered PROP freely generated by the SMIT in Figure \ref{fig:summaryAlternativeAG} is isomorphic to $\Frob{\AGtheoryF}$: it is enough to define the antipode as either of the following (equivalent) terms 
\[
\includegraphics[height=1.5cm]{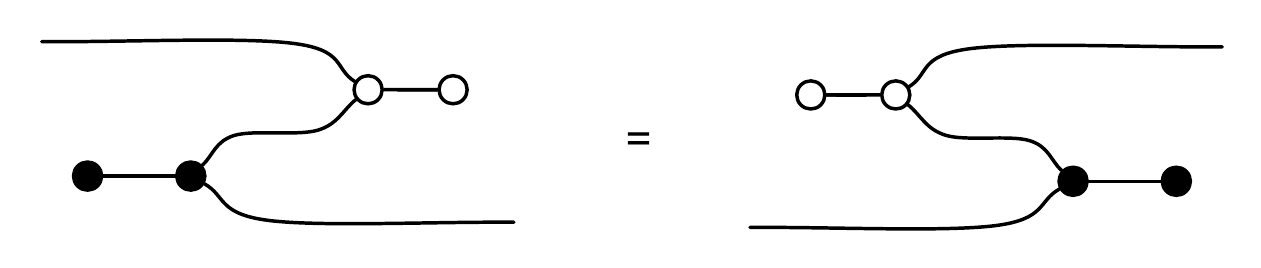}
\]
and check that the (in)equations in Figure \ref{fig:summaryAbelianGroup} are entailed by those in in Figure \ref{fig:summaryAlternativeAG}. From the isomorphism, it follows that $\Frob{\AGtheoryF}$ is an \emph{abelian bicategory} \cite{Carboni1987}.

%
\begin{figure}
\includegraphics[height=8cm]{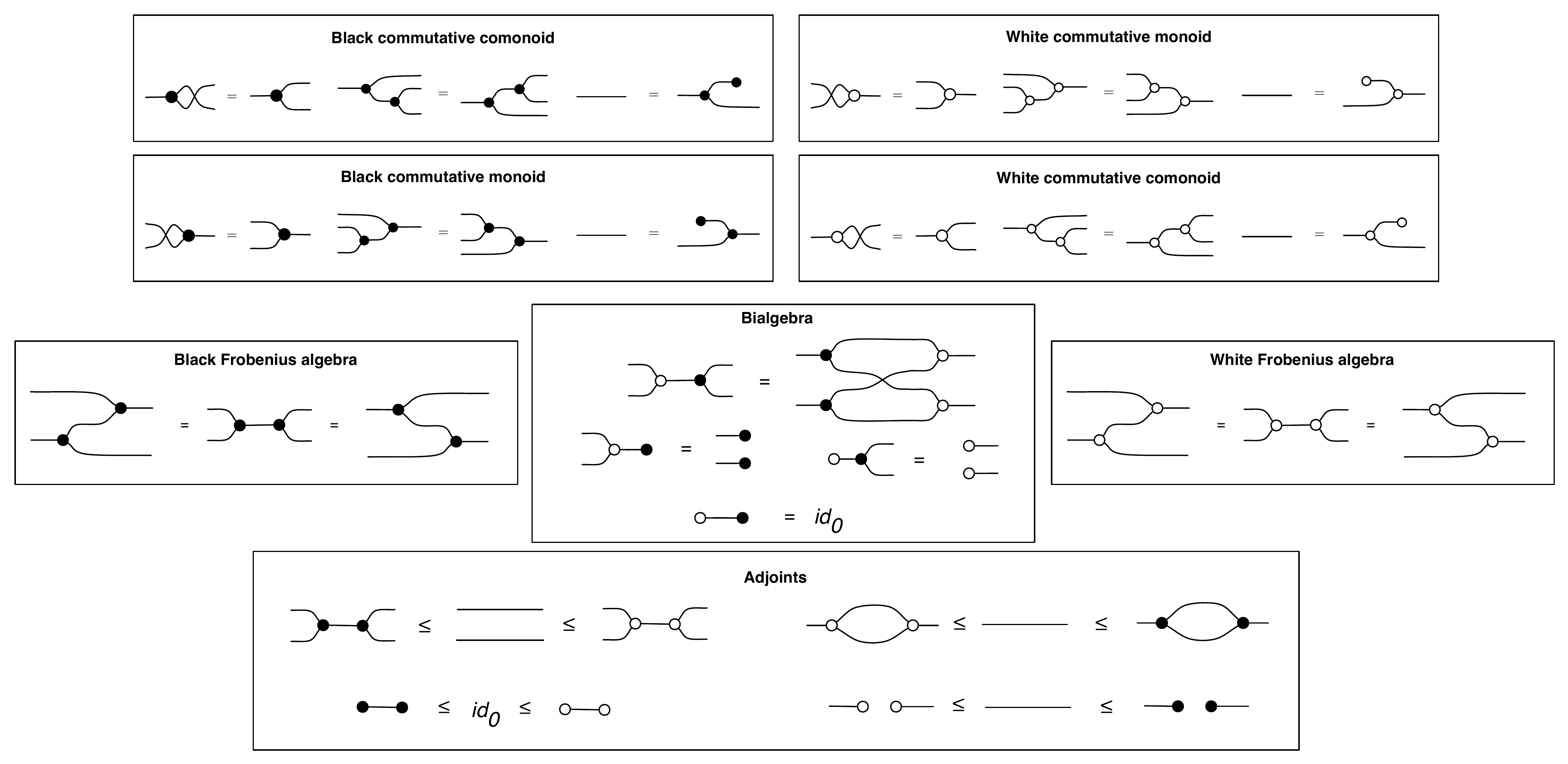}
\caption{An alternative presentation for the SMIT of abelian Groups}\label{fig:summaryAlternativeAG}
\end{figure}

\paragraph{Involution.}
Involution on $\Frob{\CMtheoryF}$can be easily extended to $\Frob{\AGtheoryF}$ by defining 
$$ (\lower5pt\hbox{$\includegraphics[height=.6cm]{graffles/sqantipode.pdf}$})^\circ ::=  \lower5pt\hbox{$\includegraphics[height=.6cm]{graffles/sqantipode.pdf}$}\text{.}$$
Now, the fact that white monoid and white comonoid form a special Frobenius algebra gives us the following important proposition.

\begin{proposition}\label{prop:involutionreverse}
$R \leq S$ iff $R^\circ \geq S^\circ$.
\end{proposition}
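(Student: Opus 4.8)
The plan is to derive the statement from the interplay between the black convolution $\owedge$, the white convolution $\ovee$, and the colour-swapping functor $(-)^\circ$, the essential new input being that for abelian groups the white monoid and comonoid form a genuine special Frobenius structure (Proposition~\ref{prop:whiteFrobenius}).

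First I would recall the two order characterisations that are already available. The black convolution is the meet, so $S\owedge R = R$ iff $R\le S$, and the white convolution is the join, so $R^\circ\ovee S^\circ = R^\circ$ iff $S^\circ\le R^\circ$. Both were proved for $\Frob{\CMtheoryF}$ and remain valid in $\Frob{\AGtheoryF}$, because every arrow there is still additive: the antipode distributes over the white unit~\eqref{eq:antipodewunit} and the white multiplication~\eqref{eq:antipodewmult}, so the induction establishing additivity goes through, and the convolution lemmas depend only on additivity, idempotency and unitality. I would also use that $(-)^\circ$ interchanges the two convolutions, $(S\owedge R)^\circ = S^\circ\ovee R^\circ$, and that it is an involutive bijection on each homset, hence both preserves and reflects equalities of arrows.

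The proof is then a chain of equivalences:
\[
R\le S \ \Leftrightarrow\ S\owedge R = R \ \Leftrightarrow\ (S\owedge R)^\circ = R^\circ \ \Leftrightarrow\ S^\circ\ovee R^\circ = R^\circ \ \Leftrightarrow\ S^\circ\le R^\circ .
\]
The outer two steps are the meet and join characterisations, the middle-left step uses that $(-)^\circ$ is a bijection on arrows, and the middle-right step is the convolution-swap identity together with commutativity of $\ovee$. Reading the chain from end to end gives $R\le S$ iff $S^\circ\le R^\circ$, i.e.\ $R\le S$ iff $R^\circ\ge S^\circ$.

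The main obstacle is not the calculation but justifying that $(-)^\circ$ really is a well-defined functor on the quotient $\Frob{\AGtheoryF}$, so that the swap identity $(S\owedge R)^\circ = S^\circ\ovee R^\circ$ is a genuine equality of arrows rather than a merely syntactic one. This is exactly where Proposition~\ref{prop:whiteFrobenius} enters: once the white structure is special Frobenius, the equational theory is symmetric under exchanging the two colours (both structures are full special Frobenius, both colour-pairings are bialgebras; indeed $\Frob{\AGtheoryF}$ is an abelian bicategory), so the colour swap respects every defining equation and descends to the ordered PROP. In $\Frob{\CMtheoryF}$ this symmetry is absent --- the white Frobenius law holds only laxly --- which is precisely why the colour swap fails to be order-reversing there, consistent with the remark closing Section~\ref{sec:monoid}. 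I would therefore begin the write-up by recording the well-definedness of $(-)^\circ$ on $\Frob{\AGtheoryF}$, and only then run the equivalence chain above.
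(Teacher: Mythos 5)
Your chain of equivalences is sound given its inputs, but the decisive input --- that $(-)^\circ$ descends to a well-defined involution on the \emph{arrows} of $\Frob{\AGtheoryF}$ --- is exactly as strong as the proposition itself, and the justification you sketch for it does not work. Recall that $(-)^\circ$ is defined on syntactic terms, while arrows are equivalence classes under the equivalence generated by antisymmetry of the derivable order; equality of arrows is therefore not generated by the equational axioms alone (the white special law, for instance, is obtained by genuinely inequational reasoning via Corollary~\ref{cor:spancancellation}), so ``the colour swap respects every defining equation'' is not a sufficient criterion for descent. Worse, the axiomatisation of Fig.~\ref{fig:summaryAbelianGroup} is \emph{not} colour-symmetric in the order-preserving sense your appeal to Proposition~\ref{prop:whiteFrobenius} suggests: the adjunction axiom \eqref{eq:adj1} for the black structure swaps to the assertion that white multiplication followed by white comultiplication lies below $\id \tns \id$, which fails in every $\Rel$-model whose underlying abelian group has more than one element and is hence not derivable; only its \emph{reverse} is derivable. (The same phenomenon shows that $(-)^\circ$ is genuinely ill-defined on arrows of $\Frob{\CMtheoryF}$: the two sides of the black Frobenius axiom are equal there, while their colour swaps are distinct arrows, related only laxly --- the paper's closing remark of Section~\ref{sec:monoid} understates this.)

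The only form of colour symmetry that is both true and strong enough is the order-reversing one: for each defining inequation $t \le t'$ in Fig.~\ref{fig:summaryAbelianGroup}, the swapped inequation $(t')^\circ \le t^\circ$ is derivable. Given that, an induction on derivations (the cases for $\poi$, $\tns$, reflexivity and transitivity are immediate because the swap is defined homomorphically) yields $R \le S \Rightarrow S^\circ \le R^\circ$ at the level of terms; together with involutivity this is the whole proposition, and well-definedness of $(-)^\circ$ on arrows falls out as a byproduct rather than serving as an ingredient. This axiom-by-axiom check plus induction is precisely the paper's proof. So your proposal, once its self-identified ``main obstacle'' is discharged honestly, contains the paper's proof as a sub-step, at which point the meet/join/convolution chain proves nothing that is not already in hand. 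If you wish to keep your write-up, you must state and prove the axiom-reversal lemma first --- but then you should simply conclude directly from it.
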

\begin{proof}
It is enough to check that, for each of the inequation $R\leq S$ in the axiomatization in Fig. \ref{fig:summaryAbelianGroup}, $R^\circ \leq S^\circ$ holds.
\end{proof}

The above proposition can be used as an effective proof technique.
For instance, to prove that 
\begin{equation}
\label{eq:buwm}
\lower2pt\hbox{$
\lower8pt\hbox{$\includegraphics[height=.8cm]{graffles/buwmL.pdf}$}
=
\lower5pt\hbox{$\includegraphics[height=.6cm]{graffles/buwmR.pdf}$}
$}
\end{equation}
it is enough to recall equation \eqref{eq:wubm} and apply Proposition \ref{prop:involutionreverse}.

\medskip

Since the antipode is a morphism of white monoid (see \eqref{eq:antipodewunit} and \eqref{eq:antipodewmult}), it is an additive arrow. By induction and Lemma \ref{lem:compadd}, one can easily prove that all the morphisms in $\Frob{\AGtheoryF}$ are additive. As a consequence, all the results about $\ovee$, $\bot$, $\owedge$, $\top$, and $(-)^\circ$ proved for commutative monoids, still hold for abelian groups. Observe that these operators do \emph{not} form a Boolean algebra as $\ovee$ and $\owedge$ distribute over each other only laxly.
Other axioms of Boolean algebras that fail are $R\owedge R^\circ = \bot$ and $R\ovee R^\circ = \top$: to see this, it is enough to take $R=id$.
%
%


\paragraph{Antipodal arrows.} Since the white structure joins the properties of special Frobenius algebras, it induces a compact closed structure and a contravariant monoidal 2-functor $(-)^\ddag$ mapping every arrow $R$ in
\begin{equation*}
\lower2pt\hbox{$
R^\ddag ::=
\lower14pt\hbox{$\includegraphics[height=1.4cm]{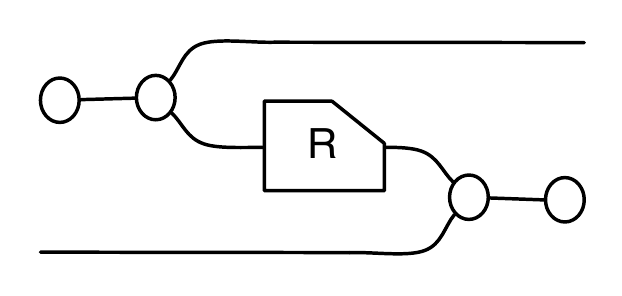}$}
$}\text{.}
\end{equation*}

In order to have that  $R^\ddag = R^\dag$, we need $R$ to be \emph{antipodal}, that is 
\begin{equation*}
\lower2pt\hbox{$
\lower8pt\hbox{$\includegraphics[height=.8cm]{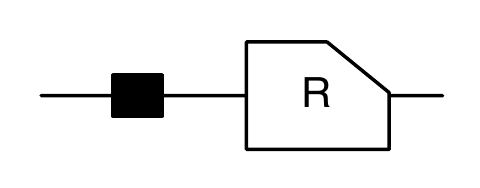}$}
\leq
\lower8pt\hbox{$\includegraphics[height=.8cm]{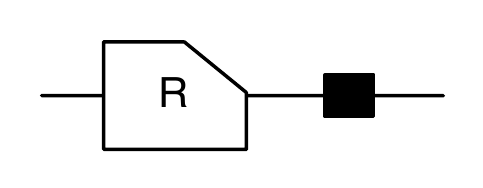}$}
$}
\end{equation*}
which entails
\begin{equation*}
\lower2pt\hbox{$
\lower8pt\hbox{$\includegraphics[height=.8cm]{graffles/antipodal1.pdf}$}
=
\lower8pt\hbox{$\includegraphics[height=.8cm]{graffles/antipodal2.pdf}$}
$}\text{.}
\end{equation*}
So $R$ is antipodal iff it is an homomorphism w.r.t. the antipode.

\begin{proposition}\label{prop:antipodalcc}
If $R$ is antipodal, then $R^\ddag = R^\dag$.
\end{proposition}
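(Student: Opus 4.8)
The plan is to reduce the claim to the single \emph{structural} identity
\[
R^\ddag \;=\; s_n \poi R^\dagger \poi s_m \qquad\text{for every } R\colon m\to n,
\]
where $s_k\colon k\to k$ denotes $k$ parallel copies of the antipode, and then to exploit antipodality of $R$ together with involutivity of the antipode to collapse the two occurrences of $s$. Recall that $R^\dagger$ is the transpose built from the \emph{black} (special Frobenius) compact closed structure, whereas $R^\ddag$ is, by Proposition~\ref{prop:whiteFrobenius}, the transpose built from the \emph{white} Frobenius structure; the only difference between the two is which cup and cap are used to bend the wires. So the whole proof turns on comparing the white and black cup/cap.

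First I would pin down that comparison. Writing $\eta\colon 0\to 2$, $\epsilon\colon 2\to 0$ for the black cup and cap ($\eta=\Bunit\poi\Bcomult$, $\epsilon=\Bmult\poi\Bcounit$) and $\eta_w,\epsilon_w$ for their white counterparts ($\eta_w=\Wunit\poi\Wcomult$, $\epsilon_w=\Wmult\poi\Wcounit$), the identities already established in this section — in particular \eqref{eq:antipodewunit} (the antipode fixes the white unit) and \eqref{eq:bwcc} (which trades a black (co)multiplication for a white one at the cost of an antipode) — should yield
\[
\eta_w \;=\; \eta\poi(\id_1\oplus s_1), \qquad \epsilon_w \;=\; (\id_1\oplus s_1)\poi\epsilon .
\]
Substituting these into the defining diagram of $R^\ddag$ and tracking the two inserted antipodes along the bent wires, the antipode coming from the cap lands on the output leg of $R$, so the inner cell becomes the black transpose of $R\poi s_n$, while the antipode coming from the cup survives on the outgoing $m$ wires. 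Using Lemma~\ref{lemma:dagger}(ii) and the fact recorded just above (in the definition of $\cgr[height=17pt]{sqantipode.pdf}$) that $s_k^\dagger=s_k$, this reads
\[
R^\ddag \;=\; (R\poi s_n)^\dagger \poi s_m \;=\; s_n\poi R^\dagger\poi s_m,
\]
which is the structural identity above, valid for \emph{arbitrary} $R$.

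It then remains to use the hypothesis. Antipodality of $R$ is exactly $s_m\poi R = R\poi s_n$; applying $(-)^\dagger$ and using Lemma~\ref{lemma:dagger}(ii) together with $s_k^\dagger=s_k$ gives $R^\dagger\poi s_m = s_n\poi R^\dagger$, i.e.\ $R^\dagger$ is antipodal as well. Hence
\[
R^\ddag \;=\; s_n\poi R^\dagger\poi s_m \;=\; s_n\poi s_n\poi R^\dagger \;=\; R^\dagger,
\]
the last step using that the antipode is an involution, $s_n\poi s_n=\id_n$ (the standard $-(-x)=x$ for the commutative and cocommutative Hopf structure, readily derived here). The main obstacle is the middle paragraph: establishing the white-versus-black cup/cap relationship and then correctly bookkeeping where the two antipodes end up after substitution into the transpose diagram — verifying that one antipode slides onto $R$'s output (producing $(R\poi s_n)^\dagger$) and the other onto the free output wires, rather than, say, collapsing prematurely. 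Once that identity is secured, the remaining argument is the short manipulation above.
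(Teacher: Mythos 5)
Your proof is correct; note, though, that the paper never actually proves this proposition---it is stated bare, the hypothesis ``antipodal'' having been introduced precisely so that the conclusion holds---so there is no official argument to compare yours against, and your write-up fills a genuine gap. Every ingredient you invoke is available at that point in the text: the white-versus-black cup/cap comparison (white cup $=$ black cup with an antipode on one leg, and dually for caps) is in essence the content of \eqref{eq:bwcc}, derived there from the quasi-Frobenius law \eqref{eq:quasifrob}; since both cups and caps are symmetric (the black comonoid is cocommutative, and the white comonoid, being the dagger of a commutative monoid, is too), the inserted antipode may be slid to whichever leg is convenient, which is exactly what your bookkeeping requires. Self-adjointness $s^\dagger = s$ of the antipode is established just before the boxed-antipode notation is introduced, and involutivity $s \poi s = \id$ follows because the antipode is a map (axioms \eqref{eq:antipodelax1} and \eqref{eq:antipodelax2}) and a comap (the derived injectivity and surjectivity), hence an isomorphism whose inverse is $s^\dagger = s$. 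Granting these, your structural identity $R^\ddag = (R \poi s_n)^\dagger \poi s_m = s_n \poi R^\dagger \poi s_m$ holds for arbitrary $R$ by Lemma~\ref{lemma:dagger}(ii)--(iii), antipodality of $R$ transported through $(-)^\dagger$ gives $R^\dagger \poi s_m = s_n \poi R^\dagger$, and the two antipodes then cancel, yielding $R^\ddag = R^\dagger$ as claimed.
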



\begin{proposition}
If $R$ is additive and antipodal, then the following hold.
\begin{itemize}
\item 
\begin{equation}
\label{eq:singlevaluedAG}
\lower2pt\hbox{$
\tag{Single Valued}
\lower12pt\hbox{$\includegraphics[height=1.2cm]{graffles/singlevaluedAG1.pdf}$}
\leq
\lower8pt\hbox{$\includegraphics[height=.8cm]{graffles/singlevaluedAG2.pdf}$}
\text{ iff }
\lower8pt\hbox{$\includegraphics[height=.8cm]{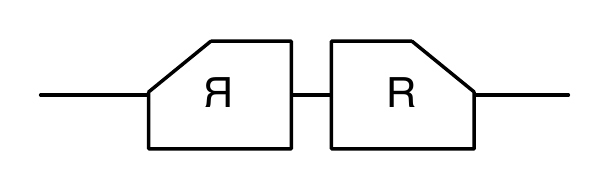}$}
\leq
\lower5pt\hbox{$\includegraphics[height=.6cm]{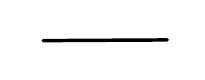}$}
\text{ iff }
\lower8pt\hbox{$\includegraphics[height=.8cm]{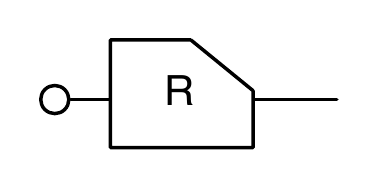}$}
\leq
\lower5pt\hbox{$\includegraphics[height=.6cm]{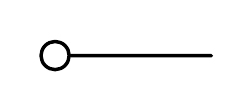}$}
$}
\end{equation}
\item 
\begin{equation}
\label{eq:totalAG}
\lower2pt\hbox{$
\tag{Total}
\lower5pt\hbox{$\includegraphics[height=.6cm]{graffles/totalAG1.pdf}$}
\leq
\lower7pt\hbox{$\includegraphics[height=.8cm]{graffles/totalAG2.pdf}$}
\text{ iff }
\lower5pt\hbox{$\includegraphics[height=.6cm]{graffles/idonly.pdf}$}
\leq
\lower7pt\hbox{$\includegraphics[height=.8cm]{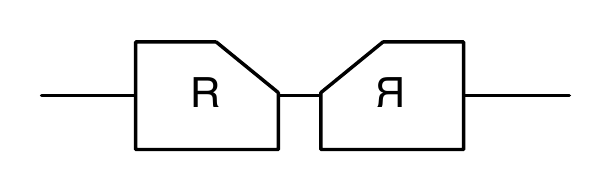}$}
\text{ iff }
\lower8pt\hbox{$\includegraphics[height=.8cm]{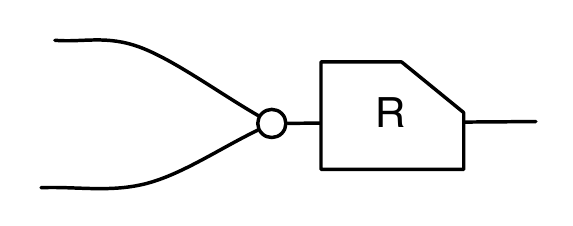}$}
\leq
\lower12pt\hbox{$\includegraphics[height=1.2cm]{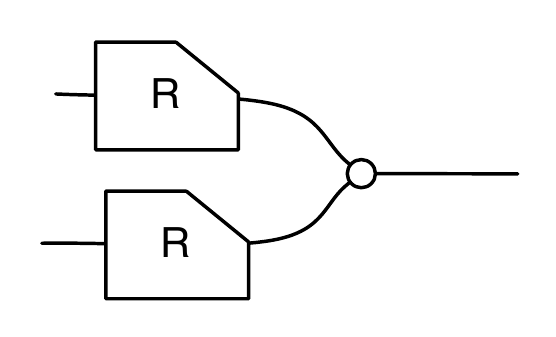}$}
$}
\end{equation}
\item
\begin{equation}
\label{eq:surjectiveAG}
\lower2pt\hbox{$
\tag{Surjective}
\lower5pt\hbox{$\includegraphics[height=.6cm]{graffles/surjectiveAG1.pdf}$}
\leq
\lower7pt\hbox{$\includegraphics[height=.8cm]{graffles/surjectiveAG2.pdf}$}
\text{ iff }
\lower5pt\hbox{$\includegraphics[height=.6cm]{graffles/idonly.pdf}$}
\leq
\lower7pt\hbox{$\includegraphics[height=.8cm]{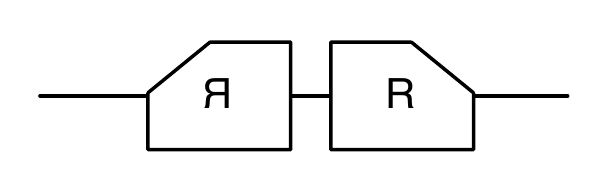}$}
\text{ iff }
\lower8pt\hbox{$\includegraphics[height=.8cm]{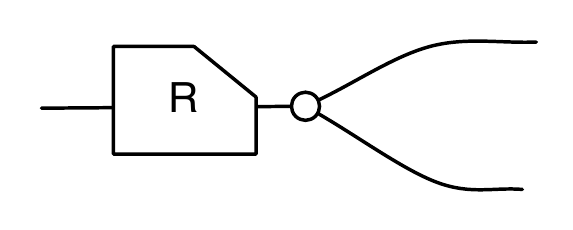}$}
\leq
\lower12pt\hbox{$\includegraphics[height=1.2cm]{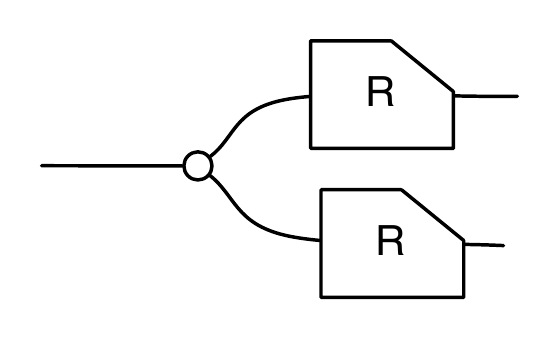}$}
$}
\end{equation}
\item Injective:
\begin{equation}
\label{eq:injectiveAG}
\lower2pt\hbox{$
\tag{Injective}
\lower12pt\hbox{$\includegraphics[height=1.2cm]{graffles/injectiveAG1.pdf}$}
\leq
\lower7pt\hbox{$\includegraphics[height=.8cm]{graffles/injectiveAG2.pdf}$}
\text{ iff }
\lower7pt\hbox{$\includegraphics[height=.8cm]{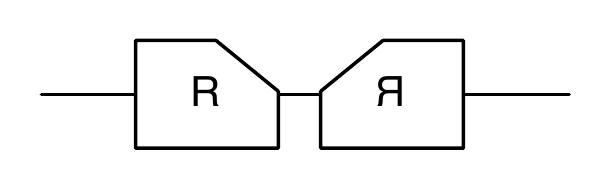}$}
\leq
\lower5pt\hbox{$\includegraphics[height=.6cm]{graffles/idonly.pdf}$}
\text{ iff }
\lower7pt\hbox{$\includegraphics[height=.8cm]{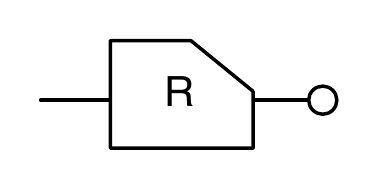}$}
\leq
\lower5pt\hbox{$\includegraphics[height=.6cm]{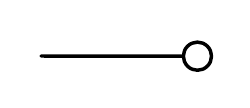}$}
$}
\end{equation}
\end{itemize}
\end{proposition}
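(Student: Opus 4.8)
The plan is to prove the four characterisations as essentially one argument, exploiting the two standing hypotheses — additivity and antipodality — to convert each property from a statement quantified over all inputs and outputs into a statement localised at the white unit, and then to cut the number of genuinely distinct cases in half using the $\dagger$-duality already available. I would take as the starting point of each chain the definitions \eqref{eq:sv}, \eqref{eq:tot}, \eqref{eq:inj}, \eqref{eq:sur} together with their adjunction reformulations \eqref{eq:adjcounit}, \eqref{eq:adjunit}, \eqref{eq:opadjcounit}, \eqref{eq:opasjunit} from Lemma \ref{lem:characterizationmap}.

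First I would record the structural facts that do the heavy lifting. Since $R$ is additive it is an oplax white (co)monoid homomorphism, and since it is antipodal it commutes with the antipode; together with the antipode laws \eqref{eq:antipodewunit} and \eqref{eq:antipodewmult} and the white Frobenius structure of Proposition \ref{prop:whiteFrobenius} (which supplies white cups and caps, hence the white transpose $(-)^\ddag$, equal to $(-)^\dag$ on antipodal arrows by Proposition \ref{prop:antipodalcc}), this lets me freely slide $R$ and $R^\dag$ past white multiplications, units and bends. The crux is the single-valued case \eqref{eq:singlevaluedAG}. Starting from \eqref{eq:sv}, I would use additivity and antipodality to ``subtract'' the two copies of $R$ produced by the black comultiplication: feeding the two outputs through a white multiplication after negating one of them, additivity collapses a pair of outputs over a common input into a single output over the white unit. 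This is exactly the diagrammatic content turning the two-output comparison of \eqref{eq:sv} into the middle inequality (the relevant composite below the identity), and bending this along the white cup then yields the third, transposed form. The converse directions reuse the same moves together with the fact that the white unit relates to itself, which is the second defining inequation of additivity.

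The total case \eqref{eq:totalAG} is handled by a dual manipulation: here the reformulation from Lemma \ref{lem:characterizationmap} is $\id_m \le R\poi R^\dag$, and since $R\poi R^\dag$ is again additive and antipodal by Lemma \ref{lem:compadd} and Lemma \ref{lem:dagadd}, I would rewrite this endo-inequality using the white (co)multiplication and the antipode laws to obtain the two remaining displayed forms. Finally, I would obtain the injective and surjective chains \eqref{eq:injectiveAG}, \eqref{eq:surjectiveAG} for free by $\dagger$-duality: because $\dagger$ is a $2$-functor (Lemma \ref{lemma:dagger}) preserving additivity (Lemma \ref{lem:dagadd}) and antipodality, and because injectivity (resp. surjectivity) of $R$ is single-valuedness (resp. totality) of $R^\dag$ — the corollary observation in Lemma \ref{lem:characterizationmap} — applying the already-proved chains to $R^\dag$ and dualising each of their three forms with $\dagger$ produces \eqref{eq:injectiveAG} and \eqref{eq:surjectiveAG}.

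The main obstacle I anticipate is the single-valued reduction itself: one must be careful that the localisation to the white unit is a genuine equivalence and not merely an implication, which is precisely where both hypotheses are needed — additivity to manufacture the difference of the two outputs and to supply the unit-to-unit instance, and antipodality to perform the negation. The direction-by-direction bookkeeping of the graphical rewrites, ensuring each ``iff'' is tight rather than one-sided, is where the real care lies; the remaining manipulations for totality and the $\dagger$-transported injective and surjective cases are then routine.
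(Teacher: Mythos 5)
Your proposal is sound and shares the paper's skeleton --- Lemma~\ref{lem:characterizationmap} for the first \emph{iff} in each row, plus the three structural facts that make the white side usable (additivity gives the oplax white homomorphism properties via Lemma~\ref{lem:dagadd}, antipodality gives $R^\ddag=R^\dag$ via Proposition~\ref{prop:antipodalcc}, and Proposition~\ref{prop:whiteFrobenius} gives the white Frobenius structure) --- but it organises the second--third equivalences genuinely differently. The paper handles all four rows at once by a meta-argument: re-run the proofs of Lemma~\ref{lem:characterizationmap} with colours exchanged and inequalities reversed, which pairs \eqref{eq:singlevaluedAG} with \eqref{eq:surjectiveAG} and \eqref{eq:totalAG} with \eqref{eq:injectiveAG}; every ingredient of those proofs has a colour-dual, order-reversed counterpart under the two hypotheses, so nothing new is computed. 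You instead prove the single-valued and total rows by a direct diagrammatic ``subtraction'' argument and transport them to the injective and surjective rows along $\dagger$, pairing \eqref{eq:singlevaluedAG} with \eqref{eq:injectiveAG} and \eqref{eq:totalAG} with \eqref{eq:surjectiveAG}. Your transport step is legitimate: $\dagger$ is an involutive, order-preserving 2-functor (Lemma~\ref{lemma:dagger}) that preserves additivity (Lemma~\ref{lem:dagadd}) and antipodality (because the antipode equals its own dagger, as shown in Section~\ref{sec:ag}), and daggering the three forms for $R^\dag$ lands exactly on the three displayed forms for $R$ --- this is the same dagger trick the paper itself uses inside Lemma~\ref{lem:characterizationmap}. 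What your route costs is that the subtraction argument must actually be carried out, and when formalised it largely reconstructs the colour-dual proofs, since the white cup of $\Frob{\AGtheoryF}$ pairs an element with its negative, so ``bending along the white cup'' is precisely forming $R^\ddag=R^\dag$. One bookkeeping correction: the has-zero half of additivity (applied to $R^\dag$) is what yields the second-to-third direction --- precompose the counit form $R^\dag\poi R\leq\id$ with the white unit and insert $R^\dag$ by has-zero --- whereas the antipode/Hopf-law subtraction yields third-to-second; your sketch assigns these roles the other way around, but both ingredients are present, so the argument closes.
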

\begin{proof}
Since $R$ is antipodal, by Proposition \ref{prop:antipodalcc}, $R^\dag = R^\ddag$: we can refer to the black and white compact close structure as the same thing.

The axioms of cartesian bicategories imply that $R$ is a \emph{lax black} comonoid and lax monoid homomorphism. This fact, together with the black Frobenius structure was used in Lemma \ref{lem:characterizationmap} to establish the \emph{iff} between the first two columns in the four rows above.

Since $R$ is additive and has zero, it is a \emph{oplax white} monoid (by definition) and an oplax white comonoid homomorphism (Lemma \ref{lem:dagadd}). 

To prove the the \emph{iff} between the third and the second column, it is enough to repeat exactly the same proofs of Lemma \ref{lem:characterizationmap} by exchanging colours and the direction of $\leq$ and $\geq$. For instance, the \emph{iff} between the second and the third column in \eqref{eq:singlevaluedAG} can be retrieved by the \emph{iff} between the second and the first column in \eqref{eq:surjectiveAG}.
\end{proof}

\begin{proposition}
All arrows in $\Frob{\AGtheoryF}$ are additive and antipodal.
\end{proposition}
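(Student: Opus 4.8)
The plan is to prove both properties by a single structural induction on the arrows of $\Frob{\AGtheoryF}$, exploiting the fact that \emph{both} classes are closed under $\poi$ and $\tns$, so that only the generators need to be inspected. For additivity this closure is exactly Lemma~\ref{lem:compadd}. For antipodality it is an immediate calculation: writing $a$ for the antipode and $a^{\tns k}$ for its $k$-fold monoidal power, an arrow $R\colon m\to n$ is antipodal precisely when $a^{\tns m}\poi R = R\poi a^{\tns n}$. If $R\colon m\to n$ and $S\colon n\to p$ are antipodal, then
\[
a^{\tns m}\poi(R\poi S)=(a^{\tns m}\poi R)\poi S=(R\poi a^{\tns n})\poi S=R\poi(a^{\tns n}\poi S)=R\poi(S\poi a^{\tns p})=(R\poi S)\poi a^{\tns p},
\]
and a symmetric use of the interchange law handles $R\tns S$. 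Thus it remains only to treat the generators $\Bmult,\Bunit,\Bcomult,\Bcounit,\Wmult,\Wunit$ and the antipode $a$.

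For \textbf{additivity}, the black Frobenius generators and the white monoid generators are dispatched verbatim by the case analysis of Proposition~\ref{prop:monoidadditive}, whose derivations use only laws already present in $\Frob{\CMtheoryF}$ and hence still valid here. The antipode is additive because \eqref{eq:antipodewunit} and \eqref{eq:antipodewmult} state precisely that it preserves the white unit and commutes with white multiplication, which are (in fact as equalities) the two defining inequations of an additive $1\to1$ arrow.

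For \textbf{antipodality}, the white monoid generators are antipodal directly by \eqref{eq:antipodewunit} and \eqref{eq:antipodewmult}, which assert that $a$ slides across $\Wunit$ and $\Wmult$, and $a$ is trivially antipodal, commuting with itself. For the black structure, commutation of $a$ with the black comultiplication is exactly \eqref{eq:bwcc}; commutation with the black counit follows by combining the lax comonoid homomorphism inequation \eqref{eq:CBCH2}, which gives $a\poi\Bcounit\le\Bcounit$, with the totality \eqref{eq:tot} of $a$ (valid by \eqref{eq:antipodelax1}--\eqref{eq:antipodelax2}), which gives the reverse inequation $\Bcounit\le a\poi\Bcounit$, hence equality. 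The two black-\emph{monoid} cases for $\Bmult$ and $\Bunit$ are then obtained from these black-\emph{comonoid} cases by applying the contravariant $2$-functor $\dagger$ of Lemma~\ref{lemma:dagger}, using that the antipode is self-dual, $a^\dagger=a$, as established just above: daggering $a\poi\Bcomult=\Bcomult\poi(a\tns a)$ yields $(a\tns a)\poi\Bmult=\Bmult\poi a$, and daggering $a\poi\Bcounit=\Bcounit$ yields $\Bunit\poi a=\Bunit$.

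The main obstacle is the antipodality of the black generators: unlike the white monoid case it is not handed to us by a defining inequation of the theory, but must be assembled from the interaction law \eqref{eq:bwcc}, the lax-homomorphism/totality apparatus, and the self-duality $a^\dagger=a$ of the antipode. Once those black cases are in place, the remainder is bookkeeping: the inductive step is routine (as displayed above), and the additivity half is essentially Proposition~\ref{prop:monoidadditive} re-run with the single extra generator $a$, already shown additive.
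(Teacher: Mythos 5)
Your proof is correct and follows essentially the same route as the paper's, which compresses the entire argument into three sentences: structural induction, with Lemma~\ref{lem:compadd} (extended to antipodality) covering closure under $\poi$ and $\tns$, the case analysis of Proposition~\ref{prop:monoidadditive} covering additivity of the black and white generators, and the observation that every generator is an antipode homomorphism covering the rest. Your write-up supplies the details the paper leaves implicit, including the antipode's own additivity via \eqref{eq:antipodewunit} and \eqref{eq:antipodewmult}, and the explicit interchange-law computation for the inductive step.

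One cross-reference should be repaired, though the underlying step is sound. Equation~\eqref{eq:bwcc} is not the statement that $a$ commutes with the black comultiplication: it is the law, derived from \eqref{eq:quasifrob}, relating the black and white self-dual compact structures (it is what later makes $R^\ddag = R^\dag$ work for antipodal $R$). The commutation $a \poi \dup = \dup \poi (a \tns a)$ should instead be justified exactly as you justify the counit case: the lax inequation \eqref{eq:CBCH1}, valid for every arrow, gives $a \poi \dup \leq \dup \poi (a \tns a)$, while single-valuedness of the antipode --- part of the axioms \eqref{eq:antipodelax1}--\eqref{eq:antipodelax2}, which by the paper's own reading make $a$ a map, i.e.\ literally a black comonoid homomorphism --- gives the reverse inequation \eqref{eq:sv}. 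So both black-comonoid cases are immediate from the axioms of $\AGtheoryF$ by the very technique you already deploy; only the citation, not the mathematics, needs fixing.
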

\begin{proof}
By induction. The cases are given by Lemma \ref{lem:compadd}, and can be easily extended to take into account antipodality. For the base cases, one can just use those of Proposition \ref{prop:monoidadditive} and observe that every operation of the theory is an antipode homomorphism.
\end{proof}

\paragraph{Cancellation.} We conclude this section by introducing the cancellation proof technique that will be useful for proving properties about modules in the next section.

\begin{proposition}\label{prop:cancellation}
Let $R$ and $S$ be arrows in $\Frob{\AGtheoryF}$ and $X$ be a map. 
\begin{equation*}
\text{If } \lower15pt\hbox{$\includegraphics[height=1.5cm]{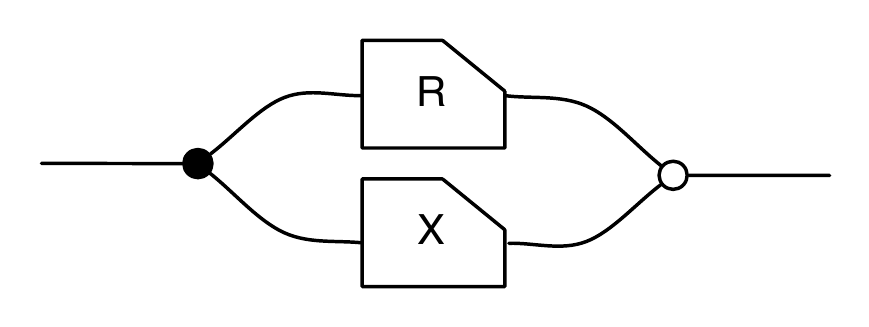}$}
=
\lower15pt\hbox{$\includegraphics[height=1.5cm]{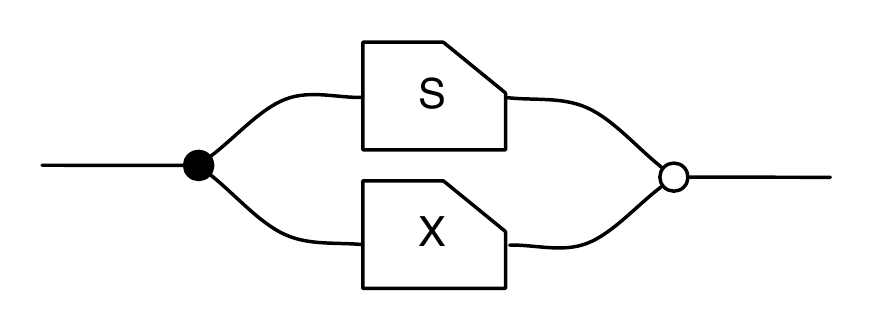}$} 
\text{, then }
\lower7pt\hbox{$\includegraphics[height=.8cm]{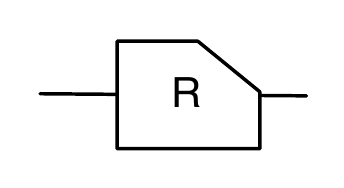}$}
=
\lower7pt\hbox{$\includegraphics[height=.8cm]{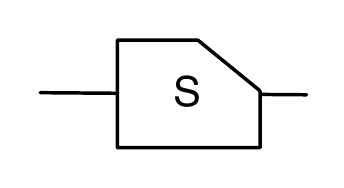}$}\text{.}
\end{equation*}
\end{proposition}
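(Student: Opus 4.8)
The plan is to read these diagrams as expressing \emph{additive cancellation}. Writing $A\boxplus B$ for the arrow $\dup_m ; (A\oplus B) ; (\text{white multiplication})_n$ obtained by copying the input with the black comultiplication $\dup$, applying $A$ and $B$ in parallel, and recombining the outputs with the white multiplication, the hypothesis reads $R\boxplus X = S\boxplus X$ and the conclusion is $R=S$. The strategy is the usual group-theoretic one: combine both sides with the additive inverse $\bar X := X ; a$ of $X$, where $a$ is the antipode, and show that $X\boxplus\bar X$ is the neutral element of $\boxplus$.

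First I would record the bookkeeping facts about $\boxplus$. It is associative (coassociativity of the black comonoid together with associativity of the white monoid) and unital, with unit the zero arrow $Z := \dis_m ; (\text{white unit})_n$: indeed $\dup_m ; (R\oplus Z) ; (\text{white multiplication})_n = R$ by the counit law for $\dup$ and the unit law for the white monoid. Next, since $X$ is a map it is by definition a black comonoid homomorphism, so $\dup_m ; (X\oplus X) = X ; \dup_n$ and $X ; \dis_n = \dis_m$; and $\bar X = X ; a$ is again a map, being a composite of maps (the antipode is single valued and total by \eqref{eq:antipodelax1}--\eqref{eq:antipodelax2}, cf.\ Lemma~\ref{lem:dagger}).

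The crux is the identity $X\boxplus\bar X = Z$. Using the comonoid-homomorphism equation for $X$ to pull it through the copy, $X\boxplus\bar X$ equals $X$ postcomposed with the arrow that copies via $\dup_n$, negates one branch with the antipode $a$, and adds with the white multiplication. That inner arrow is precisely the left-hand side of the Hopf law \eqref{eq:hopf}, hence equals $\dis_n$ followed by the white unit; therefore $X\boxplus\bar X = X ; \dis_n ; (\text{white unit})_n = \dis_m ; (\text{white unit})_n = Z$, the last equality using $X ; \dis_n = \dis_m$. The cancellation is then immediate:
\[
R\boxplus X = S\boxplus X \;\Longrightarrow\; R\boxplus(X\boxplus\bar X) = S\boxplus(X\boxplus\bar X) \;\Longrightarrow\; R\boxplus Z = S\boxplus Z \;\Longrightarrow\; R = S,
\]
where the first implication combines each side with $\bar X$, the second substitutes $X\boxplus\bar X = Z$ (using associativity), and the third is unitality.

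The hard part will be the equality $X\boxplus\bar X = Z$, and it is exactly the step that forces $X$ to be a map rather than an arbitrary arrow. For a general $R$ the comonoid-homomorphism law \eqref{eq:CBCH1} holds only laxly, so one gets merely $\dup_m ; (R\oplus R) \geq R ; \dup_n$, and the Hopf manipulation would then deliver only an inequality $R\boxplus\bar R \geq Z$, which is too weak to cancel. Everything else (associativity, unitality, the two comonoid-homomorphism equations) is routine diagrammatic rewriting with the (co)associativity and (co)unit laws already available in $\Frob{\AGtheoryF}$.
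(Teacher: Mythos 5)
Your proposal is correct and is essentially the paper's own proof: the paper's argument is exactly the diagrammatic chain $R = R \boxplus Z = R \boxplus (X \boxplus \bar X) = (R\boxplus X)\boxplus \bar X = (S\boxplus X)\boxplus \bar X = \dots = S$, relying on the Hopf law \eqref{eq:hopf} and on the fact that a map is a \emph{strict} black comonoid homomorphism, which are precisely your key steps. Your $\boxplus$ is just the paper's white convolution $\ovee$ with unit $\bot_{m,n}$, so your lemma-based packaging (associativity, unitality, $X\boxplus\bar X = Z$) is a tidier presentation of the same argument rather than a different route.
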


\begin{proof}
\begin{multline*}
\lower7pt\hbox{$\includegraphics[height=.8cm]{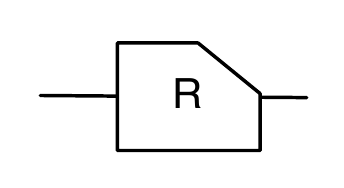}$}
=
\lower11pt\hbox{$\includegraphics[height=1.1cm]{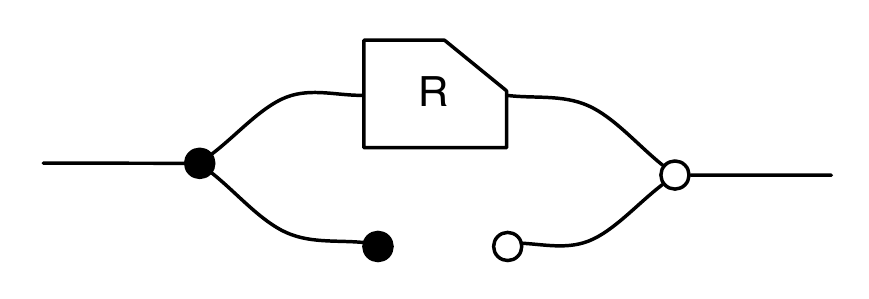}$}
=
\lower12pt\hbox{$\includegraphics[height=1.2cm]{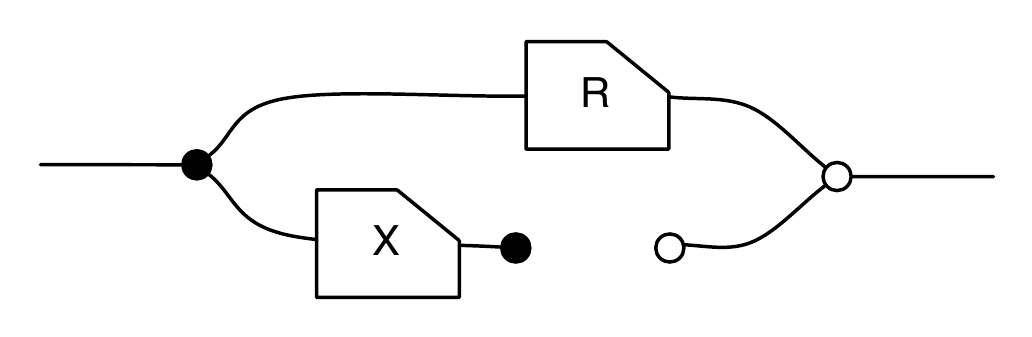}$}
=
\lower13pt\hbox{$\includegraphics[height=1.3cm]{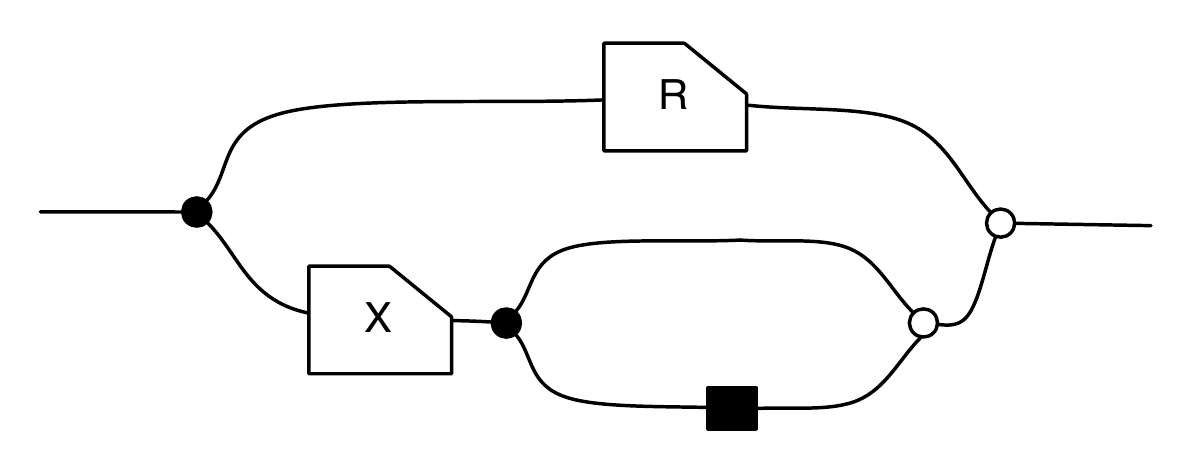}$}
\\
=
\lower15pt\hbox{$\includegraphics[height=1.5cm]{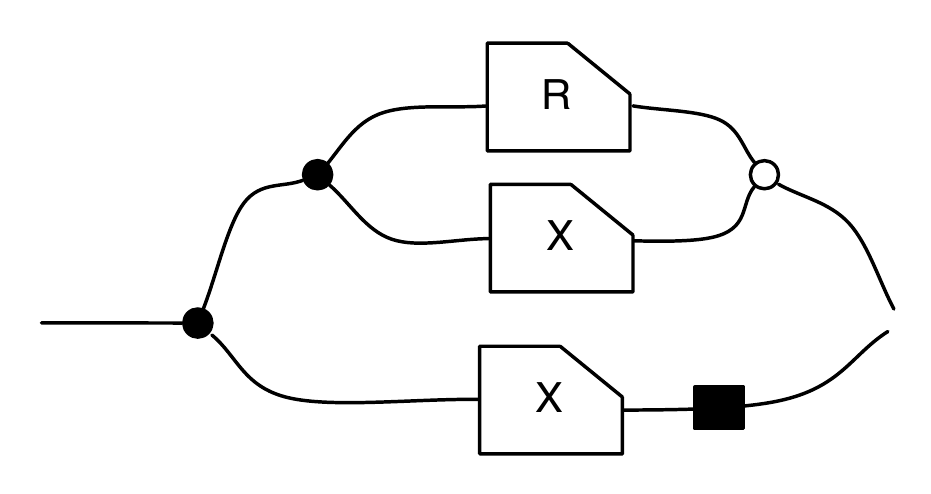}$}
=
\lower15pt\hbox{$\includegraphics[height=1.5cm]{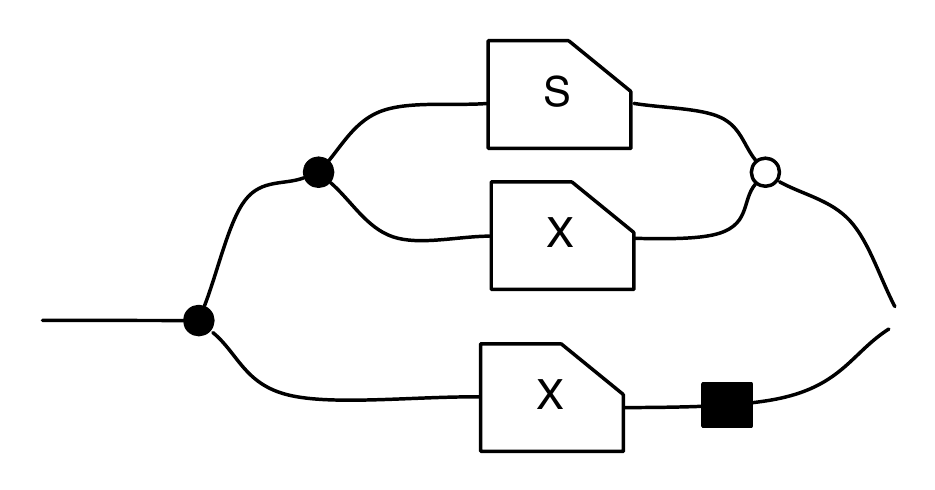}$}
= \dots =
\lower7pt\hbox{$\includegraphics[height=.8cm]{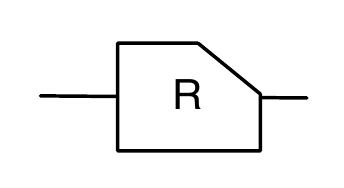}$}.
\end{multline*}
\end{proof}

\section{The theory of $\rring{R}$-Modules}\label{sec:modules}
Let $\rring{R}$ be a ring. In order to avoid distinguishing in between left and right modules, we assume $\rring{R}$ to be commutative. We leave as future work to investigate the case of non commutative rings, which could prove interesting  for the algebra of linear relations.

\bigskip

All we have to do is to follow the standard recipe of extending the theory of abelian groups with a scalar operator $\cgr[height=20pt]{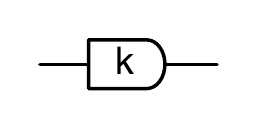}\colon 1\to 1$ for   each $k\in \rring{R}$, impose the usual four axioms for modules and, additionally,  require that scalars are maps. The inequations (of the corresponding SMIT) are summarised in Figure \ref{fig:summaryModules}. We call the underlying Frobenius theory as $\ModRtheoryF$ and the freely generated AR category as $\Frob{\ModRtheoryF}$.

\begin{figure}
\begin{center}\includegraphics[height=13cm]{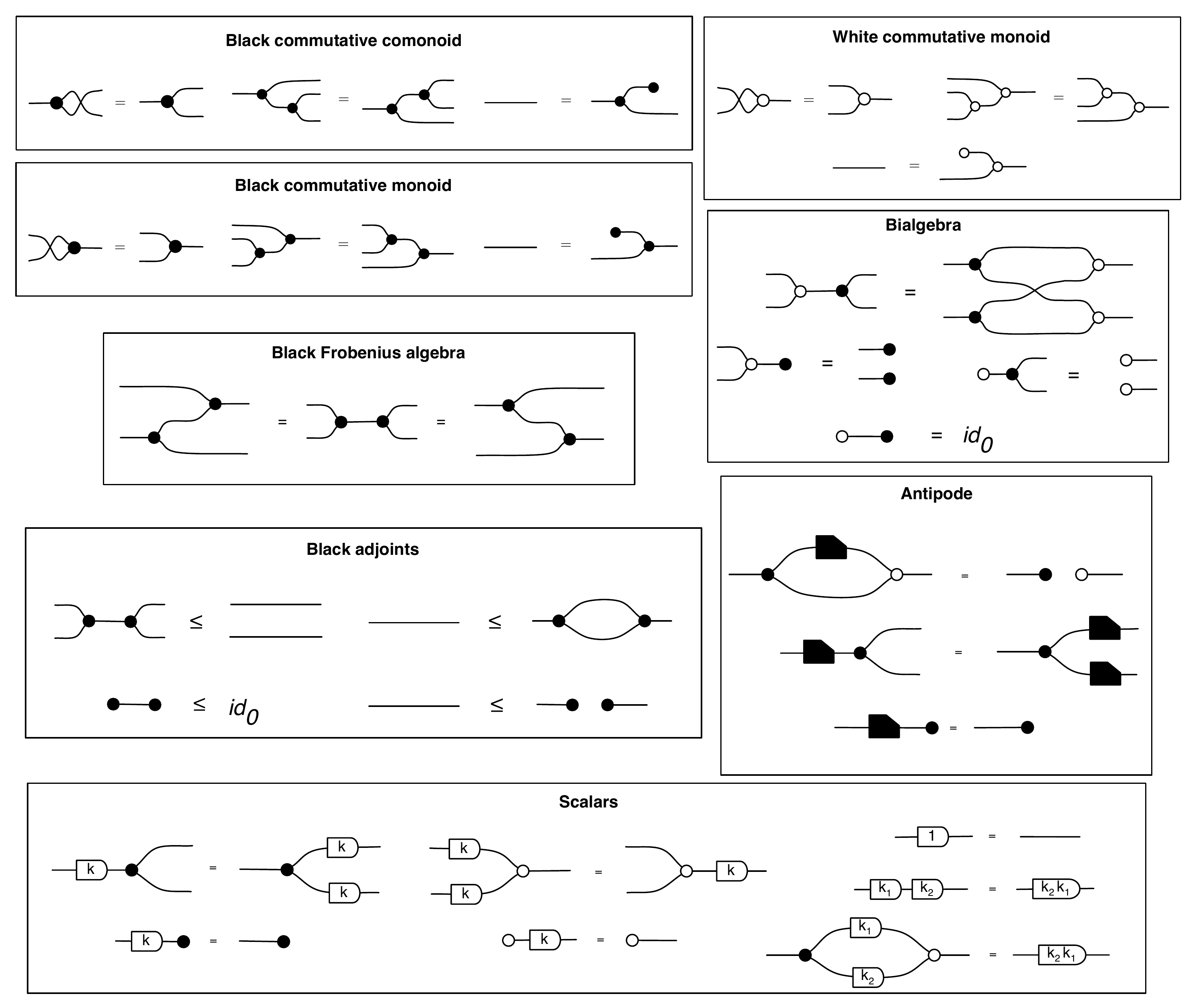}\end{center}
\caption{The SMIT corresponding to the Frobenius theory of $\rring{R}$-Modules}\label{fig:summaryModules}
\end{figure}

\medskip

At first, we observe that the following two important equalities hold.
\begin{multicols}{2}
\noindent
\begin{equation}\label{eq:zero}
\lower2pt\hbox{$
\lower7pt\hbox{$\includegraphics[height=.8cm]{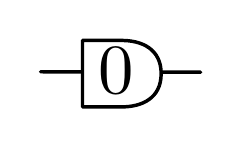}$}
=
\lower5pt\hbox{$\includegraphics[height=.6cm]{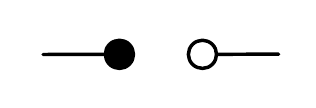}$}
$}
\end{equation}
\begin{equation}
\lower7pt\hbox{$\includegraphics[height=.8cm]{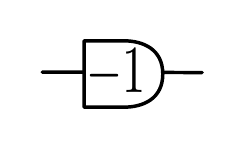}$}
=
\lower5pt\hbox{$\includegraphics[height=.6cm]{graffles/antipode.pdf}$}
\end{equation}
\end{multicols}
They can be proved by using Proposition \ref{prop:cancellation} and the two derivations below.

\begin{equation*}
\lower2pt\hbox{$
\lower15pt\hbox{$\includegraphics[height=1.1cm]{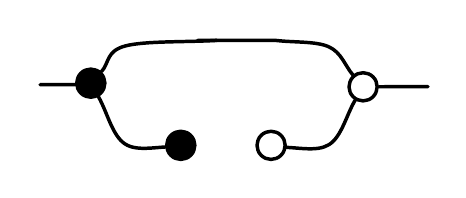}$}
=
\lower5pt\hbox{$\includegraphics[height=.6cm]{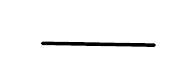}$}
=
\lower8pt\hbox{$\includegraphics[height=.8cm]{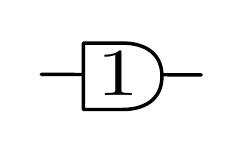}$}
=
\lower14pt\hbox{$\includegraphics[height=1.2cm]{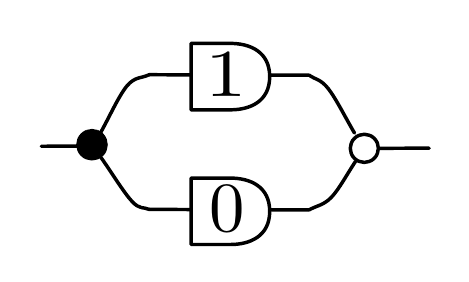}$}
=
\lower14pt\hbox{$\includegraphics[height=1.1cm]{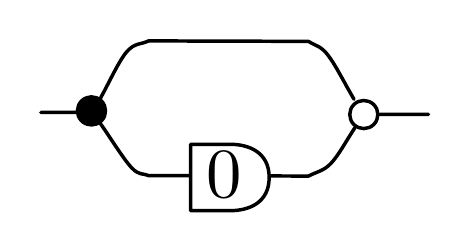}$}
$}
\end{equation*}

\begin{equation*}
\lower2pt\hbox{$
\lower15pt\hbox{$\includegraphics[height=1.2cm]{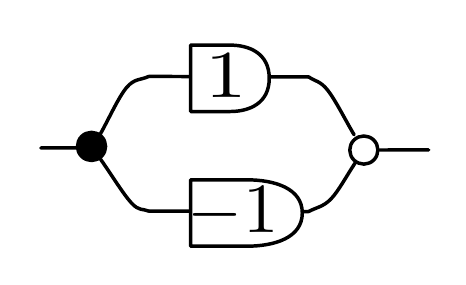}$}
=
\lower8pt\hbox{$\includegraphics[height=.8cm]{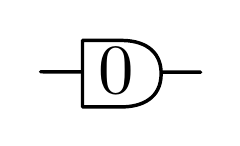}$}
=
\lower5pt\hbox{$\includegraphics[height=.6cm]{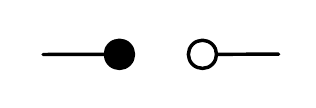}$}
=
\lower13pt\hbox{$\includegraphics[height=1.1cm]{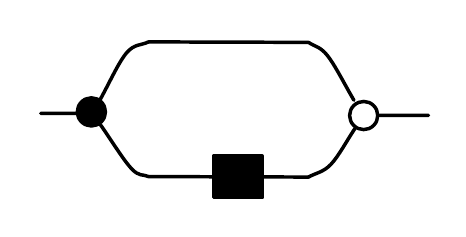}$}
=
\lower15pt\hbox{$\includegraphics[height=1.2cm]{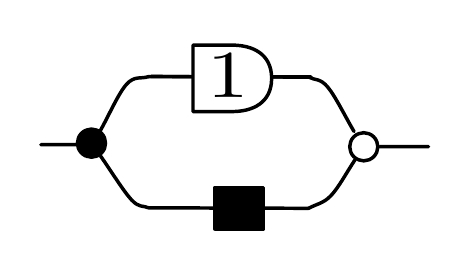}$}
$}
\end{equation*}

These two simple facts impose that the Frobenius category for the theory of abelian groups is isomorphic to the one for the theory of modules over the ring of integers $\rring{Z}$.

\begin{theorem}
$\Frob{\AGtheory} \cong \Frob{\ModtheoryF{\rring{Z}}}$ .
\end{theorem}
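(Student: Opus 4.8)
The plan is to exhibit mutually inverse functors of ordered PROPs between $\Frob{\AGtheoryF}$, the Carboni--Walters category of the Frobenius theory of abelian groups (the statement's $\AGtheory$, studied above), and $\Frob{\ModtheoryF{\rring{Z}}}$; the only real content is that scalar multiplication by an integer is \emph{definable} from the abelian-group structure, and conversely is \emph{forced} to be so by the module axioms.

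First I would build a functor $\funF \colon \Frob{\AGtheoryF} \to \Frob{\ModtheoryF{\rring{Z}}}$. Every generator of $\AGtheoryF$ --- the white monoid, the black Frobenius structure, and the antipode $\cgr[height=13pt]{antipode.pdf}$ --- is inherited as a generator of $\ModtheoryF{\rring{Z}}$, and every defining inequation of $\AGtheoryF$ occurs among those of Figure~\ref{fig:summaryModules}. By the universal property of the freely generated ordered PROP, the assignment sending each generator to its namesake extends uniquely to an identity-on-objects cartesian bifunctor $\funF$, which in particular preserves the white structure and hence the white convolution $\ovee$.

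The substance lies in the reverse functor $\funG \colon \Frob{\ModtheoryF{\rring{Z}}} \to \Frob{\AGtheoryF}$. Its black-Frobenius, white-monoid and antipode generators are sent to their namesakes; the scalars $\cgr[height=13pt]{scalark.pdf}\colon 1\to 1$ must be sent to terms built from the abelian-group structure, which I would define by induction on $|k|$: put $\funG(\cgr[height=13pt]{scalark.pdf}) \Defeq \dis_1\poi\cgr[height=13pt]{Wunit.pdf}$ for $k=0$ (the zero arrow of~\eqref{eq:zero}), $\funG(\cgr[height=13pt]{scalark.pdf}) \Defeq \id_1$ for $k=1$, $\funG(\text{scalar }k) \Defeq \funG(\text{scalar }(k{-}1)) \ovee \id_1$ for $k\geq 2$ (the ``copy-then-add'' term built from the white structure, which is Frobenius by Proposition~\ref{prop:whiteFrobenius}), and $\funG(\text{scalar }(-k)) \Defeq \funG(\text{scalar }k)\poi\cgr[height=13pt]{antipode.pdf}$ for $k>0$. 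To see $\funG$ is well defined I must check that these images satisfy the inequations of Figure~\ref{fig:summaryModules}: the abelian-group inequations are inherited from $\Frob{\AGtheoryF}$; each image is a map, since $\cgr[height=13pt]{Wmult.pdf}$, $\cgr[height=13pt]{Wunit.pdf}$ and $\cgr[height=13pt]{antipode.pdf}$ are maps and maps are closed under $\poi$ and $\tns$; the unit axiom is the base case $k=1$; and the distributivity and associativity axioms reduce to the arithmetic identities $(k{+}l)x = kx{+}lx$, $k(x{+}y)=kx{+}ky$ and $(kl)x=k(lx)$, which follow by induction on $|k|$ using that every arrow of $\Frob{\AGtheoryF}$ is additive and antipodal, together with Proposition~\ref{prop:cancellation}.

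Finally I would verify $\funG\funF = \id$ and $\funF\funG = \id$. Both fix the object $1$, so it suffices to compare them on generators. Since $\funF$ is the inclusion of generators and $\funG$ is the identity on the abelian-group generators, $\funG\funF$ is the identity on all generators, whence $\funG\funF = \id$. For $\funF\funG = \id$ the only nontrivial point is that $\funF(\funG(\cgr[height=13pt]{scalark.pdf})) = \cgr[height=13pt]{scalark.pdf}$ in $\Frob{\ModtheoryF{\rring{Z}}}$, i.e.\ that each integer scalar coincides with its copy-then-add term (with antipode for negatives). This is a final induction on $|k|$ whose base cases $k=0$ and $k=-1$ are exactly equation~\eqref{eq:zero} and the displayed identity $\cgr[height=13pt]{scalar-1.pdf}=\cgr[height=13pt]{antipode.pdf}$, both established earlier via Proposition~\ref{prop:cancellation}, and whose inductive step $\text{scalar }(k{+}1) = \text{scalar }k \ovee \id_1$ is the ring-addition distributivity axiom at $l=1$ combined with the unit axiom (and $\funF$ preserves $\ovee$). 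I expect this definitional-equivalence induction --- showing that the module axioms force every integer scalar to equal its definable counterpart --- to be the main obstacle; everything else is either freeness or the closure of maps under composition, and with Proposition~\ref{prop:cancellation} supplying the base cases and additivity the inductive step, the argument goes through routinely.
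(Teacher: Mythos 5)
Your overall architecture---two functors induced by freeness, with each integer scalar of $\ModtheoryF{\rring{Z}}$ sent to a term definable from the abelian-group structure, and the module axioms forcing each scalar to equal that term---is exactly the fleshing-out of the argument the paper leaves implicit: the paper only derives \eqref{eq:zero} and its unlabelled companion (scalar $-1$ equals the antipode) via Proposition~\ref{prop:cancellation}, then asserts that these facts ``impose'' the isomorphism. However, your definition of $\funG$ on scalars contains a genuine error, and it is fatal rather than cosmetic. You set $\funG(\text{scalar }k)\Defeq\funG(\text{scalar }(k-1))\ovee\id_1$, where $\ovee$ is the paper's \emph{white convolution}: white \emph{co}multiplication, then the two arrows in parallel, then white multiplication. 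But $\ovee$ is idempotent in $\Frob{\AGtheoryF}$ (the idempotency proof for $\Frob{\CMtheoryF}$ uses only additivity of all arrows, Proposition~\ref{prop:monoidadditive}, and the paper notes that all such results carry over to abelian groups), so $\funG(\text{scalar }2)=\id_1\ovee\id_1=\id_1$ and, inductively, $\funG(\text{scalar }k)=\id_1$ for every $k\geq 1$. Consequently $\funG$ is not well defined: the ring-addition axiom of Figure~\ref{fig:summaryModules} diagrammatically says that scalar $(k+l)$ equals the \emph{black} comultiplication $\dup_1$, followed by scalar $k$ tensor scalar $l$, followed by the white multiplication, and this fails for your images, since in $\Frob{\AGtheoryF}$ the doubling term $\dup_1\poi\Wmult$ is distinct from $\id_1$ (interpret in the abelian group $\rring{Z}$). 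The verification $\funF\funG=\id$ fails for the same reason, because scalar $2\neq\id_1$ in $\Frob{\ModtheoryF{\rring{Z}}}$. Your inductive step likewise misidentifies the ring-addition axiom at $l=1$ with a white-convolution identity; the two differ precisely in which comultiplication does the copying.

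The fix is small but essential: the ``copy'' in copy-then-add must be the black (diagonal) comultiplication, not the white one. Define $\funG(\text{scalar }k)\Defeq\dup_1\poi\bigl(\funG(\text{scalar }(k-1))\tns\id_1\bigr)\poi\Wmult$ for $k\geq 2$, keeping your clauses for $k\in\{0,1\}$ and for negatives. Then the inductive step of your final verification is literally the ring-addition axiom at $l=1$ combined with the unit axiom (you no longer need $\funF$ to preserve $\ovee$; it preserves $\dup$ and the white monoid because it is a cartesian bifunctor fixing the generators), well-definedness goes through by the additivity, antipodality and cancellation arguments you cite, and the remainder of your proof---base cases \eqref{eq:zero} and scalar $(-1)=$ antipode, composites checked on generators---is sound and agrees in spirit with the paper's terse justification.
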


Observe that every scalar is -- by the (in)equations in Figure \ref{fig:summaryModules} -- total and single valued. Usually, they are not injective and surjective. However if $\rring{R}$ is a \emph{field}, every \emph{non-zero} scalar $k$ is both injective and surjective:

\begin{equation}\label{eq:inversescalarfield}
\lower2pt\hbox{$
\lower8pt\hbox{$\includegraphics[height=.8cm]{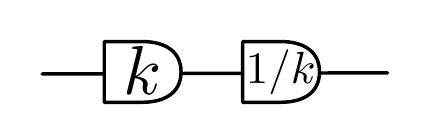}$}
=
\lower5pt\hbox{$\includegraphics[height=.6cm]{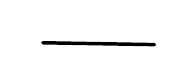}$}
=
\lower8pt\hbox{$\includegraphics[height=.8cm]{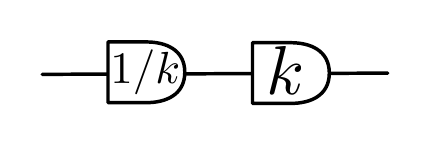}$}\text{.}
$}
\end{equation}
By Corollary \ref{cor:inverse} this means that 

\begin{equation}\label{eq:inversefield}
\lower2pt\hbox{$
\lower8pt\hbox{$\includegraphics[height=.8cm]{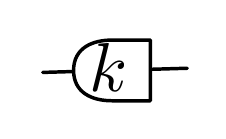}$}
=
\lower8pt\hbox{$\includegraphics[height=.8cm]{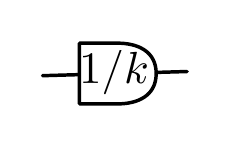}$}\text{.}
$}
\end{equation}

This is what happens in the category $\cat{IH}_{\rring{R}}$ of \emph{Interacting Hopf algebras} over $\rring{R}$ \cite{interactinghopf}.

\begin{theorem} If $\rring{R}$ is a field, then $\cat{IH}_{\rring{R}} \cong \Frob{\ModRtheoryF}$. 
\end{theorem}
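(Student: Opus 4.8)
The plan is to prove the isomorphism by comparing presentations. Both $\cat{IH}_{\rring{R}}$ and $\Frob{\ModRtheoryF}$ are freely generated by SMITs over essentially the same signature --- a ``white'' and a ``black'' commutative monoid/comonoid structure together with one scalar generator $\rring{R} \ni k$ --- so it suffices to set up a bijective correspondence between the generators under which the defining (in)equations of each presentation become derivable in the other. Since a prop morphism is determined by its action on generators, this correspondence will then assemble into mutually inverse functors $\Phi\colon \Frob{\ModRtheoryF} \to \cat{IH}_{\rring{R}}$ and $\Psi\colon \cat{IH}_{\rring{R}} \to \Frob{\ModRtheoryF}$. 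First I would recall from \cite{interactinghopf} the presentation of $\cat{IH}_{\rring{R}}$: a white and a black \emph{special Frobenius} algebra that pairwise form bialgebras, scalar generators obeying the ring axioms, and --- crucially, when $\rring{R}$ is a field --- an equation making every non-zero scalar invertible.

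The match on the $\Frob{\ModRtheoryF}$ side is set up through the alternative presentation of abelian groups. By Proposition~\ref{prop:whiteFrobenius} the white monoid and comonoid form a Frobenius structure, and the ensuing discussion shows that $\Frob{\AGtheoryF}$ is an abelian bicategory, presented equivalently by the SMIT of Figure~\ref{fig:summaryAlternativeAG}, in which white and black are each special Frobenius and interact as bialgebras. Adjoining the scalars of Figure~\ref{fig:summaryModules} yields a presentation of $\Frob{\ModRtheoryF}$ whose generators and axioms are exactly those of $\cat{IH}_{\rring{R}}$, modulo identifications already available in Section~\ref{sec:modules}: the antipode is the scalar $-1$, and the zero scalar factors as $\dis;\codis$ (equation~\eqref{eq:zero} and the identity displayed alongside it). Passing to a field enters precisely through \eqref{eq:inversescalarfield}, which forces each non-zero scalar $k$ to be injective and surjective, hence a comap; by Corollary~\ref{cor:inverse} its left adjoint is $k^\dagger$, giving the two-sided inverse $1/k = k^\dagger$ of \eqref{eq:inversefield}. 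This is exactly the invertibility equation of the field presentation of $\cat{IH}_{\rring{R}}$.

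Concretely, I would define $\Phi$ and $\Psi$ by sending white (co)monoid, black (co)monoid and scalars to their namesakes. Well-definedness of each reduces to verifying that the image of every defining (in)equation holds, which is the axiom-by-axiom check sketched above. Since both categories are ordered props, I must additionally confirm that each lax/oplax law of $\Frob{\ModRtheoryF}$ maps to the corresponding subspace inclusion in $\cat{IH}_{\rring{R}}$ and conversely, so that $\Phi$ and $\Psi$ are \emph{poset}-enriched. Mutual invertibility then follows from the fact that they agree with the identity on generators, together with freeness.

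The main obstacle I expect is the bookkeeping needed to align the two axiom systems exactly: the generating set and list of equations chosen in \cite{interactinghopf} need not be literally those of Figures~\ref{fig:summaryModules}--\ref{fig:summaryAlternativeAG}, so the genuine work is establishing inter-derivability, with particular care given to the \emph{directions} of the lax laws (which in $\Frob{\ModRtheoryF}$ are true inequations, matching inclusion of linear relations) and to verifying that the field hypothesis introduces no collapse beyond the scalar inversions~\eqref{eq:inversefield}. Once the generators, the antipode-as-$(-1)$ identification, and the invertibility of non-zero scalars are matched, the outstanding verifications are the routine diagrammatic manipulations already illustrated throughout Sections~\ref{sec:ag} and~\ref{sec:modules}.
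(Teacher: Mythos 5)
Your proposal is correct and follows essentially the same route as the paper: the paper's proof is simply the one-line assertion that the axioms of Figure~\ref{fig:summaryModules} and those of $\cat{IH}_{\rring{R}}$ entail each other, i.e.\ a comparison of presentations under the obvious matching of generators. Your write-up just makes explicit the details the paper leaves implicit --- the alternative Frobenius presentation of the abelian-group fragment, the identifications of the antipode with $-1$ and of $0$ with $\dis;\codis$, the role of \eqref{eq:inversescalarfield}--\eqref{eq:inversefield} as the field-specific axiom, and the poset-enrichment bookkeeping.
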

\begin{proof}
It is enough to check that all the equations in Figure \ref{fig:summaryModules} entails those in axioms of $\cat{IH}_{\rring{R}}$ and vice-versa.
\end{proof}

Since modules over a field are just vector spaces, we obtain a surprising result.

\begin{corollary}
If $\rring{R}$ is a field, then $\frobmodel{\cat{IH}_{\rring{R}}, \Rel}$ is the category of vector spaces and linear maps over $\rring{R}$.
\end{corollary}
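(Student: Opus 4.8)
The plan is to obtain the corollary by chaining the two theorems immediately preceding it with the comparison theorem of Section~\ref{sec:frobcartesian}, through a short sequence of equivalences of model categories. First I would rewrite the domain of the models. By the theorem stating $\cat{IH}_{\rring{R}} \cong \Frob{\ModRtheoryF}$, a cartesian bifunctor $\cat{IH}_{\rring{R}} \to \Rel$ is precisely the same datum as a model of the Frobenius theory $\ModRtheoryF$ in $\Rel$, and this isomorphism of Carboni--Walters categories transports lax natural transformations unchanged. Hence $\frobmodel{\cat{IH}_{\rring{R}}, \Rel} \cong \frobmodel{\ModRtheoryF, \Rel}$ as categories.

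Next I would exhibit $\ModRtheoryF$ as arising from a cartesian theory via the $(-)^+$ recipe of Section~\ref{sec:frobcartesian}. The cartesian theory $\ModRtheory$ of $\rring{R}$-modules has as generators the (white) monoid operations, the antipode, and one scalar $\cgr[height=13pt]{scalark.pdf} \colon 1 \to 1$ for each $k \in \rring{R}$ --- all of coarity $1$ --- with equations the abelian-group and module axioms. Comparing the inequations of Figure~\ref{fig:summaryModules} against the construction $\FTheory{T}=(\Sigma, E\uplus E^{op} \uplus I_{OLCH})$, the doubled module and group axioms supply $E \uplus E^{op}$, while the requirement that the antipode and every scalar be a map is exactly the oplax inequations $I_{OLCH}$ for those generators (the matching lax inequations $I_{LCH}$ being implicit in any Frobenius theory). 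Thus, taking $\Theory{T} = \ModRtheory$ so that $\FTheory{T} = \ModRtheoryF$, the comparison theorem applies and gives $\frobmodel{\ModRtheoryF, \Rel} \cong \carmodel{\ModRtheory, \Map{\Rel}}$.

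Finally I would evaluate the right-hand side. Since the maps of $\Rel$ are exactly the functions, $\Map{\Rel} = \Set$, so $\carmodel{\ModRtheory, \Set}$ is the ordinary category of cartesian models of the algebraic theory of $\rring{R}$-modules in $\Set$, i.e.\ $\rring{R}$-modules and module homomorphisms. When $\rring{R}$ is a field these coincide with $\rring{R}$-vector spaces and $\rring{R}$-linear maps. Composing the three equivalences $\frobmodel{\cat{IH}_{\rring{R}}, \Rel} \cong \frobmodel{\ModRtheoryF, \Rel} \cong \carmodel{\ModRtheory, \Set}$ then yields the claim.

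The step I expect to be the main obstacle --- though it is bookkeeping rather than conceptual --- is the middle one: verifying that the presentation of $\ModRtheoryF$ in Figure~\ref{fig:summaryModules} agrees on the nose with $\FTheory{(\ModRtheory)}$, that is, that ``the antipode and the scalars are maps'' is literally the family of oplax inequations $I_{OLCH}$ indexed by the generators, with no superfluous or missing inequations. Once this identification is confirmed, every remaining step is a direct invocation of results already established.
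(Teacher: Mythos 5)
Your proposal is correct and follows essentially the same route the paper intends: the paper's corollary is stated without proof precisely because it is the composite of the theorem $\cat{IH}_{\rring{R}} \cong \Frob{\ModRtheoryF}$, the Section~\ref{sec:frobcartesian} comparison theorem $\frobmodel{\FTheory{T},\catC} \cong \carmodel{\Theory{T},\Map{\catC}}$ applied with $\ModRtheoryF = \FTheory{(\ModRtheory)}$ and $\Map{\Rel}=\Set$, and the observation that modules over a field are vector spaces. Your middle ``bookkeeping'' step is indeed the only thing to check, and it goes through exactly as you describe, since requiring the antipode and scalars to be maps is precisely the oplax family $I_{OLCH}$ for those generators.
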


\paragraph{Field of fractions.}
Now we assume $\rring{R}$ to be an ideal domain. Note that this is not necessarily \emph{principal} as required in \cite{interactinghopf}.

Since  $\rring{R}$ is an ideal domain, we can build its field of fractions $\rring{FR(R)}$: elements are pairs $p/q$ of elements of $\rring{R}$ with $q\neq 0$ quotiented by the equivalence $\equiv$ defined as $p_1 / q_1 \equiv p_2 / q_2$ iff $p_1 \times q_2  = p_2 \times q_1$. Sum and multiplication are defined as  $p_1 / q_1 + p_2 / q_2 = p_1\times q_2+ p_2\times q_1 / q_1\times q_2$ and $p_1 / q_1 \times p_2 / q_2 = p_1\times p_2 / q_1 \times q_2$. In the following, we will often use the morphism of rings $(-)/1 \colon \rring{R} \to \rring{FR(R)}$ mapping any $k\in \rring{R}$ into $k/1$.

\medskip

We now show that the theory of modules over $\rring{FR(R)}$ can be retrieved by the one of modules over $\rring{R}$ by additionally requiring that every non-zero scalar is both injective and surjective. To be entirely formal, we call $\Theory{MODIS}_{\rring{R}}$ the theory $\ModRtheoryF$ extended with the following inequations
\begin{equation}\label{eq:scalaris}
\lower2pt\hbox{$
\lower8pt\hbox{$\includegraphics[height=.8cm]{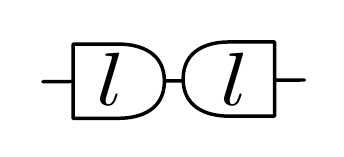}$}
\leq
\lower5pt\hbox{$\includegraphics[height=.6cm]{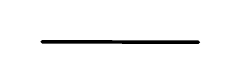}$}
\leq
\lower8pt\hbox{$\includegraphics[height=.8cm]{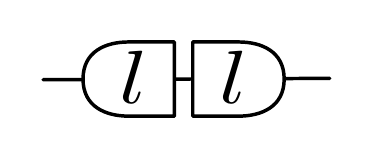}$}
$}
\end{equation}
for each scalar $l \in \rring{R}$ different than $0$.

\begin{theorem}
Let $\rring{R}$ be an ideal domain. Then $\Frob{\ModtheoryF{\rring{FR(R)}}} \cong \Frob{\Theory{MODIS}_{\rring{R}}}$.
\end{theorem}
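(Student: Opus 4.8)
The plan is to prove the isomorphism by exhibiting a pair of mutually inverse poset-enriched symmetric monoidal functors, each defined by freeness: since both sides are ordered PROPs freely generated by an SMIT, to specify a functor it suffices to fix the images of the generators and check that the defining (in)equations of the source hold in the target. Both functors will fix the black and white Carboni--Walters structure and the antipode, differing only on scalars. For $k\in\rring{R}$ write $s_k$ for the corresponding scalar of $\Frob{\Theory{MODIS}_{\rring{R}}}$, and for $p/q\in\rring{FR(R)}$ write $t_{p/q}$ for the scalar of $\Frob{\ModtheoryF{\rring{FR(R)}}}$. I define $\Phi\colon \Frob{\Theory{MODIS}_{\rring{R}}}\to\Frob{\ModtheoryF{\rring{FR(R)}}}$ on scalars by $\Phi(s_k)\Defeq t_{k/1}$, using the ring homomorphism $(-)/1\colon\rring{R}\to\rring{FR(R)}$, and $\Psi$ in the opposite direction by $\Psi(t_{p/q})\Defeq s_p\poi s_q^{\dagger}$. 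The latter is legitimate because $q\neq 0$ forces $s_q$ to be a map that is moreover injective and surjective by~\eqref{eq:scalaris}, hence an isomorphism with $s_q^{-1}=s_q^{\dagger}$ (by Lemma~\ref{lem:dagger} and Corollary~\ref{cor:inverse}).

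Well-definedness of $\Phi$ is routine. The four module axioms of $\ModtheoryF{\rring{FR(R)}}$ are preserved because $(-)/1$ respects $0$, $1$, sum and product, scalars are sent to scalars (hence to maps), and the only remaining inequations are the injectivity/surjectivity laws~\eqref{eq:scalaris} imposed for each nonzero $l\in\rring{R}$; since $\rring{R}$ is a domain, $l\neq 0$ gives $l/1\neq 0$, and over the field $\rring{FR(R)}$ every nonzero scalar is already injective and surjective by~\eqref{eq:inversescalarfield}--\eqref{eq:inversefield}, so the images of~\eqref{eq:scalaris} hold.

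The heart of the argument is the well-definedness of $\Psi$. First, $\Psi$ must respect the equivalence $\equiv$ on fractions: if $p_1 q_2=p_2 q_1$ then $s_{p_1}\poi s_{q_1}^{\dagger}=s_{p_2}\poi s_{q_2}^{\dagger}$, which I would obtain by composing with the isomorphisms $s_{q_1},s_{q_2}$ and using multiplicativity $s_a\poi s_b=s_{ab}$ together with the cancellation technique of Proposition~\ref{prop:cancellation}. Multiplicativity of $\Psi$ then reduces to $s_{q_1q_2}^{-1}=s_{q_1}^{\dagger}\poi s_{q_2}^{\dagger}$, which holds because the inverse of a composite of isomorphisms is the reversed composite of inverses and scalars commute. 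The delicate axiom is additivity in the scalar index: the law reading $t_{a+b}$ as ``copy, scale separately, add'' (black comultiplication, $t_a\tns t_b$, white multiplication) must be sent to the corresponding arrow for $a=p_1/q_1$, $b=p_2/q_2$ with $a+b=(p_1q_2+p_2q_1)/(q_1q_2)$. Unwinding, this amounts to proving in $\Frob{\Theory{MODIS}_{\rring{R}}}$ that $s_{p_1q_2+p_2q_1}\poi s_{q_1q_2}^{\dagger}$ equals the copy--scale--add combination of $s_{p_1}\poi s_{q_1}^{\dagger}$ and $s_{p_2}\poi s_{q_2}^{\dagger}$; I would establish this by clearing the common denominator $q_1q_2$ using invertibility of $s_{q_1},s_{q_2}$, and then invoking that every arrow is additive (Proposition~\ref{prop:monoidadditive}, extended to scalars via Lemmas~\ref{lem:compadd} and~\ref{lem:dagadd}) so that the scalars commute with the white monoid structure as required. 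Finally, each $\Psi(t_{p/q})$ is a composite of maps and isomorphisms, hence a map, and is an isomorphism when $p\neq 0$, matching the (automatic) target axioms.

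It remains to check that $\Phi$ and $\Psi$ are mutually inverse, which by freeness need only be verified on generators. Both fix the Carboni--Walters structure and the antipode. On scalars, $\Psi\Phi(s_k)=\Psi(t_{k/1})=s_k\poi s_1^{\dagger}=s_k$ since $s_1=\id$, while $\Phi\Psi(t_{p/q})=\Phi(s_p\poi s_q^{\dagger})=t_{p/1}\poi t_{1/q}=t_{p/q}$, using that $\Phi$ preserves $\dagger$ so $\Phi(s_q^{\dagger})=t_{q/1}^{\dagger}=t_{1/q}$ by~\eqref{eq:inversefield}, together with multiplicativity in the field theory. By uniqueness of functors agreeing on generators, $\Psi\Phi=\id$ and $\Phi\Psi=\id$, so $\Phi$ is the desired isomorphism. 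I expect the additivity-of-scalar-addition computation in the well-definedness of $\Psi$ to be the main obstacle, since it is the one place where the field-of-fractions arithmetic genuinely interacts with the graphical calculus and where the cancellation lemma is indispensable.
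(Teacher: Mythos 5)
Your proposal is correct and takes essentially the same approach as the paper: the paper likewise defines $\iota$ on scalars via $(-)/1$ and $\kappa$ by sending $p/q$ to $s_p\poi s_q^{\dagger}$, verifies well-definedness axiom-by-axiom (using that $\rring{FR(R)}$ is a field for the images of~\eqref{eq:scalaris}, and invertibility of nonzero scalars plus fraction arithmetic in the other direction, including preservation of $\equiv$), and checks mutual inverses on generators using~\eqref{eq:inversefield}. The only differences are of emphasis — you flag the scalar-addition axiom as the delicate point where the paper just appeals to the definition of sum in $\rring{FR(R)}$, and you make explicit the (needed) observation that $l\neq 0$ implies $l/1\neq 0$ because $\rring{R}$ is a domain — neither of which changes the structure of the argument.
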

\begin{proof}
As a first step, we define a Cartesian bifunctor $\iota \colon \Frob{\Theory{MODIS}_{\rring{R}}} \to \Frob{\ModtheoryF{\rring{FR(R)}}}$ by induction. The inductive cases are given by the fact that it should preserve $\poi$ and $\tns$. For the base cases, $\iota$ maps the scalars along  $(-)/1 \colon \rring{R} \to \rring{FR(R)}$ and for all the others, it behaves as the identity.

In order to prove that this is well-defined, we need to check that for all $R,S \in \Frob{\Theory{MODIS}_{\rring{R}}}$,
\begin{equation}\label{eq:localproof}\tag{$\alpha$}\text{ if $R\leq S$, then $\iota(R) \leq \iota(S)$.}
\end{equation} To prove this, it is enough to check that \eqref{eq:localproof} holds for each of the inequation of $\Theory{MODIS}_{\rring{R}}$. For inequations that do not involve scalars, \eqref{eq:localproof} is trivial. For the axioms that do involve scalars, we have two cases: 
\begin{enumerate}[(a)] 
\item if $R\leq S$ is an axiom in \eqref{eq:scalaris}, then it holds in $\Frob{\ModtheoryF{\rring{FR(R)}}}$  by \eqref{eq:inversescalarfield} (since $\rring{FR(R)}$ is a field); 
\item if $R\leq S$ is an axiom in Figure \ref{fig:summaryModules}, then it holds, since $(-)/1 \colon \rring{R} \to \rring{FR(R)}$ is a morphism of rings.
\end{enumerate}

The next step is to define $\kappa \colon \Frob{\ModtheoryF{\rring{FR(R)}}} \to   \Frob{\Theory{MODIS}_{\rring{R}}} $ by induction. Again the inductive cases are fixed. For the base cases, $\kappa$ maps each scalar $p/q \in \rring{FR(R)}$ into 
$$\hbox{$\includegraphics[height=.8cm]{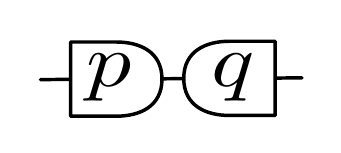}$}$$
 and for all the others, it behaves as the identity. Observe that $\equiv$ is preserved by $\kappa$: if $p_1/q_1 \equiv p_2/q_2$, then $p_1 q_2 = p_2q_1$ and therefore
\begin{equation*}
\lower2pt\hbox{$
\lower8pt\hbox{$\includegraphics[height=.8cm]{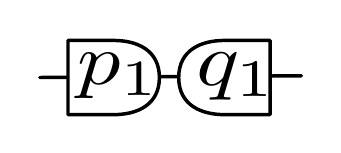}$}
=
\lower8pt\hbox{$\includegraphics[height=.8cm]{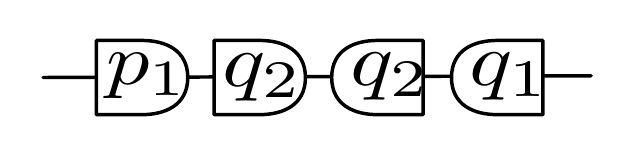}$}
=
\lower8pt\hbox{$\includegraphics[height=.8cm]{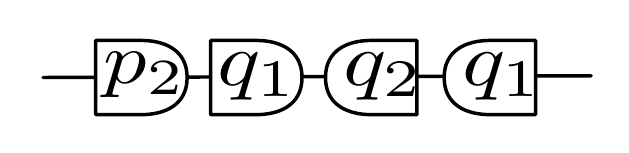}$}
=
\lower8pt\hbox{$\includegraphics[height=.8cm]{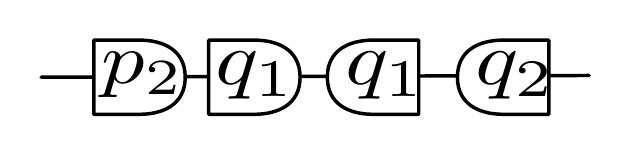}$}
=
\lower8pt\hbox{$\includegraphics[height=.8cm]{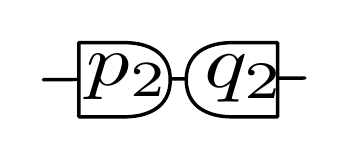}$}\text{.}
$}
\end{equation*}
Again, to prove that $\kappa$ is well defined, it is enough to inspect the axioms concerning the scalars in Figure \ref{fig:summaryModules}. For a scalar $p/q \in \rring{FR(R)}$, the two leftmost axioms hold in $ \Frob{\Theory{MODIS}_{\rring{R}}} $, since $q\neq 0$ and \eqref{eq:scalaris} entails that $\cgr[height=13pt]{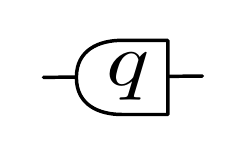}$ is a black comonoid homomorphism. For the same reason $\cgr[height=13pt]{qinv.pdf}$ is also a white monoid homomorphism and thus it is easy to see that also the fifth (from left to right) axioms in Figure \ref{fig:summaryModules} holds in $ \Frob{\Theory{MODIS}_{\rring{R}}}$.  The remaining three axioms are easily proved by using the definition of sum and multiplication in $\rring{FR(R)}$.

To conclude it is enough to prove that $\iota\circ \kappa = id$ and $\kappa \circ \iota =id$. For both, the proof proceeds by induction. The only interesting cases are the scalars: for $\kappa \circ \iota =id$, the proof is straightforward; for $\iota\circ \kappa = id$, the proof uses that \eqref{eq:inversefield} holds  in  $\Frob{\ModtheoryF{\rring{FR(R)}}}$.
\end{proof}


\end{document}